\documentclass[11pt,
 superscriptaddress,
 nofootinbib,
 notitlepage, floatfix, tightenlines,
  amsmath,amssymb,
 prx,
 ]{revtex4-2}

\usepackage{graphicx}% Include figure files
\usepackage{dcolumn}% Align table columns on decimal point
\usepackage{bm}% bold math
\usepackage{amssymb,dsfont}

\usepackage[ruled, lined, linesnumbered, commentsnumbered, longend]{algorithm2e}

\usepackage{amsmath}   % For mathematical environments and symbols
\usepackage{amsthm}
\usepackage[compat=0.6]{yquant}  % For drawing quantum circuits (yquant 0.7.3+)
\usepackage{braket}

\usepackage[utf8]{inputenc} % allow utf-8 input
\usepackage[T1]{fontenc}    % use 8-bit T1 fonts
\usepackage[hidelinks]{hyperref}      % hyperlinks
\usepackage{url}            % simple URL typesetting
\usepackage{booktabs}       % professional-quality tables
\usepackage{amsfonts}       % blackboard math symbols
\usepackage{nicefrac}       % compact symbols for 1/2, etc.
\usepackage{microtype}      % microtypography
\usepackage{physics}
\usepackage{placeins}
\usepackage{bbold}

\usepackage{tikz}
\usepackage{forest}
\usepackage{multirow}
\usetikzlibrary{calc,arrows.meta,positioning, shapes.geometric, arrows, shapes.symbols,shapes.callouts,patterns,quotes,decorations.pathreplacing,fit,backgrounds}

\tikzset{
    every node/.style={font=\sffamily\small},
    main node/.style={thick,circle ,draw},
    visible node/.style={thick,rectangle ,draw}
}
\usepackage{cleveref}
\usepackage{xcolor}
\usepackage{enumitem}
\usepackage{subcaption}
\usepackage{floatrow}
\usepackage{mathtools}
\usepackage{verbatim}
\usepackage{upgreek}
\usepackage{textgreek}
\usepackage{thmtools}
\usepackage{thm-restate}
\usepackage{fancyhdr}
\usepackage{fontawesome}

\newtheorem{theorem}{Theorem}
\newtheorem{corollary}[theorem]{Corollary}
\newtheorem{proposition}[theorem]{Proposition}
\newtheorem{definition}[theorem]{Definition}
\newtheorem{lemma}[theorem]{Lemma}
\theoremstyle{remark} % Sets the style for subsequent theorem-like environments
\newtheorem{remark}{Remark} % Defines the 'remark' environment
\DeclareMathOperator{\EX}{\mathbb{E}}

\begin{document}

%\preprint{APS/123-QED}  % Replace with arXiv ID before submission
\title{Encoded Quantum Signal Processing for Heisenberg-Limited Metrology}
\author{Carlos Ortiz Marrero}
\affiliation{Physical Detection Systems and Deployment Division, Pacific Northwest National Laboratory,  Richland, WA 99354}
\affiliation{Department of Computer Science, Colorado State University,  Fort Collins, CO 80521}
\email{carlos.ortiz.marrero@colostate.edu}
\author{Rui Jie Tang}
\email{ruijie.tang@mail.utoronto.ca}
\affiliation{Department of Physics, University of Toronto, ON M5S 1A1, Canada}
\author{Nathan Wiebe}
\affiliation{ Department of Computer Science, University of Toronto, ON M5S 1A1, Canada}
\affiliation{Advanced Computing, Mathematics, and Data Division, Pacific Northwest National Laboratory, Richland, WA 99354}
\email{nawiebe@cs.toronto.edu}

%\thanks{A footnote to the article title}%

%\collaboration{CLEO Collaboration}%\noaffiliation

\date{\today}% It is always \today, today,
             %  but any date may be explicitly specified

\begin{abstract}
Entangled quantum probes can achieve Heisenberg-limited measurement precision, but this advantage is typically destroyed by noise. We address this issue by introducing a framework that we call encoded quantum signal processing, which unifies quantum error detection and quantum signal processing into an effective single-qubit framework, and provides a paradigm for constructing logical sensors that are robust to noise while remaining sensitive to the signal of interest. We show that encoding sensor qubits into a repetition code and using syndrome measurements as a signal-processing primitive restores Heisenberg scaling under realistic noise, without applying recovery operations. We prove that product-state sensing with syndrome post-processing is fundamentally limited to standard quantum limit (SQL) scaling, and develop four protocols that overcome this barrier through entanglement or sequential signal amplification, achieving Heisenberg-limited precision with exponential error suppression in code distance. For spatially inhomogeneous fields, Bayesian marginalization preserves Heisenberg scaling provided noise decreases sufficiently with system size. The underlying mechanism, which we formalize as encoded quantum signal processing, reduces multi-qubit metrology to an effective single-qubit problem where syndrome measurement implements nonlinear signal transformations. Numerical simulations validate the theoretical predictions: syndrome-based inference achieves near-Heisenberg scaling at noise levels where bare probes approach the SQL, and a concatenated protocol maintains this scaling under joint transverse noise and longitudinal inhomogeneities.
\end{abstract}

%\keywords{Suggested keywords}%Use showkeys class option if keyword
                              %display desired
\maketitle

%\tableofcontents

\section{Introduction}
Three quantum technologies have ignited the interest of the quantum community: quantum error correction~\cite{gottesman1997stabilizer}, quantum signal processing~\cite{low2017optimal,gilyen2019quantum}, and quantum sensing \cite{giovannetti2004quantum, giovannetti2006quantum, kitagawa1993squeezed, gerry2000heisenberg}. At the core of these technologies is the desire to protect, control and extract information from quantum systems as efficiently as possible. However, the connections between these technologies have only begun to be explored. 
Specifically, quantum error correction provides ways of encoding logical qubits inside a code space that is specifically designed to be resilient to errors. Quantum error correction is an indispensable technique for addressing noise in quantum systems, but is less well suited to manipulating transformations within the protected space of the code. Quantum Signal Processing on the other hand, provides a technique for modifying transformations acting on a logical space but is incapable of addressing noise. These differences raise the question of whether there exists a larger framework that is capable of encompassing both quantum signal processing and quantum error correction.
Our work suggests a framework that unifies aspects of quantum signal processing and quantum error correction which in turn we show can be used to design experiments in metrology that allow us to trade off robustness and sensitivity of experiments to optimize learning in noisy environments.

It is often observed that a poor qubit makes a good sensor, and in some sense this intuition is true. A poor qubit is often coupled strongly to the environment and cannot keep its information for long before decohering; however, this also means that such a device can be highly sensitive to its surroundings. In order to build a perfect sensor for a given signal, one wishes to build a qubit that is highly sensitive to that signal while insensitive to other confounding information in the environment (such as orthogonal fields, time-dependent drift and other phenomena). A challenge with such a design is that finding an optimal physical qubit for the effect that one wishes to measure can be challenging. Rather than trying to build an optimal physical qubit, our approach is to construct a logical qubit that provides the desired robustness and then use quantum signal processing to manipulate the resultant signal.

This intuition that quantum error correction may be useful for quantum metrology has been explored in a number of different works.
Several important results have established the scope and limits of quantum error correction (QEC)-enhanced metrology~\cite{zhou2021error, kapourniotis2019fault}. Zhou et al.~\cite{zhou2018achieving} and Zhou and Jiang~\cite{zhou2018optimal} identified the Hamiltonian-not-in-Lindblad-span (HNLS) condition as the necessary and sufficient criterion for achieving Heisenberg scaling with full quantum error correction under Markovian noise. Sekatski et al.~\cite{sekatski2017quantum} proved that noiseless ancillae are required for QEC-enhanced metrology to surpass the SQL, while Layden et al.~\cite{layden2019ancilla} showed that approximate error correction can restore Heisenberg scaling even without noiseless ancillae when the noise rate is sufficiently small. Recent work by Sahu et al.~\cite{sahu2026heisenberg} achieves Heisenberg-limited sensing with fault-tolerant syndrome extraction using the repetition code, complementing the error-detection approach developed here. Recent work on quantum computational sensing~\cite{allen2025quantum,khan2025quantum} has further demonstrated that interleaving sensing with computation can provide metrological speedups. Our approach differs from these works in two key respects: (i) we use error \emph{detection} rather than full error correction, extracting the signal directly from syndrome-conditioned parity measurements without applying recovery operations, and (ii) we address spatially inhomogeneous noise through Bayesian marginalization, a regime not treated by the HNLS framework, which assumes identical Markovian channels on each qubit.

Our main contributions in this work involve the proposal of a framework that unifies quantum signal processing and error correction that we call Encoded Quantum Signal Processing (EQSP) and discuss issues surrounding this in Section~\ref{sec:problem}. We further show the following applications of these ideas for metrology: (i) bit-flip repetition codes with syndrome-dependent phase extraction for transverse noise (Section~\ref{sec:fixed-omega}), (ii) an adaptive binary search with GHZ states for spatially inhomogeneous fields (Section~\ref{sec:binary-search}), (iii) a concatenated protocol combining both error types (Section~\ref{sec:combined}), and (iv) a product-state sequential binary search that achieves Heisenberg scaling through sequential signal applications without entangled state preparation (Section~\ref{subsec:sequential-binary-search}). We first establish in Theorem~\ref{thm:sql-barrier} that the simplest approach (product-state input, single signal application with syndrome post-processing) is fundamentally SQL-limited, regardless of classical post-processing. The four protocols overcome this barrier through different resources: entanglement, sequential signal amplification, or their combination. We demonstrate that these protocols achieve the Heisenberg limit ($\alpha=1$) via quantum Fisher information bounds (Proposition~\ref{prop:cramer-rao}), subject to a device quality condition $N\sigma_\epsilon^2 = o(1)$ when field inhomogeneities are present; here $\sigma_\epsilon^2$ denotes the variance of qubit-to-qubit field variations. This condition, which requires noise to decrease with system size, distinguishes our results from no-go theorems under fixed-rate Markovian decoherence~\cite{demkowicz2012elusive}. Numerical simulations characterize finite-size behavior and practical noise thresholds (Section~\ref{sec:numerical}).

%\section{Background}

\section{Encoded QSP}
\label{sec:problem}

The central idea of this work is to construct a single logical qubit from physical sensor qubits such that a multi-qubit metrology problem reduces to an effective single-qubit signal processing problem. We call this framework \emph{encoded quantum signal processing} (Encoded QSP). In standard QSP~\cite{low2017optimal,gilyen2019quantum} and its generalization GQSP~\cite{motlagh2023generalized}, polynomial transformations of a signal unitary are implemented through alternating signal and processing rotations on a single qubit. Encoded QSP extends this paradigm: the signal unitary acts individually on $N$ physical qubits forming a single encoded logical qubit, and syndrome measurement replaces explicit processing rotations, implementing nonlinear signal transformations on the logical qubit. This perspective unifies quantum error detection with signal processing and reveals metrology as a natural application of the QSP framework.

\begin{definition}[Encoded Quantum Signal Processing]\label{def:encoded-qsp}
    Let $\mathcal{C}$ be an $[[n,1,d]]$ quantum code with $\mathcal{H}_L$ being its logical codespace and let $Z_L: \mathcal{H}_L \rightarrow \mathcal{H}_L$ be the logical Pauli-$Z$ operation within. Let $\mathcal{D}$ be a decoding channel  that maps from density matrices on the $n$-qubit space to the logical qubit space (i.e., $L(\mathbb{C}^{2^n}) \rightarrow L(\mathcal{H}_L)$) and let $\mathcal{E}$ be an encoding (or error) channel that maps from $L(\mathcal{H}_L)\rightarrow L(\mathbb{C}^{2^n})$. Next let us define $H\in L(\mathbb{C}^{2^n\times 2^{n}})$ to be the signal Hamiltonian. An encoded quantum signal processing experiment in the $Z$-basis then takes the form, for $\theta_i \in \mathbb{R}$,
    $$\Lambda_{EQSP}(\theta_i) :\rho \mapsto 
    %e^{-i Z_L \theta_i}\mathcal{D}\left(  \left(  \bigotimes_{j=0}^{n-1} e^{-i H_j} \right)  \mathcal{E}(\rho)  \left(\bigotimes_{j=0}^{n-1} e^{i H_j} \right)\right)e^{i Z_L \theta_i}
    e^{-i Z_L \theta_i}\mathcal{D}\left(   e^{-iH}   \mathcal{E}(\rho)  e^{iH} \right)e^{i Z_L \theta_i}
    $$
    and define for $\theta\in \mathbb{R}^{D+1}$
    $$
    \Lambda_{EQSP}(\theta) : \rho \mapsto  \Lambda_{EQSP}(\theta_D) \circ \Lambda_{EQSP}(\theta_{D-1}) \circ \cdots \circ \Lambda_{EQSP}(\theta_1)(e^{-i \theta_0 Z_L} \rho e^{i \theta_0 Z_L})   
    $$
\end{definition}
This definition clearly reduces to the conventional quantum signal processing problem when $\mathcal{C}$ is a $[[1,1,1]]$ code and $\mathcal{E}$ is taken to be the identity mapping, or in other words when there is no distinction between logical and physical qubits and no errors. An important note to make is that in the majority of our discussion here, we will not necessarily assume that the quantum channels are in fact noisy. This may seem like an odd choice because typically quantum error correcting codes are only used to correct imperfect quantum devices. In this context we will primarily be using them to control quantum systems rather than to correct errors.

Further, encoded quantum signal processing generalizes quantum error correction. Indeed, for the case where $\theta_i = 0$, the protocol corresponds to simply applying a quantum error correcting channel $D$ times. As such encoded quantum signal processing is a proper generalization of ordinary quantum signal processing as well as single-qubit quantum error correction.

One important deficiency that EQSP resolves in ordinary quantum signal processing is that ordinary quantum signal processing cannot change the von Neumann entropy of a state fed into a channel. This means that QSP cannot ever increase the purity of a state once it becomes mixed. By incorporating error correction into the QSP sequence we can achieve this. We explicitly demonstrate this below.

\begin{proposition}\label{prop:ECQSP}
    Let $\mathcal{C}$ be the $[[2L+1,1,2L+1]]$ phase flip code and let $\mathcal{E}: \rho \mapsto (1-p)\rho + \frac{p}{2L+1}\sum_{j=0}^{2L} Z_j \rho Z_j$ be a single-error phase flip channel (each term applies at most one $Z_j$ error) and let $H = \arccos(x) X_L$ be the signal Hamiltonian where $X_L = X^{\otimes 2L+1}$ is the logical not operation for this code. We then have that there exists an EQSP protocol that uses $\theta \in \mathbb{R}^D$ that can implement a transformation between density operators in $\mathcal{H}_L$ and density operators in $\mathcal{H}_L$ that is of the form $U_L : \rho \mapsto U(x) \rho U^\dagger(x)$ where
    $$
    U(x) = \begin{bmatrix}
        P(x) & -i\sqrt{1-x^2} Q(x) \\
        -i\sqrt{1-x^2} Q^*(x) & P^*(x)
    \end{bmatrix}
    $$
    for any polynomial $P,Q$ that satisfy
    \begin{enumerate}
        \item ${\rm deg}(P) \le D\quad {\rm deg}(Q) \le D-1.$
        \item The polynomial $P$ is even and $Q$ is odd or vice versa.
        \item for all $x$ $|P(x)|^2 +(1-x^2)|Q(x)|^2 =1$
\end{enumerate}
further the procedure uses at most $3D$ queries to the signal unitary $e^{i H}$ to achieve this transformation.
\end{proposition}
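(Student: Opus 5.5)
Our strategy is to reduce the encoded protocol to ordinary single-qubit QSP in the $X_L$-signal, $Z_L$-processing convention (the Wx convention of~\cite{low2017optimal,gilyen2019quantum}), and then invoke the standard QSP existence theorem. That theorem states that for any $P,Q$ satisfying conditions 1--3 there are phases $\theta\in\mathbb{R}^{D+1}$ with
$$e^{-i\theta_0 Z}\prod_{k=1}^{D} W(x)e^{-i\theta_k Z}=\begin{bmatrix} P(x) & -i\sqrt{1-x^2}Q(x)\\ -i\sqrt{1-x^2}Q^*(x) & P^*(x)\end{bmatrix},\qquad W(x)=e^{-i\arccos(x)X},$$
up to the sign convention $W$ versus $W^\dagger$, which is absorbed by $x$-independent conjugations. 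It therefore suffices to show two things. First, one round $\Lambda_{EQSP}(\theta_i)$, restricted to inputs supported on $\mathcal{H}_L$, acts exactly as the unitary channel $\rho\mapsto e^{-i\theta_i Z_L}W_L(x)\rho W_L(x)^\dagger e^{i\theta_i Z_L}$. Second, this implementation costs at most $3$ signal queries.

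The key step is choosing the encoding/decoding pair. For the phase-flip code, take codewords $|\pm\cdots\pm\rangle$ in the $X$ basis, so that $X_L=X^{\otimes(2L+1)}$ acts diagonally with eigenvalue $\pm1$. Since each $Z_j$ anticommutes with $X_j$, we have $Z_jX_L=-X_LZ_j$. Consequently $Z_j$ maps the $X_L=\pm1$ eigenspaces to each other, and it is not evident a priori that the error commutes with the signal. The correct statement is
$$e^{-iH}Z_j=Z_je^{+iH},$$
so a single $Z_j$ error converts the signal $W$ into $W^\dagger$ on the logical qubit, after decoding. This is where the three queries enter. We let $\mathcal{D}$ be syndrome measurement (the $Z_jZ_{j+1}$-type parities in the rotated basis, i.e.\ $X_jX_{j+1}$) followed by the recovery $Z_j$. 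For a detected error, the post-recovery logical state carries $W^\dagger\rho W$ instead of $W\rho W^\dagger$, and we then apply two additional signal queries, $W^2$, to correct it, conditioned classically on the syndrome. We would first check that $e^{-iH}$ preserves the code space and commutes with all stabilizers $X_jX_{j+1}$, since it is a function of $X_L$. It follows that the syndrome is unaffected by the signal, so the decoding identifies $j$ exactly: each branch of $\mathcal{E}$ has at most one $Z$ error and the distance is $2L+1\ge 3$. In both branches the logical output is then $W\rho W^\dagger$, and the mixture over branches collapses to a single pure unitary channel. This is the purity-restoring feature the proposition advertises.

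Composing $D$ rounds with the processing rotations $e^{-i\theta_iZ_L}$, which are logical $Z_L=Z^{\otimes(2L+1)}$ restricted to the phase-flip code, and prepending $e^{-i\theta_0Z_L}$ reproduces the QSP product above. This gives $U(x)$ with the required parity and degree constraints. Each round uses at most $1+2=3$ queries, for $3D$ total.

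The main obstacle is the sign flip $W\to W^\dagger$ caused by anticommutation. We must verify carefully that a syndrome-conditioned correction (extra $W^2$, or equivalently an adjusted phase) restores the intended channel deterministically rather than on average. We also need that the Definition's $\mathcal{D}$ is allowed to be syndrome-adaptive, or else that the extra queries can be folded into the next round. If this adaptive step proves awkward to fit within the Definition, the fallback is to note that $W^\dagger=ZWZ$ and replace the correction by a logical $Z_L$ conjugation, which uses no extra queries and reduces the count to $D$.
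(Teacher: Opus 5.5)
Your main argument is the paper's proof. You measure the phase-flip syndromes and note that $Z_j$ anticommutes with $X_L$, so an error branch produces $e^{+iH}$ instead of $e^{-iH}$. You repair this with a syndrome-conditioned extra $e^{-2iH}$, for at most three queries per round, and then invoke standard $Z$--$X$ QSP. Your worry about adaptivity is settled exactly as the paper settles it: the paper simply defines $\mathcal{D}$ to include that conditional $e^{-2iH}$.

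Drop the fallback, however, because it does not work. Conjugating the error-branch output $W^\dagger\rho W$ by $Z_L$ gives $W(Z_L\rho Z_L)W^\dagger$, not $W\rho W^\dagger$. The error is only heralded after the signal has acted, so the $Z_L$ lands on the input state, and intermediate QSP states do not commute with $Z_L$. More generally, an $x$-independent per-round correction $C$ would need $CW^\dagger\propto W$, i.e.\ $C\propto W(x)^2$, which depends on $x$. So the two extra queries are genuinely needed, and the cost cannot be reduced to $D$ this way.
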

\begin{proof}
    The proof of this proposition is straightforward and mostly involves choosing a decoding channel $\mathcal{D}$. In this setting, let us consider first a decoding channel $\mathcal{D}'$ consisting of measurement of the syndromes of the phase flip code. As the error channel $\mathcal{E}$ only makes a single error and the distance of the code is $2L+1 > 1$, this ensures that the maximum likelihood estimate of the error will guarantee successful correction of the error. However, correcting the error does not guarantee that the desired phase is applied.

    To see this, let us assume that a single iterate of EQSP is applied to a state and an error is detected on the first qubit. This results in a transformation of the form
    \begin{align}
        \rho &\mapsto e^{-i Z_L \theta_i}\mathcal{D}'\left(  e^{-i \arccos(x) X_L}  (Z\otimes I)\rho (Z\otimes I)  e^{i \arccos(x) X_L}\right)e^{i Z_L \theta_i}\nonumber\\
        &= e^{-i Z_L \theta_i} \left(e^{iX_L \arccos(x)} \rho e^{-iX_L \arccos(x)}\right)e^{i Z_L \theta_i}
    \end{align}
where the last line follows from the fact that $\{X_L,Z_0\otimes I\}=0$ and hence the phase flip reverses the direction of the logical rotation. This can be corrected by changing our decoding channel to yield
\begin{equation}
    \mathcal{D} : \rho \mapsto \begin{cases}
        \mathcal{D}(\rho) & \text{if no error detected}\\
        e^{-2i H}\mathcal{D}(\rho)e^{2i H} & \text{otherwise}
    \end{cases}
\end{equation}
Here we have constrained our channel to have only a single error so these are the only two cases that we need to consider. The worst case scenario involves three queries to $e^{iH}$ in the event that an error is detected. Thus we can implement the transformation
\begin{equation}
    \rho \mapsto e^{-iZ_L \theta} e^{-i \arccos(x) X_L} \rho e^{i \arccos(x) X_L} e^{iZ_L \theta} 
\end{equation}
using three queries to the signal $e^{-iH}$.

The transformation above is simply a single iteration of quantum signal processing in the $Z-X$ basis. Thus the achievability of transformations within this logical subspace coincides with that of ordinary quantum signal processing within the same logical space. The conditions for this then follow from standard results on quantum signal processing~\cite{gilyen2019quantum,martyn2021grand}. The number of queries being $3D$ then follows from the fact that implementing a degree $D$ polynomial $P(x)$ that satisfies the three conditions outlined above requires $D$ queries to the quantum signal processing iteration. We have from above that $3$ queries are needed to the signal Hamiltonian to implement each and thus the transformation requires $3D$ queries to the signal Hamiltonian as claimed.
\end{proof}

The above transformation shows how we can generalize quantum signal processing to deal with an encoding channel that does not preserve von Neumann entropy. A more significant application though involves building transformations that may not be easy to implement within the quantum signal processing formalism by using ideas from quantum error correction. The following proposition provides such a family, which provides the primary motivation for our applications in metrology.

\begin{proposition}\label{prop:arctan}
Let $\mathcal{C}$ be the $[2L+1,1,2L+1]$ repetition code, $\mathcal{E}=I$, $\mathcal{D}$ be the standard error correction channel for the code and $\delta>0$ be a failure probability.
Further let $P,Q$ be polynomials such that for any $y\in [-1,1]$
\begin{enumerate}
    \item ${\rm deg}(P) \le D$ and ${\rm deg}(Q) \le D-1$.
    \item The parity of $P$ is even and $Q$ is odd or vice-versa\\
    \item $|P(y)|^2 + \sqrt{1-y^2}|Q(y)|^2=1$ for all $y\in [-1,1]$.
\end{enumerate}
Finally let $e^{-iH}=\prod_j e^{-i X_j \arccos(x)}$ be a signal unitary with $H = \sum_j \arccos(x) X_j$ for $|x|\in [0,\sqrt{1-(1-\delta)^{1/(2L+1)}}]\cup [(1-\delta)^{1/(4L+2)},1]$. The logical unitary
\begin{equation}
    {\tiny
        W_L:=\begin{bmatrix}
            {P}\left( \frac{x^{2L}}{\sqrt{x^{2(2L+1)}+(1-x^2)^{2L+1}}}\right) & -i\sqrt{1-\frac{x^{4L}}{x^{2(2L+1)}+(1-x^2)^{2L+1}} } Q\left(\frac{x^{2L}}{\sqrt{x^{2(2L+1)}+(1-x^2)^{2L+1}}} \right)\\
            -i\sqrt{1-\frac{x^{4L}}{x^{2(2L+1)}+(1-x^2)^{2L+1}} } Q^*\left(\frac{x^{2L}}{\sqrt{x^{2(2L+1)}+(1-x^2)^{2L+1}}}\right) &{P}^*\left( \frac{x^{2L}}{\sqrt{x^{2(2L+1)}+(1-x^2)^{2L+1}}}\right). 
        \end{bmatrix}}
    \end{equation}
can then be implemented using $(2L+1)D$ queries to $e^{-iX_j \arccos(x)}$ with probability at least $1-\delta$.
\end{proposition}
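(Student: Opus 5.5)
The plan is to reduce each EQSP iteration to a single logical-qubit signal rotation by an effective angle, and then appeal to standard QSP exactly as in the proof of Proposition~\ref{prop:ECQSP}.

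\textbf{Step 1: action of the transversal signal on the codespace.} Work in the repetition code with codewords $\ket{0_L}=\ket{0}^{\otimes(2L+1)}$ and $\ket{1_L}=\ket{1}^{\otimes(2L+1)}$. Each factor $e^{-iX_j\arccos(x)}=xI-i\sqrt{1-x^2}X_j$. Expanding the product, a codeword is mapped to a superposition over bit-flip patterns $S\subseteq\{0,\dots,2L\}$, with amplitude $x^{2L+1-|S|}(-i\sqrt{1-x^2})^{|S|}$. Measuring the bit-flip syndromes (the $Z_jZ_{j+1}$ checks) identifies the pattern up to complement, so the outcome is the pair $\{S,S^c\}$. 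For the outcome $S=\emptyset$ (equivalently $S^c$ = all qubits), the post-measurement logical operator, after the standard decoder $\mathcal{D}$ maps back to the codespace, is proportional to
\begin{equation}
x^{2L+1}I_L+(-i)^{2L+1}(1-x^2)^{(2L+1)/2}X_L .
\end{equation}
After normalization this is a logical $X_L$-rotation by an effective angle $\phi$, a fixed function of $x$. I would then carry out the bookkeeping to check that $\cos\phi$ equals the argument
\begin{equation}
y(x)=\frac{x^{2L}}{\sqrt{x^{2(2L+1)}+(1-x^2)^{2L+1}}}
\end{equation}
appearing in $W_L$, possibly after an overall rescaling or a choice of convention for which syndrome class is accepted. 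In particular the exponents must match; I expect they line up once the normalization is tracked carefully. Any extra phase $(-i)^{2L+1}$ becomes a fixed $Z_L$ or sign conjugation, which I absorb into the processing angles $\theta_i$.

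\textbf{Step 2: success probability and the restriction on $x$.} The trivial syndrome occurs with probability $x^{2(2L+1)}+(1-x^2)^{2L+1}$, and this is exactly the normalization in $y$. On $|x|\ge(1-\delta)^{1/(4L+2)}$ we have $x^{2(2L+1)}\ge 1-\delta$. On $|x|\le\sqrt{1-(1-\delta)^{1/(2L+1)}}$ we have $(1-x^2)^{2L+1}\ge 1-\delta$. So on the stated domain each round succeeds with probability at least $1-\delta$.

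I expect the main obstacle to be the following. Over $D$ rounds a naive union bound only gives $1-D\delta$. I would handle this in one of two ways:
\begin{enumerate}
\item read $\delta$ as a per-round failure probability, or
\item argue that nontrivial syndromes are not failures.
\end{enumerate}
For the second route, I would try first to show that a nontrivial syndrome with pattern $S$ still implements a known rotation, whose angle depends only on $|S|$. A classical record then lets the decoder apply a compensating correction, as was done for the phase-flip code in Proposition~\ref{prop:ECQSP}. Failing that, I would state the guarantee per iteration and note that the domain can be shrunk to replace $\delta$ by $\delta/D$.

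\textbf{Step 3: assembling the sequence.} Interleave the processing rotations $e^{-i\theta_k Z_L}$ with $D$ such encoded signal rounds. Conditioned on success, the resulting logical map is a standard $Z$--$X$ QSP sequence in the signal $y$. By the standard QSP characterization invoked in Proposition~\ref{prop:ECQSP}, this realizes every $W_L$ whose $P,Q$ satisfy the degree, parity and normalization conditions. Each round uses $2L+1$ single-qubit queries $e^{-iX_j\arccos(x)}$, giving $(2L+1)D$ queries in total.
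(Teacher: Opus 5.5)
Your plan follows the paper's proof step for step. You expand $\prod_j(xI-i\sqrt{1-x^2}\,X_j)$ on the codewords and post-select on the trivial syndrome. The success probability is $x^{2(2L+1)}+(1-x^2)^{2L+1}$, independent of the logical input because the projected operator is proportional to a unitary. You then read off a logical $X_L$ rotation, absorb the sign $(-1)^L$ into the phases, and run standard $Z$--$X$ QSP with $2L+1$ queries per round.

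The step you defer, however, does not close. Your operator $x^{2L+1}I_L+(-i)^{2L+1}(1-x^2)^{(2L+1)/2}X_L$ normalizes to a rotation with
\begin{equation}
\cos\phi=\frac{x^{2L+1}}{\sqrt{x^{2(2L+1)}+(1-x^2)^{2L+1}}},\qquad \sin\phi=\frac{(1-x^2)^{(2L+1)/2}}{\sqrt{x^{2(2L+1)}+(1-x^2)^{2L+1}}}.
\end{equation}
No rescaling or choice of accepted syndrome class turns the numerator into $x^{2L}$: accepting $j$ flips replaces $2L+1$ by $2L+1-2j$ in the numerator and the normalization alike. In fact, for $L\ge 1$ the stated argument $x^{2L}/\sqrt{x^{2(2L+1)}+(1-x^2)^{2L+1}}$ exceeds $1$ whenever $1/\sqrt{2}<|x|<1$. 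For $\delta<1/2$ that range contains the whole upper branch of the allowed domain apart from $|x|=1$. So the stated argument cannot be the cosine of the logical rotation, and $\sqrt{1-y^2}$ would be imaginary there.

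The paper's proof derives exactly your $x^{2L+1}$ amplitudes and then asserts $W_L$ as written. The $x^{2L}$ and $x^{4L}$ in the statement are therefore typos, as is $\sqrt{1-y^2}$ in condition 3, which must be $1-y^2$ for the QSP characterization you invoke. You should prove the corrected claim with $y=\cos\phi$ above rather than expect the exponents to line up.

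On the failure probability, you are right that the literal statement does not follow from a per-round bound $\delta$. The paper uses your fallback: it requires per-round failure at most $\delta/D$, i.e.\ replaces $\delta$ by $\delta/D$ in both endpoints of the domain, so that is the version actually proved.

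Drop your route of compensating nontrivial syndromes. A syndrome with $j$ flips implements a rotation with $\tan\Theta_j=\bigl(\sqrt{1-x^2}/x\bigr)^{2L+1-2j}$, which is a different unknown function of $x$. In the phase-flip setting of Proposition~\ref{prop:ECQSP}, an error merely reverses the rotation and is undone with two extra queries. Here no $x$-independent correction restores the intended round.

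With the corrected signal and the $\delta/D$ domain, your argument is complete.
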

\begin{proof}
    Under the assumption that $\mathcal{C}$ is the $[2L+1,1,2L+1]$ bit flip code and our signal Hamiltonian is $H= \sum_j \arccos(x)X_j$ and $\mathcal{E}=I$ prior to decoding we only need to reason about the action of $e^{-iH}$. Further, because the entire input state is always pure, it suffices to use the state vector formalism rather than the density matrix formalism. The action of $e^{-iH}$ on the codewords of the bit flip code are
    \begin{align}
        e^{-iH} \ket{0}^{\otimes 2L+1} &= \sum_{\mathbf{b} \in \{0,1\}^{2L+1}} (-i)^{|\mathbf{b}|}\, x^{2L+1-|\mathbf{b}|}(\sqrt{1-x^2})^{|\mathbf{b}|}\ket{\mathbf{b}} \nonumber\\
        e^{-iH} \ket{1}^{\otimes 2L+1} &= \sum_{\mathbf{b} \in \{0,1\}^{2L+1}} (-i)^{2L+1-|\mathbf{b}|}\, x^{|\mathbf{b}|}(\sqrt{1-x^2})^{2L+1-|\mathbf{b}|}\ket{\mathbf{b}}
    \end{align}
    Here $|\mathbf{b}|$ denotes the Hamming weight of the bit string $\mathbf{b}$.
    
    As $\mathcal{D}$ is the decoder for the bit flip code, let us assume for the moment that no error is observed. In this case we have that $\mathbf{b}=0\cdots 0$ or $\mathbf{b}=1\cdots 1$. The probability of this occurring is the same for both codewords and if we define $\Pi_L$ to be the logical subspace corresponding to these codewords we have that the probability of projecting into this codespace is
    \begin{equation}
        \bra{0}^{\otimes 2L+1} e^{iH} \Pi_L e^{-iH} \ket{0}^{\otimes 2L+1} = x^{2(2L+1)} + (1-x^2)^{2L+1}=\bra{1}^{\otimes 2L+1} e^{iH} \Pi_L e^{-iH} \ket{1}^{\otimes 2L+1}
    \end{equation}
    Thus we can guarantee that the probability of failing to project is at most $\delta$ if
    \begin{equation}
         |x|\in [0,\sqrt{1-(1-\delta)^{1/(2L+1)}}]\cup [(1-\delta)^{1/(4L+2)},1]  
    \end{equation}
    In the event that the projection is successful then either $I$ or $X_L$ is applied to the state which leads to
    \begin{align}
        \ket{0}_L &\rightarrow \frac{x^{2L+1} \ket{0}_L + (-1)^L(-i) ({1-x^2})^{(2L+1)/2} \ket{1}_L}{\sqrt{ x^{2(2L+1)} + (1-x^2)^{2L+1}}}\nonumber\\
        \ket{1}_L &\rightarrow \frac{x^{2L+1} \ket{1}_L + (-1)^L(-i) ({1-x^2})^{(2L+1)/2} \ket{0}_L}{\sqrt{ x^{2(2L+1)} + (1-x^2)^{2L+1}}}
    \end{align}
    This transformation is logically equivalent to $R_L(x)$ where
    \begin{equation}
        R_L(x) = e^{-i(-1)^LX_L \arctan\left( \frac{x^{2L+1}}{(1-x^2)^{(2L+1)/2}}\right)}.
    \end{equation}
    Thus if no errors are detected in the entire QSP protocol then a logical rotation $R_L$ is enacted.

    Let us assume that we wish to carry out at most $D$ such rotations then the probability of succeeding can be bounded above by $\delta$ from the union bound if the probability of each failure is at most $\delta/D$. This is equivalent to assuming that
    \begin{equation}
        |x|\in [0,\sqrt{1-(1-\delta/D)^{1/(2L+1)}}]\cup [(1-\delta/D)^{1/(4L+2)},1] 
    \end{equation}

    Given that $P,Q$ satisfy the standard assumptions for QSP in the Z-X basis (stated in Proposition \ref{prop:ECQSP}) we then have that we can implement a QSP sequence of the form
    \begin{equation}
        e^{-iZ_L\theta_D} R_L(x) e^{-iZ_L \theta_{D-1}} R_L(x) \cdots R_L(x) e^{-iZ_L \theta_0}
    \end{equation}
    to enact a transformation of the form
    \begin{equation}
    {\tiny
        \begin{bmatrix}
            {P}\left( \frac{x^{2L}}{\sqrt{x^{2(2L+1)}+(1-x^2)^{2L+1}}}\right) & -i\sqrt{1-\frac{x^{4L}}{x^{2(2L+1)}+(1-x^2)^{2L+1}} } Q\left(\frac{x^{2L}}{\sqrt{x^{2(2L+1)}+(1-x^2)^{2L+1}}} \right)\\
            -i\sqrt{1-\frac{x^{4L}}{x^{2(2L+1)}+(1-x^2)^{2L+1}} } Q^*\left(\frac{x^{2L}}{\sqrt{x^{2(2L+1)}+(1-x^2)^{2L+1}}}\right) &{P}^*\left( \frac{x^{2L}}{\sqrt{x^{2(2L+1)}+(1-x^2)^{2L+1}}}\right). 
        \end{bmatrix}}
    \end{equation}
    As this sequence requires $D$ queries to $R_L$ the probability of failure is at most $\delta$. Each query to $R_L$ can be implemented using $2L+1$ queries to $e^{-i X_j \arccos(x)}$ and so the total number of queries needed is at most $(2L+1)D$.
\end{proof}
The transformation in the above proposition is of a form that is not directly attainable using conventional quantum signal processing methods. This can be seen readily from $P$ because the polynomial attained for any $D>0$ is not a finite degree polynomial of $x$. This means that the above transformation can at best be approximated using conventional quantum signal processing, but not precisely attained. Failure probability is present in the above result however, but this can be mitigated as discussed below.
\begin{corollary}
    Under the assumptions of Proposition~\ref{prop:arctan}, encoded quantum signal processing can be used to implement a transformation $\tilde{W}_L$  such that $\|W_L - \tilde{W}_L\|\le \epsilon$ using a number of queries to $e^{-i X_j \arccos(x)}$ that are in
    $$
    \widetilde{{O}}\left(\frac{LD \log(1/\epsilon)}{\sqrt{\delta}}\right)
    $$
    where $\delta$ is the failure probability of the code-space projection.
\end{corollary}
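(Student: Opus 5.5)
The plan is to turn the probabilistic implementation of Proposition~\ref{prop:arctan} into a near-deterministic one using oblivious amplitude amplification applied to each logical rotation $R_L(x)$. Call the single-iterate operation $V$: apply $e^{-iH}$ to the encoded state, then measure the bit-flip syndrome. Coherently, $V$ is a block encoding of $R_L(x)$ scaled by $\sqrt{p_s}$, where $p_s = x^{2(2L+1)} + (1-x^2)^{2L+1}$. The calculation in the proof of Proposition~\ref{prop:arctan} shows that $p_s$ is \emph{independent of the logical input state}, since both codewords have the same projection probability. This state independence is exactly the hypothesis oblivious amplitude amplification needs. The syndrome projector $\Pi_L$ and the reflection $2\Pi_L - I$ can be built from parity checks without any signal queries. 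The only signal-dependent part is $e^{\pm iH}$, and each use costs $2L+1$ queries to $e^{-iX_j\arccos(x)}$ (or its inverse, obtained by reversing sign conventions or by conjugating with $Z$'s).

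\textbf{Step 1.} Amplify each use of $R_L$ separately. Under the hypotheses of Proposition~\ref{prop:arctan}, per-round failure is at most $\delta$, so $p_s \ge 1-\delta$. (The corollary's $1/\sqrt{\delta}$ scaling suggests reading $\delta$ as the relevant bound on the success amplitude. To be safe, I would phrase the argument with success probability $p_s$ and show the cost is $O(1/\sqrt{p_s})$, which is at most $O(1/\sqrt{\delta})$ whenever $p_s\ge\delta$.)

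Because $p_s$ depends on the unknown $x$, standard fixed-angle amplification can overshoot. I would therefore use fixed-point amplitude amplification (Yoder--Low--Chuang), or equivalently a QSP-based sign/threshold polynomial applied to the singular value $\sqrt{p_s}$. This gives a block encoding of $R_L(x)$ with error $\epsilon'$ using $O(\log(1/\epsilon')/\sqrt{p_s})$ calls to $V$ and $V^\dagger$. Oblivious amplitude amplification ensures the amplified map acts as the unitary $R_L(x)$ on the logical subspace for every input state.

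\textbf{Step 2.} Compose. Substitute the amplified $\tilde R_L$ into the sequence $e^{-iZ_L\theta_D}\tilde R_L e^{-iZ_L\theta_{D-1}}\cdots e^{-iZ_L\theta_0}$ from Proposition~\ref{prop:arctan}. Errors add up under the operator norm, since products of contractions are subadditive in error. This gives $\|W_L-\tilde W_L\|\le D\epsilon'$. Choosing $\epsilon'=\epsilon/D$, the total query count is
\[
D\cdot(2L+1)\cdot O\!\left(\frac{\log(D/\epsilon)}{\sqrt{\delta}}\right)=\widetilde O\!\left(\frac{LD\log(1/\epsilon)}{\sqrt\delta}\right),
\]
with $\log D$ absorbed into the tilde.

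The main obstacle is Step 1. Oblivious amplitude amplification requires the block-encoded operator to be proportional to a unitary on the logical space, with a known or uniformly lower-bounded proportionality constant. Two things must be checked. First, the unitarity of $R_L(x)$ follows from the normalized transformation in the proof of Proposition~\ref{prop:arctan}, and the state independence of $p_s$ follows from the symmetry of the two codewords. Second, the fixed-point construction needs only a lower bound on $p_s$, not its exact value. I would handle the unknown $x$ by using a fixed-point polynomial valid for all $p_s\ge\delta$ (or $\ge 1-\delta$). I would also verify that the decoding correction, applying $X_L$ when the all-ones pattern is observed, can be done coherently as a controlled $X_L$ conditioned on the syndrome register, so that $V$ is a genuine unitary dilation.
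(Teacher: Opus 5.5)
Your argument is correct, but it organizes the amplification differently from the paper.

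\textbf{How the two routes differ.} The paper treats the entire $D$-round EQSP circuit as a single walk operator $S=-R_0R_g$. This circuit contains all $(2L+1)D$ signal queries, with every round's code-space projection counted as part of success. The paper amplifies $S$ once: by oblivious amplitude amplification when the success probability is known, and by a QSVT/fixed-point polynomial of degree $\widetilde{O}(\log(1/\epsilon)/\sqrt{\delta})$ otherwise. It then multiplies by the $(2L+1)D$ queries per use of $S$. You instead amplify each $R_L$ round separately and compose, which costs an extra $\log D$ from splitting the error budget and requires a composition argument.

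\textbf{What each route buys.} The paper's route is shorter. However, the amplitude it amplifies is the product $p_s^{D/2}$ of the per-round amplitudes. Its bound therefore takes the stated form only in the union-bound regime of Proposition~\ref{prop:arctan} (per-round failure at most $\delta/D$); otherwise the cost degrades like $p_s^{-D/2}$. Your cost, $D(2L+1)\,O(\log(D/\epsilon)/\sqrt{p_s})$, stays linear in $D$ for any per-round success probability.

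You also make explicit two points the paper leaves implicit: the state independence of $p_s$, which is exactly what obliviousness needs, and how to obtain $e^{+iH}$ by $Z$-conjugation. You correctly flag that the $1/\sqrt{\delta}$ scaling only makes sense if $\delta$ lower-bounds the success probability, which is how the paper's proof tacitly uses it.

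\textbf{Two small repairs.}

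First, drop the syndrome-controlled $X_L$. The trivial syndrome cannot distinguish $\ket{0}^{\otimes 2L+1}$ from $\ket{1}^{\otimes 2L+1}$; conditioning on which one occurred would amount to a logical measurement. No correction is needed anyway: $\Pi_L e^{-iH}\Pi_L = x^{2L+1}I_L - i(-1)^L(1-x^2)^{(2L+1)/2}X_L$ is already $\sqrt{p_s}$ times a unitary. So $e^{-iH}$ itself is the dilation, with $\Pi_L$ as both input and output projector.

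Second, your composition step treats the code-space block of the product as the product of the blocks. This is not literally true, because all rounds use the same physical register's off-code component as their flag, so amplitude that leaks out in one round can re-enter in a later one. There are two fixes:
\begin{itemize}
\item Copy each round's syndrome into a fresh ancilla register. The composed block is then exactly the product, and your contraction argument applies verbatim.
\item Alternatively, telescope. If each round $V_k$ satisfies $\|\Pi_L V_k\Pi_L-U_k\|\le\epsilon'$, with $U_k$ the ideal logical unitary, then $\|V_k\Pi_L-U_k\|\le\epsilon'+\sqrt{2\epsilon'}$. Hence $\|\Pi_L V_D\cdots V_1\Pi_L-U_D\cdots U_1\|\le D(\epsilon'+\sqrt{2\epsilon'})$. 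Taking $\epsilon'=\Theta(\epsilon^2/D^2)$ changes the logarithm by only a constant factor.
\end{itemize}
Either way the final $\widetilde{O}(LD\log(1/\epsilon)/\sqrt{\delta})$ bound is unaffected.
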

\begin{proof}
    Let $R_g$ be a reflection operator such that $R_g\ket{p}=-\ket{p}$ if $\ket{p}\in \mathcal{H}_L$ for $\mathcal{C}$. Then if we construct a unitary that reflects about the ancillary registers used for the measurement $R_0$, we can construct $S=-R_0R_g$ as a walk operator. The operator $S$ by the results of Proposition~\ref{prop:arctan} requires $(2L+1)D$ queries to $e^{-iX_j \arccos(x)}$ to implement. Upon successful projection, we have from Proposition~\ref{prop:arctan} that a unitary is applied. Thus if the success probability is known a priori then oblivious amplitude amplification can be applied on $S$ to achieve $W_L'$ using $O(1/\sqrt{\delta})$ applications~\cite{berry2014exponential}.

    In the event that the success probability is not known, the quantum singular value transformation can be used to construct a unitary $\tilde{W}_L$ such that $\|\tilde{W}_L - W_L\|\le \epsilon$ by implementing a polynomial of degree $\widetilde{O}(\log(1/\epsilon)/\sqrt{\delta})$~\cite{yoder2014fixed,martyn2021grand,gilyen2019quantum}. This implies from the quantum singular value transformation that $\widetilde{O}(\log(1/\epsilon)/\sqrt{\delta})$ queries to $S$ are needed in this process. The claim then follows by multiplying this number of queries to $e^{-i X_j \arccos(x)}$ per query to $S$ which is $(2L+1)D$.
\end{proof}
This corollary shows that, in principle, EQSP can implement families of transformations that are not directly implementable using conventional quantum signal processing and further the success probability can be boosted to $1$ using oblivious amplitude amplification (at the price of small but controllable error). The resulting function can be seen to be a sigmoid, as illustrated in Fig.~\ref{fig:arctan}. For the case where $D=1$, we see that such a sigmoid can be constructed using  $2L+1= 11$ queries, which (after post-selection) is substantially better than the $19$ queries seen for the threshold function in~\cite{martyn2021grand} that is constructed using traditional QSP (but does not require any post-selection).

The fact that EQSP can easily implement sigmoid functions is, in its essence, the core property that we use in the following discussion where we show how to use these properties to solve metrological problems including Heisenberg limited sensing in the presence of stray $X$ fields and noisy phase estimation problems. 

\begin{figure}[t]
    \centering
    \includegraphics[width=0.45\linewidth]{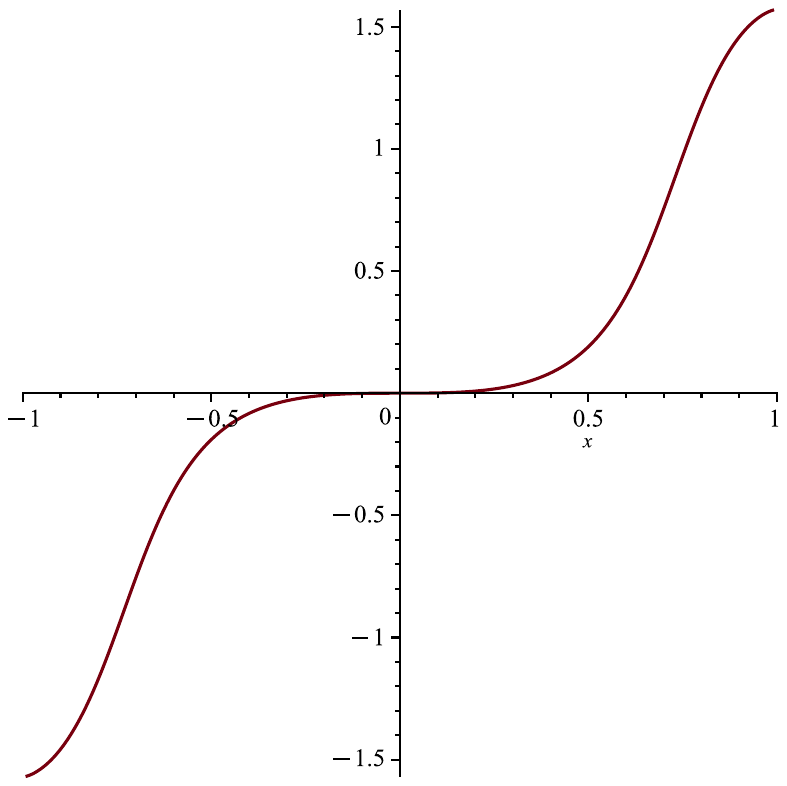}
    \includegraphics[width=0.45\linewidth]{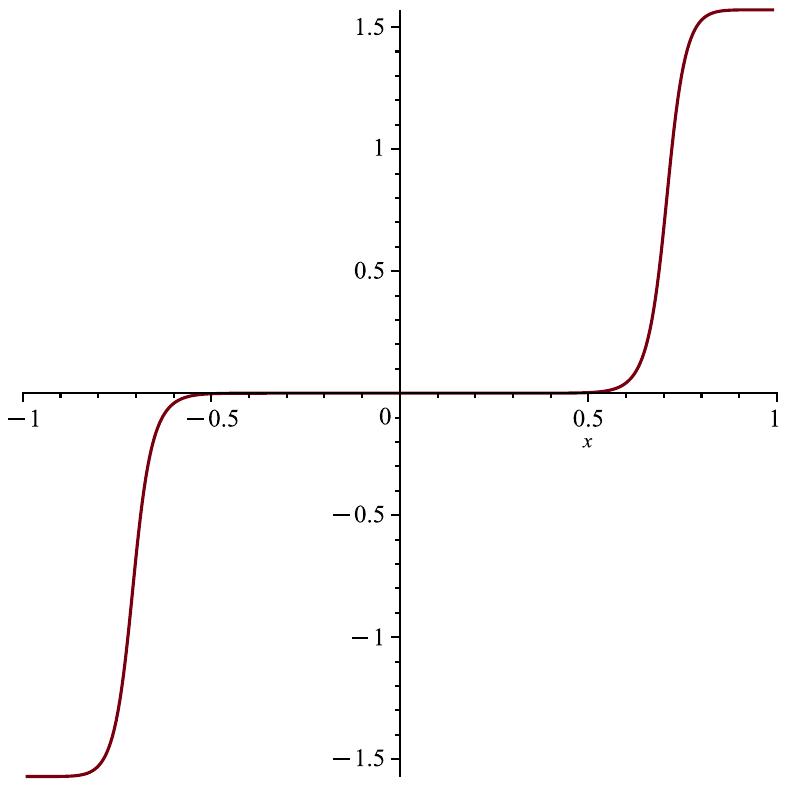}
    \caption{Plot of $\arctan\left( \frac{x^{2L+1}}{(1-x^2)^{(2L+1)/2}}\right)$ for $L=1$ ($N=3$, left) and $L=2$ ($N=5$, right). The resultant phase angle approaches a sigmoid function very rapidly and exactly upon success.}
    \label{fig:arctan}
\end{figure}

\section{Metrology as a Single-Qubit EQSP Problem}
We now apply the Encoded QSP framework of Section~\ref{sec:problem} to quantum metrology. The signal corresponds to a magnetic field applied to each qubit, and the error-detecting code provides robustness to transverse noise while preserving sensitivity to the signal.

\subsection{Problem Statement}
We relate EQSP to metrology by considering the following setting. We wish to combine an  entangled system of qubits in a $[2L+1,1,2L+1]$ repetition code analogous to the setting considered in Proposition~\ref{prop:arctan}. The signal in this setting can be thought of as a Hamiltonian that corresponds to a magnetic field applied to each of the $2L+1$ qubits. Specifically, we take $H= \sum_{k=0}^{2L} H_k$ where
\begin{equation}
H_k = \omega_k Z_k + \gamma_k X_k + \chi_k Y_k,
\end{equation}
where $Z_k$, $X_k$, and $Y_k$ are the Pauli operators on qubit $k$, $\omega_k = \omega + \epsilon_k$ can be interpreted as a longitudinal field strength comprising the signal $\omega$ and the inhomogeneity $\epsilon_k$, $\gamma_k$ is the transverse field component in the $X$ direction, and $\chi_k$ is the transverse field component in the $Y$ direction.

%We combine a system of $n=(2L+1)$ spin-1/2 particles (qubits) coupled to external magnetic fields, where $L$ is a positive integer controlling the code size. Each qubit $k$ evolves under a local Hamiltonian:

Our goal is to estimate the signal parameter $\omega$ using a number of queries to $e^{-i H_k}$ within error $\delta \omega$ that approaches the Heisenberg limit, meaning that the estimation error $\delta\omega$ scales as $1/N$ with $N$ being the total number of resources (qubits or measurements).  In this setting, one can interpret this as a system of entangled spin $1/2$ particles sent through a magnetic field and here the number of queries can be interpreted as the number of particles exposed to the field (as is common in entanglement based metrology schemes).

Traditional entanglement-based metrology schemes fail to learn $\omega$ at the Heisenberg limit because the remaining $X$ and $Y$ fields confound the inference.  Our aim here is to consider the use of error correction within the EQSP protocol to mitigate the impact of these terms.  Our proposal then corresponds to $\mathcal{E}=I$ and $\mathcal{D}$ the decoding channel for the bit flip code.

The unitary evolution of each qubit is:
\begin{equation}
U_k = e^{-i(\omega_k Z_k + \gamma_k X_k + \chi_k Y_k)t},
\end{equation}
where we set $t=1$ for simplicity. Using the fact that Pauli matrices anti-commute, we can expand:
\begin{align}
U_k = \cos(\Omega_k) I -i\sin(\Omega_k)\left(\frac{\omega_k Z_k + \gamma_k X_k + \chi_k Y_k}{\Omega_k} \right),
\end{align}
where $\Omega_k = \sqrt{\omega_k^2 + \gamma_k^2 + \chi_k^2}$ is the total field strength.
%\subsection{Decomposition into Signal and Error Components}

Notice that we can decompose the evolution into a desired signal component (pure $Z$-rotation) and unwanted error terms. By grouping the $Z$ and identity terms, we find that there exist $\beta_k \geq 0$ and $\phi_k \in [0,2\pi)$ such that:  
\begin{align}
    U_k &= \beta_k e^{-iZ_k\phi_k}- i\frac{\sin(\Omega_k)\sqrt{\gamma_k^2 + \chi_k^2}}{\Omega_k}\left(\frac{\gamma_k X_k + \chi_k Y_k}{\sqrt{\gamma_k^2 + \chi_k^2}} \right)\nonumber\\
    &= \beta_k e^{-iZ_k\phi_k} - i \sqrt{1- \beta_k^2}\left(\frac{\gamma_k X_k + \chi_k Y_k}{\sqrt{\gamma_k^2 + \chi_k^2}} \right)\nonumber\\
    &=\beta_k e^{-iZ_k\phi_k} - i \sqrt{1- \beta_k^2} e^{-i \Upsilon_k Z_k}X_k e^{i \Upsilon_k Z_k}\nonumber\\
    &=\beta_k e^{-iZ_k\phi_k} - i \sqrt{1- \beta_k^2} X_k e^{2i \Upsilon_k Z_k},\label{eq:ErrorExpansion}
\end{align}
where $\Upsilon_k = \frac{1}{2}\arctan(\chi_k/\gamma_k)$ characterizes the relative strength of $Y$ versus $X$ errors.

Using Euler's identity and elementary trigonometry, we obtain:
\begin{align}
    \phi_k&= \arctan\left(\frac{\tan(\Omega_k)\omega_k}{\Omega_k} \right),\\
    \beta_k&= \sqrt{\cos^2(\Omega_k) + \sin^2(\Omega_k)\left( \frac{\omega_k^2}{\Omega_k^2}\right)}.
\end{align}

In the small noise regime where $\max_k |\gamma_k/\omega_k| = o(1)$ and $\chi_k = 0$ (the general case with $\chi_k \neq 0$ is treated in Lemma~\ref{lem:mapping}), we can expand:
\begin{equation}
\Omega_k = \sqrt{\omega_k^2 + \gamma_k^2} = \omega_k\sqrt{1 + \gamma_k^2/\omega_k^2} = \omega_k\left(1 + \frac{\gamma_k^2}{2\omega_k^2} + O(\gamma_k^4/\omega_k^4)\right),
\end{equation}
which gives $\tan(\Omega_k) = \tan(\omega_k)(1 + O(\gamma_k^2/\omega_k^2))$ and $\tan(\Omega_k)\omega_k/\Omega_k = \tan(\omega_k) + O(\gamma_k^2/\omega_k)$. Therefore:
\begin{equation}
\phi_k = \arctan\left(\tan(\omega_k) + O(\gamma_k^2/\omega_k)\right) = \omega_k + O(\gamma_k^2/\omega_k). \label{eq:phi-taylor}
\end{equation}
Similarly, $\beta_k = 1 - O(\gamma_k^2/\omega_k^2)$ in this regime.

Equation~\eqref{eq:ErrorExpansion} reveals the key structure: the evolution decomposes into a desired $Z$-rotation (signal) and an error term involving $X_k$. This motivates our quantum error detection approach. Notice that the $\beta_k e^{-iZ_k\phi_k}$ term contains the signal information about $\omega$ and $X_k e^{2i\Upsilon_k Z_k}$ term represents an effective error. Our goal is to apply quantum error detection techniques to identify these errors while preserving the signal.

A key advantage of our approach is that by using quantum error detection codes, we can suppress undetected errors exponentially as $O(p^{L+1})$ where $p$ is the single-qubit error probability and $L$ is the code parameter \cite{nielsen2002quantum,gottesman1997stabilizer}; we prove this explicitly for the repetition code family in Proposition~\ref{prop:error-suppression} (Section~\ref{sec:fixed-omega}), where the error suppression enables Heisenberg-limited scaling for parameter estimation.

\subsection{Syndrome-Dependent Signal Transformations}

The decomposition in Equation~\eqref{eq:ErrorExpansion} reveals the core mechanism of Encoded QSP: the physical signal unitary, when applied to qubits protected by a repetition code, induces a syndrome-dependent logical rotation. The syndrome measurement itself acts as the signal-processing step, projecting the multi-qubit evolution onto an effective single-logical-qubit transformation. We now formalize this connection.

Consider a signal unitary in the XY-plane:
\begin{equation}
U_s(\phi, \vartheta) = \cos(\phi) I - i\sin(\phi) R(\vartheta), \quad \text{where } R(\vartheta) = \cos(\vartheta) X + \sin(\vartheta) Y.
\label{eq:signal-rotation}
\end{equation}
This corresponds to our physical evolution~\eqref{eq:ErrorExpansion} via the following mapping.

\begin{lemma}[Hamiltonian to Signal Rotation Mapping]
\label{lem:mapping}
The single-qubit evolution $U_k = e^{-i(\omega_k Z_k + \gamma_k X_k + \chi_k Y_k)}$ decomposes as in Equation~\eqref{eq:ErrorExpansion}:
\begin{equation}
U_k = \beta_k e^{-iZ_k\phi_k} - i\sqrt{1-\beta_k^2}\, R(2\Upsilon_k),
\end{equation}
where $R(\vartheta) = \cos(\vartheta)X + \sin(\vartheta)Y$. Comparing the error structure (i.e., the $\beta_k$ and $\sqrt{1-\beta_k^2}$ amplitudes) with the signal rotation form $U_s(\phi, \vartheta) = \cos(\phi)I - i\sin(\phi)R(\vartheta)$ from Equation~\eqref{eq:signal-rotation}, we identify:
\begin{align}
\cos(\phi) &= \beta_k, \quad \sin(\phi) = \sqrt{1-\beta_k^2}, \label{eq:map-amplitudes}\\
\vartheta &= 2\Upsilon_k = \arctan\left(\frac{\chi_k}{\gamma_k}\right).
\end{align}
The Z-rotation $e^{-iZ_k\phi_k}$ with $\phi_k = \arctan(\tan(\Omega_k)\omega_k/\Omega_k)$ is the signal phase we estimate; it commutes with the repetition code stabilizers and contributes directly to the logical phase.

In the small-noise regime where $(\gamma_k^2 + \chi_k^2)/\omega_k^2 = o(1)$:
\begin{align}
\beta_k &= 1 - \frac{\sin^2(\omega_k)(\gamma_k^2 + \chi_k^2)}{2\omega_k^2} + O\left(\frac{(\gamma_k^2 + \chi_k^2)^2}{\omega_k^4}\right),\\
\phi_k &= \omega_k + O\left(\frac{\gamma_k^2 + \chi_k^2}{\omega_k}\right),\\
\sqrt{1-\beta_k^2} &= \frac{|\sin(\omega_k)|\sqrt{\gamma_k^2 + \chi_k^2}}{\omega_k} + O\left(\frac{(\gamma_k^2 + \chi_k^2)^{3/2}}{\omega_k^3}\right).
\end{align}
\end{lemma}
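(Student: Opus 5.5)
I will verify the decomposition directly from the single-qubit exponential and then Taylor expand. First, since $(\omega_k Z+\gamma_k X+\chi_k Y)^2=\Omega_k^2 I$, we have
\begin{equation}
U_k=\cos(\Omega_k)I-i\frac{\sin\Omega_k}{\Omega_k}\left(\omega_k Z_k+\gamma_k X_k+\chi_k Y_k\right).
\end{equation}
The identity and $Z$ parts are grouped as $\cos\Omega_k I - i(\sin\Omega_k\,\omega_k/\Omega_k)Z_k$. Any $a I - i b Z$ with $a,b$ real equals $\sqrt{a^2+b^2}\,e^{-i\phi Z}$ with $\tan\phi=b/a$. This gives $\beta_k=\sqrt{\cos^2\Omega_k+\sin^2\Omega_k\,\omega_k^2/\Omega_k^2}$ and $\phi_k=\arctan(\tan(\Omega_k)\omega_k/\Omega_k)$, choosing the branch by the sign of $a$ (a global sign of $\beta_k$ can be absorbed into $\phi_k\mapsto\phi_k+\pi$, which keeps $\beta_k\ge0$).

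For the transverse part, I write $\gamma_k X+\chi_k Y=\sqrt{\gamma_k^2+\chi_k^2}\,(\cos\vartheta X+\sin\vartheta Y)=\sqrt{\gamma_k^2+\chi_k^2}\,R(\vartheta)$ with $\vartheta=\arctan(\chi_k/\gamma_k)=2\Upsilon_k$ (quadrant fixed by the signs of $\gamma_k,\chi_k$). The coefficient $\sin\Omega_k\sqrt{\gamma_k^2+\chi_k^2}/\Omega_k$ must equal $\sqrt{1-\beta_k^2}$, up to a sign absorbed into $\vartheta\mapsto\vartheta+\pi$. This follows from
\begin{equation}
1-\beta_k^2=\sin^2\Omega_k\,(1-\omega_k^2/\Omega_k^2)=\sin^2\Omega_k\,(\gamma_k^2+\chi_k^2)/\Omega_k^2 .
\end{equation}
The identification with $U_s$ in \eqref{eq:map-amplitudes} is then a matter of reading off amplitudes: $\cos\phi=\beta_k$ and $\sin\phi=\sqrt{1-\beta_k^2}$ is consistent because the squares sum to one. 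The remark that $e^{-iZ_k\phi_k}$ commutes with the bit-flip code stabilizers $Z_jZ_{j+1}$ is immediate, since all of these are diagonal in the computational basis.

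For the asymptotics, set $\eta=(\gamma_k^2+\chi_k^2)/\omega_k^2=o(1)$, so that $\Omega_k=\omega_k(1+\eta/2+O(\eta^2))$ and $\omega_k^2/\Omega_k^2=1-\eta+O(\eta^2)$. Then:
\begin{itemize}
\item $1-\beta_k^2=\sin^2\Omega_k\,\eta/(1+\eta)$. Expanding $\sin^2\Omega_k=\sin^2\omega_k+O(\omega_k\eta)$ gives $1-\beta_k^2=\sin^2(\omega_k)\eta+O(\eta^2)$.
\item This yields $\beta_k=1-\tfrac12\sin^2(\omega_k)\eta+O(\eta^2)$.
\item It also yields $\sqrt{1-\beta_k^2}=|\sin\omega_k|\sqrt{\eta}\,(1+O(\eta))$, which is the stated $O(\eta^{3/2})$ remainder.
\item For $\phi_k$, I reuse the argument preceding \eqref{eq:phi-taylor} with $\gamma_k^2$ replaced by $\gamma_k^2+\chi_k^2$, since only $\Omega_k$ depends on the transverse fields.
\end{itemize}

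The main obstacle is keeping the error terms uniform. The correction $\sin^2\Omega_k-\sin^2\omega_k=O(\omega_k\eta)$ is only of size $O(\eta)$ if $\omega_k$ is bounded, and the expansion of $\arctan\tan$ needs $\Omega_k$ and $\omega_k$ on the same branch, away from $\tan$ poles. I would state these as standing assumptions: bounded $\omega_k$, and $\cos\omega_k$ bounded away from $0$ or else a branch-consistent argument. Under them the $O(\cdot)$ bounds follow by Taylor's theorem with remainder.
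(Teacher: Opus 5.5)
Your proposal is correct and follows essentially the paper's route: expand $U_k$, regroup the $I/Z$ part into $\beta_k e^{-iZ_k\phi_k}$ and the transverse part into $\sqrt{1-\beta_k^2}\,R(2\Upsilon_k)$, then Taylor-expand in $\eta$. The paper identifies the error operator via $e^{-i\Upsilon Z}Xe^{i\Upsilon Z}=Xe^{2i\Upsilon Z}=R(2\Upsilon)$, whereas you normalize $\gamma_k X+\chi_k Y$ directly, and your sign, branch and boundedness caveats are more careful than the paper's sketch.

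One small point concerns the $\sqrt{1-\beta_k^2}$ bullet. Read that estimate directly off the exact identity $\sqrt{1-\beta_k^2}=|\sin\Omega_k|\sqrt{\eta/(1+\eta)}$. Your multiplicative form $|\sin\omega_k|\sqrt{\eta}\,(1+O(\eta))$ fails near $\sin\omega_k=0$ (e.g. $\omega_k=\pi$). The additive $O(\eta^{3/2})$ remainder that the lemma actually states does hold uniformly for bounded $\omega_k$.
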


\begin{proof}
The decomposition~\eqref{eq:ErrorExpansion} expresses $U_k$ as a signal term plus an error term. Using the identity $e^{-i\Upsilon Z}X e^{i\Upsilon Z} = X e^{2i\Upsilon Z}$ and the fact that $X e^{2i\theta Z} = \cos(2\theta)X + \sin(2\theta)Y$, the error operator becomes $R(2\Upsilon_k)$. The amplitude $\sqrt{1-\beta_k^2}$ plays the role of $\sin(\phi)$ in the signal rotation form~\eqref{eq:signal-rotation}, while $\beta_k$ plays the role of $\cos(\phi)$. The error bounds follow from Taylor expansion of $\Omega_k = \omega_k\sqrt{1 + (\gamma_k^2+\chi_k^2)/\omega_k^2}$ as detailed in Equation~\eqref{eq:phi-taylor}.
\end{proof}

With this correspondence established, we now characterize the logical operation induced by syndrome measurement.

\begin{theorem}[Syndrome-Dependent Logical Rotation]
\label{thm:syndrome-rotation}
Consider a repetition code of length $N = 2L+1$ with signal operator $U_s(\phi, \vartheta)^{\otimes N}$ applied to the physical qubits. For a syndrome indicating $j$ detected bit flips (where $0 \leq j \leq L$), the effective logical operation is:
\begin{equation}
U_j \propto \exp\left[-i\Theta_j \left(\cos(\vartheta_{\mathrm{eff}}) X_L + \sin(\vartheta_{\mathrm{eff}}) Y_L\right)\right],
\end{equation}
where the effective rotation angle and phase are:
\begin{align}
\Theta_j &= \arctan\left(\tan^{N-2j}(\phi)\right), \label{eq:theta-j}\\
\vartheta_{\mathrm{eff}} &= (N - 2j)\vartheta + (L - j)\pi. \label{eq:vartheta-eff}
\end{align}
The amplitude components satisfy:
\begin{align}
\cos\Theta_j &= \frac{\cos^{N-2j}(\phi)}{\sqrt{\cos^{2(N-2j)}(\phi) + \sin^{2(N-2j)}(\phi)}},\\
\sin\Theta_j &= \frac{\sin^{N-2j}(\phi)}{\sqrt{\cos^{2(N-2j)}(\phi) + \sin^{2(N-2j)}(\phi)}}.
\end{align}
\end{theorem}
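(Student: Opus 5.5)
We will compute the action of $U_s(\phi,\vartheta)^{\otimes N}$ on the logical codewords, exactly as in the proof of Proposition~\ref{prop:arctan}, and then project onto a fixed syndrome. First we write $R(\vartheta) = X e^{i\vartheta Z}$, so that $R(\vartheta)\ket{0} = e^{i\vartheta}\ket{1}$ and $R(\vartheta)\ket{1} = e^{-i\vartheta}\ket{0}$. Expanding the tensor product over subsets $S\subseteq\{0,\dots,N-1\}$ of qubits on which the $-i\sin(\phi)R(\vartheta)$ term acts, we get
\begin{align}
U_s^{\otimes N}\ket{0}^{\otimes N} &= \sum_{S} \cos^{N-|S|}(\phi)\,(-i\sin\phi)^{|S|} e^{i|S|\vartheta}\ket{\mathbf{1}_S},\\
U_s^{\otimes N}\ket{1}^{\otimes N} &= \sum_{S} \cos^{N-|S|}(\phi)\,(-i\sin\phi)^{|S|} e^{-i|S|\vartheta}\ket{\mathbf{1}_{\bar S}},
\end{align}
where $\ket{\mathbf{1}_S}$ denotes the computational basis string with ones exactly on $S$.

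Next we fix a syndrome. A repetition-code syndrome identifies a pair of complementary strings $\{\mathbf{b},\bar{\mathbf{b}}\}$. The minimum-weight decoder attributes $j\le L$ flips to the lighter string, say weight $j$ on a set $S_0$. Projecting onto this syndrome and applying the recovery $X_{S_0}$ therefore keeps two contributions from each codeword. From $\ket{0}_L$, the choice $S=S_0$ contributes $\cos^{N-j}(\phi)(-i\sin\phi)^{j}e^{ij\vartheta}\ket{0}_L$, and the choice $S=\bar S_0$ contributes $\cos^{j}(\phi)(-i\sin\phi)^{N-j}e^{i(N-j)\vartheta}\ket{1}_L$. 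Symmetrically, $\ket{1}_L$ picks up the same two amplitudes with $\vartheta\to-\vartheta$ on the off-diagonal term. Factoring out the common scalar $\cos^{j}(\phi)(-i\sin\phi)^{j}e^{ij\vartheta}$ from the first column (and its analogue $e^{-ij\vartheta}$ from the second) leaves a logical operator proportional to
\begin{equation}
\cos^{N-2j}(\phi)\, I_L + (-i)^{N-2j}\sin^{N-2j}(\phi)\, X_L e^{i(N-2j)\vartheta Z_L},
\end{equation}
up to a relative diagonal phase $e^{2ij\vartheta Z_L}$. This phase must be tracked carefully.

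To put the result in the stated form, we write $(-i)^{N-2j} = (-i)(-i)^{2(L-j)} = -i(-1)^{L-j} = -i\,e^{i(L-j)\pi}$. We then use $X_L e^{i\alpha Z_L} = \cos\alpha\, X_L + \sin\alpha\, Y_L$ with $\alpha = (N-2j)\vartheta + (L-j)\pi = \vartheta_{\rm eff}$. This gives
\begin{equation}
\cos^{N-2j}\phi\, I - i\sin^{N-2j}\phi\,\bigl(\cos\vartheta_{\rm eff} X_L + \sin\vartheta_{\rm eff} Y_L\bigr).
\end{equation}
Normalizing by $\sqrt{\cos^{2(N-2j)}\phi+\sin^{2(N-2j)}\phi}$ then yields $\cos\Theta_j$ and $\sin\Theta_j$ as stated, and hence $\Theta_j=\arctan(\tan^{N-2j}\phi)$. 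Since $R(\vartheta_{\rm eff})^2 = I$, the result is exactly $\exp[-i\Theta_j R(\vartheta_{\rm eff})]$.

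The main obstacle is the bookkeeping of the phases $e^{\pm ij\vartheta}$ that differ between the two columns. The global factors differ by $e^{2ij\vartheta}$, which is a logical $Z_L$ phase rather than a scalar. We will handle this by conjugating with $e^{-ij\vartheta Z_L}$ (a syndrome-known frame correction, absorbed into the recovery). Conveniently, this correction shifts the effective axis in a way consistent with the claimed $\vartheta_{\rm eff}$. We will also check the alternative of choosing the phase reference so that the proportionality sign absorbs this factor, verifying the result against the $j=0$ case, which reproduces $R_L(x)$ from Proposition~\ref{prop:arctan} with the $(-1)^L$ sign.
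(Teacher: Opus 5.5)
Your expansion, the two surviving terms per codeword, and the resulting $2\times 2$ matrix are correct, and this is the paper's route too. You have also located the real difficulty, which the paper's sketch skips by asserting the axis ``from tracking the phases through the correction.'' Your resolution of it, however, fails.

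Dividing the two columns by different scalars is right-multiplication by $e^{-ij\vartheta Z_L}$. So with recovery $X_{S_0}$ and $k=N-2j$, your computation actually gives
\[
M_j=\cos^j\phi\,(-i\sin\phi)^j\,\bigl[\cos^k\phi\,I_L-i\sin^k\phi\,R_L(\vartheta_{\mathrm{eff}})\bigr]\,e^{ij\vartheta Z_L}.
\]
The leftover factor is $e^{ij\vartheta Z_L}$ (not $e^{2ij\vartheta Z_L}$), and it sits on the \emph{input} side.

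\begin{itemize}
\item Any operator proportional to $\cos\Theta\,I-i\sin\Theta\,R_L(\alpha)$ has equal diagonal entries. Here $M_{00}/M_{11}=e^{2ij\vartheta}$.
\item Conjugating by a diagonal unitary leaves the diagonal entries unchanged, so it cannot repair this mismatch.
\item A correction absorbed into the recovery multiplies on the \emph{left}. Then $e^{-ij\vartheta Z_L}M_j$ is indeed a pure rotation. Equivalently, recover with $R(\vartheta)^{\otimes S_0}$: since $R^{\otimes S_0}R^{\otimes \bar S_0}=R^{\otimes N}$ acts on the code as $R_L(N\vartheta)$, this gives $\cos^j\phi\,(-i\sin\phi)^j\bigl[\cos^k\phi\,I_L+(-i\sin\phi)^k R_L(N\vartheta)\bigr]$.
\item The axis of that rotation is $N\vartheta+(L-j)\pi$, not $(N-2j)\vartheta+(L-j)\pi$.
\end{itemize}

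So your claim that the frame correction shifts the axis consistently with the stated $\vartheta_{\mathrm{eff}}$ is false. The stated axis appears only if you strip $e^{ij\vartheta Z_L}$ off the input. That would have to happen before the signal is applied, when the syndrome is not yet known.

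Consequently, read as an operator identity, the statement holds for $j\ge 1$ only in two situations:
\begin{itemize}
\item when $j\vartheta\in\pi\mathbb{Z}$, in particular $\vartheta=0$, the case used for $\Phi_N$;
\item when it acts on $Z_L$ eigenstates such as $\ket{0}_L$, as in the SQL-barrier theorem, where $e^{ij\vartheta Z_L}$ is only a global phase.
\end{itemize}
Your planned $j=0$ check cannot detect the discrepancy, because the factor is trivial there.

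A correct write-up must do one of two things:
\begin{itemize}
\item keep the factor explicitly, $M_j\propto\exp\bigl[-i\Theta_j R_L(\vartheta_{\mathrm{eff}})\bigr]\,e^{ij\vartheta Z_L}$; or
\item adopt the $R(\vartheta)$ recovery and replace $\vartheta_{\mathrm{eff}}$ by $N\vartheta+(L-j)\pi$.
\end{itemize}
In either version, the part of your argument giving $\cos\Theta_j$, $\sin\Theta_j$ and $\Theta_j=\arctan(\tan^{N-2j}\phi)$ goes through unchanged.
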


\begin{proof}
The proof follows by expanding the tensor product $U_s(\phi, \vartheta)^{\otimes N}$ and grouping terms by the number of bit flips. For each syndrome with $j$ detected errors, two types of terms contribute: (i) terms with exactly $j$ physical rotations $R(\vartheta)$, which are corrected to the logical codespace, and (ii) terms with $N-j$ rotations, which cause a logical flip that the decoder misinterprets.

After applying the correction, the $|0\rangle_L$ component carries amplitude $\cos^{N-j}(\phi)\sin^j(\phi)$ and the $|1\rangle_L$ component carries $\cos^j(\phi)\sin^{N-j}(\phi)$, so their ratio is $\tan^{N-2j}(\phi)$. Normalizing yields the logical rotation $\cos\Theta_j I_L - i\sin\Theta_j R_L(\vartheta_{\mathrm{eff}})$ with $\Theta_j = \arctan(\tan^{N-2j}(\phi))$, which is Equation~\eqref{eq:theta-j}. The full derivation including phase tracking of $\vartheta_{\mathrm{eff}}$ follows from the subset decomposition framework of Theorem~\ref{thm:post-measurement-state-full} (Appendix~\ref{app:binary-search-proofs}), applied with the identification $\omega_k \to \phi_k$ and the error-corrected logical subspace replacing the GHZ subspace. The effective rotation axis $\vartheta_{\mathrm{eff}} = (N-2j)\vartheta + (L-j)\pi$ arises from tracking the $R(\vartheta)$ phases through the correction operation, where the $\pi$ shifts encode the parity of the decoder's correction.
\end{proof}

% \begin{definition}[Encoded Quantum Signal Processing]
% \label{def:encoded-qsp}
% An \emph{Encoded QSP protocol} consists of: (i) $N$ physical qubits encoding a single logical qubit via an error-detecting code $\mathcal{C}$, (ii) a signal unitary $U_s$ applied to the physical qubits, (iii) syndrome measurement that projects onto a syndrome sector $s$, inducing a syndrome-dependent logical transformation $\mathcal{E}_s$, and (iv) classical post-processing of the logical measurement outcome conditioned on $s$. The syndrome measurement replaces the explicit processing rotations of standard QSP~\cite{low2017optimal,gilyen2019quantum,motlagh2023generalized}, using the code structure itself as the signal-processing primitive.
% \end{definition}

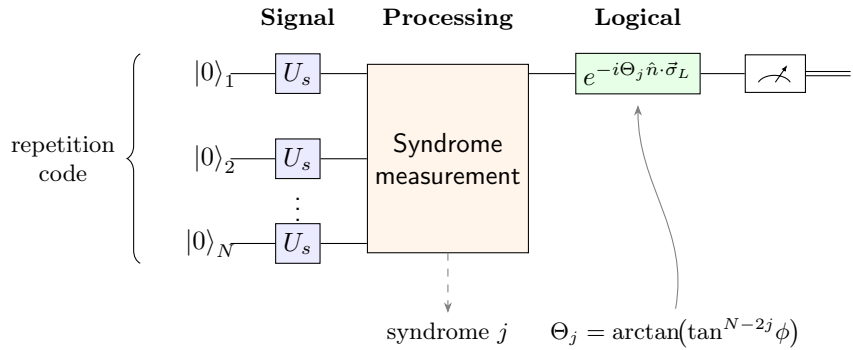
\begin{figure}[t]
\centering
\begin{tikzpicture}
  \begin{yquant}[operator/separation=6mm, register/separation=6mm]
    qubit {} q[3];
    [fill=blue!8, name=sig]
    box {$U_s$} q;
    align q;
    [fill=orange!8, name=syndbox, /yquant/operator/minimum width=22mm]
    box {Syndrome\\measurement} (q);
    discard q[1], q[2];
    [fill=green!10, name=logrot]
    box {$e^{-i\Theta_j \hat{n}\cdot\vec{\sigma}_L}$} q[0];
    measure q[0];
  \end{yquant}
  % Custom qubit labels
  \node[left=4mm of sig-0, anchor=east] {$\ket{0}_1$};
  \node[left=4mm of sig-1, anchor=east] {$\ket{0}_2$};
  \node[left=4mm of sig-2, anchor=east] {$\ket{0}_N$};
  % Vdots between q[1] and q[2]
  \node at ($(sig-1.south)!0.5!(sig-2.north)$) {$\vdots$};
  % Brace on left for ``repetition code''
  \draw[decorate, decoration={brace, amplitude=5pt, mirror}]
    ([xshift=-18mm]sig-0.north west) --
    ([xshift=-18mm]sig-2.south west)
    node[midway, left=6pt, font=\footnotesize, align=center] {repetition\\code};
  % Section labels
  \coordinate (labelY) at ([yshift=6mm]syndbox.north);
  \node[font=\footnotesize\bfseries] at (sig-0 |- labelY) {\strut Signal};
  \node[font=\footnotesize\bfseries] at (syndbox |- labelY) {\strut Processing};
  \node[font=\footnotesize\bfseries] at (logrot |- labelY) {\strut Logical};
  % Syndrome outcome
  \node[below=8mm of syndbox.south, font=\footnotesize] (syndlabel) {syndrome $j$};
  \draw[-{Stealth[length=4pt]}, dashed, gray] (syndbox.south) -- (syndlabel.north);
  % Theta annotation
  \node[right=3mm of syndlabel, font=\footnotesize, anchor=west] (thetalabel)
    {$\Theta_j = \arctan\!\bigl(\tan^{N-2j}\!\phi\bigr)$};
  % Conditioning arrow
  \draw[-{Stealth[length=4pt]}, gray]
    (thetalabel.north) to[out=80, in=-90] ([yshift=-1.5mm]logrot.south);
\end{tikzpicture}
\caption{Schematic of Encoded QSP with the repetition code. $N$ physical qubits, each subject to the signal unitary $U_s$, are encoded in a repetition code. Syndrome measurement extracts the error count $j$ and projects the multi-qubit state onto an effective single-logical-qubit rotation $\Theta_j = \arctan(\tan^{N-2j}(\phi))$ (Theorem~\ref{thm:syndrome-rotation}). The syndrome outcome conditions the classical post-processing, replacing the explicit processing rotations of standard QSP.}
\end{figure}

Theorem~\ref{thm:syndrome-rotation} provides the first concrete instantiation of Encoded QSP using the repetition code: the syndrome-dependent logical transformation takes the specific form $\Theta_j = \arctan(\tan^{N-2j}(\phi))$, a nonlinear function of the signal parameter $\phi$ parameterized by the number of detected errors $j$. The error-detecting code reduces the multi-qubit sensing problem to a family of single-logical-qubit rotations parameterized by the syndrome outcome. For the special case $\vartheta = 0$ (pure X-rotation) and $j = 0$ (no detected errors), we recover the phase amplification function $\Theta_0 = \arctan((-1)^L \tan^N(\phi))$, which drives the Heisenberg-limited binary search protocols of Section~\ref{sec:binary-search}.

The relationship between Encoded QSP and standard QSP merits clarification. In standard QSP and GQSP~\cite{motlagh2023generalized}, alternating signal and processing unitaries implement \emph{polynomial} transformations of the signal parameter (see~\cite{martyn2021grand} for a unifying perspective). Encoded QSP instead implements \emph{non-polynomial} transformations via projective syndrome measurement: for the repetition code, the function $\arctan(\tan^{N-2j}(\phi))$ is not a polynomial in $\phi$. More generally, the class of achievable transformations depends on the choice of code $\mathcal{C}$, the signal structure, and the syndrome processing strategy, opening a rich design space complementary to the polynomial universality of GQSP. Recent work on QSP-based phase estimation~\cite{dong2024optimal} and quantum computational sensing~\cite{khan2025quantum,allen2025quantum} further demonstrates the breadth of this connection. The block-encoding perspective on quantum error correction~\cite{lin2026quantum} provides additional context: the code block-encodes the signal, and syndrome measurement selects a specific block of the logical Hilbert space.

\section{Phase Estimation within Encoded QSP}

Having established the Encoded QSP framework, we now develop the phase estimation machinery needed to extract signal information from the logical qubit. The key idea is to combine information-theoretic phase estimation with syndrome-based error detection, using the syndrome outcome to condition the likelihood function rather than to apply recovery operations.

\SetKwInput{KwData}{Input}
\SetKwInput{KwResult}{Output}
\begin{algorithm}[t]
\KwData{Number of iterations $s\ge 0$, number of discrete phase angles $\tau$, unitary operation $U\in\mathbb{C}^{2^n\times 2^n}$, quantum state $\ket{\psi}\in \mathbb{C}^{2^n}$ }

\KwResult{The maximum likelihood estimate of ${\phi}$ given the experimental data generated}
\For{$i\gets0$ \KwTo $s-1$}{
    $M_i \gets$ sample drawn uniformly from $\{1,\ldots,\tau-1\}$\\
    $\theta_i \gets$ sample drawn uniformly from $[0,2\pi)$\\
    $\ket{\psi} \mapsto (He^{-i Z \theta_i} \otimes I)(\ketbra{0}{0} \otimes I + \ketbra{1}{1}\otimes U^{M_i})\ket{+} \ket{\psi}$\\
    $v_i \gets$ measurement outcome of first qubit in computational basis
    }
\For{$k\gets 0$ \KwTo $\tau-1$}{
    $P[k] \gets \prod_{i=0}^{s-1}\Pr(v_i|M_i,\theta_i,2\pi k/\tau) $\\
}
\Return $2\pi~{\rm argmax}(P)/\tau$
\caption{\label{alg:info}Information-Theoretic Phase Estimation}
\end{algorithm}

The decomposition in Equation~\eqref{eq:ErrorExpansion} shows that errors manifest as both amplitude damping (through $\beta_k$) and stochastic bit-flips. Standard quantum phase estimation, which relies on coherent evolution and interference, fails in the noisy regime we are considering. We need an approach that can extract phase information from noisy, partially-decohered states while maintaining some metrological advantage.

The information-theoretic phase estimation algorithm by Svore, Hastings, and Freedman~\cite{svore2013faster}, building on iterative phase estimation~\cite{kitaev1995quantum} and Bayesian approaches to Hamiltonian learning~\cite{granade2012robust}, provides an adaptive computational framework to extract phase information in the presence of noise (see Algorithm~\ref{alg:info}). Unlike standard approaches, this algorithm uses maximum likelihood estimation on classical measurement data. The key insight is that even when quantum coherence is partially lost, the measurement statistics still contain sufficient information to estimate the phase with Heisenberg-limited precision.

In the context of quantum sensing, $U$ represents the evolution of our probe state under the unknown magnetic field for unit time, and $\ket{\psi}$ is our carefully prepared probe state (e.g., a GHZ state or logical codeword). The algorithm estimates the accumulated phase $\lambda = \omega \tau$ by discretizing the phase space into $\tau$ bins and using maximum likelihood estimation.

More concretely, given an eigenstate $\ket{\lambda}$ such that $U\ket{\lambda} = e^{i\lambda} \ket{\lambda}$, we seek integer $k$ such that $2\pi k/\tau$ is closest to $\lambda = \omega \cdot t_{\text{sense}}$, where $t_{\text{sense}}$ is the sensing duration. The algorithm then performs maximum likelihood estimation: $P[k] = \prod_{i=0}^{s-1}\Pr(v_i|M_i,\theta_i,2\pi k/\tau)$.
The algorithm's robustness for noisy quantum sensing stems from three features: the parameters $M_i$ (evolution time) and $\theta_i$ (measurement basis rotation) are chosen randomly, preventing systematic errors from accumulating; classical maximum likelihood estimation replaces reliance on quantum interference; and the likelihood function naturally incorporates error probabilities from the noise model.

%\subsection{Resource Scaling}
Among the many approaches to phase estimation that achieve Heisenberg-limited scaling~\cite{higgins2007entanglement,wiebe2016efficient,svore2013faster}, Algorithm~\ref{alg:info} is particularly suited to error-detected sensing because its randomized evolution times naturally accommodate experiments where at least one error is detected. The performance of this method is given below for reference.
\begin{theorem}[Resource Scaling~\cite{svore2013faster}]
\label{thm:resource}
Let $U\ket{\psi} = e^{i\phi} \ket{\psi}$ for $U$ a unitary matrix acting on a $2^n$ dimensional Hilbert space.
The information-theoretic phase estimation algorithm (Algorithm~\ref{alg:info}) yields an estimator $\hat{\phi}$ of $\phi$ using 
$N_{\text{total}} = O\left(\frac{\log(1/\delta)\log(1/\epsilon)}{\epsilon}\right)$
total applications of the unitary $U$ to achieve estimation error $|\hat{\phi} - \phi|\le \epsilon$ with probability $1-\delta$.
\end{theorem}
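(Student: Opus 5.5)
The plan is to analyze Algorithm~\ref{alg:info} as a maximum-likelihood test over the $\tau$ discrete hypotheses $\lambda_k=2\pi k/\tau$. We fix $\tau=\Theta(1/\epsilon)$, so that getting the correct bin (or a neighbour) gives $|\hat\phi-\phi|\le\epsilon$. The number of controlled applications of $U$ in round $i$ is $M_i\le\tau$. With $\mathbb{E}[M_i]=\tau/2$, the total cost is $N_{\rm total}=\Theta(s\tau)$. The proof therefore reduces to showing that $s=O(\log(1/\delta)\log(1/\epsilon))$ rounds suffice.

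\textbf{Likelihood model.} For the eigenstate input, the Hadamard test with phase kick $\theta_i$ gives
\[
\Pr(v_i=0\mid M_i,\theta_i,\lambda)=\tfrac12\bigl(1+\cos(M_i\lambda+\theta_i)\bigr),
\]
up to a sign convention.

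\textbf{Separation of a wrong hypothesis.} Fix the true phase $\phi$ (nearest bin $k^\ast$) and a wrong bin $k$ with $|\lambda_k-\phi|\gtrsim\epsilon$. Consider the per-round log-likelihood ratio $X_i=\log\Pr(v_i\mid\cdot,\phi)-\log\Pr(v_i\mid\cdot,\lambda_k)$. Its expectation is the KL divergence between two Bernoulli distributions whose angles differ by $M_i(\lambda_k-\phi)$. Because $M_i$ is uniform on $\{1,\dots,\tau-1\}$ and $\theta_i$ is uniform, a constant fraction of the $M_i$ give angular separation bounded away from $0 \bmod 2\pi$. Averaging over the random $\theta_i$ then yields $\mathbb{E}X_i\ge c>0$ for an absolute constant $c$. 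The randomization of $M_i$ is used exactly here: it defeats aliasing uniformly in $k$. The randomization of $\theta_i$ avoids points where the outcome probabilities hit $0$ or $1$.

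\textbf{Concentration and union bound.} The likelihoods can vanish, so $X_i$ is unbounded. I would handle this either by mixing a small depolarizing floor into the likelihood, or by using a Chernoff bound on the Bhattacharyya coefficient: $\Pr(P[k]\ge P[k^\ast])\le \bigl(\mathbb{E}\sqrt{p_k/p_{k^\ast}}\bigr)^s\le (1-c')^s$. Taking a union bound over the $\tau$ wrong bins requires $\tau(1-c')^s\le\delta$, i.e. $s=O(\log(\tau/\delta))=O(\log(1/\epsilon)+\log(1/\delta))$. This is already at most the claimed $O(\log(1/\delta)\log(1/\epsilon))$, and multiplying by $\tau$ gives the stated $N_{\rm total}$.

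\textbf{Main obstacle.} The hardest step is the uniform lower bound $c'$ on the per-round separation for every wrong bin simultaneously, including neighbours at distance of order $1/\tau$. For such a bin, $M_i\Delta$ ranges over $[0,O(1)]$, so I would directly estimate $\frac1\tau\sum_M\bigl(1-|\cos(M\Delta/2)|\bigr)$-type sums to get a positive constant. Bins within one step of $\phi$ are declared acceptable, which is why the error tolerance is $\epsilon$ rather than exact.
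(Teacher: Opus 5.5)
The paper does not prove this statement; it imports it from Svore, Hastings and Freedman, so your argument has to stand on its own. Its skeleton is the natural one: maximum likelihood over $\tau=\Theta(1/\epsilon)$ bins, per-round separation from the random $M_i,\theta_i$, a Chernoff bound plus a union bound, and cost $O(s\tau)$. It is sound when $\phi$ lies exactly on the grid $\{2\pi k/\tau\}$. A direct computation gives $\mathbb{E}_{\theta,v}[\log\Pr(v\mid M,\theta,\lambda)]=-2\log 2+\cos(M(\phi-\lambda))$. Since $\sum_{M=1}^{\tau-1}\cos(2\pi jM/\tau)=-1$ for $j\not\equiv 0\pmod\tau$, every wrong on-grid bin, neighbours included, has the same expected log-likelihood deficit $1+1/(\tau-1)$. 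So the obstacle you single out is not the real one.

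The genuine gap is that the theorem concerns an arbitrary eigenphase, while your step $\Pr(P[k]\ge P[k^\ast])\le(\mathbb{E}\sqrt{p_k/p_{k^\ast}})^s\le(1-c')^s$ treats the data as if drawn from $p_{k^\ast}$. For $\phi\neq\lambda_{k^\ast}$ the algorithm never evaluates the true likelihood. Because both the grid spacing and $M_{\max}$ are tied to $\tau$, the nearest bin is misspecified by the angle $M_i(\phi-\lambda_{k^\ast})$, which approaches $\pi$ for $M_i\approx\tau$. This is not a technicality. Take $\phi$ at a bin midpoint, so $\phi-\lambda_k$ is an odd multiple of $\pi/\tau$ for every $k$. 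Then $\sum_{M=1}^{\tau-1}\cos(M(\phi-\lambda_k))=0$ for all $k$, and all $\tau$ hypotheses have identical expected log-likelihood $-2s\log 2$. The log-likelihood ratio between a neighbouring and a distant bin is then a mean-zero sum, so $\Pr(P[k]\ge P[k^\ast])$ tends to $1/2$ instead of decaying in $s$. With the literal parameters the MLE does not concentrate near $\phi$, and declaring neighbouring bins acceptable does not rescue it.

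The displayed inequality also fails on its own terms under misspecification. $p_{k^\ast}(v)$ vanishes linearly in $\theta$ at points where $p_\phi(v)$ and $p_k(v)$ do not, so $\mathbb{E}_{p_\phi}\sqrt{p_k/p_{k^\ast}}=+\infty$ and the exponent-$1/2$ Chernoff bound is vacuous.

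To repair the argument, decouple the resolution from the grid. Draw $M_i$ uniformly from $\{1,\dots,M_{\max}\}$ with $M_{\max}=\lfloor\tau/c\rfloor$ for a constant $c\ge 4$ (equivalently, refine the hypothesis grid by a constant factor). Then every term satisfies $\cos(M(\phi-\lambda_{k^\ast}))\ge\cos(\pi/c)$. Meanwhile the Dirichlet-kernel bound $|\mathbb{E}_M\cos(M\Delta)|\le \pi/(2M_{\max}|\Delta|)+1/(2M_{\max})$ makes every bin at wrapped distance $|\Delta|\ge 2\pi/M_{\max}$ lose a constant amount of expected log-likelihood per round. Use a Chernoff bound with a small exponent $\lambda<1/2$; the moments $\mathbb{E}[(p_k/p_{k^\ast})^\lambda]$ are then finite and at most $1-\Omega(\lambda)$. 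A union bound over the at most $\tau$ distant bins then gives $s=O(\log(\tau/\delta))$, final error $O(1/M_{\max})=O(\epsilon)$, and $N_{\text{total}}=O(s\tau)$, within the stated bound. Alternatively, restrict the claim to on-grid phases, where your proof goes through as written.
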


This scaling is fundamental for quantum sensing applications. The $1/\epsilon$ dependence achieves the Heisenberg limit, while logarithmic factors represent only modest overhead as discussed in Table~\ref{tab:babbush}. The logarithmic factors arise from the need to distinguish between $t \sim 1/\epsilon$ possible phase values and ensure high confidence (controlled by $\delta$). Crucially, as we show in subsequent sections, this favorable scaling is preserved even when incorporating syndrome-based error detection: the measurement statistics conditioned on syndrome information retain sufficient information for Heisenberg-limited estimation despite partial decoherence.

\begin{table}[t]
\begin{tabular}{|c|c|}
\hline
     Method & Scaling \\
     \hline
     {Classical sensing (SQL)} \cite{caves1981quantum} & $N = \Theta(1/\epsilon^2)$ \\
     {Ideal quantum sensing (HL)} \cite{giovannetti2004quantum,giovannetti2006quantum} & $N = \Theta(1/\epsilon)$\\
     {Algorithm~\ref{alg:info}} (HL with log overhead)$^*$ \cite{svore2013faster}  & $N = \Theta(\log(1/\epsilon)/\epsilon)$\\
     \hline
\end{tabular}
\caption{Selected phase estimation methods. $^*$Suppressing dependence on the confidence parameter $\delta$; the full bound is $O(\log(1/\delta)\log(1/\epsilon)/\epsilon)$ per Theorem~\ref{thm:resource}.}
\label{tab:babbush}
\end{table}

\subsection{Integration with Syndrome-Based Error Detection}

The error detection protocols leverage the decomposition from Equation~\eqref{eq:ErrorExpansion}, where errors manifest as $X_k e^{2i\Upsilon_k Z_k}$ terms. Stabilizer measurements detect these errors without disturbing the signal component $\beta_k e^{-iZ_k\phi_k}$, providing syndrome patterns $\mathcal{S}$ (bit strings indicating which stabilizer generators detect errors) that indicate which errors occurred. Note that here if we consider the case of non-iterative phase estimation, wherein the state is not fed in recursively into subsequent rounds of error correction, only error detection is needed for phase estimation. For this reason we will primarily examine the case of error detection in the following discussion even though the results can be easily generalized to full error correction.

For each measurement iteration $i$ in Algorithm~\ref{alg:info}, we evolve under $U^{M_i}$ (accumulating phase and errors), then perform stabilizer measurements to extract syndrome $\mathcal{S}_i$ and finally perform a logical qubit measurement yielding outcome $v_i$. The syndrome modifies the likelihood calculation by conditioning on the detected error pattern:
\begin{equation}
P[k|\mathcal{S}] = \prod_{i=0}^{s-1}\Pr(v_i|M_i,\theta_i,2\pi k/t, \mathcal{S}_i).
\end{equation}
The conditional probability $\Pr(v_i|\cdots, \mathcal{S}_i)$ incorporates the syndrome-dependent phase shift:
\begin{equation}
\Pr(v_i|M_i,\theta_i,\phi, \mathcal{S}_i) = \frac{1}{2}\left(1 + \beta_{\mathcal{S}_i} \cos(M_i \phi_{\mathcal{S}_i} - \theta_i)\right),
\end{equation}
where $\phi_{\mathcal{S}_i}$ is the effective phase given syndrome $\mathcal{S}_i$ and $\beta_{\mathcal{S}_i}$ is the reduced visibility due to errors.

The specific form of $\phi_{\mathcal{S}}$ depends on the error detection code and estimation strategy employed. By Theorem~\ref{thm:syndrome-rotation}, when $j$ bit flips are detected, the effective rotation angle becomes:
\begin{equation}
\Theta_j = \arctan\left(\tan^{N-2j}(\phi)\right),
\end{equation}
where $\phi$ relates to the signal $\omega$ via Lemma~\ref{lem:mapping}. We develop the following complementary approaches based on this general framework:
\begin{enumerate}
\item \textbf{SQL barrier (Section~\ref{sec:sql-barrier})}: Shows that Z-basis syndrome measurement on the repetition code state, despite the per-syndrome phase amplification, yields only SQL scaling $F = \Theta(N)$ due to binomial fluctuations in the amplification factor $k = N - 2j$. This motivates the use of collective measurements in subsequent protocols.
\item \textbf{Bit-flip codes (Section~\ref{sec:fixed-omega})}: Specializes Theorem~\ref{thm:syndrome-rotation} to $\vartheta = 0$ (pure transverse noise in X direction). Syndrome measurement counts X-errors, yielding effective phase $\phi_{\mathcal{S}} = (2L+1-d)\omega$ when $d$ errors are detected.
\item \textbf{Adaptive binary search with GHZ states (Sections~\ref{sec:binary-search} and~\ref{sec:implementations})}: For varying signal $\omega_k = \omega + \epsilon_k$ with no transverse noise. Uses adaptive $R_Z$ phase corrections and majority voting to achieve logarithmic-depth estimation.
\item \textbf{Combined protocol with logical GHZ states (Section~\ref{sec:combined})}: For the general case with both transverse noise and field inhomogeneities. Uses repetition code blocks arranged in a logical GHZ configuration with adaptive binary search.
\end{enumerate}
This syndrome-enhanced likelihood estimation is the key innovation: rather than applying quantum recovery operations (which would be resource-intensive and could erase signal information), we use classical post-processing to account for detected errors.

\begin{remark}[Bidirectional QSP-Sensing Interpretation]
The connection between quantum sensing and quantum signal processing is bidirectional. From the QSP perspective, our syndrome-conditioned phase estimation demonstrates how error correction measurements can implement nonlinear phase transformations. From the sensing perspective, the QSP framework (Theorem~\ref{thm:syndrome-rotation}) provides the mathematical structure underlying our error-detection-enhanced protocols. This unified view suggests potential applications of sensing-motivated techniques to quantum algorithm design.
\end{remark}

%%%%%%%%%%%%%%%%%%%%%%%%%%%%%%%%%%%%%%%%%%%%%%%%%%%%%%%%%%%%%%%%%%%%%%%%%%%%%%%%
%%  SQL Barrier for Syndrome-Based Estimation
%%%%%%%%%%%%%%%%%%%%%%%%%%%%%%%%%%%%%%%%%%%%%%%%%%%%%%%%%%%%%%%%%%%%%%%%%%%%%%%%

\section{SQL Barrier: Limits of Product-State Encoded QSP}
\label{sec:sql-barrier}

The Encoded QSP framework of Section~\ref{sec:problem} reduces multi-qubit metrology to syndrome-dependent logical rotations, but the question remains: can the simplest instantiation, with product-state input and a single signal application, achieve Heisenberg scaling? The syndrome-dependent logical rotation of Theorem~\ref{thm:syndrome-rotation} provides an $N$-fold phase amplification $\Phi_{j,N}(\phi) = \arctan(\tan^{N-2j}\phi)$ that, for favorable syndrome outcomes ($|k| = |N - 2j| = \Theta(N)$), yields sensitivity scaling as $k^2$, suggestive of Heisenberg-limited estimation. A natural protocol would combine this amplification with adaptive centering at the steepest point $\phi = \pi/4$ and Z-basis syndrome measurement. We show that this approach cannot exceed SQL scaling: the binomial fluctuations in the syndrome outcome $k$ at the optimal operating point destroy the quadratic advantage. Overcoming this barrier requires either entanglement (Sections~\ref{sec:fixed-omega}--\ref{sec:implementations}) or sequential signal amplification (Section~\ref{subsec:sequential-binary-search}).

\begin{theorem}[SQL Barrier for Syndrome-Based Phase Estimation]
\label{thm:sql-barrier}
Consider the following protocol using $N = 2L+1$ qubits and the signal unitary $U_s(\phi, \vartheta)$ of Theorem~\ref{thm:syndrome-rotation}:
\begin{enumerate}
\item Prepare the computational basis state $|0\rangle_L = |0\rangle^{\otimes N}$.
\item Apply the signal unitary $U_s(\phi, \vartheta)^{\otimes N}$ with the QSP parameter centered at the optimal operating point $\phi = \pi/4$, where the bit-flip probability is $p = \sin^2(\phi) = 1/2$.
\item Perform syndrome measurement in the $Z$-basis, obtaining the syndrome pattern and the effective amplification factor $k = N - 2j$ where $j$ is the error weight of the syndrome coset representative.
\item Optimally measure the logical qubit to extract phase information from the syndrome-dependent rotation $\Phi_{j,N}$.
\end{enumerate}
At the optimal operating point $\phi = \pi/4$, the total quantum Fisher information summed over syndrome sectors satisfies
\begin{equation}
F_{\mathrm{total}} = 4N = \Theta(N).
\end{equation}
This equals the quantum Fisher information of the input state, so the syndrome measurement is optimal at this operating point. By the data-processing inequality, no classical post-processing of the syndrome outcome $k$ can surpass $\Theta(N)$ scaling.
\end{theorem}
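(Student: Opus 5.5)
The plan is to compute the quantum Fisher information of the post-signal state directly and then argue that conditioning on syndromes cannot increase it. The state after step 2 is the product state $\ket{\psi(\phi)}=\bigl(U_s(\phi,\vartheta)\ket{0}\bigr)^{\otimes N}$. Each factor $U_s(\phi,\vartheta)\ket{0}=\cos\phi\ket{0}-i\sin\phi\, e^{i\vartheta}\ket{1}$ is generated by $\partial_\phi U_s = -R(\vartheta)U_s$ up to commuting structure. Hence the single-qubit QFI is $4\,\mathrm{Var}(R(\vartheta))=4(1-\langle R(\vartheta)\rangle^2)$. Since $\langle R(\vartheta)\rangle$ vanishes on $\ket{0}$ and $R(\vartheta)$ commutes with $U_s$, this equals $4$ for every $\phi$. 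Equivalently, $|\partial_\phi\psi_1|^2-|\langle\psi_1|\partial_\phi\psi_1\rangle|^2=1$. QFI is additive on product states, so the full state has $F_Q=4N$.

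Next I decompose by syndrome sectors. The syndrome measurement is a projective measurement $\{\Pi_s\}$ in the $Z$ basis onto cosets $\{\mathbf b,\bar{\mathbf b}\}$ with $|\mathbf b|=j$. By the sector-wise QFI decomposition, the total QFI equals
\[
F_{\mathrm{total}}=\sum_s \Bigl[F_{\mathrm{cl}}(p_s)+p_s F_Q(\rho_s)\Bigr].
\]
Here $p_s=\cos^{2(N-j)}\phi\sin^{2j}\phi+\cos^{2j}\phi\sin^{2(N-j)}\phi$, and $\rho_s$ is the conditional logical state. By Theorem~\ref{thm:syndrome-rotation}, $\rho_s$ is a pure state $\cos\Theta_j\ket{0}_L-i\sin\Theta_j e^{i\vartheta_{\mathrm{eff}}}\ket{1}_L$, whose QFI is $4(\partial_\phi\Theta_j)^2$. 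I then evaluate both terms at $\phi=\pi/4$. There $p_s=2^{1-N}$ for every coset, so $\partial_\phi p_s\propto (N-2j)\bigl[\ldots\bigr]$. A careful check is needed: by symmetry of the two coset members, the $\pm$ contributions cancel, so $\partial_\phi p_s=0$ and the classical term vanishes. Meanwhile $\partial_\phi\Theta_j=k(\tan\phi)^{k-1}\sec^2\phi/(1+\tan^{2k}\phi)=k$ at $\pi/4$. This gives
\[
F_{\mathrm{total}}=\sum_j \binom{N}{j}2^{1-N}\cdot 4k^2=4\,\mathbb E[(N-2J)^2],
\]
summed over cosets with $J\sim\mathrm{Bin}(N,1/2)$ folded. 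Since $N-2J$ has mean zero and variance $N$, $F_{\mathrm{total}}=4N$. This is the binomial-fluctuation mechanism: the typical $|k|$ is only $\sqrt N$.

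Finally, equality $F_{\mathrm{total}}=F_Q(\ket{\psi(\phi)})=4N$ shows the syndrome measurement loses nothing at this point. The data-processing inequality, i.e.\ monotonicity of QFI under CPTP maps including the measurement and any classical post-processing of $k$ and the logical outcome, then bounds any estimator's Fisher information by $4N=\Theta(N)$. Via Cramér–Rao this gives SQL scaling. The main obstacle I expect is the bookkeeping in the middle step: verifying that $\partial_\phi p_s$ vanishes at $\pi/4$, and correctly pairing the two members of each coset (including the $j$ versus $N-j$ identification and the sign of $k$). If the vanishing fails, I will instead use the general chain-rule identity for pure states under projective measurement. That identity shows classical-plus-conditional QFI sums to at most the global QFI, which gives $\le 4N$, and lower-bound by the conditional term alone, which already yields $4N$.
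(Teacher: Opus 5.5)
Your proposal is correct and follows essentially the paper's route: in each syndrome sector the logical state is pure with QFI $4(\partial_\phi\Theta_j)^2 = 4k^2$ at $\phi=\pi/4$, averaging over the binomial syndrome distribution gives $4\EX[k^2]=4N$, this equals the input-state QFI from the generator $\sum_k R_k(\vartheta)$, and data processing finishes the argument. Two of your steps go beyond the paper: you check that $\partial_\phi p_s=0$ at $\pi/4$, so the classical syndrome Fisher term vanishes, and you fold $J\sim\mathrm{Bin}(N,1/2)$ onto coset representatives $j\le L$; together these justify the weighted-sum formula $\sum_j P(j)F_j$ that the paper uses without comment, and the only slip is the missing $i$ in $\partial_\phi U_s=-iR(\vartheta)U_s$, which is harmless.
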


\begin{proof}
By Theorem~\ref{thm:syndrome-rotation}, the effective logical operation in syndrome sector $j$ is $M_j = \cos(\Theta_j) I_L - i\sin(\Theta_j) R_L(\vartheta_{\mathrm{eff}})$, where $\Theta_j = \arctan(\tan^k \phi)$ with $k = N - 2j$, $\vartheta_{\mathrm{eff}} = k\vartheta + (L-j)\pi$, and $R_L(\vartheta_{\mathrm{eff}}) = \cos(\vartheta_{\mathrm{eff}}) X_L + \sin(\vartheta_{\mathrm{eff}}) Y_L$. Since $R_L(\vartheta_{\mathrm{eff}})|0\rangle_L = e^{i\vartheta_{\mathrm{eff}}}|1\rangle_L$, the post-syndrome logical state is
\begin{equation}
|\psi_j\rangle = \cos(\Theta_j)|0\rangle_L - i\sin(\Theta_j)\,e^{i\vartheta_{\mathrm{eff}}}|1\rangle_L,
\end{equation}
a pure qubit state whose polar angle on the Bloch sphere depends on $\phi$ through $\Theta_j$. The azimuthal phase $e^{i\vartheta_{\mathrm{eff}}}$ is independent of $\phi$ and does not affect the quantum Fisher information (QFI).

Differentiating the logical rotation angle with respect to $\phi$:
\begin{equation}
\frac{d\Theta_j}{d\phi} = \frac{k \tan^{k-1}(\phi)\sec^2(\phi)}{1+\tan^{2k}(\phi)}.
\end{equation}
At the optimal operating point $\phi = \pi/4$, where $\tan\phi = 1$ and $\sec^2\phi = 2$, this evaluates exactly to
\begin{equation}
\left.\frac{d\Theta_j}{d\phi}\right|_{\phi=\pi/4} = \frac{k \cdot 1 \cdot 2}{1 + 1} = k.
\end{equation}
The quantum Fisher information for estimating $\phi$ from the logical qubit in sector $j$ is $F_j = 4(d\Theta_j/d\phi)^2 = 4k^2$~\cite{braunstein1994statistical}. At $\phi = \pi/4$, we have $\Theta_j = \pi/4$ (since $\tan^k\phi = 1$), so $|\psi_j\rangle$ lies on the equator of the logical Bloch sphere, giving maximal sensitivity.

At $\phi = \pi/4$, the bit-flip probability is $p = \sin^2(\pi/4) = 1/2$, so $j \sim \operatorname{Binomial}(N, 1/2)$. The amplification factor $k = N - 2j$ has
\begin{equation}
\EX[k] = 0, \qquad \EX[k^2] = \operatorname{Var}(k) = 4\operatorname{Var}(j) = N,
\end{equation}
since $\operatorname{Var}(j) = N/4$ at $p = 1/2$. Since syndrome outcomes correspond to orthogonal subspaces, the total Fisher information is the weighted sum over syndrome sectors~\cite{helstrom1976quantum}:
\begin{equation}
F_{\mathrm{total}} = \sum_j P(j)\,F_j = 4\,\EX[k^2] = 4N.
\end{equation}

The mechanism is clear: while favorable syndrome outcomes ($|k| = \Theta(N)$) yield large per-syndrome Fisher information $F_j = 4k^2$, these outcomes have exponentially small probability $P(j) = O(2^{-N})$ for $|k| = \Theta(N)$. The binomial concentration of $k$ around zero, with $\operatorname{Var}(k) = N$, ensures that the weighted average $\EX[k^2]$ is exactly $N$, yielding SQL scaling.

This value $F_{\mathrm{total}} = 4N$ equals the quantum Fisher information of the input state $|0\rangle^{\otimes N}$ under the i.i.d.\ signal unitary (since the generator is $\sum_k R_k(\vartheta)$ with per-qubit variance $\operatorname{Var}(R(\vartheta))_{|0\rangle} = 1$ for all $\vartheta$, using $R(\vartheta)^2 = I$ and $\langle 0|R(\vartheta)|0\rangle = 0$), confirming that the syndrome measurement is optimal at this operating point. The data-processing inequality for Fisher information~\cite{zamir1998proof, ferrie2014data} guarantees that no classical post-processing of the syndrome can surpass the $\Theta(N)$ scaling.
\end{proof}

%\begin{remark}[Entangled Input: GHZ State]
\label{rem:ghz-sql}
For the special case $\vartheta = 0$ (pure X-rotation), a stronger statement holds when the input is the GHZ state $|+\rangle_L = (|0\rangle^{\otimes N} + |1\rangle^{\otimes N})/\sqrt{2}$: the logical qubit carries \emph{zero} Fisher information per syndrome sector. When $\vartheta = 0$, Equation~\eqref{eq:vartheta-eff} gives $\vartheta_{\mathrm{eff}} = (L-j)\pi$, so the effective logical channel reduces to $M_j = \cos(\Theta_j) I_L - i(-1)^{L-j}\sin(\Theta_j) X_L$. Since $|+\rangle_L$ is an eigenvector of $X_L$ with eigenvalue $+1$, the post-syndrome logical state is $M_j|+\rangle_L = e^{-i(-1)^{L-j}\Theta_j}|+\rangle_L$, a global phase. The interference between the two GHZ branches $|0\rangle^{\otimes N}$ and $|1\rangle^{\otimes N}$ produces four amplitude contributions per syndrome sector (two coset representatives $\times$ two logical branches), but these combine to leave the logical qubit invariant.

More generally, the GHZ state is the wrong entangled state for sensing under the i.i.d.\ generator $\sum_k R_k(\vartheta)$. The quantum Fisher information is $F_Q = 4\operatorname{Var}(\sum_k R_k(\vartheta))_{|+\rangle_L} = 4N$ for all $\vartheta$, since $\langle R_j(\vartheta) R_l(\vartheta)\rangle = 0$ for $j \neq l$ and $N \geq 3$. This is SQL, not Heisenberg. By contrast, the same state achieves $F_Q = 4N^2$ for Z-rotations (generator $\sum_k Z_k$), where the two branches accumulate opposite phases. The QSP mapping converts the Z-signal $\omega$ into transverse bit-flips parameterized by $\phi$, and this conversion is where the Heisenberg advantage is lost.
%\end{remark}

\begin{remark}[Connection to the Physical Signal $\omega$]
\label{rem:chain-rule}
Under the combined Hamiltonian $H_k = \omega Z_k + \gamma X_k$, the QSP parameter $\phi$ is related to $\omega$ via Lemma~\ref{lem:mapping}: the bit-flip amplitude satisfies $\sin(\phi) = \gamma|\sin(\Omega)|/\Omega$ with $\Omega = \sqrt{\omega^2 + \gamma^2}$. The centering condition $\phi = \pi/4$ requires $\gamma \geq |\omega|$, so the transverse field must be at least as large as the signal.

By the chain rule, $F_\omega = (d\phi/d\omega)^2 F_\phi$. Since $d\phi/d\omega$ is a nonzero $O(1)$ constant at the operating point (provided the non-degeneracy condition $\tan\Omega \neq \Omega$ holds), the SQL scaling $F_\omega = \Theta(N)$ follows from Theorem~\ref{thm:sql-barrier}.
\end{remark}

\begin{remark}[Measurement Choice Determines Scaling for $\omega$]
For the physical signal $\omega$, the measurement choice \emph{does} determine the scaling. The GHZ state carries $F_Q = 4N^2$ for $\omega$ via the Z-rotation generator $\sum_k Z_k$ (Theorem~\ref{thm:qfi-binary-search}), and X-parity measurement can access this full Heisenberg scaling (Sections~\ref{sec:fixed-omega}--\ref{sec:implementations}). By contrast, the syndrome-based protocol of Theorem~\ref{thm:sql-barrier} converts the Z-signal into X-bit-flips via the QSP mapping, yielding $F_\omega = \Theta(N)$ by the chain rule (Remark~\ref{rem:chain-rule}). Surpassing SQL requires both entanglement \emph{and} a measurement that preserves the coherent $N$-fold phase enhancement rather than decomposing it into per-syndrome sectors.
\end{remark}

%%%%%%%%%%%%%%%%%%%%%%%%%%%%%%%%%%%%%%%%%%%%%%%%%%%%%%%%%%%%%%%%%%%%%%%%%%%%%%%%
%%  Fixed Signal with Transverse Noise: Bit-Flip Codes
%%%%%%%%%%%%%%%%%%%%%%%%%%%%%%%%%%%%%%%%%%%%%%%%%%%%%%%%%%%%%%%%%%%%%%%%%%%%%%%%

\section{Fixed Signal with Transverse Noise: Bit-Flip Codes for Heisenberg Scaling}
\label{sec:fixed-omega}

This section develops the first Heisenberg-scaling instantiation of Encoded QSP, using bit-flip repetition codes to overcome the SQL barrier of Theorem~\ref{thm:sql-barrier}. By encoding the sensor qubits in a repetition code and using entangled (GHZ) logical states with X-parity measurement, we preserve the coherent $N$-fold phase enhancement while exponentially suppressing transverse errors.

\begin{algorithm}[t]
\KwData{Number of iterations $s\ge 0$, number of discrete phase angles $\tau$, number of repeated measurements, a set unitary operations in $\mathbb{C}^{2\times 2}$ of the form of~\eqref{eq:Ukdef} denoted $\{U_k:k=1,\ldots,2L+1\}$}

\KwResult{The maximum likelihood estimate of ${\phi}$ given the experimental data generated}
\For{$i\gets0$ \KwTo $s-1$}{
    $M_i \gets$ sample drawn uniformly from $\{1,\ldots,\tau-1\}$\\
    $\theta_i \gets$ sample drawn uniformly from $[0,2\pi)$\\
    Prepare $2L+1$ qubits in the logical state $\ket{\psi}_L\gets \ket{+}_L$ of the repetition code\\
    $\ket{\psi}_L \gets (\bigotimes_{k=1}^{2L+1}U_k^{M_i})\ket{\psi}_L$ \\
    Measure bit-flip code stabilizers to obtain syndrome $\mathcal{S}_i$ and error count $d_i$\\
    Apply logical rotation $e^{-iZ_L \theta_i}$ to $\ket{\psi}_L$.\\
    $v_i \gets $ measurement outcome from logical measurement of $\ket{\psi}_L$.\\
      }
Input $\{(M_i,\theta_i,v_i),i=0\ldots s-1\}$ into Algorithm~\ref{alg:info} to compute maximum likelihood estimate $\hat{\phi}$.\\
\Return $\hat{\phi}$
\caption{\label{alg:bitflip}Quantum Phase Estimation Using Bit Flip Codes}
\end{algorithm}
%\subsection{Setup and Error Model}
Theorem~\ref{thm:sql-barrier} showed that Z-syndrome measurement on the repetition code cannot extract more than $\Theta(N)$ Fisher information, regardless of classical post-processing. The protocols in this section and those that follow overcome this barrier by using \emph{collective} measurements (specifically X-parity readout and X-error syndrome detection) that preserve the coherent phase enhancement of the entangled state.

We first consider the case where $\chi_k = 0$ (no $Y$-field), $\omega$ is fixed but $\gamma_k$ can vary across the qubits. This scenario demonstrates how bit-flip codes enable Heisenberg-limited scaling by correcting transverse field errors while preserving the signal.
For this case, each qubit evolves under:
\begin{equation}
U_k = e^{-i(\omega Z_k + \gamma_k X_k)},\label{eq:Ukdef}
\end{equation}
where $\omega$ is the known signal we want to amplify and $\gamma_k$ represents unwanted transverse field noise.
Using the decomposition from Equation~\eqref{eq:ErrorExpansion} with $\chi_k = 0$ (no Y-field):
\begin{equation}
U_k = \beta_k e^{-iZ_k\phi_k} - i\sqrt{1-\beta_k^2} X_k,
\label{eq:bitflip-decomp}
\end{equation}
where:
\begin{align}
\phi_k &= \arctan\left(\frac{\tan(\sqrt{\omega^2 + \gamma_k^2})\omega}{\sqrt{\omega^2 + \gamma_k^2}}\right),\\
\beta_k &= \sqrt{\cos^2(\sqrt{\omega^2 + \gamma_k^2}) + \sin^2(\sqrt{\omega^2 + \gamma_k^2})\frac{\omega^2}{\omega^2 + \gamma_k^2}}.
\end{align}

%\subsection{The Bit-Flip Repetition Code}

Our use of error correction here is markedly different from the usual case. Rather than correcting all quantum errors, we wish to construct a code that is sensitive to some types of errors but not others. Specifically, we wish to build a code where physical $e^{-i\theta Z}$ operations correspond to logical $e^{-i \theta Z}$ operations. This means that any phase applied to each of the individual qubits will be applied to the whole logical qubit. This alone will allow the Heisenberg limit to be achieved here. On the other hand, we would like our code to be able to detect and remove $X$ operations. This will allow us to suppress the effect of an $X$-field.
The $[2L+1,1,2L+1]$ repetition code encodes one logical qubit:
\begin{equation}
|0\rangle_L = |0\rangle^{\otimes (2L+1)}, \quad |1\rangle_L = |1\rangle^{\otimes (2L+1)}.
\end{equation}
This code has the following stabilizer generators 
\begin{equation}
g_i = Z_i Z_{i+1}, \quad i = 1, 2, \ldots, 2L.
\end{equation}
which means that the code space can be thought of as being the $+1$ eigenspace of these stabilizer operators.
These $Z$-type stabilizers detect $X$-errors while preserving all $Z$-rotations. This is crucial: syndrome measurements identify transverse noise ($\gamma_k$ terms) without disturbing the signal ($\omega$ terms), allowing us to extract the correct phase through classical post-processing.

The algorithm that we propose for phase estimation using the bit flip code is given in Algorithm~\ref{alg:bitflip}. The following theorem shows how error detection discretizes the accumulated phase. Note that in some cases the correction can be avoided by simply tracing over the erroneous qubits, but we include it here to simplify the discussion.

%Starting with the logical state $|+\rangle_L = (|0\rangle_L + |1\rangle_L)/\sqrt{2}$, evolution under $U^{\otimes (2L+1)}$ yields 

\begin{theorem}[Bit-Flip Code Phase Amplification]
\label{thm:bitflip}
Consider a repetition code with $N = 2L+1$ qubits where $L>0$. If the initial state $\ket{+}_L = (\ket{0}_L + \ket{1}_L)/\sqrt{2}$ evolves under unitaries $\{U_k:k=1,\ldots,2L+1\}$ of the form~\eqref{eq:Ukdef} with fixed $\omega$ and heterogeneous transverse fields $\gamma_k$, and syndrome measurements detect $d$ errors, then the following hold:
\begin{enumerate}
\item The effective phase accumulated by the logical qubit satisfies
\begin{equation}
\phi_{\text{eff}} = \begin{cases} (2L+1-d)\phi & \text{if $\phi_k = \phi~\forall~k$}\nonumber\\
 (2L+1-d)\omega + O(\max_k \gamma_k^2/\omega)& \text{if $\max_k |\gamma_k/\omega| = o(1)$}
\end{cases}
\end{equation}
where $d = |s|$ is the number of detected errors.
\item The probability distribution of the number of errors satisfies
\begin{equation}
P(|s|=d) \le\binom{2L+1}{d}(\max_k \beta_k^2)^{2L+1} \left(\frac{(1-\min_k\beta_k^2)}{\max_k \beta_k^2}\right)^{d}.
\end{equation}
\item When $\max_k |\gamma_k/\omega| = o(1)$, the quantum Fisher information $F_Q$ for $\omega$ satisfies
\begin{equation}
F_Q = 4N^2 \left(1 - O\left(\max_k |\gamma_k/\omega|^2\right)\right),
\end{equation}
which via the quantum Cramér-Rao bound implies that any unbiased estimator of $\omega$ using $n$ independent measurements has variance $\operatorname{Var}(\hat{\omega}) = \Omega(1/(nN^2))$, achieving Heisenberg-limited scaling.
\end{enumerate}
\end{theorem}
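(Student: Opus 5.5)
We expand the product $\bigotimes_k U_k$ using the decomposition~\eqref{eq:bitflip-decomp}, writing each factor as $\beta_k e^{-iZ_k\phi_k} - i\sqrt{1-\beta_k^2}X_k$. Distributing over the $N=2L+1$ qubits gives a sum over subsets $S\subseteq\{1,\dots,N\}$. Each term applies $X$ on $S$ and the phase rotation $\prod_{k\notin S}\beta_k e^{-iZ_k\phi_k}$ on the complement, with prefactor $\prod_{k\in S}(-i\sqrt{1-\beta_k^2})$.

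Acting on $\ket{+}_L$, the $X$-string $X_S$ maps $\ket{0}^{\otimes N}$ and $\ket{1}^{\otimes N}$ to bit strings of weight $|S|$ and $N-|S|$. The $Z$-stabilizer syndrome therefore identifies $S$ (or its complement). For the detected-weight-$d$ outcome with coset representative $S$, $|S|=d\le L$, only $S$ and $S^c$ contribute.

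\textbf{Part 1 (effective phase).} Apply the correction $X_S$ and track how $Z$-rotations act on each logical branch. On the term indexed by $S$, the qubits in $S$ carry no $Z$-rotation, while the remaining $N-d$ qubits contribute $e^{\mp i\sum_{k\notin S}\phi_k}$ on $\ket{0}_L/\ket{1}_L$. The logical relative phase is thus $\sum_{k\notin S}\phi_k$, which equals $(N-d)\phi$ when all $\phi_k=\phi$.

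The complement term $S^c$ produces a logical-$X$-like contribution. Its amplitude is $\prod_{k\in S^c}\sqrt{1-\beta_k^2}$, smaller by a factor $\prod(\sqrt{1-\beta_k^2}/\beta_k)^{N-2d}$, which is exponentially small in $N-2d$. Hence it only perturbs the state at higher order. I would either state the phase as that of the dominant branch, in the spirit of Theorem~\ref{thm:syndrome-rotation}, or absorb this contribution into the error term.

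For the second case, insert $\phi_k=\omega+O(\gamma_k^2/\omega)$ from Eq.~\eqref{eq:phi-taylor}. Summing over $N-d$ terms gives $(N-d)\omega+O(N\max_k\gamma_k^2/\omega)$. I would note that the $N$ factor is either hidden in the statement's $O(\cdot)$ or that it holds per qubit.

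\textbf{Part 2 (error distribution).} The probability of a weight-$d$ syndrome is the squared norm of the projected state, $\sum_{|S|=d}\big(\prod_{k\in S}(1-\beta_k^2)\prod_{k\notin S}\beta_k^2 + \text{complement}\big)/\ldots$. The cross terms between the two logical branches land in orthogonal sectors once the syndrome fixes $S$. Bounding $1-\beta_k^2\le 1-\min\beta^2$ and $\beta_k^2\le\max\beta^2$ and counting $\binom{N}{d}$ subsets gives the stated bound. The complement contribution is at most as large, after the same bounding, for $d\le L$; I would handle it either by noting it is dominated, or by absorbing a factor into the inequality.

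\textbf{Part 3 (quantum Fisher information).} For the QFI, in the no-error sector (probability $\prod\beta_k^2=1-O(N\max\gamma_k^2/\omega^2)$) the logical state is $(\ket{0}_L+e^{2i\sum\phi_k}\ket{1}_L)/\sqrt2$. Its QFI is $4(\sum_k\partial_\omega\phi_k)^2=4N^2(1-O(\max|\gamma_k/\omega|^2))$, using $\partial_\omega\phi_k=1+O(\gamma_k^2/\omega^2)$ from Lemma~\ref{lem:mapping}.

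Since syndrome sectors are orthogonal, the total QFI is the probability-weighted sum of sector QFIs plus the classical Fisher information of the syndrome distribution, which is nonnegative. This gives the lower bound. The upper bound $4N^2$ follows from the variance of the generator, $\partial_\omega H=\sum_k Z_k$ with norm $N$, which bounds the QFI of the unitary family.

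The hardest step is controlling the sector-weight correction: I must show $P(d=0)\ge 1-O(\max|\gamma_k/\omega|^2)$ without the factor $N$. If this fails, the bound should be read as holding in a regime where $N\max\gamma_k^2/\omega^2=o(1)$. Finally, the quantum Cramér–Rao bound with $n$ repetitions yields $\operatorname{Var}(\hat\omega)\ge 1/(nF_Q)=\Omega(1/(nN^2))$.
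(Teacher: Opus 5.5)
Your items 1 and 2 follow the paper's route: expand $\bigotimes_k U_k$ over error sets, let the $Z$-syndrome identify the set, read the phase $\sum_{k\notin S}\phi_k$ off the unflipped qubits, and bound $P(d)$ termwise over $\binom{N}{d}$ subsets. Item 3, however, has a genuine gap.

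Keeping only the trivial-syndrome sector gives $F_Q\ge P(d{=}0)\,F_Q^{(0)}$, with $P(d{=}0)=\prod_k\beta_k^2+\prod_k(1-\beta_k^2)\lesssim e^{-N\hat p}$ and $\hat p=N^{-1}\sum_k(1-\beta_k^2)=O(\max_k(\gamma_k/\omega)^2)$. The bound you hope to establish, $P(d{=}0)\ge 1-O(\max_k|\gamma_k/\omega|^2)$ with no factor of $N$, is false. At fixed small noise $P(d{=}0)\to 0$ as $N$ grows, so your lower bound $4N^2e^{-N\hat p}$ is not even Heisenberg scaling once $N\gg 1/\hat p$. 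Post-selecting on $d=0$ throws away exactly the information that saves the bound.

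The missing idea, which is the paper's argument, is to keep every sector. With $d\le L$ detected errors at locations $S$, the corrected logical state is still (up to the complement branch) $\bigl(\ket{0}_L+e^{2i\sum_{k\notin S}\phi_k}\ket{1}_L\bigr)/\sqrt{2}$. Its QFI is $4(N-d)^2\bigl(1-O(\max_k\gamma_k^2/\omega^2)\bigr)$, so a detected error costs one qubit's worth of phase rather than the whole shot. Using your correct observation that the syndrome's classical Fisher information is nonnegative,
\begin{align*}
F_Q &\ge 4\sum_{d}P(d)\,(N-d)^2\bigl(1-O(\cdot)\bigr) = 4\,\EX\bigl[(N-d)^2\bigr]\bigl(1-O(\cdot)\bigr)\\
&\ge 4\bigl(N-\EX[d]\bigr)^2\bigl(1-O(\cdot)\bigr) \ge 4N^2(1-\hat p)^2\bigl(1-O(\cdot)\bigr).
\end{align*}
This uses Jensen's inequality and $\EX[d]\le\sum_k(1-\beta_k^2)=N\hat p$. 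The correction carries no factor of $N$. Together with your upper bound $F_Q\le 4N^2$, this yields item 3 without assuming $N\max_k\gamma_k^2/\omega^2=o(1)$. In sector $d$ the complement branch is suppressed by a factor of order $\bigl(p/(1-p)\bigr)^{(N-2d)/2}$, with $p$ the per-qubit flip probability. After weighting by $P(d)$ it enters only at higher order in $p$.

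Two smaller remarks. First, the complement contributions do not land in orthogonal sectors: $X_{S^c}\ket{1}^{\otimes N}$ and $X_S\ket{0}^{\otimes N}$ are the same bit string and interfere. The cross terms nevertheless cancel between the two bit strings of the sector, because the relative phase $(-i)^{N-2d}$ of the two prefactors is purely imaginary for odd $N$. So your formula for $P(d)$ is right; the paper simply drops these terms ``up to exponentially small errors''. Second, your concern about a factor of $N$ in the heterogeneous case of item 1 is legitimate. Summing $N-d$ corrections of size $O(\gamma_k^2/\omega)$ gives $O(N\max_k\gamma_k^2/\omega)$. The paper's remark that the $O(\cdot)$ terms ``have the same leading order'' does not remove this factor, so that case should be read as a relative error. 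Likewise, the paper's $F_Q=\sum_d P(d)F_Q^{(d)}$ is really the lower bound you describe.
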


\begin{proof}
We analyze the evolution of the logical state $|+\rangle_L$ under the noisy unitary and show how syndrome measurements enable Heisenberg-limited phase estimation.
Starting with $|+\rangle_L = (|0\rangle_L + |1\rangle_L)/\sqrt{2}$, each physical qubit evolves under $U_k = \beta_k e^{-iZ_k\phi_k} - i\sqrt{1-\beta_k^2} X_k$ from Equation~\eqref{eq:bitflip-decomp}. We begin with the case where $\phi_k$ is constant.

The logical basis states evolve as follows. For $|0\rangle_L = |0\rangle^{\otimes (2L+1)}$, since $Z_k|0\rangle = |0\rangle$:
\begin{align}
U_k|0\rangle &= (\beta_k e^{-iZ_k\phi_k} - i\sqrt{1-\beta_k^2} X_k)|0\rangle = \beta_k e^{-i\phi_k}|0\rangle - i\sqrt{1-\beta_k^2}|1\rangle.
\label{eq:single-qubit-0}
\end{align}
Similarly, for $|1\rangle_L = |1\rangle^{\otimes (2L+1)}$, since $Z_k|1\rangle = -|1\rangle$:
\begin{align}
U_k|1\rangle &= (\beta_k e^{-iZ_k\phi_k} - i\sqrt{1-\beta_k^2} X_k)|1\rangle = \beta_k e^{i\phi_k}|1\rangle - i\sqrt{1-\beta_k^2}|0\rangle.
\label{eq:single-qubit-1}
\end{align}

For the homogeneous case where $\gamma_k$ is constant, we have $U_k = U$, $\beta_k = \beta$, and $\phi_k = \phi$ for all $k$. The logical basis states then evolve as:
\begin{align}
    U^{\otimes 2L+1} \ket{0}_L = \sum_{x_1,\ldots,x_{2L+1}=0}^1 \beta^{2L+1-\sum x_i}e^{-i(2L+1 -\sum x_i)\phi}\left(-i\sqrt{1-\beta^2}\right)^{\sum x_i}\ket{x_1\cdots x_{2L+1}}
\label{eq:logical-0-expansion}
\end{align}
and
\begin{align}
    U^{\otimes 2L+1} \ket{1}_L = \sum_{x_1,\ldots,x_{2L+1}=0}^1 \beta^{2L+1-\sum x_i}e^{i(2L+1 -\sum x_i)\phi}\left(-i\sqrt{1-\beta^2}\right)^{\sum x_i}\ket{x_1\cdots x_{2L+1}}
\end{align}

The repetition code employs $2L$ distinct stabilizers of the form $g_j = Z_j Z_{j+1}$ for $j = 1,\ldots,2L$. Each stabilizer detects X-errors through anticommutation:
\begin{equation}
g_j X_k g_j = \begin{cases} -X_k & \text{if } k \in \{j, j+1\} \\ X_k & \text{otherwise} \end{cases}.
\end{equation}
An X-error on qubit $k$ anticommutes with stabilizers $g_{k-1}$ and $g_k$ (where defined), changing their eigenvalues from $+1$ to $-1$. Measuring all stabilizers yields a syndrome bit string ${s} \in \{0,1\}^{2L}$ that uniquely identifies the error pattern.

For error pattern $s = \{x_1,\ldots,x_{2L+1}\}$ where $x_k = 1$ indicates X-error on qubit $k$, let $\Pi_s$ denote the projector onto the subspace consistent with syndrome $s$. To analyze the effective phase, we introduce the correction operation $C_s$, the unitary that would flip the erroneous qubits back to their correct values. In practice, the protocol measures the syndrome to obtain the error count $d = |s|$ and uses this information in classical post-processing of the measurement outcomes, rather than applying the quantum correction $C_s$. However, analyzing the hypothetical corrected state $C_s \Pi_s (\bigotimes_{k=1}^{2L+1}U_k)\ket{+}_L$ reveals the effective phase $(2L+1-d)\phi$ that appears in the classical likelihood function.

To derive the syndrome amplitude, we analyze how $\Pi_s$ selects terms from the expansion in Equation~\eqref{eq:logical-0-expansion}. The syndrome measurement projects onto configurations where errors occur at exactly those locations indicated by $s$. For a syndrome with $|s|$ errors, $\Pi_s$ selects the unique term in the sum where $\sum_i x_i = |s|$. This term has amplitude:
\begin{align}
    \beta^{2L+1-|s|} \cdot \left(-i\sqrt{1-\beta^2}\right)^{|s|} \cdot e^{-i(2L+1-|s|)\phi},
\end{align}
where the first factor arises from the $(2L+1-|s|)$ qubits without errors (contributing $\beta$ each), the second from the $|s|$ qubits with X-errors (contributing $-i\sqrt{1-\beta^2}$ each), and the third from the phase accumulation on non-error qubits. Since $\left(-i\sqrt{1-\beta^2}\right)^{|s|} = (-i)^{|s|}(1-\beta^2)^{|s|/2}$, the amplitude becomes:
\begin{align}
    (-i)^{|s|} \beta^{2L+1-|s|}(1-\beta^2)^{|s|/2} e^{-i(2L+1-|s|)\phi}.
\end{align}
The state before correction is $|x_1\cdots x_{2L+1}\rangle$ with $|s|$ bits flipped from $|0\rangle_L$. Applying the correction operation $C_s$ flips these bits back, returning the state to $|0\rangle_L$. Absorbing the global phase $(-i)^{|s|}$, we obtain for the homogeneous case:
\begin{align}
    C_s \Pi_s U^{\otimes 2L+1} \ket{0}_L = \beta^{2L+1 - |s|}(1-\beta^2)^{|s|/2} e^{-i(2L+1-|s|)\phi} \ket{0}_L \label{eq:syndrome-amplitude-0}
\end{align}
and similarly
\begin{align}
    C_s \Pi_s U^{\otimes 2L+1} \ket{1}_L = \beta^{2L+1 - |s|}(1-\beta^2)^{|s|/2} e^{i(2L+1-|s|)\phi} \ket{1}_L \label{eq:syndrome-amplitude-1}
\end{align}
Thus for any logical qubit state $\ket{\psi}_L = a \ket{0}_L + b\ket{1}_L$ in the homogeneous case, if $|s|$ errors are observed then:
\begin{equation}
    \ket{\psi}_L \mapsto e^{-i(2L+1 - |s|)\phi Z_L}\ket{\psi}_L,
\end{equation}
corresponding to a logical $Z$-rotation through the reduced angle $(2L+1-|s|)\phi$.

Consider a scenario where $m$ errors occur on the first $m$ qubits (without loss of generality). Let $s$ denote the binary string for these error locations with Hamming weight $|s|=m$. Up to exponentially small errors:
\begin{equation}
P(s) = \prod_{k=1}^m (1-\beta_k^2) \prod_{k=m+1}^N \beta_k^2\le (\max_k \beta_k^2)^{N-m} (1-\min_k\beta_k^2)^{m}.
\end{equation}
There are $\binom{N}{m}$ ways to distribute these errors. Since error strings are independent:
\begin{align}
    P(d) = \sum_{s: |s|=d} P(s) &\le \binom{2L+1}{d}(\max_k \beta_k^2)^{2L+1-d} (1-\min_k\beta_k^2)^{d} \nonumber\\
    &=\binom{2L+1}{d}(\max_k \beta_k^2)^{2L+1} \left(\frac{(1-\min_k\beta_k^2)}{\max_k \beta_k^2}\right)^{d}.
\end{align}
This establishes item 2.

We now generalize to the case where $\phi_k$ varies with qubit index $k$ in the regime $\max_k |\gamma_k/\omega| = o(1)$. The key observation from Equations~\eqref{eq:single-qubit-0} and~\eqref{eq:single-qubit-1} is that when applying $\bigotimes_{k=1}^{2L+1} U_k$ to a logical basis state, each qubit $k$ contributes:
\begin{itemize}
\item \textbf{Without X-error}: Phase factor $\beta_k e^{\pm i\phi_k}$ (sign depends on $|0\rangle$ vs $|1\rangle$)
\item \textbf{With X-error}: Amplitude $-i\sqrt{1-\beta_k^2}$ with \emph{no phase dependence}
\end{itemize}
The phase-independence of error terms follows directly from Equations~\eqref{eq:single-qubit-0} and~\eqref{eq:single-qubit-1}, where the X-error contribution $-i\sqrt{1-\beta_k^2}X_k$ appears without any phase factor $e^{\pm i\phi_k}$.

When syndrome measurement detects $d$ errors at locations $\mathcal{E}$ (with $|\mathcal{E}| = d$), the effective phase accumulated by the logical qubit is determined solely by the $(2L+1-d)$ non-erroneous qubits:
\begin{align}
\phi_{\text{eff}} &= \sum_{k \notin \mathcal{E}} \phi_k = \sum_{k=1}^{2L+1} \phi_k - \sum_{j \in \mathcal{E}} \phi_j.
\end{align}
Using Equation~\eqref{eq:phi-taylor}, which states $\phi_k = \omega + O(\gamma_k^2/\omega)$ for each qubit, we obtain:
\begin{align}
\phi_{\text{eff}} &= \sum_{k=1}^{2L+1} \left[\omega + O(\gamma_k^2/\omega)\right] - \sum_{j \in \mathcal{E}} \left[\omega + O(\gamma_j^2/\omega)\right]\\
&= (2L+1)\omega - d\omega + O\left((2L+1)\max_k (\gamma_k^2/\omega)\right) - O\left(d\max_k (\gamma_k^2/\omega)\right)\\
&= (2L+1-d)\omega + O(\max_k \gamma_k^2/\omega),
\end{align}
where the final step uses the fact that the $O(\cdot)$ terms have the same leading order. Combined with the homogeneous case effective phase $(2L+1-|s|)\phi$ from Equations~\eqref{eq:syndrome-amplitude-0} and~\eqref{eq:syndrome-amplitude-1}, this establishes item 1 for both cases.

The quantum Fisher information (QFI) quantifies the maximum information about $\omega$ extractable from our quantum state over all possible measurements~\cite{braunstein1994statistical,helstrom1976quantum}. The quantum Cramér-Rao bound states that for any unbiased estimator $\hat{\omega}$:
\begin{equation}
\operatorname{Var}(\hat{\omega}) \geq \frac{1}{n \cdot F_Q},
\label{eq:cramer-rao}
\end{equation}
where $n$ is the number of independent measurements and $F_Q$ is the QFI.

For a pure state $|\psi(\omega)\rangle$, the QFI is given by:
\begin{equation}
F_Q = 4\left(\langle\partial_\omega\psi|\partial_\omega\psi\rangle - |\langle\psi|\partial_\omega\psi\rangle|^2\right) = 4\operatorname{Var}_\psi(H),
\end{equation}
where $H$ is the generator satisfying $|\partial_\omega\psi\rangle = -iH|\psi(\omega)\rangle$~\cite{braunstein1994statistical}.

After syndrome measurement we will receive a classical mixture of states with different error counts. Since different syndrome outcomes correspond to orthogonal subspaces, the quantum Fisher information for this classical mixture is
\begin{equation}
    F_Q = \sum_{d} P(d) F_Q^{(d)}.
\end{equation}
This shows that we can first turn our attention to the quantum Fisher information given that $d$ errors are observed and then average over the distribution of errors. From Equations~\eqref{eq:syndrome-amplitude-0} and~\eqref{eq:syndrome-amplitude-1}, after syndrome measurement detects $d$ errors, the normalized state on the projected subspace satisfies
\begin{equation}
|\psi_d(\omega)\rangle = \frac{1}{\sqrt{2}}\left(e^{-i(2L+1-d)\omega}|0\rangle_L + e^{i(2L+1-d)\omega}|1\rangle_L\right) + O\left((1-\min_k\beta_k^2)^{(L+1)/2}\right),
\end{equation}
where the amplitude error bound $O((1-\min_k\beta_k^2)^{(L+1)/2})$ is exponentially small in $L$ and follows from the amplitude corrections in~\eqref{eq:syndrome-amplitude-0} and~\eqref{eq:syndrome-amplitude-1}. In the heterogeneous case, there is also a phase correction $\phi_{\text{eff}} = (2L+1-d)\omega + O(\max_k \gamma_k^2/\omega)$ from item 1. The dominant $O(\max_k (\gamma_k/\omega)^2)$ correction to the QFI arises from averaging over the syndrome distribution in Equation~\eqref{eq:qfi-final}, with both amplitude and phase errors contributing at the same order.

The generator is $H = (2L+1-d)Z_L$, and since this state has maximal variance in $Z_L$:
\begin{equation}
F_Q^{(d)} = 4(2L+1-d)^2.
\end{equation}

Averaging over the syndrome distribution:
\begin{equation}
F_Q = \sum_{d} P(d) F_Q^{(d)} = 4\EX[(2L+1-d)^2].
\end{equation}
Expanding the expectation:
\begin{align}
\EX[(2L+1-d)^2] &= (2L+1)^2 - 2(2L+1)\EX[d] + \EX[d^2]\nonumber\\
&= (2L+1)^2 - 2(2L+1)\EX[d] + \operatorname{Var}[d] + \EX[d]^2\nonumber\\
&= (2L+1-\EX[d])^2 + \operatorname{Var}[d].
\end{align}

Define the average error probability $\hat{p} := (2L+1)^{-1}\sum_k p_k$ where $p_k = 1 - \beta_k^2$ is the per-qubit error probability from Equation~\eqref{eq:map-amplitudes}. In the heterogeneous case, $d$ follows a Poisson-binomial distribution with $\EX[d] = \sum_k p_k = (2L+1)\hat{p}$ and $\operatorname{Var}(d) = \sum_k p_k(1-p_k) \leq (2L+1)\hat{p}(1-\hat{p})$. Since $\hat{p} = O(\max_k (\gamma_k/\omega)^2)$ in the small noise regime, the variance contribution scales as:
\begin{equation}
\frac{\operatorname{Var}[d]}{(2L+1)^2} = \frac{\hat{p}(1-\hat{p})}{2L+1} = O\left(\frac{\max_k (\gamma_k/\omega)^2}{2L+1}\right),
\end{equation}
which is negligible compared to the leading linear correction $-2\hat{p} = O(\max_k (\gamma_k/\omega)^2)$ from the mean, with relative suppression $1/(2L+1)$. Therefore:
\begin{equation}
F_Q = 4(2L+1)^2\left(1 - 2\frac{\EX[d]}{2L+1} + \frac{\operatorname{Var}[d]}{(2L+1)^2}\right) = 4(2L+1)^2\left(1 - O(\max_k (\gamma_k/\omega)^2)\right).
\label{eq:qfi-final}
\end{equation}
By the quantum Cramér-Rao bound~\eqref{eq:cramer-rao}, this implies that any unbiased estimator using $n$ independent measurements satisfies
\begin{equation}
\operatorname{Var}(\hat{\omega}) \geq \frac{1}{n \cdot F_Q} = \frac{1}{4n(2L+1)^2(1 - O(\max_k (\gamma_k/\omega)^2))} = \Omega(1/(nN^2)),
\end{equation}
where $N = 2L+1$ is the total number of qubits. This achieves Heisenberg-limited scaling where precision improves quadratically with system size, which proves item 3.
\end{proof}

\begin{remark}[Connection to Syndrome-Dependent Rotations]
Theorem~\ref{thm:bitflip} analyzes the signal phase accumulation $(N-d)\phi_k$ from the Z-rotation component of the evolution. This complements Theorem~\ref{thm:syndrome-rotation}, which describes the effective XY-rotation angle $\Theta_j$ arising from the error structure. Both perspectives share the same underlying mechanism: syndrome counting on the repetition code determines how the signal or error angle is transformed. The key distinction is that Theorem~\ref{thm:syndrome-rotation} applies to pure XY-plane rotations (the QSP framework), while Theorem~\ref{thm:bitflip} includes the Z-rotation signal term relevant for quantum sensing.
\end{remark}

Theorem~\ref{thm:bitflip} establishes performance guarantees for arbitrary deterministic heterogeneous fields $\gamma_k$ using worst-case bounds $\max_k |\gamma_k/\omega|$. In practice, however, noise fields in experimental systems are not fixed but rather drawn from probability distributions due to fabrication imperfections, environmental fluctuations, and calibration uncertainties. The following proposition addresses this realistic scenario by analyzing how the statistical properties of noise heterogeneity affect the average performance, quantifying the degradation through a single heterogeneity parameter $h$ that characterizes the relative spread of noise strengths.

\begin{proposition}[Heterogeneous Noise Correction]
\label{prop:heterogeneous}
Consider the setup of Theorem~\ref{thm:bitflip} with heterogeneous transverse fields $\gamma_k \sim \mathcal{N}(\gamma, (\gamma h)^2)$ where $h \geq 0$ is the relative standard deviation. When $|\gamma/\omega| = o(1)$, the expected quantum Fisher information satisfies
\begin{equation}
\EX[F_Q] = 4(2L+1)^2\left(1 - O((\gamma/\omega)^2(1+h^2))\right),
\end{equation}
where the $(1+h^2)$ correction shows that heterogeneity increases the average error probability by a factor $(1+h^2)$ relative to homogeneous noise with the same mean $\gamma$.
\end{proposition}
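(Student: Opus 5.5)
The plan is to reduce the statement to the exact averaged form of the quantum Fisher information already derived in the proof of Theorem~\ref{thm:bitflip}, and then take an expectation over the random fields $\gamma_k$. Conditional on a realization $\{\gamma_k\}$, that proof gives
\begin{equation}
F_Q = 4\,\EX[(2L+1-d)^2] = 4(2L+1)^2\left(1 - 2\hat{p} + \frac{\operatorname{Var}[d] + \EX[d]^2}{(2L+1)^2}\right),
\end{equation}
with $\hat{p}=(2L+1)^{-1}\sum_k p_k$ and $p_k = 1-\beta_k^2$. Since $\operatorname{Var}[d]\le(2L+1)\hat p$ and $\EX[d]^2/(2L+1)^2=\hat p^2$, we get $F_Q = 4(2L+1)^2(1 - 2\hat{p} + O(\hat p^2) + O(\hat p/N))$. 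The bracket is affine in $\hat p$ at leading order, so taking $\EX$ over the Gaussian fields reduces the problem to computing $\EX[p_k]$ per qubit, together with a bound on the remainder terms.

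The core step is the per-qubit error probability. By Lemma~\ref{lem:mapping} (with $\chi_k=0$),
\begin{equation}
p_k = 1-\beta_k^2 = \frac{\sin^2(\Omega_k)\,\gamma_k^2}{\Omega_k^2}.
\end{equation}
Expanding $\Omega_k=\omega(1+O(\gamma_k^2/\omega^2))$ gives $p_k = \frac{\sin^2\omega}{\omega^2}\gamma_k^2 + O(\gamma_k^4/\omega^4)$. For the Gaussian, $\EX[\gamma_k^2]=\gamma^2(1+h^2)$. Hence $\EX[p_k] = \frac{\sin^2\omega}{\omega^2}\gamma^2(1+h^2) + O(\EX[\gamma_k^4]/\omega^4)$, where $\EX[\gamma_k^4]=\gamma^4(1+6h^2+3h^4)$. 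This is exactly the homogeneous value multiplied by $(1+h^2)$, which is the interpretive claim in the statement. Summing over $k$ gives $\EX[\hat p]=O((\gamma/\omega)^2(1+h^2))$. Plugging into the averaged bracket yields $\EX[F_Q]=4(2L+1)^2(1-O((\gamma/\omega)^2(1+h^2)))$.

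The main obstacle is making the Taylor remainder rigorous. Gaussian $\gamma_k$ are unbounded, so the assumption $\max_k|\gamma_k/\omega|=o(1)$ from Theorem~\ref{thm:bitflip} does not hold almost surely. I would avoid expanding at all and use the exact bound $p_k\le \min(1,\gamma_k^2/\omega^2)$, which follows from $|\sin\Omega_k|\le 1$ and $\Omega_k\ge|\omega|$. This gives $\EX[p_k]\le(\gamma/\omega)^2(1+h^2)$ directly, with no tail issues. The quadratic remainder terms are controlled the same way, via $\EX[\hat p^2]\le\EX[p_k^2]$-type bounds and $p_k^2\le p_k$. The heterogeneous phase correction $\phi_{\rm eff}-(2L+1-d)\omega=O(\max_k\gamma_k^2/\omega)$ from item~1 of Theorem~\ref{thm:bitflip} requires more care, since its expectation involves a maximum over Gaussians. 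Under the paper's convention that this shift enters at the same order as the amplitude term, I would bound its contribution by averaging $\gamma_k^2$ rather than the maximum. Alternatively, I would treat the fields as a known calibration, so that the phase shift is absorbed into the estimator and does not reduce the QFI. I would state that choice explicitly.
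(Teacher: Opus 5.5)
Your proposal is correct and follows essentially the same route as the paper's proof in Appendix~\ref{app:heterogeneous-noise}. Both Taylor-expand $p_k=\sin^2(\Omega_k)\gamma_k^2/\Omega_k^2$, use $\EX[\gamma_k^2]=\gamma^2(1+h^2)$, and substitute $\EX[d]=(2L+1)\EX[p_k]$ into $\EX[F_Q]=4\EX[(2L+1-d)^2]$; your exact bound $p_k\le\min(1,\gamma_k^2/\omega^2)$ and explicit control of the $\operatorname{Var}[d]$ and $\hat p^2$ terms add rigor the paper omits. On the phase shift, which the paper's proof ignores, it enters the sector QFI only through $\partial_\omega\Phi_{\mathrm{eff}}=\sum_{k\notin\mathcal{E}}\partial_\omega\phi_k$, a sum of per-qubit $1+O(\gamma_k^2/\omega^2)$ terms rather than a maximum, so your averaging option is the right choice, whereas calibrating the fields would not by itself remove this derivative correction.
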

\begin{proof}
See Appendix~\ref{app:heterogeneous-noise}.
\end{proof}

\begin{proposition}[X-Error Suppression Bound]
\label{prop:error-suppression}
In a $(2L+1)$-qubit repetition code detecting X-errors with single-qubit error probability $p_X < 1/2$, undetected errors (requiring $L+1$ or more X-errors) satisfy
\begin{equation}
P_{X,\text{undetected}} = O(p_X^{L+1}),
\end{equation}
achieving exponential suppression in the code parameter $L$.
\end{proposition}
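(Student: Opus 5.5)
We model the X-errors as independent Bernoulli events: qubit $k$ is flipped with probability $p_X$. This is the model behind item~2 of Theorem~\ref{thm:bitflip}, where $p_k = 1-\beta_k^2$. In the heterogeneous case we take $p_X := \max_k p_k$ and bound each $p_k$ by it.

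The first step is structural. On the $(2L+1)$-qubit repetition code, an error pattern $x\in\{0,1\}^{2L+1}$ of weight $w\le L$ yields a syndrome whose minimum-weight coset representative is $x$ itself. Hence it is correctly identified, and no logical error is committed. A pattern goes undetected, or is misattributed to the complementary pattern and so implements $X_L$, only if $w\ge L+1$. This is the standard distance argument. Since $X^{\otimes(2L+1)}=X_L$ is the only nontrivial logical X-type operator, patterns $x$ and $\bar x$ share a syndrome, and the decoder chooses the lighter of the two. Therefore
\begin{equation}
P_{X,\text{undetected}} \le \sum_{w=L+1}^{2L+1}\binom{2L+1}{w}p_X^{w}(1-p_X)^{2L+1-w}.
\end{equation}

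The second step bounds this binomial tail. Factor out $p_X^{L+1}$ and bound the remaining sum by $\sum_{w}\binom{2L+1}{w}\le 2^{2L+1}$. This gives $P_{X,\text{undetected}}\le 2^{2L+1}p_X^{L+1}$, which is $O(p_X^{L+1})$ for fixed $L$ as $p_X\to0$.

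The third step makes the bound useful uniformly in $L$, which is the regime where ``exponential suppression in $L$'' is meant. For this we use the Chernoff/KL bound
\begin{equation}
\Pr[w\ge L+1]\le \exp\bigl(-(2L+1)D(\tfrac{L+1}{2L+1}\,\|\,p_X)\bigr).
\end{equation}
For $p_X<1/2$ the divergence is bounded below by a positive constant. Alternatively, the leading term $\binom{2L+1}{L+1}p_X^{L+1}(1-p_X)^{L}$ dominates the tail up to a factor of $(1-p_X)/(1-2p_X)$, since consecutive terms have ratio at most $p_X/(1-p_X)<1$. Both routes yield the estimate $\lesssim (4p_X(1-p_X))^{L}\,p_X/(1-2p_X)$, which decays exponentially in $L$ precisely when $p_X<1/2$.

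The main obstacle is interpretive rather than computational. ``$O(p_X^{L+1})$'' hides a combinatorial prefactor $\binom{2L+1}{L+1}\sim 4^L$. We therefore state clearly that the constant depends on $L$. The genuinely exponential-in-$L$ claim then rests on the geometric-ratio bound, using $4p_X(1-p_X)<1$ for $p_X<1/2$. For heterogeneous $p_k$, the same argument goes through with $p_X=\max_k p_k$ by stochastic domination of the Poisson-binomial distribution by the binomial distribution $\mathrm{Bin}(2L+1,p_X)$.
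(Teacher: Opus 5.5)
Your proposal is correct, and your Step 2 is exactly the paper's proof. The paper writes the binomial tail, drops the factors $(1-p_X)^{2L+1-d}$, pulls out $p_X^{L+1}$, and bounds each coefficient by $2^{2L+1}$, giving $(L+1)\,2^{2L+1}p_X^{L+1}$. Your bound $\sum_w\binom{2L+1}{w}\le 2^{2L+1}$ is slightly tighter.

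Where you genuinely differ is Step 3. The paper stops at the crude bound and asserts that the prefactor $(L+1)2^{2L+1}$ is ``subexponential in $L$ relative to $p_X^{L+1}$'' whenever $p_X<1/2$. That is not true: $2^{2L+1}p_X^{L+1}=2p_X(4p_X)^L$ grows with $L$ whenever $p_X>1/4$. So the paper's argument only supports exponential suppression in $L$ for $p_X<1/4$.

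You keep the $(1-p_X)^{L}$ factor the paper discards. You then use either the fact that consecutive tail terms have ratio at most $p_X/(1-p_X)$, or the Chernoff bound. This gives the base $4p_X(1-p_X)$, which is below $1$ exactly when $p_X<1/2$. That is what actually justifies the ``exponential suppression in the code parameter $L$'' clause under the stated hypothesis, and it explains why $1/2$ is the natural threshold.

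Your Step 1 is a justification the paper leaves implicit: the decoder chooses the lighter of $x$ and $\bar x$, so only patterns of weight at least $L+1$ cause misdecoding. The paper simply identifies ``undetected'' with ``at least $L+1$ errors.'' Your stochastic-domination extension to heterogeneous $p_k$ is something the paper omits entirely. Neither is needed for the proposition as stated, but both are correct. Together they make explicit that the $O(p_X^{L+1})$ notation hides an $L$-dependent constant of order $4^L$, and that this constant is harmless only because of the $(1-p_X)^L$ factor you retain.
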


\begin{proof}
Recall from Theorem~\ref{thm:bitflip} that the code uses $N = 2L+1$ physical qubits with stabilizer measurements detecting X-errors. An undetected error requires at least $L+1$ X-errors among the $2L+1$ qubits (exceeding the detection threshold). The probability of exactly $d \geq L+1$ errors follows a binomial distribution:
\begin{equation}
P(d \text{ errors}) = \binom{2L+1}{d} p_X^d (1-p_X)^{2L+1-d}.
\end{equation}
Summing over all undetectable patterns:
\begin{align}
P_{X,\text{undetected}} &= \sum_{d=L+1}^{2L+1} \binom{2L+1}{d} p_X^d (1-p_X)^{2L+1-d}\\
&\leq \sum_{d=L+1}^{2L+1} \binom{2L+1}{d} p_X^d = p_X^{L+1} \sum_{d=L+1}^{2L+1} \binom{2L+1}{d} p_X^{d-L-1}.
\end{align}
For $p_X < 1/2$, each factor $p_X^{d-L-1} \leq 1$ and each binomial coefficient satisfies $\binom{2L+1}{d} \leq 2^{2L+1}$, so:
\begin{equation}
P_{X,\text{undetected}} \leq (L+1) \cdot 2^{2L+1} \cdot p_X^{L+1} = O(p_X^{L+1}),
\end{equation}
where the $O(\cdot)$ absorbs the prefactor $(L+1) \cdot 2^{2L+1}$, which is subexponential in $L$ relative to $p_X^{L+1}$ when $p_X < 1/2$.
\end{proof}

The next step in the argument is to show that the steps in Algorithm~\ref{alg:bitflip} will yield the desired estimate of the eigenphase $\phi$ when used inside information theoretic phase estimation.

\begin{lemma}[Maximum Likelihood Sampling Complexity]
\label{lem:mle-sampling}
    Given target precision $\epsilon > 0$ and failure probability $\delta \in (0,1/2]$, the bit-flip code protocol integrated with information-theoretic phase estimation requires $\ell = \Theta(\log(1/(\epsilon\delta)))$ uniformly distributed measurement outcomes. Drawing $s = \Theta(\ell \log(1/\delta))$ raw samples via rejection sampling yields at least $\ell$ accepted samples with probability $1-O(\delta)$, achieving total evolution time:
    \begin{equation}
    T_{\text{total}} = \Theta\left(\frac{\ell \log(1/\delta)}{\epsilon}\right) = \Theta\left(\frac{\log(1/(\epsilon\delta))\log(1/\delta)}{\epsilon}\right).
    \end{equation}
\end{lemma}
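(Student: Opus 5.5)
The proof has three parts: the Svore--Hastings--Freedman analysis supplies the count $\ell$, a Chernoff bound turns $\ell$ accepted samples into $s$ raw draws, and summing evolution times gives $T_{\text{total}}$.

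\textbf{Step 1: reduction to an ideal single-qubit experiment.} By Theorem~\ref{thm:bitflip} (item 1), a round of Algorithm~\ref{alg:bitflip} with $d$ detected errors applies the logical rotation $e^{-i(2L+1-d)M_i\phi Z_L}$. The accepted rounds I would use are those with $d=0$ (or, more generally, those whose syndrome is known, so the amplification factor $2L+1-d$ is known). In these rounds the logical qubit sees an ideal phase-kickback experiment with effective eigenphase $\lambda=(2L+1)\phi$. I would condition on the syndrome and retain only rounds where the rescaled phase is known. The accepted samples are then i.i.d.\ draws with $M_i$ uniform on $\{1,\dots,\tau-1\}$ and $\theta_i$ uniform, exactly as Algorithm~\ref{alg:info} requires. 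The accepted outcomes therefore follow the likelihood $\tfrac12(1+\cos(M_i\lambda-\theta_i))$.

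\textbf{Step 2: how many accepted samples are needed.} I would invoke Theorem~\ref{thm:resource}, or more precisely the Svore--Hastings--Freedman analysis underlying it. Take $\tau=\Theta(1/\epsilon)$ bins. For any wrong bin $k$ with $|2\pi k/\tau-\lambda|>\epsilon$, a single random $(M,\theta)$ sample reduces the likelihood ratio $P[k]/P[k^*]$ by a constant factor in expectation. The reason is that averaging over uniform $M$ makes the two outcome distributions differ by a constant total-variation (Hellinger) distance. A Chernoff bound then shows the wrong bin beats the true bin with probability $e^{-c\ell}$. A union bound over the $\tau=O(1/\epsilon)$ bins gives failure probability $\tau e^{-c\ell}\le\delta$ once $\ell=\Theta(\log(\tau/\delta))=\Theta(\log(1/(\epsilon\delta)))$. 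For the matching lower bound, I would observe that $\Omega(\log\tau)$ binary outcomes are needed just to identify one of $\tau$ bins.

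\textbf{Step 3: rejection sampling and total time.} Let $q$ be the acceptance probability. By Theorem~\ref{thm:bitflip} item 2, $q\ge(\min_k\beta_k^2)^{2L+1}$, which I will treat as a constant bounded away from zero in the operating regime. The number of accepted samples among $s$ raw draws is then Binomial$(s,q)$. A multiplicative Chernoff bound shows that $s=\Theta(\ell\log(1/\delta))$ yields at least $\ell$ acceptances with probability $1-O(\delta)$. The $\log(1/\delta)$ factor is deliberately loose; it matches the stated form. Each raw round uses evolution time $M_i\le\tau=\Theta(1/\epsilon)$, and $\mathbb{E}[M_i]=\Theta(\tau)$. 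Summing over the $s$ rounds gives $T_{\text{total}}=\Theta(s/\epsilon)=\Theta(\log(1/(\epsilon\delta))\log(1/\delta)/\epsilon)$. A final union bound combines the rejection-sampling failure and the MLE failure into total failure probability $O(\delta)$.

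\textbf{Main obstacle.} I expect the hardest part to be Step 2's per-sample separation bound: showing that, averaged over uniform $M\in\{1,\dots,\tau-1\}$ and $\theta$, every bin more than $\epsilon$ away from $\lambda$ is distinguishable with constant information per sample. I would first try bounding the expected Bhattacharyya coefficient between the two outcome distributions. Using the fact that $\frac1\tau\sum_M\cos(M\Delta)$ is small whenever $\Delta\ge 2\pi/\tau$, this coefficient should be at most $1-c$, which the Chernoff step then amplifies.
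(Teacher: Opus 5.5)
Your skeleton is the paper's, and your Steps~1 and~3 are sound; modulo two caveats below, they are cleaner than the paper's version. The trouble is Step~2, which the paper never proves and which, as you sketch it, fails for off-grid phases.

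\textbf{Steps~1 and~3.} The paper also takes $\ell$ from Algorithm~\ref{alg:info}, builds uniformly distributed samples by rejection with $\Theta(1)$ acceptance, applies a multiplicative Chernoff bound with $s=C\ell\log(1/\delta)$, multiplies by $\Theta(1/\epsilon)$ time per round, and union-bounds. The difference is the acceptance rule. The paper keeps rounds with $d\le d_{\max}$ (set via Markov's inequality) and accepts with probability $P_{\min}/(\Pr(d\mid t)(d_{\max}+1))$, so that shifted effective times come out uniform. Your post-selection on $d=0$ is exactly the $d_{\max}=0$ case of that rule, and nothing is lost: the paper's overall acceptance probability is exactly $P_{\min}=\Pr(d=0)$, the same as yours. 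You also avoid two weak points. The Markov step gives only $\Pr(d\le d_{\max})\ge 1-1/\lambda$ with $\lambda=\Theta(\log(1/\delta))$, not $1-O(\delta)$. And the rule needs $P_{\min}/(\Pr(d)(d_{\max}+1))\le 1$, which fails e.g.\ for $d=1$ whenever $(2L+1)\hat p\,(d_{\max}+1)<1-\hat p$.

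Two caveats. First, your parenthetical about also keeping rounds with known $d>0$ does not plug into Algorithm~\ref{alg:info} as is. Those rounds have eigenphase $(2L+1-d)\phi\neq\lambda$, and absorbing them is precisely what the paper's reweighting is for. Second, uniformity of the accepted $M_i$ requires the syndrome law not to depend on $M_i$, which the paper also assumes. Physically it does: with $U_k^{M_i}$ the flip probability is $\sin^2(M_i\Omega_k)\gamma_k^2/\Omega_k^2$. The fix is to multiply your acceptance by $q_{\min}/\Pr(d=0\mid M_i)$, the analogue of the paper's division by $\Pr(d_i\mid t_i)$. This costs only a constant, because $\Pr(d=0\mid M)\ge\prod_k\omega^2/\Omega_k^2$. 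Also, your lower bound on $q$ comes from the product formula for $P(s)$ in the proof of Theorem~\ref{thm:bitflip}; item~2 is an upper bound.

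\textbf{Step~2.} The paper simply asserts $\ell=\Theta(\log(1/(\epsilon\delta)))$. Theorem~\ref{thm:resource} as stated only yields $O(\log(1/\delta)\log(1/\epsilon))$ samples, so an argument really is needed. Your Bhattacharyya bound works if $\lambda$ lies exactly on the grid, but it compares $p_k$ with $p_{k^*}$ while the data are drawn from $p_\lambda$. Off the grid it fails in two ways.
\begin{itemize}
\item The controlling quantity $\mathbb{E}_\lambda[\sqrt{p_k(v)/p_{k^*}(v)}]$ diverges logarithmically in $\theta$, for generic $M$, as soon as $\lambda\neq\lambda_{k^*}$. The reason is that $p_{k^*}(v)$ vanishes where the true outcome probability does not.
\item Worse, with $M$ ranging up to $\tau-1$ the offset $M|\lambda-\lambda_{k^*}|$ reaches $\pi$. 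For $\lambda$ at a grid midpoint, the nearest bin's predicted fringe is on average uncorrelated with the true one, since $\sum_{M=1}^{\tau-1}\cos(M\pi/\tau)=0$. The $(M,\theta)$-averaged KL divergence from $p_\lambda$ to the nearest bin then asymptotically equals that to bins $3/2$ widths away, so the grid MLE need not localize at all.
\end{itemize}

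The fix is to decouple the grid from the longest evolution time. Restrict $M\le\tau/c$ for a constant $c$, so the nearest bin's misspecification stays below $\pi/c$ for every $M$ while distant bins sweep full periods. Then use a misspecification-tolerant tail bound before the union bound over bins: a Chernoff exponent $s\in(0,1/2)$, or Bernstein for the sub-exponential log-likelihood increments. This changes only constants in $\ell$ and $T_{\text{total}}$.

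Finally, your counting lower bound gives only $\Omega(\log(1/\epsilon))$. The $\log(1/\delta)$ half of the claimed $\Theta$ needs its own argument, although the paper supplies no lower bound at all.
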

\begin{proof}
We prove this by showing how rejection sampling produces uniformly distributed evolution times and deriving concentration bounds for the sample complexity.

The information-theoretic phase estimation algorithm (Algorithm~\ref{alg:info}) requires $\ell=\Theta(\log(1/(\epsilon\delta)))$ evolution times chosen uniformly from $\{1, 2, \ldots, t_{\max}\}$ where $t_{\max} = \Theta(1/\epsilon)$. However, the syndrome-dependent phase $(2L+1-d)\omega$ from Theorem~\ref{thm:bitflip} complicates direct sampling. The natural sampling process is:
\begin{enumerate}
\item Draw intended evolution time $t_i \sim \text{Uniform}(\{1, \ldots, t_{\max}\})$
\item Perform the experiment and observe $d_i$ errors (with $d_i \sim \text{Binomial}(2L+1, \hat{p})$)
\item The effective qubit count is $N_i = N - d_i$ (from the syndrome-dependent phase), so the effective phase is $N_i M_i \phi$ where $M_i$ is the time multiplier
\end{enumerate}

Without intervention, $T_i$ is \emph{not} uniformly distributed because the error count $d_i$ is random. To obtain uniformly distributed effective times $T_i$, we use rejection sampling.

Define the binary random variable $X_i \in \{0,1\}$ where $X_i = 1$ indicates sample $i$ is accepted and $X_i = 0$ indicates rejection. The acceptance rule is:
\begin{equation}
    \Pr(X_i = 1 \mid T_i, d_i) = \begin{cases}
        \frac{P_{\min}}{\Pr(d_i|t_i=T_i+d_i)(d_{\max}+1)} & \text{if $d_i\le d_{\max}$}\\
        0 & \text{otherwise}
    \end{cases}
\end{equation}
where $P_{\min}$ and $d_{\max}$ are parameters chosen to ensure high acceptance probability.

Let $\langle d\rangle = \EX[d]$ be the mean number of errors. Markov's inequality states that for any non-negative random variable $Y$ and $a > 0$:
\begin{equation}
\Pr(Y \ge a) \le \frac{\EX[Y]}{a}.
\end{equation}

Applying this to the number of errors $d$ with $a = \lambda\langle d\rangle$ for $\lambda > 1$:
\begin{equation}
\Pr(d \ge \lambda\langle d\rangle) \le \frac{\EX[d]}{\lambda\langle d\rangle} = \frac{1}{\lambda}.
\end{equation}

Therefore, setting
\begin{equation}
d_{\max} = \lambda\langle d\rangle \quad \text{with} \quad \lambda = \Theta(\log(1/\delta))
\end{equation}
ensures that $\Pr(d \le d_{\max}) \ge 1 - 1/\lambda = 1 - O(\delta)$.

For the binomial error distribution from Theorem~\ref{thm:bitflip} with $\EX[d] = (2L+1)\hat{p}$ where $\hat{p} = O((\gamma/\omega)^2)$, the most likely outcome (mode) is $d=0$ when $(2L+1)\hat{p} < 1$. In this small-noise regime, the minimum acceptance probability is:
\begin{equation}
P_{\min} = \Pr(d=0) = (1-\hat{p})^{2L+1} \ge \exp(-(2L+1)\hat{p}) = \Theta(1)
\label{eq:pmin-bound}
\end{equation}
for fixed noise strength $\gamma = O(1)$.

Since $t_i$ and $d_i$ are independent, and $t_i$ is uniform over $\{1, \ldots, t_{\max}\}$, we compute the probability of obtaining effective time $T_i = k$ and accepting the sample ($X_i=1$):
\begin{align}
    \Pr(T_i = k, X_i=1) &= \sum_{d=0}^{d_{\max}} \Pr(t_i = k+d, d_i = d, X_i=1 \mid d_i=d)\nonumber\\
    &= \sum_{d=0}^{d_{\max}} \Pr(t_i = k+d) \cdot \Pr(d_i = d) \cdot \frac{P_{\min}}{\Pr(d_i = d)(d_{\max}+1)}\nonumber\\
    &= \sum_{d=0}^{d_{\max}} \frac{1}{t_{\max}} \cdot \frac{P_{\min}}{d_{\max}+1} = \frac{P_{\min}}{t_{\max}},
\end{align}
which is independent of $k$, confirming that accepted samples yield $T_i$ uniformly distributed over the effective time range.

Now let $s$ be the number of raw samples drawn. Each sample is accepted independently with probability $p \ge P_{\min} = \Theta(1)$ (from Equation~\eqref{eq:pmin-bound}). Let $N_{\text{accept}}$ denote the total number of accepted samples. Then $N_{\text{accept}}$ follows a binomial distribution with $\EX[N_{\text{accept}}] = sp$.

By the Chernoff bound, for any $0 < \alpha < 1$:
\begin{equation}
\Pr(N_{\text{accept}} < (1-\alpha)\EX[N_{\text{accept}}]) \le \exp\left(-\frac{\alpha^2 sp}{2}\right).
\end{equation}

To ensure $N_{\text{accept}} \ge \ell$ with probability $1-\delta$, we set $(1-\alpha)sp \ge \ell$ and require:
\begin{equation}
\exp\left(-\frac{\alpha^2 sp}{2}\right) \le \delta \quad \Rightarrow \quad sp \ge \frac{2\ln(1/\delta)}{\alpha^2}.
\end{equation}

Choosing $\alpha = 1/2$ gives $sp \ge 8\ln(1/\delta)$ and $(1-\alpha)sp = sp/2 \ge \ell$, yielding:
\begin{equation}
sp \ge \max\left\{2\ell, 8\ln(1/\delta)\right\}.
\end{equation}

Since $p = \Theta(1)$, we have $p \ge c$ for some constant $c > 0$. We set:
\begin{equation}
s = C \cdot \ell \log(1/\delta)
\label{eq:sample-count}
\end{equation}
for a sufficiently large constant $C$. Then $sp \ge cC\ell\ln(1/\delta)$, and we verify this satisfies the constraint $sp \ge \max\{2\ell, 8\ln(1/\delta)\}$:
\begin{align}
cC\ell\ln(1/\delta) &\ge 2\ell \quad\text{requires}\quad C \ge \frac{2}{c\ln(1/\delta)},\\
cC\ell\ln(1/\delta) &\ge 8\ln(1/\delta) \quad\text{requires}\quad \ell \ge \frac{8}{cC}.
\end{align}
For $\delta \le 1/2$, we have $\ln(1/\delta) \ge \ln(2)$, so choosing $C \ge 2/(c\ln(2)) \approx 2.885/c$ satisfies the first constraint for all $\delta \in (0,1/2]$. The second constraint holds since $\ell = \Theta(\log(1/(\epsilon\delta)))$ grows without bound as $\epsilon, \delta \to 0$. Thus $N_{\text{accept}} \ge \ell$ with probability $1-O(\delta)$.

Each raw sample uses evolution time $t_i = O(1/\epsilon)$. The total evolution time is:
\begin{align}
T_{\text{total}} &= s \times O(1/\epsilon) = \Theta\left(\frac{\ell \log(1/\delta)}{\epsilon}\right)\nonumber\\
&= \Theta\left(\frac{\log(1/(\epsilon\delta))\log(1/\delta)}{\epsilon}\right),
\end{align}
where we substituted $\ell = \Theta(\log(1/(\epsilon\delta)))$ from Algorithm~\ref{alg:info}.

Since the maximum likelihood estimator achieves precision $\epsilon$ with probability $1-O(\delta)$ when given $\ell = \Theta(\log(1/(\epsilon\delta)))$ independent uniform samples, and our rejection sampling provides these samples with probability $1-O(\delta)$ (by drawing $s = \Theta(\ell \log(1/\delta))$ raw samples as in Equation~\eqref{eq:sample-count}), the overall algorithm succeeds with probability $1-O(\delta)$ by the union bound.
\end{proof}

\begin{remark}[Rejection Sampling vs. Full Likelihood]
\label{rem:rejection-vs-likelihood}
Lemma~\ref{lem:mle-sampling} proves that a uniformly distributed subset of samples exists, establishing theoretical sample complexity guarantees for Algorithm~\ref{alg:info}. However, in practice, Algorithm~\ref{alg:bitflip} uses all measurements without rejection, which is valid because the Bayesian likelihood function naturally accounts for the syndrome-dependent phase $(N-d_i)M_i\phi$, optimally weighting each measurement according to its information content.
\end{remark}

The argument underlying Remark~\ref{rem:rejection-vs-likelihood} is model-agnostic: full Bayesian inference subsumes both rejection sampling and random measurement basis rotations $\theta_i$ for any protocol whose likelihood function is known. In the bit-flip code (Algorithm~\ref{alg:bitflip}), the syndrome-dependent phase $(N-d_i)M_i\phi$ enters the likelihood directly. The same principle applies to the bare-GHZ and combined protocols, where the effective phase depends on the specific measurement configuration. Because the Bayesian posterior update naturally weights each measurement according to its information content, there is no loss of optimality from using all data rather than a uniformly distributed subset, regardless of the underlying code structure.

%\subsection{Integration with Information-Theoretic Phase Estimation}

The key innovation in Algorithm~\ref{alg:bitflip} is how syndrome information from error detection is incorporated into Bayesian phase estimation. Rather than applying quantum error correction to restore the ideal state, we use the measured syndromes to determine the effective phase accumulated by each measurement (as specified in Theorem~\ref{thm:bitflip}) and account for this variation through the classical likelihood function. This syndrome-enhanced approach maintains Heisenberg-limited scaling without discarding any measurements.

Each detected error reduces the effective phase by $\phi$ in the homogeneous case, yielding $\phi_{\text{eff}} = (N-d_i)\phi$ when $d_i$ errors are observed. In the regime where $\max_k(\gamma_k/\omega) = o(1)$, the amplitude damping factors approach unity and syndrome outcomes become nearly deterministic. Under this approximation, the Bayesian likelihood function simplifies to:
\begin{equation}
\Pr(v_i|M_i,\theta_i,\phi,\mathcal{S}_i) = \frac{1}{2}\left(1 + \cos((N-d_i)M_i\phi - \theta_i)\right) + O\left(\max_k(\gamma_k/\omega)^2\right),
\end{equation}
where we have dropped subdominant amplitude corrections. This approach naturally weights each measurement according to its information content, eliminating the need for rejection sampling (Remark~\ref{rem:rejection-vs-likelihood}) while optimally utilizing all data regardless of error count.

To achieve target precision $\epsilon$ with failure probability $\delta$, Lemma~\ref{lem:mle-sampling} requires $s = \Theta(\log(1/(\epsilon\delta)) \log(1/\delta))$ experiments, each using $N = 2L+1$ qubits and evolving for average time $\langle M_i \rangle = O(1/\epsilon)$. By Theorem~\ref{thm:bitflip} (item 3), the quantum Fisher information $F_Q = 4N^2(1 - O(\max_k|\gamma_k/\omega|^2))$ achieves Heisenberg-limited scaling with $\operatorname{Var}(\hat{\omega}) = \Omega(1/(nN^2))$, confirming that precision improves quadratically with code size. This scaling is maintained in the small-noise regime where $(2L+1)p < 1$ with $p = 1-\beta_k^2$, ensuring the error distribution concentrates at small values. Increasing $L$ provides exponentially stronger error suppression at linear cost in qubit count per experiment.

\section{Phase Amplification and Adaptive Binary Search}
\label{sec:binary-search}

This section develops the phase amplification machinery that underlies Encoded QSP's ability to achieve Heisenberg-limited precision under field inhomogeneities, where each qubit $k$ experiences a
slightly different signal $\omega_k = \omega + \epsilon_k$. The central
mathematical object is the \emph{phase amplification function} $\Phi_N(\omega) =
\arctan((-1)^L \tan^N(\omega))$, the $j=0$ case of the Encoded QSP transformation from Theorem~\ref{thm:syndrome-rotation}. This function exhibits step-function behavior with
transition width $\Theta(N^{-1})$, enabling phase discrimination at the Heisenberg limit.  Our approach here necessitates the use of both error correction as well as quantum signal processing to achieve this scaling: thereby making it a  true application of EQSP.  We establish the key properties of this
function and develop an adaptive binary search protocol that achieves
Heisenberg-limited precision for any implementation producing $\Phi_N$.
Section~\ref{sec:implementations} then specializes this framework to two
concrete implementations: GHZ states and repetition codes.

\subsection{The Phase Amplification}

We consider the following sensing scenario. Let $\omega \in (-\pi/2, \pi/2)$ be
an unknown signal phase to be estimated. Each of $N = 2L+1$ qubits experiences a
slightly different phase $\omega_k = \omega + \epsilon_k$, where the noise terms
$\{\epsilon_k\}_{k=1}^N$ are independent random variables with
$\EX[\epsilon_k] = 0$ and $\operatorname{Var}[\epsilon_k] = \sigma_\epsilon^2$.
Our goal is to estimate $\omega$ to precision $\epsilon = O(N^{-1})$, i.e. the
Heisenberg limit, using adaptive measurements.

The key mathematical object enabling Heisenberg-limited estimation is the \emph{phase amplification function}, which emerges naturally from both GHZ-state parity measurements and repetition-code syndrome measurements.

\begin{definition}[Phase Amplification Function]
\label{def:phase-amplification}
For $N = 2L+1$ qubits, the phase amplification function $\Phi_N: (-\pi/2, \pi/2) \to (-\pi/2, \pi/2)$ is defined by:
\begin{equation}
\Phi_N(\omega) := \arctan\left((-1)^L \tan^N(\omega)\right).
\end{equation}
\end{definition}

This phase amplification function is the special case $j=0$ (no detected errors) and $\vartheta=0$ (pure X-rotation) of the general syndrome-dependent rotation angle $\Theta_j$ from Theorem~\ref{thm:syndrome-rotation}.

This function exhibits step-function behavior that sharpens with increasing $N$, acting as a ``phase rounding'' mechanism that maps continuous phases to saturated values $\pm\pi/2$ outside a narrow transition region. The following proposition establishes the key properties that enable Heisenberg-limited estimation.

\begin{proposition}[Step Function Properties of $\Phi_N$]
\label{prop:step-function-properties}
The phase amplification function $\Phi_N$ satisfies the following properties:

\begin{enumerate}
\item \textbf{Monotonicity}: $\Phi_N$ is strictly monotone on $(-\pi/2, \pi/2)$ with derivative
\begin{equation}
\frac{d\Phi_N}{d\omega} = \frac{(-1)^L N \tan^{N-1}(\omega) \sec^2(\omega)}{1 + \tan^{2N}(\omega)},
\label{eq:phi-derivative}
\end{equation}
which has sign $(-1)^L$ for all $\omega \in (0, \pi/2)$.

\item \textbf{Saturation}: For any $\omega_0 \in (\pi/4, \pi/2)$, define $\kappa(\omega_0) := -\log|\cot(\omega_0)| > 0$. Then for all $|\omega| \geq \omega_0$:
\begin{equation}
\left|\Phi_N(\omega) - (-1)^L \cdot \mathrm{sign}(\omega) \cdot \frac{\pi}{2}\right| \leq e^{-\kappa(\omega_0) N}.
\end{equation}
In particular, for $|\omega| \geq \pi/3$: $|\Phi_N(\omega) - (-1)^L \cdot \mathrm{sign}(\omega) \cdot \pi/2| \leq 3^{-N/2}$.

\item \textbf{Transition width}: Let $W$ be the minimum length compact subinterval of $(-\pi/2,\pi/2)$ such that $|\Phi_N(\min(W))|=\alpha/N$ and $|\Phi_N(\max(W))=\pi/2-\alpha/N$ for some constant $\alpha>0$.  We then have that  $|\max(W)-\min(W)|\in\Theta(N^{-1})$ (meaning that the width of the region than $\Phi_N$ transitions from near-zero to near $\pi/2$ is of length $O(N^{-1})$.

\item \textbf{Small-angle expansion}: For $|\omega| \leq 1/(2N)$:
\begin{equation}
\Phi_N(\omega) = (-1)^L \omega^N \left(1 + O(N^2\omega^2)\right).
\label{eq:small-angle-expansion}
\end{equation}
\end{enumerate}
\end{proposition}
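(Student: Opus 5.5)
All four items come from writing $\Phi_N(\omega)=\arctan\bigl((-1)^L t^N\bigr)$ with $t=\tan\omega$, and using that $N=2L+1$ is odd. Then $t^N$ is an odd, strictly increasing bijection of $\mathbb{R}$, and $\arctan$ is odd and increasing.

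\textbf{Monotonicity.} Apply the chain rule:
\begin{equation}
\frac{d\Phi_N}{d\omega}=\frac{1}{1+t^{2N}}\cdot(-1)^L N t^{N-1}\cdot\sec^2\omega .
\end{equation}
This is exactly \eqref{eq:phi-derivative}. Since $N-1$ is even, $t^{N-1}\ge 0$, with equality only at $\omega=0$. So the derivative has constant sign $(-1)^L$ on $(-\pi/2,\pi/2)\setminus\{0\}$, not merely on $(0,\pi/2)$. A $C^1$ function whose derivative has constant sign except at an isolated zero is strictly monotone, which gives item 1.

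\textbf{Saturation.} Use oddness to reduce to $\omega\ge\omega_0>\pi/4$. Then $|t|>1$, and we have the exact identity
\begin{equation}
\tfrac{\pi}{2}-\arctan|y|=\arctan(1/|y|)\le 1/|y| .
\end{equation}
Taking $y=t^N$ gives
\begin{equation}
\bigl|\Phi_N(\omega)-(-1)^L\mathrm{sign}(\omega)\tfrac{\pi}{2}\bigr|\le|\cot\omega|^N\le|\cot\omega_0|^N=e^{-\kappa(\omega_0)N}.
\end{equation}
The middle inequality uses that $|\cot\omega|$ decreases on $(\pi/4,\pi/2)$. At $\omega_0=\pi/3$ we have $\cot\omega_0=1/\sqrt3$, which yields the bound $3^{-N/2}$.

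\textbf{Small-angle expansion.} For $|\omega|\le 1/(2N)$, write $\tan\omega=\omega(1+O(\omega^2))$. Then
\begin{equation}
\tan^N\omega=\omega^N\exp\bigl(N\,O(\omega^2)\bigr)=\omega^N\bigl(1+O(N\omega^2)\bigr),
\end{equation}
where the exponent is $O(1/N)$ and hence bounded. Next, $\arctan y=y(1+O(y^2))$ with $|y|\le(2N)^{-N}$, so the $y^2$ correction is far smaller than $N^2\omega^2$ unless $\omega=0$, where both sides vanish. Combining, $\Phi_N(\omega)=(-1)^L\omega^N(1+O(N^2\omega^2))$. This is in fact slightly stronger than stated, since $O(N\omega^2)\subseteq O(N^2\omega^2)$.

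\textbf{Transition width (main obstacle).} I would change variables to $u=\log\tan\omega$ on $(0,\pi/2)$. This turns $|\Phi_N|$ into the logistic-type function $\arctan(e^{Nu})$, which is a fixed profile compressed by a factor $N$ in $u$. The condition $|\Phi_N|=\alpha/N$ corresponds to $Nu=\log\tan(\alpha/N)\approx-\log(N/\alpha)$. The condition $|\Phi_N|=\pi/2-\alpha/N$ corresponds to $Nu=\log\cot(\alpha/N)\approx+\log(N/\alpha)$. So the window in $u$ has length $2\log(N/\alpha)/N$. Near $\omega=\pi/4$ we have $d\omega/du=\tfrac12\sin 2\omega\approx\tfrac12$, which converts this to a window in $\omega$ of length $\log(N/\alpha)/N\,(1+o(1))$. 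By monotonicity this window is also the minimal $W$.

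This is the step I expect to be delicate. With thresholds $\alpha/N$ as literally stated, the computation gives $\Theta(\log N/N)$, not $\Theta(N^{-1})$. I would therefore do one of two things. Either I prove the $\Theta(N^{-1})$ claim for fixed constant thresholds, i.e.\ $|\Phi_N|$ passing from $c$ to $\pi/2-c$, where the same computation gives width $\Theta(1/N)$. Or I state the $\alpha/N$ version with the extra logarithmic factor. Either way, the $u$-substitution plus the Jacobian bound near $\pi/4$ is the argument.
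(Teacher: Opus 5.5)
Your proposal is correct and follows essentially the same route as the paper on all four items: the chain rule plus oddness, the bound $\pi/2-\arctan|y|=\arctan(1/|y|)\le 1/|y|$, Taylor expansion for small $\omega$, and the linearization $\log\tan\omega\approx 2(\omega-\pi/4)$ near $\pi/4$, which your $u$-substitution encodes. Your concern about item 3 is justified: the paper's proof quietly switches to a fixed threshold $\delta$, and plugging $\delta=\alpha/N$ into its own width formula $\log(\cot\delta/\tan\delta)/(2N)$ gives $\log\cot(\alpha/N)/N=\Theta(\log N/N)$, so the fixed-constant-threshold version you propose is exactly what the paper actually establishes.
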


\begin{proof}
We establish each item in turn.

Let $u = (-1)^L \tan^N(\omega)$. By the chain rule:
\begin{equation}
\frac{d\Phi_N}{d\omega} = \frac{1}{1 + u^2} \cdot \frac{du}{d\omega} = \frac{(-1)^L N \tan^{N-1}(\omega) \sec^2(\omega)}{1 + \tan^{2N}(\omega)}.
\end{equation}
For $\omega \in (0, \pi/2)$, both $\tan^{N-1}(\omega)$ and $\sec^2(\omega)$ are positive, and the denominator is always positive. Thus $d\Phi_N/d\omega$ has sign $(-1)^L$. By the antisymmetry $\Phi_N(-\omega) = -\Phi_N(\omega)$, the function is monotone on $(-\pi/2, 0)$ as well, which establishes item~1.

For $\omega > \pi/4$, we have $\tan(\omega) > 1$, so $|(-1)^L \tan^N(\omega)| = \tan^N(\omega) \to \infty$ as $N \to \infty$. Using $\arctan(x) = \mathrm{sign}(x) \cdot (\pi/2 - \arctan(1/|x|))$ for $x \neq 0$:
\begin{equation}
(-1)^L \cdot \frac{\pi}{2} - \Phi_N(\omega) = (-1)^L \arctan(\cot^N(\omega)).
\end{equation}
For $x \in (0, 1)$, $\arctan(x) \leq x$. For $\omega \geq \omega_0 > \pi/4$, we have $|\cot(\omega)| \leq |\cot(\omega_0)| < 1$, giving:
\begin{equation}
\left|(-1)^L \cdot \frac{\pi}{2} - \Phi_N(\omega)\right| \leq |\cot(\omega_0)|^N = e^{N \log|\cot(\omega_0)|} = e^{-\kappa(\omega_0) N}.
\end{equation}
For $\omega_0 = \pi/3$: $\cot(\pi/3) = 1/\sqrt{3}$, so $\kappa(\pi/3) = (\log 3)/2$, yielding the bound $3^{-N/2}$. Antisymmetry extends to $\omega < -\omega_0$, which establishes item~2.

For $|\Phi_N(\omega)|$ to lie in $[\delta, \pi/2 - \delta]$ for fixed $\delta > 0$, we require $|\tan(\omega)|^N \in [\tan(\delta), \cot(\delta)]$. This constrains $|\tan(\omega)| \in [(\tan\delta)^{1/N}, (\cot\delta)^{1/N}]$. Near $\omega = \pi/4$ where $\tan(\omega) = 1 + 2(\omega - \pi/4) + O((\omega - \pi/4)^2)$, we have $\tan^N(\omega) = e^{2N(\omega - \pi/4) + O(N(\omega - \pi/4)^2)}$. The constraint $\tan^N(\omega) \in [\tan\delta, \cot\delta]$ therefore requires $(\omega - \pi/4) \in [\frac{\log\tan\delta}{2N}, \frac{\log\cot\delta}{2N}] + O(N^{-2})$, giving width $\frac{\log(\cot\delta/\tan\delta)}{2N} + O(N^{-2}) = \Theta(N^{-1})$, which establishes item~3.

For $|\omega| \leq 1/(2N) < 1$, we have $\tan(\omega) = \omega + \omega^3/3 + O(\omega^5)$, so:
\begin{equation}
\tan^N(\omega) = \omega^N \left(1 + \frac{\omega^2}{3}\right)^N = \omega^N \left(1 + \frac{N\omega^2}{3} + O(N^2\omega^4)\right).
\end{equation}
Since $|\omega^N| \leq (1/(2N))^N \to 0$ rapidly, $\arctan(\tan^N(\omega)) = \tan^N(\omega)(1 + O(\tan^{2N}(\omega)))$. Combining and using $N\omega^2 \leq 1/(4N) = O(N^{-1})$ to absorb the $N\omega^2/3$ term yields~\eqref{eq:small-angle-expansion}, which establishes item~4.
\end{proof}

The transition width $\Theta(N^{-1})$ is the key property: it equals the Heisenberg scaling, enabling phase discrimination at the fundamental quantum limit.

\begin{theorem}[Heisenberg Scaling from Transition Width]
\label{thm:heisenberg-scale}
The transition width $\Theta(N^{-1})$ of $\Phi_N$ directly enables phase discrimination at the Heisenberg limit. Specifically, for phases $\omega_1, \omega_2$ with $|\omega_1 - \omega_2| = \Theta(N^{-1})$ located near the transition region, the outputs $\Phi_N(\omega_1)$ and $\Phi_N(\omega_2)$ differ by $\Theta(1)$, enabling reliable binary classification with $O(1)$ measurements.
\end{theorem}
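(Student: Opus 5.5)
The plan is to make the statement quantitative by working in the logarithmic coordinate $u = \log|\tan\omega|$ near the transition point $\omega = \pi/4$, where Proposition~\ref{prop:step-function-properties} (item~3) has already shown the transition happens. Write $\omega = \pi/4 + x/N$ with $x = O(1)$. Expanding $\tan(\pi/4 + \eta) = 1 + 2\eta + O(\eta^2)$ gives
\begin{equation}
\tan^N(\omega) = \exp\!\big(2x + O(x^2/N)\big),
\end{equation}
so that
\begin{equation}
\Phi_N(\omega) = (-1)^L \arctan\!\big(e^{2x}\big) + O(1/N).
\end{equation}
Thus, in the rescaled variable $x = N(\omega-\pi/4)$, $\Phi_N$ converges uniformly on compact $x$-sets to the fixed sigmoid $g(x) = (-1)^L\arctan(e^{2x})$. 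This limit is strictly monotone with $|g'(x)| = \operatorname{sech}(2x) \geq c_K > 0$ on any compact $K$.

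Next, take $\omega_{1,2} = \pi/4 + x_{1,2}/N$ with $x_1, x_2$ in a fixed compact set and $|x_1-x_2| = \Theta(1)$, which is exactly the hypothesis $|\omega_1-\omega_2| = \Theta(N^{-1})$ near the transition region. By the mean value theorem,
\begin{equation}
|g(x_1)-g(x_2)| \geq c_K |x_1 - x_2| = \Theta(1).
\end{equation}
The $O(1/N)$ uniform error then shows that $|\Phi_N(\omega_1)-\Phi_N(\omega_2)| = \Theta(1)$ for $N$ large. Alternatively, I could integrate the exact derivative \eqref{eq:phi-derivative}: at $\omega = \pi/4 + x/N$ it equals $(-1)^L N\,\operatorname{sech}(2x)(1+O(1/N))$, and integrating over an interval of length $\Theta(1/N)$ gives a $\Theta(1)$ change directly. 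The upper bound is immediate because $|\Phi_N| < \pi/2$. By antisymmetry, the same argument covers the transition near $-\pi/4$.

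For the ``$O(1)$ measurements'' claim, I will map the output angle to a measurement distribution. The logical rotation by $\Phi_N$ yields outcome probabilities of the form $\cos^2\Phi_N(\omega)$ versus $\sin^2\Phi_N(\omega)$, as in the logical state $|\psi_j\rangle$ of Theorem~\ref{thm:syndrome-rotation} with $j=0$. A $\Theta(1)$ separation in angle, with both angles in a compact interior region, gives a $\Theta(1)$ separation $\Delta$ in Bernoulli parameters. A majority vote over $m$ repetitions then errs with probability at most $e^{-2m\Delta^2}$ by Hoeffding's inequality, so a constant $m = O(\log(1/\delta)/\Delta^2)$ suffices for any fixed confidence $\delta$.

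The main obstacle is making ``located near the transition region'' precise enough that the constants do not degenerate. The interval must be chosen so that the rescaled points stay in a compact $x$-window, and the $\cos^2/\sin^2$ map must be non-flat there. If the points instead straddle into saturation, the saturation bound of Proposition~\ref{prop:step-function-properties} (item~2) still makes the outputs differ by about $\pi/2$. There, though, the per-outcome probabilities approach $0$ or $1$, and I would check separately that they remain distinguishable; they do, since they approach opposite extremes.
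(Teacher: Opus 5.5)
Your proof is correct and uses the same mechanism as the paper: $\Phi_N$ has slope $\Theta(N)$ across a window of width $\Theta(N^{-1})$. Your justification, however, is different and tighter.

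\textbf{What your rescaling buys.} The paper linearizes at the transition point, writing $|\Phi_N(\omega_1)-\Phi_N(\omega_2)| = |\Phi_N'|\,|\omega_1-\omega_2| + O((\omega_1-\omega_2)^2)$. That remainder is not actually subleading. The higher derivatives of $\Phi_N$ scale as powers of $N$ on the window; for example, $\Phi_N'' \approx N^2 g''(x)$ with $g''(x) = -2(-1)^L\operatorname{sech}(2x)\tanh(2x)$. So over a separation $\Theta(N^{-1})$ the remainder is $\Theta(1)$, the same order as the linear term. As written, the paper's estimate therefore does not exclude cancellation.

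Your change of variables $x = N(\omega-\pi/4)$ and the limit profile $(-1)^L\arctan(e^{2x})$ give the lower bound $|\Phi_N'| \ge c_K N$ uniformly over the whole window. Combined with monotonicity (item~1 of Proposition~\ref{prop:step-function-properties}), this yields the $\Theta(1)$ separation for any constant $|x_1-x_2|$. It also makes precise what ``near the transition region'' has to mean: both rescaled points lie in a fixed compact set. If both points sit in the same saturated plateau, the difference is exponentially small instead.

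For the readout, the paper argues abstractly through total variation distance and a Chernoff bound. You use the explicit Bernoulli parameters $\cos^2\Phi_N$ and Hoeffding, which is equivalent in substance.

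\textbf{Two small corrections.}

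First, a vote thresholded at $1/2$ only separates the hypotheses when $\cos^2\Phi_N(\omega_1)$ and $\cos^2\Phi_N(\omega_2)$ lie on opposite sides of $1/2$. This holds in the binary search, where the test point sits at the centre of the transition. In general you should threshold at the midpoint of the two parameters. Hoeffding then gives error at most $e^{-m\Delta^2/2}$, not $e^{-2m\Delta^2}$; the $O(\log(1/\delta)/\Delta^2)$ count is unchanged.

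Second, like the paper, you count measurements on a state that already carries the angle $\Phi_N$. If that state comes from the $j=0$ sector you cite, the trivial syndrome occurs with probability $\cos^{2N}(\omega)+\sin^{2N}(\omega) \approx 2^{1-N}\cosh(2x) = \Theta(2^{-N})$ throughout the window (Remark~\ref{rem:post-selection-cost}). The $O(1)$ figure is therefore per accepted sample, not per experimental run.
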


\begin{proof}
Near the transition region $\omega = \pi/4 + O(N^{-1})$, the derivative~\eqref{eq:phi-derivative} satisfies $|d\Phi_N/d\omega| = \Theta(N)$ (since $\tan(\pi/4) = 1$ and the denominator is $1 + 1 = 2$). Thus for $|\omega_1 - \omega_2| = \Theta(N^{-1})$:
\begin{equation}
|\Phi_N(\omega_1) - \Phi_N(\omega_2)| = \left|\frac{d\Phi_N}{d\omega}\right| \cdot |\omega_1 - \omega_2| + O((\omega_1 - \omega_2)^2) = \Theta(N) \cdot \Theta(N^{-1}) = \Theta(1).
\end{equation}
A $\Theta(1)$ difference in outputs yields $\Theta(1)$ total variational distance between measurement distributions.  This implies~\cite{cover2006elements} that there exists an optimal unbiased estimator that is capable of distinguishing between the two with probability $1/2 + \Theta(1)$.  Then, as the probability is greater than $1/2$ the Chernoff bound can be used to show that computing the median of the classes requires $O(\log(1/\delta))$ measurements for error probability $\delta$.
\end{proof}

\subsection{Adaptive Binary Search Protocol}

The phase amplification framework enables a general adaptive binary search protocol that applies to any implementation producing the phase amplification function $\Phi_N$. The key mechanism is phase correction via $R_Z$ rotations, which shifts the signal parameter to any desired operating point.

\begin{lemma}[Phase Shift Mechanism]
The phase evolution function can be shifted by any constant:
%Applying $R_Z(-2\theta) = e^{+i\theta Z}$ to each qubit before the phase evolution shifts the effective phase argument:
\begin{equation}
\Phi_N(\omega) \mapsto \Phi_N(\omega - \theta).
\label{eq:phase-shift}
\end{equation}
for any $\theta\in \mathbb{R}$ using $N$ single qubit rotations.
\end{lemma}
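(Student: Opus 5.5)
The plan is to realize the shift by pre-compensating each qubit with a known $Z$-rotation that commutes with the signal's phase-carrying part. From the setting of this section, each qubit $k$ carries the phase $\omega_k$ through a factor $e^{-i\omega_k Z_k}$. This is the same mechanism that produces $\Phi_N$ in Theorem~\ref{thm:syndrome-rotation}, where $\phi_k$ plays the role of the rotation angle. I will apply, to every qubit, the rotation $R_Z(-2\theta)=e^{+i\theta Z_k}$ immediately before or after the signal. This uses exactly $N$ single-qubit rotations.

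The key steps are as follows.

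First, I will note that single-qubit $Z$-rotations on the same qubit commute. Hence
\begin{equation}
e^{+i\theta Z_k}\,e^{-i\omega_k Z_k}=e^{-i(\omega_k-\theta)Z_k}.
\end{equation}
The combined per-qubit operation is therefore identical in form to the original signal, with $\omega_k$ replaced by $\omega_k-\theta$, and in particular $\omega$ replaced by $\omega-\theta$.

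Second, I will observe that $\Phi_N$ is obtained from the per-qubit unitaries purely as a function of the phase argument. Its definition, and the derivation in Theorem~\ref{thm:syndrome-rotation} applied with $j=0$ and $\vartheta=0$, only use the algebraic form of the per-qubit operator. It follows that the whole derivation goes through verbatim after the substitution, and the induced logical quantity becomes $\Phi_N(\omega-\theta)$.

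Third, I will check that the shift is exact and not approximate. The rotations are transversal, act identically on each code qubit, and commute with the $ZZ$ stabilizers. They therefore introduce no syndrome and do not alter the projection onto the codespace or the GHZ subspace.

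The main obstacle is the case where the per-qubit signal is not a pure $Z$-rotation. In the implementations where $\Phi_N$ arises from an $X$-type rotation via the QSP mapping, $e^{i\theta Z}$ does not simply commute with the signal. There I would first try to conjugate into the basis in which the signal acts as a rotation about a fixed axis. That is, I would apply the shift as a rotation about that same axis (for $X$-rotations, $e^{+i\theta X_k}$, obtained from $R_Z$ by Hadamard conjugation). Additivity of angles for rotations about a common axis then again gives $\omega\mapsto\omega-\theta$. This keeps the count at $N$ single-qubit rotations, modulo basis changes that can be absorbed into state preparation and measurement.
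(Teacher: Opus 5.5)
Your proof is correct and follows the paper's argument: $R_Z(-2\theta)=e^{+i\theta Z_k}$ commutes with the $Z$-phase signal, so on every qubit the angles add to $\omega_k-\theta$ (the paper phrases this as the total phase becoming $N(\omega-\theta)+S$), which gives the shift using $N$ rotations. In Theorem~\ref{thm:syndrome-rotation} and Lemma~\ref{lem:qsp-activation}, however, $\Phi_N$ comes from an $X$-type signal, so that case is covered by your final paragraph rather than your second step; that paragraph shifts with $e^{+i\theta X_k}$ and relies only on angle additivity before the code-space projection, not on commuting with the $ZZ$ stabilizers, and it correctly supplies what the paper's proof merely asserts for the repetition-code implementation.
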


\begin{proof}
The $R_Z$ rotation on each qubit commutes with subsequent $Z$-rotations from the signal evolution. For the effective phase accumulated by the $N$-qubit system:
\begin{equation}
\sum_{k=1}^N (\omega_k - \theta) = N\omega + S - N\theta = N(\omega - \theta) + S,
\end{equation}
where $S = \sum_k \epsilon_k$ is the total noise offset. 
This creates a decision boundary at any target angle $\theta_{\text{mid}}$. The sign of $\Phi_N(\omega - \theta_{\text{mid}})$ depends on $(-1)^L$ (cf.\ Proposition~\ref{prop:step-function-properties}), but the measurement probability $\cos^2(\Phi_N)$ depends only on $|\Phi_N|$ and is therefore independent of $L$.
Since $\Phi_N$ depends only on this total phase (for both GHZ and repetition code implementations), the shift~\eqref{eq:phase-shift} follows.
\end{proof}

With the phase shift mechanism established, we can formalize the binary decision rule and its error analysis.

\begin{proposition}[Binary Decision Rule]
\label{prop:binary-decision}
For interval $[\Omega_{\text{low}}, \Omega_{\text{high}}]$ with midpoint $\theta_{\text{mid}} = (\Omega_{\text{low}} + \Omega_{\text{high}})/2$, apply $R_Z(-2\theta_{\text{mid}})$ before measurement. The majority vote over $M$ measurements decides:
\begin{itemize}
\item If majority is $+1$: $\omega \in [\Omega_{\text{low}}, \theta_{\text{mid}}]$ (lower half)
\item If majority is $-1$: $\omega \in [\theta_{\text{mid}}, \Omega_{\text{high}}]$ (upper half)
\end{itemize}
The error probability satisfies:
\begin{equation}
\Pr(\text{decision error}) \leq \exp(-M \cdot D_{\text{KL}}(1/2 \| p)),
\label{eq:decision-error-bound}
\end{equation}
where $D_{\text{KL}}(1/2 \| p) = \frac{1}{2}\log\frac{1}{2p} + \frac{1}{2}\log\frac{1}{2(1-p)}$ is the binary KL divergence between the uniform distribution and Bernoulli($p$).
\end{proposition}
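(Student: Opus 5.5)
The decision is a majority vote over $M$ independent binary outcomes, so the proof is a Chernoff bound. The only real work is identifying the per-shot error probability $p$ and checking that it sits on the wrong side of $1/2$.

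\textbf{Step 1: reduce to i.i.d.\ Bernoulli outcomes.} By the phase shift lemma, applying $R_Z(-2\theta_{\text{mid}})$ on every qubit replaces $\Phi_N(\omega)$ by $\Phi_N(\omega-\theta_{\text{mid}})$. Each shot then returns $+1$ with probability $\cos^2\Phi_N(\omega-\theta_{\text{mid}})$. This depends only on $|\Phi_N|$, so the sign $(-1)^L$ plays no role. The $M$ shots are independent preparations, so the outcomes are i.i.d.\ Bernoulli. Let $p$ be the probability of the \emph{wrong} outcome: the outcome $-1$ when $\omega$ lies in the lower half, and $+1$ when it lies in the upper half.

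\textbf{Step 2: show $p<1/2$.} Item~1 of Proposition~\ref{prop:step-function-properties} says $\Phi_N$ is monotone. Item~2 (saturation) and item~3 (transition width) then say the two half-intervals are mapped to the two separated regimes $|\Phi_N|$ near $0$ and $|\Phi_N|$ near $\pi/2$. Here the midpoint is placed at the transition point of $\Phi_N$, as in Theorem~\ref{thm:heisenberg-scale}. So on each half, the wrong outcome has probability $p<1/2$. This step is where the hypotheses actually enter, and it is the main obstacle. The monotonicity and saturation properties guarantee $p<1/2$ only if $\omega$ is not too close to $\theta_{\text{mid}}$. If no quantitative bound on $p$ is available, I would state the result for whatever $p<1/2$ holds and let the bound degrade gracefully as $p\to 1/2$. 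I would also note that $p$ is exponentially small, by item~2, once $\omega$ is $\Omega(N^{-1})$ from the midpoint.

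\textbf{Step 3: Chernoff bound.} The majority vote errs only if the number of wrong outcomes $S\sim\mathrm{Bin}(M,p)$ satisfies $S\ge M/2$. The Chernoff--Hoeffding bound in relative-entropy form, $\Pr(S\ge aM)\le \exp(-M D_{\text{KL}}(a\|p))$ for $a>p$, applied with $a=1/2$, gives exactly \eqref{eq:decision-error-bound}. The displayed formula is
\[
D_{\text{KL}}(1/2\|p)=\tfrac12\log\tfrac{1/2}{p}+\tfrac12\log\tfrac{1/2}{1-p}.
\]
This is strictly positive for $p\neq 1/2$, so the error decays exponentially in $M$.

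Finally, I would remark on the resulting sample count. Taking $M=O(\log(1/\delta)/D_{\text{KL}}(1/2\|p))$ makes each decision fail with probability at most $\delta$. This matches the $O(\log(1/\delta))$ claim made in the proof of Theorem~\ref{thm:heisenberg-scale}.
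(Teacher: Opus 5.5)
Your Step~3 is the paper's entire proof. The paper takes i.i.d.\ outcomes with some $p\neq 1/2$ and cites the method of types (Cover--Thomas, Thm.~11.1.4), which is the same relative-entropy Chernoff bound you apply at $a=1/2$.

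The gap is Step~2. You correctly identify it as load-bearing: if the wrong outcome has probability above $1/2$, the claimed bound is false. But it fails under the shift prescribed in the statement. Since $N$ is odd, $\Phi_N$ is an odd function, so $\Pr(+1)=\cos^2\Phi_N(\omega-\theta_{\text{mid}})$ is \emph{even} in $\omega-\theta_{\text{mid}}$. Mirror points $\theta_{\text{mid}}\pm t$ in the two halves therefore give identical outcome statistics. Your two wrong-outcome probabilities are then $1-q$ and $q$, and they cannot both be below $1/2$.

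Concretely, for $|\omega-\theta_{\text{mid}}|<\pi/4$ one has $|\tan(\omega-\theta_{\text{mid}})|<1$. Hence $|\Phi_N|<\pi/4$ and $\Pr(+1)>1/2$ on the whole interval. Whenever $\omega$ lies in the upper half, the majority vote picks the lower half with probability tending to $1$ as $M$ grows. The GHZ likelihood $\tfrac12\bigl(1+\cos(2N\Delta\omega)\,e^{-2N\sigma_\epsilon^2}\bigr)$ has the same symmetry, so this is not an artifact of $\Phi_N$.

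Your sentence that ``the midpoint is placed at the transition point of $\Phi_N$'' contradicts your own Step~1. After a shift by $\theta_{\text{mid}}$, the transition of $\Phi_N(\omega-\theta_{\text{mid}})$ sits at $\omega=\theta_{\text{mid}}\pm\pi/4$. Near $\theta_{\text{mid}}$ the function is in its flat regime, $\Phi_N\approx\pm(\omega-\theta_{\text{mid}})^N$. Items~2 and~3 of the step-function proposition describe a neighbourhood of $\pi/4$, not the two halves of $[\Omega_{\text{low}},\Omega_{\text{high}}]$.

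The missing ingredient is an offset in the reference phase. Shift by $\theta_{\text{mid}}-\pi/4$ in the $\Phi_N$ picture (by $\theta_{\text{mid}}-\pi/(4N)$ in the GHZ picture), so that $\theta_{\text{mid}}$ lands exactly where $\Pr(+1)=1/2$. Also require the interval of width $W=\Omega_{\text{high}}-\Omega_{\text{low}}$ to stay in one monotone branch: $W/2<\pi/4$, resp.\ $W/2<\pi/(4N)$. Then:
\begin{itemize}
\item $\Pr(+1)$ is strictly decreasing across the interval;
\item $+1$ is the likelier outcome exactly on the lower half;
\item the wrong-outcome probability satisfies $p<1/2$ everywhere except at $\theta_{\text{mid}}$ itself, where the bound becomes vacuous, as you anticipated.
\end{itemize}
Your Step~3 then finishes the proof. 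The paper's one-line proof avoids the issue only by implicitly assuming that the sign of $p-1/2$ identifies the correct half.

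A smaller point: item~2 gives $p\le e^{-\Theta(N\eta)}$ at distance $\eta$ beyond the transition, because $\kappa(\pi/4+\eta)\approx 2\eta$. At $\eta=\Theta(1/N)$ this is a constant below $1/2$, not something exponentially small in $N$.
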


\begin{proof}
By the method of types for binary hypothesis testing~\cite[Theorem~11.1.4]{cover2006elements}, for $M$ i.i.d.\ measurements with true probability $p \neq 1/2$, the probability that the empirical frequency falls on the wrong side of $1/2$ is bounded by $\exp(-M \cdot D_{\text{KL}}(1/2 \| p))$, which is~\eqref{eq:decision-error-bound}.
\end{proof}

The required number of measurements per iteration depends on the KL divergence, which we now characterize.

\begin{lemma}[KL Divergence in Heisenberg Regime]
\label{lem:kl-divergence}
Let $\Delta\omega = \Omega_{\text{high}} - \Omega_{\text{low}}$ be the interval width at iteration $t$. Writing $p = 1/2 + \epsilon_p$ for the measurement probability at the decision point, the KL divergence satisfies:
\begin{equation}
D_{\text{KL}}(1/2 \| p) = 2\epsilon_p^2 + O(\epsilon_p^4).
\label{eq:kl-taylor}
\end{equation}
In the Heisenberg regime where $\Delta\omega = \Theta(N^{-1})$, the field fluctuations $\epsilon_k \sim \mathcal{N}(0,\sigma_\epsilon^2)$ enter the measurement probability through the total noise $S = \sum_k \epsilon_k \sim \mathcal{N}(0, N\sigma_\epsilon^2)$. Marginalizing over $S$ via the Gaussian characteristic function gives $\mathbb{E}[\cos(2N\omega + 2S)] = \cos(2N\omega)\,e^{-2N\sigma_\epsilon^2}$, so the noise reduces the signal contrast by a factor $\lambda = e^{-cN\sigma_\epsilon^2}$ for an implementation-dependent constant $c > 0$ (with $c = 2$ for the GHZ and repetition code implementations; see Lemma~\ref{lem:ghz-noise-marginalization} for the full derivation).
\begin{equation}
\epsilon_p = \Theta\!\left((N\Delta\omega)^2 \lambda\right) = \Theta(\lambda),
\label{eq:epsilon-p-scaling}
\end{equation}
where the second equality uses $(N\Delta\omega)^2 = \Theta(1)$ in the Heisenberg regime, yielding $D_{\text{KL}} = \Theta(\lambda^2)$. When $N\sigma_\epsilon^2 = o(1)$, we have $\lambda = 1 - O(N\sigma_\epsilon^2)$ and thus $D_{\text{KL}} = \Theta(1)$.
\end{lemma}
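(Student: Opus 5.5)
The plan has three parts: expand the binary KL divergence around $p=1/2$, compute the deviation $\epsilon_p$ of the outcome probability at the decision point under noise marginalization, and substitute. For the first part, write $p=1/2+\epsilon_p$. Then
\[
D_{\mathrm{KL}}(1/2\|p)=-\tfrac12\log(4p(1-p))=-\tfrac12\log(1-4\epsilon_p^2).
\]
Expanding $-\log(1-x)=x+x^2/2+\cdots$ with $x=4\epsilon_p^2$ gives $D_{\mathrm{KL}}=2\epsilon_p^2+4\epsilon_p^4+O(\epsilon_p^6)$. This is \eqref{eq:kl-taylor}, valid uniformly for $|\epsilon_p|\le 1/2-\Omega(1)$.

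Next I would obtain the contrast factor. For the GHZ or repetition-code implementation, the phase-shift lemma shows that the measurement probability depends on $\omega$ only through the total phase $N(\omega-\theta_{\rm mid})+S$ with $S=\sum_k\epsilon_k$. The outcome probability takes the form $\tfrac12(1+\cos(2N(\omega-\theta_{\rm mid})+2S))$, possibly with an $L$-dependent sign that does not affect $|\epsilon_p|$. Since the $\epsilon_k$ are independent Gaussians, $S\sim\mathcal{N}(0,N\sigma_\epsilon^2)$. The characteristic function $\EX[e^{2iS}]=e^{-2N\sigma_\epsilon^2}$ then gives $\EX[\cos(x+2S)]=\cos(x)e^{-2N\sigma_\epsilon^2}$. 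Hence the marginal probability is $\tfrac12+\tfrac12\lambda\cos(2N(\omega-\theta_{\rm mid}))$ with $\lambda=e^{-2N\sigma_\epsilon^2}$, so $c=2$. For a general implementation I would only assert that the dependence is through some periodic function of $2N\omega+2S$, which yields $\lambda=e^{-cN\sigma_\epsilon^2}$.

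The third part, which I expect to be the main obstacle, is bounding $\epsilon_p$ from below at the decision point. The statement's $(N\Delta\omega)^2$ factor suggests measuring the deviation from the $p=1/2$ operating point. With $\omega$ in the interval, the offset $|\omega-\theta_{\rm mid}|$ is of order $\Delta\omega$, so the noiseless bias is an $O(1)$ function of $N\Delta\omega$. Near the quadratic point of the cosine this bias behaves like $(N\Delta\omega)^2$, and near a linear point it behaves like $N\Delta\omega$. Either way, it is $\Theta(1)$ once $N\Delta\omega=\Theta(1)$, provided $\omega$ is bounded away from the exact midpoint by a constant fraction of $\Delta\omega$.

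I would handle this as in the standard noisy binary-search analysis. Restrict the claim to $\omega$ at distance $\Omega(\Delta\omega)$ from $\theta_{\rm mid}$. The remaining case only arises in a constant fraction of iterations, and there either decision is acceptable up to an interval-overlap slack. Multiplying the $\Theta(1)$ noiseless bias by $\lambda$ gives $\epsilon_p=\Theta((N\Delta\omega)^2\lambda)=\Theta(\lambda)$, and then $D_{\mathrm{KL}}=2\epsilon_p^2(1+O(\epsilon_p^2))=\Theta(\lambda^2)$. Finally, when $N\sigma_\epsilon^2=o(1)$, the bound $e^{-x}\ge1-x$ gives $\lambda=1-O(N\sigma_\epsilon^2)$, hence $\lambda=\Theta(1)$ and $D_{\mathrm{KL}}=\Theta(1)$.
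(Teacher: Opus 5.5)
Your first two steps coincide with the paper's argument. The paper also writes $D_{\text{KL}}(1/2\|p)=-\tfrac12\log(1-4\epsilon_p^2)$ and Taylor-expands, and it obtains $\lambda=e^{-2N\sigma_\epsilon^2}$ from the Gaussian characteristic function as in Lemma~\ref{lem:ghz-noise-marginalization}. For the lower bound on $\epsilon_p$ the paper gives no derivation; it simply asserts $p=1/2+\Theta((N\Delta\omega)^2\lambda)$. Your attempt to supply that derivation fails. By your own formula, $\epsilon_p=\tfrac{\lambda}{2}\cos\bigl(2N(\omega-\theta_{\text{mid}})\bigr)$. This is \emph{largest} (about $\lambda/2$) at $\omega=\theta_{\text{mid}}$, and it vanishes at $N|\omega-\theta_{\text{mid}}|=\pi/4$ and its translates by multiples of $\pi/2$. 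Excluding a neighbourhood of the midpoint therefore discards the easy region and leaves the true degeneracy untouched. For example, take $\Delta\omega=\pi/N$, which is $\Theta(N^{-1})$. The point $\omega=\theta_{\text{mid}}+\pi/(4N)$ sits a quarter of the width from the midpoint, yet there $p=1/2$ and $D_{\text{KL}}=0$. Your claim that near a linear point the bias ``behaves like $N\Delta\omega$'' is also wrong. Near a zero of the cosine, $\epsilon_p$ is roughly $\lambda N$ times the distance to that zero, which can be arbitrarily small for fixed $N\Delta\omega$. Finally, the $(N\Delta\omega)^2$ factor you tried to reproduce is not actually present: as $N\Delta\omega\to 0$, $\epsilon_p\to\lambda/2$, not $0$.

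The missing ingredient is a restriction on the constant in $\Delta\omega=\Theta(N^{-1})$, not a case split on distance to the midpoint. If $N\Delta\omega\le C<\pi/2$, then $2N|\omega-\theta_{\text{mid}}|\le C$ on the whole interval. Hence $\cos\bigl(2N(\omega-\theta_{\text{mid}})\bigr)\ge\cos C>0$ and $\epsilon_p\in[\tfrac{\lambda}{2}\cos C,\tfrac{\lambda}{2}]$, so $\epsilon_p=\Theta(\lambda)$ uniformly, with no exceptional set and no need for a binary-search rescue. Combine this with the exact inequality $-\tfrac12\log(1-4\epsilon_p^2)\ge 2\epsilon_p^2$ to get $D_{\text{KL}}=\Omega(\lambda^2)$, which is the direction the measurement count uses. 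A two-sided $\Theta(\lambda^2)$ via your expansion is not available: for $\lambda\to1$ and $\omega$ near $\theta_{\text{mid}}$ you get $\epsilon_p\to 1/2$, outside your own stated range of validity, and $D_{\text{KL}}\to\infty$. Your rescue argument (``either decision is acceptable up to an interval-overlap slack'') also lies outside the lemma. It borrows a picture in which $\epsilon_p$ is odd in $\omega-\theta_{\text{mid}}$ and vanishes at the midpoint; your $\epsilon_p$ is even in $\omega-\theta_{\text{mid}}$, so that picture does not apply here.
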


\begin{proof}
Writing $p = 1/2 + \epsilon_p$ with $|\epsilon_p| < 1/2$, we have $4p(1-p) = 1 - 4\epsilon_p^2$, so:
\begin{equation}
D_{\text{KL}}(1/2 \| p) = -\frac{1}{2}\log(4p(1-p)) = -\frac{1}{2}\log(1 - 4\epsilon_p^2) = 2\epsilon_p^2 + O(\epsilon_p^4),
\end{equation}
which establishes~\eqref{eq:kl-taylor}. The scaling~\eqref{eq:epsilon-p-scaling} follows from the noise-averaged measurement probability, which has the form $p = 1/2 + \Theta((N\Delta\omega)^2 \lambda)$ for small $N\Delta\omega$. In the Heisenberg regime $\Delta\omega = \Theta(N^{-1})$, the factor $(N\Delta\omega)^2 = \Theta(1)$.
\end{proof}

We now state the main convergence theorem in its general form, applicable to any implementation.

\begin{theorem}[Convergence and Resource Scaling]
\label{thm:convergence}
The binary search protocol achieves precision $|\hat{\omega} - \omega| \leq \epsilon$ with probability at least $1 - \delta$ using:
\begin{enumerate}
\item \textbf{Iterations}: $T = O(\log(1/\epsilon))$.

\item \textbf{Measurements per iteration}: In the Heisenberg regime with noise condition $N\sigma_\epsilon^2 = o(1)$:
\begin{equation}
M = O(\log(T/\delta)).
\end{equation}
With finite noise, $M = O(\log(T/\delta) \cdot e^{4N\sigma_\epsilon^2})$ (using $c = 2$).

\item \textbf{Total measurements}:
\begin{equation}
M_{\text{total}} = O(\log(1/\epsilon) \cdot \log\log(1/\epsilon) \cdot \log(1/\delta))
\end{equation}
when $N\sigma_\epsilon^2 = o(1)$.

\item \textbf{Total qubit-experiments}:
\begin{equation}
N_{\text{total}} = O(N \cdot \log(1/\epsilon) \cdot \log\log(1/\epsilon) \cdot \log(1/\delta)).
\end{equation}

\item \textbf{Heisenberg-limited precision}: Setting target precision $\epsilon = \Theta(N^{-1})$ and starting from initial interval width $\Omega_0 = \Theta(N^{-1})$ yields $T = O(1)$ iterations and:
\begin{equation}
N_{\text{total}} = O(N \cdot \log(1/\delta)),
\end{equation}
achieving Heisenberg-limited scaling.
\end{enumerate}
\end{theorem}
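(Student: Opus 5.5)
The plan is to treat the protocol as a noisy binary search and to assemble the five items from Proposition~\ref{prop:binary-decision}, Lemma~\ref{lem:kl-divergence} and a union bound over iterations.

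\textbf{Iterations.} Each iteration applies the phase shift $R_Z(-2\theta_{\text{mid}})$ from the Phase Shift Mechanism, placing the decision boundary at the midpoint of the current interval. If the majority vote is correct, the interval width halves. Starting from a width $\Omega_0 = O(1)$, and since $\omega\in(-\pi/2,\pi/2)$, after $T=\lceil\log_2(\Omega_0/\epsilon)\rceil = O(\log(1/\epsilon))$ correct decisions the interval has width at most $\epsilon$. Its midpoint is then within $\epsilon$ of $\omega$, which gives item~1.

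\textbf{Per-iteration error.} By Proposition~\ref{prop:binary-decision}, one iteration errs with probability at most $\exp(-M D_{\text{KL}})$. Lemma~\ref{lem:kl-divergence} gives $D_{\text{KL}}=\Theta(\lambda^2)$ with $\lambda=e^{-2N\sigma_\epsilon^2}$. Choosing $M = \lceil \log(T/\delta)/D_{\text{KL}}\rceil$ makes each iteration fail with probability at most $\delta/T$. A union bound over the $T$ iterations then gives overall success probability at least $1-\delta$. This yields $M=O(\log(T/\delta))$ when $N\sigma_\epsilon^2=o(1)$, since then $\lambda=1-o(1)$ and $D_{\text{KL}}=\Theta(1)$. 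For finite noise it yields $M=O(\log(T/\delta)\,e^{4N\sigma_\epsilon^2})$, which is item~2.

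\textbf{Totals.} Multiplying $T$ by $M$ and using $\log(T/\delta)\le \log T+\log(1/\delta) = O(\log\log(1/\epsilon)\cdot\log(1/\delta))$ for $\delta\le 1/2$ gives item~3. Multiplying by the $N$ qubits used per experiment gives item~4. For item~5, I set $\Omega_0=\Theta(N^{-1})$ and $\epsilon=\Theta(N^{-1})$. Then $\Omega_0/\epsilon=O(1)$, so $T=O(1)$ and $M=O(\log(1/\delta))$, hence $N_{\text{total}}=O(N\log(1/\delta))$. This precision of order $1/N$ with $O(N)$ qubit resources, up to the confidence factor, is the Heisenberg scaling.

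\textbf{Main obstacle.} The delicate step is justifying that $D_{\text{KL}}$ is bounded below \emph{uniformly across iterations}, not only when $\Delta\omega=\Theta(N^{-1})$. In early iterations the interval is wide, and the transition width of $\Phi_N$ from Proposition~\ref{prop:step-function-properties} is $\Theta(N^{-1})$. If $\omega$ lies near the midpoint, $p$ is close to $1/2$ and a wrong decision is likely. I would handle this in two ways. First, I would restrict the claim to the regime used in item~5, where every interval already has width $\Theta(N^{-1})$, so Lemma~\ref{lem:kl-divergence} applies directly. Second, for items~1--4 I would either rescale the amplification factor (the effective $N$) in each iteration to match the current interval width, as in Section~\ref{sec:implementations}, or invoke Theorem~\ref{thm:heisenberg-scale} together with the saturation property. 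Either route lets an $\omega$ sitting inside the transition band be treated as an acceptable error of at most the target $\epsilon$, provided overlapping intervals are used so that the final interval still contains $\omega$.
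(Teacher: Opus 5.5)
Your derivation of items~1--5 follows the paper's proof step for step. Bisection gives $T=\lceil\log_2(\Omega_0/\epsilon)\rceil$. Proposition~\ref{prop:binary-decision} with per-round error $\delta/T$, together with Lemma~\ref{lem:kl-divergence}, gives $M=O(\log(T/\delta)/\lambda^2)$. Multiplying by $T$ and $N$ gives items~3--4, and taking $\epsilon,\Omega_0=\Theta(N^{-1})$ gives $T=O(1)$. The paper simply asserts $D_{\text{KL}}=\Theta(1)$ in every round and never confronts the uniformity issue you raise. So your ``main obstacle'' is genuine, but your remedies do not resolve it as stated.

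\textbf{First remedy.} It conflates two quantities. In Lemma~\ref{lem:kl-divergence}, $\Delta\omega$ is the interval width, but the vote statistics depend on the offset $\omega-\theta_{\text{mid}}$. That offset can be $o(N^{-1})$ in any round even when the width is $\Theta(N^{-1})$, so restricting to item~5's regime does not make the lemma apply. What actually rescues item~5 is your ``acceptable error'' idea. Votes are reliable once the offset exceeds $a/N$ for a suitable constant $a$. Hence, except with probability $\delta$, a wrong vote occurs only when $\omega$ is within $a/N$ of the midpoint, and afterwards $\omega$ stays within $a/N$ of the retained interval. This gives $|\hat\omega-\omega|\le\epsilon/2+a/N=O(N^{-1})$. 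Alternatively, use the three-part split from the proof of Theorem~\ref{thm:sequential-binary-search}.

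\textbf{Items~2--4 at fixed $N$.} Here no remedy can work with $N$ fixed and $\epsilon\to0$. These bounds would give precision $\epsilon$ from $N\,\mathrm{polylog}(1/\epsilon)$ signal uses, contradicting the Heisenberg bound $N_{\text{total}}=\Omega(1/\epsilon)$. Concretely, once $\Omega_t\ll 1/N$, every $\omega$ in the interval yields probe states of mutual fidelity $1-O((N\Omega_t)^2)$, so $M$ must grow like $(N\Omega_t)^{-2}$.

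\textbf{Second remedy.} Rescaling the effective $N$ with $\Omega_t$ does give a Heisenberg-limited protocol. You would need it at the start anyway if $\Omega_0=O(1)$, because the fixed-$N$ parity response is $\pi/N$-periodic in $\omega$. However, its cost is $\sum_t N_tM=\Theta(M/\epsilon)$ rather than the bounds of items~3--4, and it is not what Section~\ref{sec:implementations} does.

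\textbf{Readout parity.} For the paper's GHZ readout, $\overline{\Pr}(+1)=\tfrac12\bigl(1+\lambda\cos(2N(\omega-\theta_{\text{mid}}))\bigr)$ is even in the offset and close to $1$ at the midpoint. So both your ``$p\approx 1/2$ near the midpoint'' picture and the majority rule of Proposition~\ref{prop:binary-decision} tacitly require an extra $\pi/4$ logical phase. That phase turns the response into $\tfrac12\bigl(1-\lambda\sin(2N(\omega-\theta_{\text{mid}}))\bigr)$.
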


\begin{proof}
At each iteration, the binary search bisects the current interval: $\Omega_t = \Omega_0 / 2^t$. Termination when $\Omega_T \leq \epsilon$ requires $T = \lceil \log_2(\Omega_0/\epsilon) \rceil = O(\log(1/\epsilon))$, which establishes item~1.

From Proposition~\ref{prop:binary-decision}, achieving per-iteration error $\leq \delta/T$ requires $M \geq \log(T/\delta) / D_{\text{KL}}$. By Lemma~\ref{lem:kl-divergence}, $D_{\text{KL}} = \Theta(\lambda^2)$ where $\lambda = e^{-cN\sigma_\epsilon^2}$. When $N\sigma_\epsilon^2 = o(1)$, $\lambda = \Theta(1)$ and $D_{\text{KL}} = \Theta(1)$, giving $M = O(\log(T/\delta))$. With noise, $M = O(\log(T/\delta) / \lambda^2) = O(\log(T/\delta) \cdot e^{4N\sigma_\epsilon^2})$, which establishes item~2.

Total measurements are $M_{\text{total}} = T \cdot M = O(\log(1/\epsilon)) \cdot O(\log(T/\delta))$. Using $\log(T/\delta) = O(\log\log(1/\epsilon) + \log(1/\delta))$ yields item~3.

Each measurement uses $N$ qubits, giving $N_{\text{total}} = N \cdot M_{\text{total}}$, which establishes item~4.

For item~5, setting $\epsilon = c_1/N$ and $\Omega_0 = c_2/N$ for constants $c_1, c_2$ gives $T = \lceil \log_2(c_2/c_1) \rceil = O(1)$. Thus $M_{\text{total}} = O(\log(1/\delta))$ and $N_{\text{total}} = O(N \log(1/\delta))$, achieving Heisenberg-limited precision $\epsilon = O(N^{-1})$. A detailed derivation with explicit KL divergence calculations and per-iteration error analysis appears in Theorem~\ref{thm:convergence-full} (Appendix~\ref{app:binary-search-proofs}).
\end{proof}

\begin{algorithm}[t]
\caption{Binary Search Phase Estimation}\label{alg:binary-search}
\KwData{Initial interval $[\Omega_{\text{low}}, \Omega_{\text{high}}]$, target precision $\epsilon$, confidence $\delta$, number of qubits $N = 2L+1$, noise level $\sigma_\epsilon$}
\KwResult{Estimate $\hat{\omega}$ with $|\hat{\omega} - \omega| \leq \epsilon$ with probability $\geq 1 - \delta$}
$T \gets \lceil \log_2((\Omega_{\text{high}} - \Omega_{\text{low}})/\epsilon) \rceil$\\
\For{$t\gets 1$ \KwTo $T$}{
    $\theta_{\text{mid}} \gets (\Omega_{\text{low}} + \Omega_{\text{high}})/2$\\
    Compute $D_{\text{KL}}$ from Lemma~\ref{lem:kl-divergence}\\
    $M \gets \max\left(\lceil \log(T/\delta) / D_{\text{KL}} \rceil, 10\right)$\\
    $\text{count}_{+1} \gets 0$\\
    \For{$i\gets 1$ \KwTo $M$}{
        Prepare probe state (GHZ or repetition code)\\
        Apply $R_Z(-2\theta_{\text{mid}})$ to each qubit\\
        Apply signal evolution accumulating phase $\omega_k$ per qubit\\
        Measure (X-parity or logical X after syndrome)\\
        \If{outcome is $+1$}{$\text{count}_{+1} \gets \text{count}_{+1} + 1$}
    }
    \eIf{$\text{count}_{+1} > M/2$}{
        $\Omega_{\text{high}} \gets \theta_{\text{mid}}$
    }{
        $\Omega_{\text{low}} \gets \theta_{\text{mid}}$
    }
}
$\hat{\omega} \gets (\Omega_{\text{low}} + \Omega_{\text{high}})/2$\\
\Return $\hat{\omega}$
\end{algorithm}

\subsection{Sequential Binary Search Protocol}
\label{subsec:sequential-binary-search}

The preceding protocols use GHZ states or entangled logical states to achieve coherent phase amplification, thereby bypassing the SQL barrier. We now present a purely QSP-native alternative approach: the sequential binary search. In standard QSP, one begins with a quantum state initialized to $|0\rangle$; by interleaving signal and processing operators to construct a untiary $U_{QSP}$, the desired polynomial transformation is extracted as the amplitude $\langle0|U_{\text{QSP}}|0\rangle$ via post-selection. Our protocol mirrors this mechanism: we initialize a logical product state $|0\rangle_L=|0\rangle^{\otimes N}$ and apply the signal operator sequentially $M$ times to each physical qubit in parallel. Following syndrome measurement, the system exhibits a highly non-linear thresholding behavior corresponding to a logical X rotation on the input state. This non-linearity allows us to extract discrete phase bits in a manner analogous to ordinary QSP phase estimation protocols via building approximations to threshold polynomials. This simplification further eliminates the need for entangled state preparation, at the cost of applying $M$ sequential signal applications per qubit in parallel. Ultimately, the threshold function $\Theta_j=\arctan(\tan^{N-2j}(M\phi))$ combining with adaptive binary search achieves Heisenberg-limited scaling, yielding one bit per measurement.

\begin{theorem}[Sequential Binary Search Resource Scaling]
\label{thm:sequential-binary-search}
Given noiseless signal unitaries, a quantum phase search interval of width $W$, a physical repetition code of size $N = 2L+1$, and a target failure probability $\delta$, extracting one bit of phase information (reducing $W$ by a factor of two) requires $M$ signal unitary applications per physical qubit, where
\begin{equation}
M = O\!\left( \frac{\log(1/\delta)}{W \sqrt{N}} \right).
\end{equation}
\end{theorem}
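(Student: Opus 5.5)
Our plan is to reduce the sequential protocol to a single-qubit-like threshold test and then bound the number of signal applications needed so that one threshold test distinguishes the two halves of the current interval with constant bias. Applying the signal $M$ times to each physical qubit of $|0\rangle^{\otimes N}$ produces $U_s(M\phi,\vartheta)^{\otimes N}$ on each qubit. By Theorem~\ref{thm:syndrome-rotation}, the syndrome sector $j$ then induces a logical rotation with angle $\Theta_j=\arctan(\tan^{k}(M\phi))$, where $k=N-2j$. Before the signal applications we insert the phase-shift mechanism (Lemma on phase shifts), centering the operating point so that $M\phi$ sits at $\pi/4$ when $\phi$ equals the interval midpoint $\theta_{\mathrm{mid}}$. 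The logical measurement outcome probability is then $\sin^2\Theta_j$. Near $\pi/4$ this is a sigmoid in $M(\phi-\theta_{\mathrm{mid}})$ with slope $k$, exactly as in the derivative computation of Theorem~\ref{thm:sql-barrier}.

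\textbf{Per-shot bias.} The two hypotheses are $\phi$ in the lower or upper half. At the decision-relevant points, say $\phi=\theta_{\mathrm{mid}}\pm W/4$, the logical angle deviates from $\pi/4$ by roughly $\Theta_j-\pi/4\approx k\,M\,(W/4)$, to first order. Since $\sin^2(\pi/4+x)=\tfrac12+\tfrac12\sin 2x$, the conditional bias in sector $j$ is $\epsilon_p^{(j)}\approx \tfrac12\sin(kMW/2)$. We then average over the syndrome distribution, where $j\sim\mathrm{Binomial}(N,1/2)$ at the operating point, so $\EX[k]=0$ and $\EX[k^2]=N$.

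The naive unconditioned average cancels because $k$ is symmetric. However, the syndrome is observed, so the decoder can condition on it: it reports the outcome flipped by $\mathrm{sign}(k)$, much as the likelihood in Algorithm~\ref{alg:bitflip} uses $d_i$. The effective bias is then $\EX[\tfrac12|\sin(kMW/2)|]$. For $MW\sqrt N=O(1)$ this is $\Theta(MW\,\EX|k|)=\Theta(MW\sqrt N)$, using $\EX|k|=\Theta(\sqrt N)$ by the central limit theorem or Khintchine's inequality.

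\textbf{Repetitions and cost.} We would pick $M$ so that $MW\sqrt N$ is a suitable constant $c$, which makes the bias $\Theta(1)$. Proposition~\ref{prop:binary-decision} with Lemma~\ref{lem:kl-divergence}, where $D_{\mathrm{KL}}=2\epsilon_p^2+O(\epsilon_p^4)$, then shows that $O(\log(1/\delta))$ repeated shots give a majority-vote error at most $\delta$. Each shot costs $M=\Theta(1/(W\sqrt N))$ applications per qubit. The total per-qubit count for one bit is therefore $O(\log(1/\delta)/(W\sqrt N))$, as claimed.

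An alternative with the same bound is to keep $M$ per shot and fold the repetitions into the count. One could also take a single shot with $M$ enlarged by $\log(1/\delta)$, but this does not work cleanly because of the periodicity of $\tan$. We therefore follow the repetition route and interpret $M$ in the statement as the total number of signal applications per qubit.

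\textbf{Main obstacle.} The hardest step is controlling the syndrome-averaged bias uniformly: we need to show it is $\Theta(MW\sqrt N)$ and not wiped out. Three issues arise. First, the linearization fails in sectors with $|k|MW\gtrsim1$, where the threshold has saturated by Proposition~\ref{prop:step-function-properties}. Second, the binomial law of $j$ shifts away from $p=1/2$ when $\phi\neq\theta_{\mathrm{mid}}$. Third, the $\pi$-shifts in $\vartheta_{\mathrm{eff}}$ from Theorem~\ref{thm:syndrome-rotation} must be handled.

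Our first approach is to truncate to sectors with $|k|\le C\sqrt{N\log(1/\delta)}$ using Hoeffding's inequality. Within those sectors we bound $|\sin x|\ge 2|x|/\pi$. We then note that the drift in the syndrome distribution only adds information, since the syndrome statistics themselves depend on $\phi$; it can therefore be lower-bounded away by data processing. We also need to check that choosing $c$ small keeps $MW|k|\le\pi/2$ on the bulk of sectors.
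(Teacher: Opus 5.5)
Your argument is essentially sound but takes a different route from the paper. The option you set aside, a single shot with $M$ enlarged by $\log(1/\delta)$, is in fact exactly the paper's proof. There, each threshold query is one shot of coherent depth $M$. The syndrome is used to discard shots with $k<\sqrt N/2$; Paley--Zygmund with $\mathbb{E}[k^2]=N$ and $\mathbb{E}[k^4]\approx 3N^2$ gives acceptance probability at least $3/16$. On accepted shots the saturated sigmoid misses $\pi/2$ by $\eta\le e^{-2kM\Delta\phi}\le e^{-\sqrt N M\Delta\phi}$. Thresholds are placed at $W/3$ and $2W/3$, so one of them is at distance at least $W/6$ from $\phi$, and $\eta^2\le e^{-\sqrt N MW/3}\le\delta$ gives the bound.

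You instead run shots of depth $\Theta(1/(W\sqrt N))$, keep every syndrome sector, get a constant bias from $\mathbb{E}|k|=\Theta(\sqrt N)$, and obtain the $\log(1/\delta)$ factor from majority voting. Your $M$ is therefore a total count per qubit rather than a coherent depth. The paper's route uses a single coherent shot per threshold. Yours needs no post-selection, whose $O(1)$ overhead in the paper holds only in expectation. Yours also never stretches $M\phi$ over a wide window. The paper's choice $MW=\Theta(\log(1/\delta)/\sqrt N)$ tacitly needs $\log(1/\delta)=O(\sqrt N)$ to keep $M\phi$ on one branch of $\tan$; it works under $M\Delta\phi=o(1)$. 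So your periodicity concern is justified, and your route covers a regime the paper's does not.

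The one genuine gap is near the midpoint. You only control the vote at $\phi=\theta_{\mathrm{mid}}\pm W/4$. For $|\phi-\theta_{\mathrm{mid}}|<W/4$ the per-shot bias is of order $\sqrt N M|\phi-\theta_{\mathrm{mid}}|$, which can be arbitrarily small. The vote can then pick the half not containing $\phi$ with probability near $1/2$, so the bisection is not correct with probability $1-\delta$. The standard fix is overlapping halves:
\begin{itemize}
\item after a ``lower'' vote, keep $[\Omega_{\mathrm{low}},\theta_{\mathrm{mid}}+W/4]$;
\item after an ``upper'' vote, keep $[\theta_{\mathrm{mid}}-W/4,\Omega_{\mathrm{high}}]$.
\end{itemize}
Both intervals contain $\phi$ whenever the vote is unreliable. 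Alternatively, use the paper's trisection. The width then shrinks by a factor $3/4$ per round, so halving costs three rounds and your bound is unchanged.

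The step you flag as hardest is milder than you expect. With $x=M(\phi-\theta_{\mathrm{mid}})$ and $|x|<\pi/4$, one has exactly
\[
\sin^2\Theta_j=\tfrac12+\tfrac12\tanh\bigl(k\ln\tan(\tfrac{\pi}{4}+x)\bigr),\qquad \ln\tan(\tfrac{\pi}{4}+x)\ge 2x\quad(x\ge 0).
\]
So each sector's bias is odd in $x$, increasing in $k$, and saturates rather than oscillating; no truncation or $|\sin x|\ge 2|x|/\pi$ step is needed.

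With the coset-representative decoder, $k=N-2j\ge 1$ in every sector, so there is no cancellation to undo. Your sign flip just reproduces $k=|N-2w|$, where $w$ is the raw flip count.

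Moving the flip probability away from $1/2$ only lowers the probability of the symmetric central window of the binomial. Hence $\Pr(k\ge\sqrt N/2)\ge 3/16$ still holds and the bias is at least $\tfrac{3}{32}\tanh(\sqrt N|x|)$. Use this dominance argument rather than data processing, which says nothing about a fixed majority-vote rule.

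The $\pi$ shifts in $\vartheta_{\mathrm{eff}}$ are phases and do not affect $Z_L$-basis statistics. Finally, because the signal here is $e^{-i\phi X}$, the centring offset must be an $X$-rotation on each qubit, not the $R_Z$ shift of the phase-shift lemma.
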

\begin{proof}
We initialize the state $|0\rangle_L = |0\rangle^{\otimes N}$ and apply the signal operator $U(\phi) = \exp(-i\phi X)$ exactly $M$ times to each physical qubit, yielding $U(M\phi)^{\otimes N} |0\rangle^{\otimes N}$. The total quantum Fisher information of this product state with respect to $\phi$ is $F_{Q,\text{total}} = 4NM^2$, representing the maximum information extractable from $NM$ oracle queries.

For each physical qubit, the rotation operator decomposes as
\begin{equation}
U(M\phi) = \cos(M\phi)\,I - i\sin(M\phi)\,X.
\end{equation}
We measure the stabilizers of the repetition code to detect bit flips and apply the corresponding Pauli corrections. Conditioned on syndrome outcome $j \le L$ (corresponding to $j$ bit flips), the corrected state is
\begin{align}
|\psi_j\rangle_L = \mathcal{N}_j\bigl[(-i\sin(M\phi))^j(\cos(M\phi))^{N-j} |0\rangle_L + (-i\sin(M\phi))^{N-j}(\cos(M\phi))^j |1\rangle_L\bigr],
\end{align}
where $\mathcal{N}_j$ is the normalization factor. This post-correction state is equivalent to a single nonlinear logical rotation:
\begin{equation}
|\psi_j\rangle_L = \exp(-i\Theta_j X_L)\,|0\rangle_L,
\end{equation}
where the effective logical angle is $\Theta_j = \arctan(\tan^{N-2j}(M\phi))$.

This function acts as a threshold operator centered at $M\phi = \pi/4$: measurement of $|0\rangle_L$ indicates the phase lies below the threshold, while $|1\rangle_L$ indicates it lies above.

At the threshold center ($M\phi = \pi/4$), the QFI for syndrome sector $j$ is
\begin{align}
\frac{d\Theta_j}{d\phi}\bigg|_{\phi = \pi/(4M)} &= M(N - 2j), \\
F_Q(j) &= 4M^2(N - 2j)^2.
\end{align}
Also, at $M\phi=\pi/4$, each qubit flips with a probability of $p = 1/2$, the probability for detecting $j$ bit flips follows the binomial distribution $\EX[j]=N/2$, $\operatorname{Var}[j]=N/4$. Defining $k=N-2j$, $\EX[k]=N-2\EX[j]=0$ and $\operatorname{Var}[k]=N$, the expected QFI is:

\begin{equation}
\EX[F_Q(j)]=4M^2\;\EX[k^2]=4M^2(\operatorname{Var}[k]+\EX[k]^2) = 4NM^2.
\end{equation}

The slope of the threshold function is $M(N-2j)$, so the width of the transition region scales as $O(1/(M|N-2j|))$. Since $j \le L$ implies $k = N - 2j \ge 1$, the threshold is well-defined for all accepted syndrome outcomes.

To quantify the error outside the transition region, consider a phase shifted by $\Delta\phi$ from the center: $M\phi = \pi/4 + M\Delta\phi$, where $M\Delta\phi = o(1)$. For exponent $k = N - 2j$:
\begin{equation}
\tan^k\!\left(\frac{\pi}{4} + M\Delta\phi\right) = (1 + 2M\Delta\phi + O(M^2\Delta\phi^2))^k = e^{2kM\Delta\phi}(1 + O(kM^2\Delta\phi^2)).
\end{equation}
Let $\eta_k$ denote the discrepancy between the logical angle and the ideal step function value $\pi/2$:
\begin{equation}
\eta_k = \frac{\pi}{2} - \arctan(e^{2kM\Delta\phi}) = e^{-2kM\Delta\phi}(1 + O(e^{-4kM\Delta\phi})).
\end{equation}
Since $k$ is determined by the syndrome alone, rounds with small $k$ can be identified and discarded before the logical measurement. Conditional on $j \le L$, the exponent satisfies $\EX[k^2] = N$ and $\EX[k^4] = 3N^2 + O(N)$. By the Paley-Zygmund inequality:
\begin{equation}
\Pr\!\left(k \ge \tfrac{\sqrt{N}}{2} \;\middle|\; j \le L\right) \ge \frac{3}{16},
\end{equation}
so post-selection on $k \ge \sqrt{N}/2$ succeeds with constant probability, contributing $O(1)$ overhead. Conditional on acceptance:
\begin{equation}
\eta \le \exp(-\sqrt{N}\, M\Delta\phi).
\end{equation}

To extract one bit of phase information from an interval of width $W$ with failure probability at most $\delta$, we divide the interval into three equal parts and place thresholds at $W/3$ and $2W/3$. For any true phase in the interval, at least one threshold is at distance $\Delta\phi \ge W/6$ from the true phase. The misclassification probability per threshold query is at most $\eta^2$; requiring $\eta^2 \le \delta$ and substituting the minimum distance:
\begin{equation}
\eta^2 \le \exp\!\left(-\frac{\sqrt{N}\, MW}{3}\right) \le \delta.
\end{equation}
Solving for the required signal amplification:
\begin{equation}
M = O\!\left(\frac{\log(1/\delta)}{W\sqrt{N}}\right),
\end{equation}
which establishes the theorem.
\end{proof}

To achieve precision $\epsilon$ starting from an initial interval $W_0$, the binary search requires $T = \lceil\log_2(W_0/\epsilon)\rceil$ rounds, with the interval halving each round. In round $r$, the interval width is $W_r = W_0/2^r$, requiring $M_r = O(2^r \log(1/\delta) / (W_0\sqrt{N}))$ signal applications per qubit. The total cost across all rounds is
\begin{equation}
M_{\text{total}} = \sum_{r=0}^{T-1} M_r = O\!\left(\frac{\log(1/\delta)}{W_0\sqrt{N}}\right) \sum_{r=0}^{T-1} 2^r = O\!\left(\frac{2^T \log(1/\delta)}{W_0\sqrt{N}}\right) = O\!\left(\frac{\log(1/\delta)}{\epsilon\sqrt{N}}\right),
\end{equation}
achieving Heisenberg scaling $M_{\text{total}} = O(1/\epsilon)$ for fixed $N$ and $\delta$. Unlike the GHZ-based protocols in Section~\ref{sec:implementations}, this approach requires only product state preparation, at the cost of $M$ sequential signal applications per qubit rather than a single application to an entangled state.

This protocol bypasses the SQL barrier of Theorem~\ref{thm:sql-barrier} despite using product states because it employs $M$ sequential (non-i.i.d.) signal applications rather than a single i.i.d.\ unitary per qubit. The repeated application creates an effective phase amplification $M\phi$ that replaces the entanglement-based amplification $N\phi$ of the GHZ protocols, with the syndrome-dependent threshold function extracting the amplified phase information. Table~\ref{tab:protocol-summary} summarizes the key properties of all four protocols.

\section{GHZ and Repetition Code Implementations}
\label{sec:implementations}

We now specialize the phase amplification framework of Section~\ref{sec:binary-search} to two entangled-state implementations: GHZ states and repetition codes. (The product-state sequential binary search of Section~\ref{subsec:sequential-binary-search} provides an alternative that avoids entangled state preparation.) Both implementations realize Encoded QSP with entangled logical states, producing the phase amplification function $\Phi_N$ and achieving identical quantum Fisher information, but differ in their uncertainty handling and experimental requirements.

\subsection{GHZ State Implementation}
\label{subsec:ghz-implementation}

We derive the explicit measurement statistics, noise marginalization, and quantum Fisher information for the GHZ state implementation, the simplest Encoded QSP instantiation using an entangled logical input.

\begin{definition}[GHZ State]
The $N$-qubit Greenberger-Horne-Zeilinger (GHZ) state is the maximally entangled state:
\begin{equation}
|\text{GHZ}\rangle := \frac{1}{\sqrt{2}}\left(|0\rangle^{\otimes N} + |1\rangle^{\otimes N}\right).
\end{equation}
\end{definition}

\begin{lemma}[X-Parity Measurement Statistics]
After evolution $U = \exp\left(-i\sum_{k=1}^{N} \omega_k Z_k\right)$ where $\omega_k = \omega + \epsilon_k$, measuring X-parity $\mathcal{P}_X = \prod_{k=1}^{N} X_k$ on the GHZ state yields:
\begin{equation}
\Pr(+1 \mid \omega, S) = \cos^2\left(N\omega + S\right),
\end{equation}
where $S := \sum_{k=1}^{N} \epsilon_k$ is the total noise offset.
\end{lemma}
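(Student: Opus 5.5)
The plan is a direct computation: evolve each branch of the GHZ state, then take the expectation value of the X-parity operator. Since $U$ is a product of commuting single-qubit $Z$-rotations $e^{-i\omega_k Z_k}$ and each computational basis state is a $Z$-eigenstate, we get
\begin{equation}
U|0\rangle^{\otimes N} = e^{-i\sum_k \omega_k}|0\rangle^{\otimes N} = e^{-i(N\omega+S)}|0\rangle^{\otimes N}, \qquad U|1\rangle^{\otimes N} = e^{i(N\omega+S)}|1\rangle^{\otimes N},
\end{equation}
using $\sum_k\omega_k = N\omega + S$. The evolved state is therefore
\begin{equation}
\frac{1}{\sqrt{2}}\left(e^{-i(N\omega+S)}|0\rangle^{\otimes N} + e^{i(N\omega+S)}|1\rangle^{\otimes N}\right).
\end{equation}
This is a GHZ state whose relative phase is $2(N\omega+S)$. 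This is the same $N$-fold enhancement used in Theorem~\ref{thm:bitflip}, with the syndrome count set to $d=0$.

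Next I observe that $\mathcal{P}_X=\prod_k X_k$ maps $|0\rangle^{\otimes N}\leftrightarrow|1\rangle^{\otimes N}$. The evolved state thus lies in the two-dimensional span of these two vectors, and on that span $\mathcal{P}_X$ acts as a logical $X_L$. Its eigenvectors there are $|\mathrm{GHZ}^\pm\rangle=(|0\rangle^{\otimes N}\pm|1\rangle^{\otimes N})/\sqrt{2}$, with eigenvalues $\pm1$. The probability of outcome $+1$ is the squared overlap with $|\mathrm{GHZ}^+\rangle$:
\begin{equation}
\Pr(+1\mid\omega,S) = \left|\tfrac{1}{2}\left(e^{-i(N\omega+S)}+e^{i(N\omega+S)}\right)\right|^2 = \cos^2(N\omega+S).
\end{equation}
Equivalently, $\langle\mathcal{P}_X\rangle=\cos(2(N\omega+S))$, and $\Pr(+1)=(1+\langle\mathcal{P}_X\rangle)/2$ gives the same result.

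No step is a genuine obstacle. The one point that needs care is the convention for $\mathcal{P}_X$: it must be read as the projective measurement onto the $\pm1$ eigenspaces of the full $N$-qubit parity. Physically, this measurement can be carried out by measuring each qubit in the $X$ basis and multiplying the outcomes, so I will remark that this yields the same statistics. I will also note that the result depends only on the total offset $S$, which is exactly the fact that the phase shift lemma and Lemma~\ref{lem:kl-divergence} rely on.
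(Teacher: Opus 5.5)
Your proposal is correct and follows essentially the same route as the paper: evolve both GHZ branches to pick up phases $e^{\mp i(N\omega+S)}$, then take the squared overlap with $(|0\rangle^{\otimes N}+|1\rangle^{\otimes N})/\sqrt{2}$ to obtain $\cos^2(N\omega+S)$. Your observation that the evolved state lies in the $\mathcal{P}_X$-invariant span of $|0\rangle^{\otimes N}$ and $|1\rangle^{\otimes N}$ is a small improvement in rigor: it justifies using only the overlap with $|\mathrm{GHZ}^+\rangle$ even though the full $+1$ eigenspace of $\mathcal{P}_X$ is $2^{N-1}$-dimensional, a point the paper passes over.
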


\begin{proof}
The evolution operator acts as $U = \exp\left(-i\sum_k \omega_k Z_k\right)$. On the GHZ state:
\begin{align}
U|\text{GHZ}\rangle &= \frac{1}{\sqrt{2}}\left(e^{-i\sum_k \omega_k}|0\rangle^{\otimes N} + e^{+i\sum_k \omega_k}|1\rangle^{\otimes N}\right) \\
&= \frac{1}{\sqrt{2}}\left(e^{-i(N\omega + S)}|0\rangle^{\otimes N} + e^{+i(N\omega + S)}|1\rangle^{\otimes N}\right).
\end{align}
The X-parity eigenstates are $|\pm\rangle_{\mathcal{P}} = (|0\rangle^{\otimes N} \pm |1\rangle^{\otimes N})/\sqrt{2}$. Projecting:
\begin{equation}
\langle +|_{\mathcal{P}} U|\text{GHZ}\rangle = \frac{1}{2}\left(e^{-i(N\omega + S)} + e^{+i(N\omega + S)}\right) = \cos(N\omega + S).
\end{equation}
Thus $\Pr(+1) = |\langle +|_{\mathcal{P}} U|\text{GHZ}\rangle|^2 = \cos^2(N\omega + S)$.
\end{proof}

\begin{lemma}[Noise Marginalization]
\label{lem:ghz-noise-marginalization}
For $\epsilon_k \sim \mathcal{N}(0, \sigma_\epsilon^2)$ i.i.d., marginalization over the noise yields:
\begin{equation}
\Pr(+1 \mid \omega) = \frac{1}{2}\left(1 + \cos(2N\omega) e^{-2N\sigma_\epsilon^2}\right).
\end{equation}
The factor $e^{-2N\sigma_\epsilon^2}$ arises from $\EX[\cos(2S)]$ via the Gaussian characteristic function.
\end{lemma}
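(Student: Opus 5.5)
The plan is to start from the conditional X-parity statistics of the preceding lemma, $\Pr(+1\mid\omega,S)=\cos^2(N\omega+S)$, and marginalize over the random offset $S=\sum_{k=1}^N\epsilon_k$. First, I will rewrite the conditional probability with the double-angle identity as
\begin{equation}
\Pr(+1\mid\omega,S)=\tfrac{1}{2}\bigl(1+\cos(2N\omega+2S)\bigr).
\end{equation}
This is linear in $\cos(2N\omega+2S)$, so marginalizing only requires $\EX[\cos(2N\omega+2S)]$.

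Next, since the $\epsilon_k$ are i.i.d.\ $\mathcal{N}(0,\sigma_\epsilon^2)$, their sum satisfies $S\sim\mathcal{N}(0,N\sigma_\epsilon^2)$; this follows from closure of Gaussians under independent sums, which I would verify via characteristic functions. Expanding
\begin{equation}
\cos(2N\omega+2S)=\cos(2N\omega)\cos(2S)-\sin(2N\omega)\sin(2S),
\end{equation}
the sine term has zero expectation because $S$ is symmetric about zero and $\sin$ is odd.

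For the cosine term, I will use the Gaussian characteristic function $\EX[e^{itS}]=e^{-t^2N\sigma_\epsilon^2/2}$ at $t=2$. This gives $\EX[\cos(2S)]=\operatorname{Re}\,\EX[e^{2iS}]=e^{-2N\sigma_\epsilon^2}$. Substituting back into the first display yields
\begin{equation}
\Pr(+1\mid\omega)=\tfrac{1}{2}\bigl(1+\cos(2N\omega)\,e^{-2N\sigma_\epsilon^2}\bigr),
\end{equation}
as claimed.

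There is no real obstacle here. The only points needing care are the cancellation of the odd term and using the correct variance $N\sigma_\epsilon^2$ rather than $\sigma_\epsilon^2$; the latter is the source of the factor $N$ in the exponent, and hence of the device quality condition $N\sigma_\epsilon^2=o(1)$ used in Lemma~\ref{lem:kl-divergence} and Theorem~\ref{thm:convergence}.
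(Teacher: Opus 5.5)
Your proposal is correct and follows essentially the same route as the paper's proof: rewrite $\cos^2(N\omega+S)$ with the double-angle identity, use $S\sim\mathcal{N}(0,N\sigma_\epsilon^2)$, drop the $\sin(2S)$ term by symmetry, and evaluate $\EX[\cos(2S)]=e^{-2N\sigma_\epsilon^2}$ via the Gaussian characteristic function at $t=2$. Your remarks that the exponent must use the variance $N\sigma_\epsilon^2$ of $S$, and that this is where the device quality condition comes from, are also accurate.
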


\begin{proof}
Since $\epsilon_k \sim \mathcal{N}(0, \sigma_\epsilon^2)$ are i.i.d., the sum $S = \sum_{k=1}^{N} \epsilon_k$ satisfies $S \sim \mathcal{N}(0, N\sigma_\epsilon^2)$. Using $\cos^2(\theta) = (1 + \cos(2\theta))/2$:
\begin{equation}
\Pr(+1 \mid \omega) = \EX_S\left[\cos^2(N\omega + S)\right] = \frac{1}{2}\left(1 + \EX_S[\cos(2N\omega + 2S)]\right).
\end{equation}
Expanding the cosine and using the Gaussian characteristic function $\EX[e^{itS}] = e^{-t^2 N\sigma_\epsilon^2/2}$:
\begin{align}
\EX_S[\cos(2N\omega + 2S)] &= \cos(2N\omega)\EX[\cos(2S)] - \sin(2N\omega)\EX[\sin(2S)] \\
&= \cos(2N\omega) \cdot e^{-2N\sigma_\epsilon^2} - 0,
\end{align}
where we used $\EX[\cos(2S)] = \text{Re}[\EX[e^{2iS}]] = e^{-2N\sigma_\epsilon^2}$ and $\EX[\sin(2S)] = 0$ by symmetry.
\end{proof}

\begin{proposition}[GHZ Fisher Information]
\label{prop:ghz-qfi}
The classical Fisher information for X-parity measurement on the GHZ state at the optimal working point $\sin^2(2N\omega) = 1$ (equivalently, $\Pr(+1|\omega) = 1/2$) is:
\begin{equation}
F_Q^{\text{GHZ}} = 4N^2 e^{-4N\sigma_\epsilon^2} = 4N^2\left(1 - 4N\sigma_\epsilon^2 + O\left((N\sigma_\epsilon^2)^2\right)\right).
\end{equation}
This equals the quantum Fisher information at this operating point, since X-parity saturates the quantum Cram\'{e}r-Rao bound for GHZ states.
\end{proposition}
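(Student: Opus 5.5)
The plan is to compute the classical Fisher information directly from the marginalized likelihood of Lemma~\ref{lem:ghz-noise-marginalization}. Then I will compare it with the quantum Fisher information of the noiseless GHZ probe, and expand in $N\sigma_\epsilon^2$.

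Write $p(\omega) = \Pr(+1\mid\omega) = \tfrac12(1+\lambda\cos(2N\omega))$ with $\lambda = e^{-2N\sigma_\epsilon^2}$. For a binary outcome the classical Fisher information is
\begin{equation}
F_C(\omega) = \frac{(\partial_\omega p)^2}{p(1-p)},
\end{equation}
with $\partial_\omega p = -N\lambda\sin(2N\omega)$ and $p(1-p) = \tfrac14(1-\lambda^2\cos^2(2N\omega))$. This gives
\begin{equation}
F_C(\omega) = \frac{4N^2\lambda^2\sin^2(2N\omega)}{1-\lambda^2\cos^2(2N\omega)}.
\end{equation}
At the working point $\sin^2(2N\omega)=1$ the cosine vanishes, so $p=1/2$ and $F_C = 4N^2\lambda^2 = 4N^2 e^{-4N\sigma_\epsilon^2}$. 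I will also note that this point maximizes $F_C$ over $\omega$: the ratio $\lambda^2 s^2/(1-\lambda^2(1-s^2))$ is increasing in $s^2$, which justifies calling it optimal. Taylor expanding $e^{-x}=1-x+O(x^2)$ with $x=4N\sigma_\epsilon^2$ gives the stated expansion.

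For the saturation claim, I will first treat the conditional state at fixed noise realization $S$. From the X-parity statistics lemma, the evolved state is the pure state $(e^{-i(N\omega+S)}\ket{0}^{\otimes N}+e^{i(N\omega+S)}\ket{1}^{\otimes N})/\sqrt2$. Its generator is $\sum_k Z_k$, which has variance $N^2$, so the QFI is $4N^2$. The X-parity Fisher information at fixed $S$ is $4N^2\sin^2/\sin^2 = 4N^2$, so parity saturates the QFI for the pure state. In the noiseless limit $\lambda\to1$, the formula above therefore coincides with the QFI.

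For the noisy, marginalized state the state is the dephased GHZ mixture. It lives in the two-dimensional span of $\ket{0}^{\otimes N},\ket{1}^{\otimes N}$ and has off-diagonal coherence $\tfrac12\lambda e^{-2iN\omega}$. I will regard it as a qubit with Bloch vector of length $\lambda$ rotating in the equatorial plane at rate $2N$. For such a qubit state the SLD QFI equals $|\partial_\omega \vec r|^2 = (2N\lambda)^2 = 4N^2\lambda^2$, independent of $\omega$. The parity measurement is a projective measurement along an equatorial axis, so it reaches this value exactly where $\cos(2N\omega)=0$. That is precisely the stated operating point, which establishes equality with the QFI there.

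The main obstacle is this last step: the statement's phrase ``equals the quantum Fisher information'' requires the effective-qubit reduction. I will justify it by noting that $\rho(\omega)$ is supported on the fixed two-dimensional subspace, and that the derivative also stays inside it, so the SLD can be computed within that subspace. I will then invoke the standard Bloch-vector QFI formula $|\partial\vec r|^2 + (\vec r\cdot\partial\vec r)^2/(1-|\vec r|^2)$. The second term vanishes because $|\vec r|=\lambda$ is constant in $\omega$.
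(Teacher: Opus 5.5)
Your computation of the classical Fisher information is the same as the paper's. Both of you set $\lambda = e^{-2N\sigma_\epsilon^2}$, differentiate $p(\omega) = \tfrac12(1+\lambda\cos(2N\omega))$, evaluate $(\partial_\omega p)^2/(p(1-p))$ at $\cos(2N\omega)=0$ to get $4N^2\lambda^2$, and then Taylor expand.

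You differ on the second sentence of the statement. The paper only asserts that X-parity saturates the quantum Cram\'er-Rao bound for GHZ states and gives no argument. You actually prove it: you treat the marginalized state as a dephased effective qubit on $\mathrm{span}\{\ket{0}^{\otimes N},\ket{1}^{\otimes N}\}$, whose Bloch vector has constant length $\lambda$ and rotates at rate $2N$. Its QFI is then $|\partial_\omega\vec r|^2 = 4N^2\lambda^2$ for every $\omega$.

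This gives you more than the paper has. It shows that the noisy QFI is independent of $\omega$. It shows that parity attains it exactly at $\cos(2N\omega)=0$, and for $\lambda<1$ only there. Together with your monotonicity remark, it also justifies calling the working point ``optimal,'' which the paper takes for granted.

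Your fixed-$S$ pure-state check is a harmless aside. The substantive argument is the Bloch-vector one, and it is correct under the paper's reading in Lemma~\ref{lem:ghz-noise-marginalization}: $S$ is redrawn in each shot, so the relevant state is the averaged mixture.
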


\begin{proof}
From Lemma~\ref{lem:ghz-noise-marginalization}, let $\lambda := e^{-2N\sigma_\epsilon^2}$ and $p(\omega) := \Pr(+1|\omega) = (1 + \lambda\cos(2N\omega))/2$. The derivative is:
\begin{equation}
\frac{\partial p}{\partial \omega} = -N\lambda\sin(2N\omega).
\end{equation}
The classical Fisher information is:
\begin{equation}
F(\omega) = \frac{1}{p(1-p)}\left(\frac{\partial p}{\partial \omega}\right)^2.
\end{equation}
At the optimal working point where $\cos(2N\omega) = 0$, we have $p = 1/2$ and $\sin^2(2N\omega) = 1$. Thus:
\begin{equation}
F(\omega) = \frac{(N\lambda)^2}{1/4} = 4N^2\lambda^2 = 4N^2 e^{-4N\sigma_\epsilon^2}.
\end{equation}
The factor $e^{-4N\sigma_\epsilon^2}$ (rather than $e^{-2N\sigma_\epsilon^2}$) arises because the Fisher information involves the \emph{square} of the derivative, which contains $\lambda^2$. Expanding for $N\sigma_\epsilon^2 = o(1)$: $e^{-4N\sigma_\epsilon^2} = 1 - 4N\sigma_\epsilon^2 + O((N\sigma_\epsilon^2)^2)$.
\end{proof}

%%%%%%%%%%%%%%%%%%%%%%%%%%%%%%%%%%%%%%%%%%%%%%%%%%%%%%%%%%%%%%%%%%%%%%%%%%%%%%%%
%%  Section 6.4: QSP/Repetition Code Implementation
%%%%%%%%%%%%%%%%%%%%%%%%%%%%%%%%%%%%%%%%%%%%%%%%%%%%%%%%%%%%%%%%%%%%%%%%%%%%%%%%

\subsection{Repetition Code Implementation}
\label{subsec:qsp-implementation}

We now present an alternative implementation using repetition codes, which provides the most explicit realization of Encoded QSP: syndrome measurement directly implements the nonlinear signal transformation $\Phi_N$ on the logical qubit (Definition~\ref{def:encoded-qsp}). The key advantage is explicit handling of classification uncertainty through a three-category scheme.

\begin{lemma}[QSP Activation from Repetition Code]
\label{lem:qsp-activation}
Applying physical X-rotations $e^{-i\phi X}$ to each qubit and projecting onto the code space via syndrome measurement yields the \emph{post-selected} logical transformation:
\begin{equation}
|\psi\rangle_L \mapsto \nu(\phi) \cdot e^{-i\Phi_N(\phi) X_L}|\psi\rangle_L,
\end{equation}
where $\Phi_N(\phi) = \arctan((-1)^L \tan^N(\phi))$ is the phase amplification function (Definition~\ref{def:phase-amplification}) and $\nu(\phi) = (\cos^{2N}(\phi) + \sin^{2N}(\phi))^{1/2}$ is the normalization factor. The success probability of the code-space projection is $P_{\mathrm{success}} = \nu(\phi)^2 = \cos^{2N}(\phi) + \sin^{2N}(\phi)$.
\end{lemma}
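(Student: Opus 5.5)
The computation is on the two codewords. I will apply $e^{-i\phi X}^{\otimes N}$ to $\ket{0}_L$ and $\ket{1}_L$, project with $\Pi_L = \ketbra{0}{0}_L + \ketbra{1}{1}_L$ (the trivial-syndrome sector of the stabilizers $Z_iZ_{i+1}$), and read off a $2\times 2$ logical matrix. This is the $j=0$, $\vartheta=0$ specialization of Theorem~\ref{thm:syndrome-rotation} and the computation already done in Proposition~\ref{prop:arctan} with $x=\cos\phi$. I will still redo it directly, since the sign $(-1)^L$ and the normalization $\nu(\phi)$ must be tracked exactly.

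Step 1 is the expansion. Write $e^{-i\phi X} = \cos\phi\, I - i\sin\phi\, X$ and expand the tensor product as a sum over $\mathbf{b}\in\{0,1\}^N$, as in the proof of Proposition~\ref{prop:arctan}. Only $\mathbf{b}=0\cdots0$ and $\mathbf{b}=1\cdots1$ survive $\Pi_L$. This gives
\[
\Pi_L e^{-i\phi X}{}^{\otimes N}\ket{0}_L = \cos^N\phi\,\ket{0}_L + (-i)^N\sin^N\phi\,\ket{1}_L ,
\]
and the symmetric expression for $\ket{1}_L$. Because $X_L = X^{\otimes N}$ exchanges the codewords, the projected operator is
\[
\Pi_L e^{-i\phi X}{}^{\otimes N}\Pi_L = \cos^N\phi\, I_L + (-i)^N\sin^N\phi\, X_L .
\]

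Step 2 is the phase bookkeeping, which is the only delicate point. With $N = 2L+1$, we have $(-i)^N = (-i)(-1)^L$. So the projected operator equals $\cos^N\phi\, I_L - i(-1)^L\sin^N\phi\, X_L$. Factor out
\[
\nu(\phi) = \sqrt{\cos^{2N}\phi + \sin^{2N}\phi}.
\]
The remaining operator is $\cos\Phi\, I_L - i\sin\Phi\, X_L = e^{-i\Phi X_L}$, where $\cos\Phi = \cos^N\phi/\nu$ and $\sin\Phi = (-1)^L\sin^N\phi/\nu$. Hence $\tan\Phi = (-1)^L\tan^N\phi$.

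I still need to justify choosing the $\arctan$ branch, i.e.\ $\Phi\in(-\pi/2,\pi/2)$, which is the range of $\Phi_N$ in Definition~\ref{def:phase-amplification}. For $\phi\in(-\pi/2,\pi/2)$ we have $\cos\phi>0$, and $N$ is odd, so $\cos^N\phi>0$. Therefore $\cos\Phi>0$, and the principal branch is forced, giving $\Phi = \Phi_N(\phi)$. At $\phi=\pm\pi/2$ the statement holds by continuity, with $\Phi=\pm\pi/2$. I would note that outside $(-\pi/2,\pi/2)$ the same formula holds up to a global sign of $\nu$.

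Step 3 is the success probability. The operator $\nu\, e^{-i\Phi_N X_L}$ is $\nu$ times a unitary on the logical space. Hence for any normalized logical input, $\|\Pi_L e^{-i\phi X}{}^{\otimes N}\ket{\psi}_L\|^2 = \nu(\phi)^2 = \cos^{2N}\phi + \sin^{2N}\phi$, independent of $\ket{\psi}_L$. This also matches the no-error probability $x^{2N}+(1-x^2)^N$ found in Proposition~\ref{prop:arctan}.

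The main obstacle is only the sign and branch consistency in Step 2. Everything else is a two-term expansion.
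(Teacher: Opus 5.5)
Your proposal is correct and follows essentially the paper's argument: expand $(\cos\phi\, I - i\sin\phi\, X)^{\otimes N}$, keep the two code-space terms, use $(-i)^{2L+1} = -i(-1)^L$, and read off $\tan\Phi = (-1)^L\tan^N\phi$. Your operator-level treatment of both codewords, the branch choice via $\cos^N\phi>0$, and the input-independence of $\nu(\phi)^2$ fill in details that the paper's proof leaves implicit, since it only computes the image of $|0\rangle_L$. One cosmetic slip: the continuity limit at $\phi=\pm\pi/2$ is $\Phi=\pm(-1)^L\pi/2$, not $\pm\pi/2$, though this point lies outside the open domain of $\Phi_N$ anyway.
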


\begin{proof}
The physical rotation $U_s(\phi) = (e^{-i\phi X})^{\otimes N}$ acts on the code space as:
\begin{equation}
U_s(\phi)|0\rangle_L = \cos^N(\phi)|0\rangle_L + (-i\sin\phi)^N|1\rangle_L + \text{(off-codespace)}.
\end{equation}
Projecting onto the code space eliminates off-codespace terms. Since $(-i)^{2L+1} = ((-i)^2)^L \cdot (-i) = (-1)^L \cdot (-i) = -i(-1)^L$:
\begin{equation}
\Pi_{\text{code}} U_s(\phi)|0\rangle_L \propto \cos^N(\phi)|0\rangle_L - i(-1)^L\sin^N(\phi)|1\rangle_L.
\end{equation}
This has the form $e^{-i\theta X_L}|0\rangle_L$ with $\tan(\theta) = (-1)^L\tan^N(\phi)$, yielding $\theta = \Phi_N(\phi)$. The normalization $\nu(\phi) = (\cos^{2N}(\phi) + \sin^{2N}(\phi))^{1/2}$ follows from the projection postulate.
\end{proof}

\begin{remark}[Post-Selection Cost, the SQL Barrier, and Resolution]
\label{rem:post-selection-cost}
The QSP activation of Lemma~\ref{lem:qsp-activation} demonstrates that repetition codes can amplify phases via $\Phi_N(\phi) = \arctan((-1)^L \tan^N(\phi))$, with success probability $P_{\mathrm{success}} = \cos^{2N}(\phi) + \sin^{2N}(\phi)$ that remains high when $\phi$ is close to $0$ or $\pi/2$.

However, at the metrologically useful operating point $\phi = \pi/4$, the success probability decays as $P_{\mathrm{success}} = 2^{1-N}$, exponentially small in the code length. This exponential cost is the mechanism underlying the SQL barrier (Theorem~\ref{thm:sql-barrier}): while per-syndrome QFI scales as $k^2$ (suggestive of Heisenberg scaling), the binomial fluctuations in $k = N - 2j$ ensure that the unconditional total QFI is $F_{\mathrm{total}} = 4N = \Theta(N)$, i.e., SQL scaling. Phase amplification alone, despite Lemma~\ref{lem:qsp-activation} providing high success probability near $\phi = 0$, cannot yield Heisenberg-limited estimation because the useful operating points incur exponential post-selection costs.

The combined protocol of Section~\ref{sec:combined} circumvents this bottleneck by separating two roles. Heisenberg scaling comes from Z-rotations on the logical GHZ state~\eqref{eq:logical-ghz}, which commute with the repetition code stabilizers and require no post-selection. The repetition code serves as an X-error \emph{detector}, not a signal amplifier: syndrome measurements identify errors without disturbing the Z-phase, and no post-selected projection onto the code space is needed.
\end{remark}

\begin{definition}[Three-Category Classification]
\label{def:three-category}
For threshold $\tau \in (0, \pi/4)$, define the category function $\mathcal{C}_\tau: (-\pi/2, \pi/2) \to \{\textsc{High}, \textsc{Middle}, \textsc{Low}\}$ by:
\begin{equation}
\mathcal{C}_\tau(\phi) := \begin{cases}
\textsc{High} & \text{if } \phi > \tau, \\
\textsc{Low} & \text{if } \phi < -\tau, \\
\textsc{Middle} & \text{if } |\phi| \leq \tau.
\end{cases}
\end{equation}
\end{definition}

The three-category classification has exponentially small misclassification probability under Gaussian noise, and the resulting KL divergence scales as $D_{\text{KL}} = \Theta((\omega - \theta)^2/\sigma_\epsilon^2)$ when conditioned on non-\textsc{Middle} outcomes (Propositions~\ref{prop:qsp-classification} and~\ref{prop:qsp-kl}, Appendix~\ref{app:binary-search-proofs}). This quadratic scaling in the signal-threshold separation, combined with the phase amplification of Lemma~\ref{lem:qsp-activation}, ensures that each binary search step extracts a constant number of bits with exponentially small error probability.

%%%%%%%%%%%%%%%%%%%%%%%%%%%%%%%%%%%%%%%%%%%%%%%%%%%%%%%%%%%%%%%%%%%%%%%%%%%%%%%%
%%  Section 6.5: Quantum Fisher Information and Optimality
%%%%%%%%%%%%%%%%%%%%%%%%%%%%%%%%%%%%%%%%%%%%%%%%%%%%%%%%%%%%%%%%%%%%%%%%%%%%%%%%

\subsection{Quantum Fisher Information and Optimality}

We now establish that both GHZ and repetition code implementations achieve the fundamental quantum limit for phase estimation, confirming optimal quantum Fisher information scaling.

\begin{theorem}[Heisenberg-Limited Precision]
\label{thm:hl-precision}
Both GHZ (Section~\ref{subsec:ghz-implementation}) and repetition code (Section~\ref{subsec:qsp-implementation}) implementations achieve precision:
\begin{equation}
\delta\omega = O(N^{-1})
\end{equation}
using total qubit-experiments:
\begin{equation}
N_{\text{total}} = O\left(N \cdot \log(1/\epsilon) \cdot \log\log(1/\epsilon) \cdot \log(1/\delta)\right)
\end{equation}
when the noise condition $N\sigma_\epsilon^2 = o(1)$ is satisfied.
\end{theorem}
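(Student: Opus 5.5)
The plan is to reduce the statement to Theorem~\ref{thm:convergence}, which is phrased for \emph{any} implementation producing the phase amplification function $\Phi_N$ and which already gives the stated $N_{\text{total}}$ bound. What remains is to check, separately for each implementation, the two hypotheses that proof actually uses. The first is that the binary decision rule of Proposition~\ref{prop:binary-decision} is available, i.e.\ the phase can be shifted to $\theta_{\text{mid}}$ by single-qubit $R_Z$ rotations. The second is that the per-shot KL divergence is $D_{\text{KL}}=\Theta(1)$ in the Heisenberg regime when $N\sigma_\epsilon^2=o(1)$. Once both are verified, items 3--5 of Theorem~\ref{thm:convergence} give the total count, and choosing the final interval width $\epsilon=\Theta(N^{-1})$ gives $\delta\omega=O(N^{-1})$.

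\textbf{GHZ implementation.} By the X-parity lemma and Lemma~\ref{lem:ghz-noise-marginalization}, the marginal outcome probability is $\tfrac12(1+\lambda\cos(2N(\omega-\theta_{\text{mid}})))$ with $\lambda=e^{-2N\sigma_\epsilon^2}$, using the Phase Shift Mechanism lemma for the shift. If $\omega$ is separated from $\theta_{\text{mid}}$ by $\Theta(\Delta\omega)$ with $\Delta\omega=\Theta(N^{-1})$, the bias from $1/2$ is $\epsilon_p=\Theta(\lambda)$. Lemma~\ref{lem:kl-divergence} then gives $D_{\text{KL}}=\Theta(\lambda^2)=\Theta(1)$ under $N\sigma_\epsilon^2=o(1)$. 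As a consistency check on optimality, Proposition~\ref{prop:ghz-qfi} gives Fisher information $4N^2(1-O(N\sigma_\epsilon^2))$. Through the Cram\'er--Rao bound this says the $O(N^{-1})$ precision per $O(1)$ shots cannot be improved in scaling.

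\textbf{Repetition-code implementation.} Here I would use Lemma~\ref{lem:qsp-activation} to identify the logical map with $\Phi_N$. I would then use the transition-width property (item 3 of Proposition~\ref{prop:step-function-properties}) together with Theorem~\ref{thm:heisenberg-scale} to show that phases on opposite sides of the threshold, separated by $\Theta(N^{-1})$, give outcome distributions at $\Theta(1)$ total variation distance. The Gaussian offset $S$ again enters only through the total phase, so contrast is reduced by the same factor $\lambda$. For this branch I would invoke the three-category classification result of Propositions~\ref{prop:qsp-classification} and~\ref{prop:qsp-kl} to get a constant lower bound on $D_{\text{KL}}$ conditioned on non-\textsc{Middle} outcomes. \textsc{Middle} outcomes are discarded, which costs at most a constant factor in measurements.

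\textbf{Main obstacle.} I expect the hard step to be the repetition-code branch. Remark~\ref{rem:post-selection-cost} warns that code-space post-selection near the steep point of $\Phi_N$ succeeds only with probability $2^{1-N}$, which would destroy the $O(N)$ cost per round. The first thing I would try is to run the binary search on the logical $Z$-phase, i.e.\ a logical GHZ-type readout where the code only detects errors. That way the measured quantity is the $N$-fold accumulated $Z$-phase, and no exponentially rare projection is ever needed. I would then check that the acceptance probability per shot is $\Theta(1)$ and absorb it into the constants of $M$.

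Finally, I would combine the two branches. A union bound over the $T=O(\log(1/\epsilon))$ rounds, each with failure probability $\delta/T$, gives the overall failure probability $\delta$. Multiplying by the $N$ qubits per shot gives the claimed $N_{\text{total}}$.
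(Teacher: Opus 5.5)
Your reduction is the same as the paper's: its proof just cites Theorem~\ref{thm:convergence}, Theorem~\ref{thm:heisenberg-scale} and Lemma~\ref{lem:kl-divergence}. The gap is in the first hypothesis you ``check'', which the paper also takes for granted from Proposition~\ref{prop:binary-decision}.

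\textbf{The decision rule does not work as stated.} Being able to shift the phase to $\theta_{\text{mid}}$ does not make the decision rule valid. After the shift, the noise-averaged GHZ parity statistics are $p=\frac{1}{2}\left(1+\lambda\cos(2N(\omega-\theta_{\text{mid}}))\right)$, which is even in $\omega-\theta_{\text{mid}}$. The outcome distributions for $\omega=\theta_{\text{mid}}\pm\eta$ therefore coincide, so no rule, majority vote or otherwise, can tell the lower half from the upper half. Your bias $\epsilon_p=\frac{\lambda}{2}\cos(2N(\omega-\theta_{\text{mid}}))$ has the same sign on both sides, and it vanishes at $N|\omega-\theta_{\text{mid}}|=\pi/4$. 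The $\Theta((N\Delta\omega)^2\lambda)$ form in Lemma~\ref{lem:kl-divergence} is even too, so a constant $D_{\text{KL}}$ there says nothing about which half contains $\omega$. Note also that Proposition~\ref{prop:ghz-qfi} evaluates the Fisher information at $\sin^2(2N\Delta\omega)=1$, not at $\theta_{\text{mid}}$, where the slope is zero.

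\textbf{The missing idea is to decide at the quarter-fringe point.} Shift by $\theta_{\text{mid}}-\pi/(4N)$, so that $p-\frac{1}{2}=-\frac{\lambda}{2}\sin(2N(\omega-\theta_{\text{mid}}))$. This changes sign at $\theta_{\text{mid}}$ provided the interval width is at most $\pi/(2N)$ (e.g.\ $\Omega_0=\pi/(4N)$ as in Theorem~\ref{thm:convergence-full}).
\begin{itemize}
\item For a small constant $c$, $|\omega-\theta_{\text{mid}}|\ge c/N$ gives $|\epsilon_p|\ge\frac{\lambda}{2}\sin(2c)$, hence $D_{\text{KL}}=\Theta(\lambda^2)=\Theta(1)$ under $N\sigma_\epsilon^2=o(1)$.
\item If $|\omega-\theta_{\text{mid}}|<c/N$, either decision leaves $\omega$ within $c/N$ of the kept interval.
\item Reaching $\epsilon=\Theta(1/N)$ from $\Omega_0=\Theta(1/N)$ takes $T=O(1)$ rounds, so these slacks add to $O(1/N)$ and $N_{\text{total}}=O(N\log(1/\delta))$, inside the claimed bound.
\end{itemize}
Use only this case (item~5). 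For $\epsilon\ll 1/N$ at fixed $N$ the bias decays like $NW_t$, so the polylogarithmic counts in items~3--4 of Theorem~\ref{thm:convergence} do not hold there.

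\textbf{The repetition-code branch built on $\Phi_N$ fails for reasons beyond the post-selection cost you flag.}
\begin{itemize}
\item $\Phi_N$ comes from X-rotations followed by code-space projection (Lemma~\ref{lem:qsp-activation}). Near $x=\omega-\theta_{\text{mid}}=0$ it is flat, $\Phi_N(x)=(-1)^Lx^N(1+O(N^2x^2))$ by item~4 of Proposition~\ref{prop:step-function-properties}. Also $\cos^2\Phi_N$ is even, so it again carries no sign information.
\item Its $\Theta(N)$ slope (Theorem~\ref{thm:heisenberg-scale}) sits at $|x|=\pi/4$. Moving the threshold there is exactly what costs $P_{\mathrm{success}}=2^{1-N}$.
\item Inhomogeneous angles enter through $\prod_k\tan\phi_k$, not through an additive total phase $S$. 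So the ``same factor $\lambda$'' claim does not transfer.
\item Propositions~\ref{prop:qsp-classification} and~\ref{prop:qsp-kl} model a single thresholded scalar $\phi=\omega+\epsilon$, not the $N$-qubit readout, and they do not give a constant $D_{\text{KL}}$. In the regime $1/N\ll\sigma_\epsilon\ll 1/\sqrt{N}$ allowed by $N\sigma_\epsilon^2=o(1)$, a separation $\Theta(1/N)$ is $o(\sigma_\epsilon)$. That is outside the hypothesis of Proposition~\ref{prop:qsp-kl}, and $\Theta((\omega-\theta)^2/\sigma_\epsilon^2)\to 0$.
\item When $\omega$ lies well inside the \textsc{Middle} band, that outcome has probability at least $1-e^{-k^2/2}$. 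Discarding it is therefore not a constant-factor cost.
\end{itemize}

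Your fallback is the right route, and it should replace the $\Phi_N$ route entirely. In this section there is no transverse field, so the syndrome is always trivial. The state $|+\rangle_L$ is literally the GHZ state (the identification the paper itself uses in Proposition~\ref{prop:cramer-rao}), and logical X-parity is physical X-parity. The repetition-code case is then the corrected GHZ argument verbatim, with acceptance probability $1$.
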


\begin{proof}
From Theorem~\ref{thm:convergence}, both implementations achieve the stated precision and resource scaling through the binary search framework. The phase amplification function $\Phi_N$ (Definition~\ref{def:phase-amplification}) provides the step-function behavior enabling Heisenberg-scale discrimination (Theorem~\ref{thm:heisenberg-scale}), and the KL divergence remains $\Theta(1)$ in the Heisenberg regime when $N\sigma_\epsilon^2 = o(1)$ (Lemma~\ref{lem:kl-divergence}).
\end{proof}

\begin{proposition}[Quantum Cramér-Rao Bound]
\label{prop:cramer-rao}
Both protocols saturate the quantum Cramér-Rao bound. The quantum Fisher information satisfies:
\begin{equation}
F_Q = 4N^2\left(1 - O(N\sigma_\epsilon^2)\right)
\label{eq:qfi-general}
\end{equation}
when $N\sigma_\epsilon^2 = o(1)$. The fundamental precision limit is:
\begin{equation}
\operatorname{Var}(\hat{\omega}) \geq \frac{1}{M \cdot F_Q} = \Theta\left(\frac{1}{MN^2}\right).
\label{eq:cramer-rao-bound}
\end{equation}
\end{proposition}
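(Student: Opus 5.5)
Our plan is to compute the Fisher information of each protocol's measurement statistics directly, using the noise-marginalized likelihoods already derived, and then match it to the quantum Fisher information of the noisy probe state. For the GHZ implementation, Proposition~\ref{prop:ghz-qfi} already gives the classical Fisher information of X-parity measurement at the optimal working point, $F = 4N^2 e^{-4N\sigma_\epsilon^2}$. Expanding the exponential under $N\sigma_\epsilon^2 = o(1)$ gives $4N^2(1-O(N\sigma_\epsilon^2))$, which is \eqref{eq:qfi-general} for the GHZ protocol.

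For the repetition code implementation, we will use the fact that the logical GHZ state on the code, $(\ket{0}_L+\ket{1}_L)/\sqrt{2}$, is literally the $N$-qubit GHZ state. Its evolution under the Z-signal $\sum_k \omega_k Z_k$ commutes with the stabilizers $Z_iZ_{i+1}$. The syndrome is therefore trivial with certainty, the logical X measurement coincides with X-parity, and the likelihood is the same as in Lemma~\ref{lem:ghz-noise-marginalization}. The same Fisher information then follows.

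To justify the word \emph{saturate}, we will bound the QFI of the marginalized state from above. The noisy state is the mixture $\rho = \EX_S[\ket{\psi_S}\bra{\psi_S}]$ over $S\sim\mathcal{N}(0,N\sigma_\epsilon^2)$. Restricted to the two-dimensional span of $\ket{0}^{\otimes N},\ket{1}^{\otimes N}$, it is a qubit state with Bloch vector of length $\lambda = e^{-2N\sigma_\epsilon^2}$, rotating at rate $2N$ about the $Z_L$ axis. For a qubit with Bloch vector $\vec r(\omega)$ of fixed length $r$ rotating in the equatorial plane, the QFI equals $|\partial_\omega \vec r|^2 = (2N r)^2 = 4N^2\lambda^2$. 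This matches the classical Fisher information, so the bound is saturated, and the noiseless value $4N^2$ is the Heisenberg ceiling since $\mathrm{Var}(\sum_k Z_k)\le N^2$. We can avoid computing the QFI of a mixed state in general by using convexity of the QFI: the mixture over $S$ has QFI at most $4N^2$, which already gives the matching upper bound up to the $1-O(N\sigma_\epsilon^2)$ factor.

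Finally, \eqref{eq:cramer-rao-bound} follows from the quantum Cram\'er--Rao bound \eqref{eq:cramer-rao} with $M$ independent repetitions, because additivity of Fisher information gives $M F_Q = \Theta(MN^2)$ once $F_Q = \Theta(N^2)$.

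The main obstacle is the repetition-code case. Interpreted via Lemma~\ref{lem:qsp-activation}, with X-rotations and post-selection, that implementation has the exponentially small success probability noted in Remark~\ref{rem:post-selection-cost}, so its QFI is not $4N^2$. We would therefore state explicitly that ``repetition code implementation'' here means the Z-signal on the logical GHZ state, where no post-selection is needed. If transverse errors are also present, we would combine this with Theorem~\ref{thm:bitflip}, whose syndrome-averaged QFI is $4\EX[(N-d)^2]$, and absorb that correction separately.
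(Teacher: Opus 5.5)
Your proposal is correct and follows the paper's proof: invoke Proposition~\ref{prop:ghz-qfi} for $F_Q = 4N^2 e^{-4N\sigma_\epsilon^2}$, identify the logical state $\ket{+}_L$ with the GHZ state to carry this over to the repetition code, expand the exponential under $N\sigma_\epsilon^2 = o(1)$, and apply the quantum Cram\'er--Rao bound. Beyond the paper, your Bloch-vector computation of the QFI of the $S$-averaged state (length $e^{-2N\sigma_\epsilon^2}$, giving $F_Q = |\partial_\omega \vec r|^2 = 4N^2 e^{-4N\sigma_\epsilon^2}$) proves the saturation that the paper only asserts. Your remark that ``repetition code implementation'' must mean the $Z$-signal on the logical GHZ state, not the post-selected $X$-rotation of Lemma~\ref{lem:qsp-activation}, makes explicit an assumption the paper's proof leaves implicit.
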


\begin{proof}
For GHZ states, Proposition~\ref{prop:ghz-qfi} establishes $F_Q^{\text{GHZ}} = 4N^2 e^{-4N\sigma_\epsilon^2}$. Since the repetition code logical state $|+\rangle_L = (|0\rangle^{\otimes N} + |1\rangle^{\otimes N})/\sqrt{2}$ is identical to the GHZ state, the QFI is also $F_Q^{\text{rep}} = 4N^2 e^{-4N\sigma_\epsilon^2}$. Both satisfy~\eqref{eq:qfi-general} when $N\sigma_\epsilon^2 = o(1)$:
\begin{equation}
F_Q = 4N^2 e^{-4N\sigma_\epsilon^2} = 4N^2(1 - 4N\sigma_\epsilon^2 + O((N\sigma_\epsilon^2)^2)).
\end{equation}
The quantum Cramér-Rao bound~\cite{braunstein1994statistical,helstrom1976quantum} then gives~\eqref{eq:cramer-rao-bound}.
\end{proof}

Both GHZ states (Section~\ref{subsec:ghz-implementation}) and repetition codes (Section~\ref{subsec:qsp-implementation}) realize the phase amplification framework, achieving Heisenberg-limited precision $\delta\omega = O(N^{-1})$ (Theorem~\ref{thm:hl-precision}) with quantum Fisher information $F_Q = 4N^2(1 - O(N\sigma_\epsilon^2))$ (Proposition~\ref{prop:cramer-rao}). Since the logical state $|+\rangle_L$ is identical to the GHZ state, both implementations share the same QFI, noise damping factor $e^{-4N\sigma_\epsilon^2}$, and noise condition $N\sigma_\epsilon^2 = o(1)$. The choice between them is a matter of platform capabilities: GHZ preparation suits systems with efficient entangling gates, while the repetition code offers explicit three-category uncertainty handling suited to error-corrected architectures. Entanglement is necessary for this quadratic advantage: for any tensor product state, the QFI is additive ($F_Q \leq 4N$, i.e., SQL scaling), so the enhancement to $F_Q = 4N^2$ requires genuine $N$-partite entanglement (see Appendix~\ref{app:direct-parity}). By contrast, Theorem~\ref{thm:sql-barrier} shows that syndrome-based estimation using i.i.d.\ signal unitaries yields only SQL scaling ($F = 4N$), since the GHZ state has $F_Q = 4N$ for X-rotations versus $4N^2$ for Z-rotations (Remark~\ref{rem:ghz-sql}). Full proofs appear in Appendix~\ref{app:binary-search-proofs}.

\section{General Protocol: Concatenated Codes for Combined Errors}
\label{sec:combined}

This section presents the most general instantiation of Encoded QSP, combining the error-detection and phase-amplification capabilities developed in the preceding sections. The preceding analysis reveals a fundamental tension: phase amplification via Encoded QSP (Lemma~\ref{lem:qsp-activation}) provides near-deterministic activation for small phases, but cannot achieve Heisenberg-limited estimation because the metrologically useful operating points incur exponential post-selection costs (Remark~\ref{rem:post-selection-cost}, Theorem~\ref{thm:sql-barrier}). The protocol below resolves this by assigning distinct roles to the repetition code (X-error detection) and the logical GHZ state (Z-phase sensitivity), eliminating the need for post-selected code-space projection.

\subsection{Motivation and Framework}

In Sections~\ref{sec:fixed-omega}--\ref{sec:implementations}, we developed protocols for two complementary error scenarios. For a fixed signal $\omega$ with transverse noise $\gamma_k$, Theorem~\ref{thm:bitflip} establishes that the $(2L+1)$-qubit repetition code detects X-errors via syndrome measurements, yielding effective phase $(2L+1-d)\omega$ when $d$ errors are detected, with quantum Fisher information $F_Q = 4(2L+1)^2(1 - O(\max_k|\gamma_k/\omega|^2))$. For a varying signal $\omega_k = \omega + \epsilon_k$ with no transverse noise, Theorem~\ref{thm:convergence} shows that GHZ states combined with adaptive $R_Z$ phase corrections enable binary search estimation achieving precision $\epsilon = O(1/N)$ in $T = O(\log(1/\epsilon))$ iterations, with quantum Fisher information $F_Q = 4N^2 e^{-4N\sigma_\epsilon^2}$.

Realistic sensing scenarios involve both types of errors simultaneously. Neither protocol alone suffices: the bit-flip code cannot handle Z-field inhomogeneities $\epsilon_k$, while the GHZ protocol has no protection against X-errors from transverse fields $\gamma_k$. We now show how to combine both approaches, using repetition codes as inner blocks arranged in a logical GHZ state with adaptive binary search, to achieve robust Heisenberg-limited sensing under the general error model.

An alternative combination using direct parity measurements with Bayesian inference (Appendix~\ref{app:direct-parity}) achieves the same asymptotic scaling without adaptive feedback, at the cost of larger single-round sample sizes.

\subsection{Logical GHZ State Construction}

We construct the combined protocol using three repetition code blocks arranged in a logical GHZ configuration:
\begin{enumerate}
\item \textbf{Inner code}: Three copies of the $(2L+1)$-qubit repetition code from Section~\ref{sec:fixed-omega}, each detecting X-errors within its block
\item \textbf{Outer structure}: A logical GHZ state across the three logical qubits, enabling adaptive binary search
\item \textbf{Total structure}: $N_{\text{total}} = 3(2L+1)$ physical qubits
\end{enumerate}

The physical qubits are arranged in three blocks:
\begin{equation}
\text{Block } j: \text{ qubits } \{(j-1)(2L+1)+1, \ldots, j(2L+1)\}, \quad j = 1,2,3.
\end{equation}

Each block $j$ uses the repetition code stabilizers from Section~\ref{sec:fixed-omega}:
\begin{equation}
g_i^{(j)} = Z_{i+(j-1)(2L+1)} Z_{i+1+(j-1)(2L+1)}, \quad i = 1,\ldots,2L,
\label{eq:block-stabilizers}
\end{equation}
which detect X-errors within block $j$ without disturbing Z-rotations.

The logical basis states for block $j$ are:
\begin{equation}
|0\rangle_L^{(j)} = |0\rangle^{\otimes (2L+1)}, \quad |1\rangle_L^{(j)} = |1\rangle^{\otimes (2L+1)}.
\end{equation}

The initial state is a \emph{logical GHZ state} across the three blocks:
\begin{equation}
|\text{GHZ}_L\rangle = \frac{1}{\sqrt{2}}\left(|0\rangle_L^{(1)}|0\rangle_L^{(2)}|0\rangle_L^{(3)} + |1\rangle_L^{(1)}|1\rangle_L^{(2)}|1\rangle_L^{(3)}\right) = \frac{1}{\sqrt{2}}\left(|0\rangle^{\otimes 3(2L+1)} + |1\rangle^{\otimes 3(2L+1)}\right).
\label{eq:logical-ghz}
\end{equation}
This is equivalent to a physical GHZ state on all $3(2L+1)$ qubits, but the block structure enables syndrome-based X-error detection within each block while preserving the GHZ phase sensitivity across blocks.

\begin{remark}[No Inter-Block Stabilizer Check]
The stabilizers~\eqref{eq:block-stabilizers} act only within each block; there is no parity check $Z_{j(2L+1)} Z_{j(2L+1)+1}$ between adjacent blocks. This is by design: since the noise acts independently on each qubit, every X-error is detected by the intra-block stabilizers regardless of its position within the block (including boundary qubits). Adding inter-block checks would be redundant under the independent noise model. The modular block structure is moreover required by the protocol: three separate repetition code blocks encode three logical qubits, which then form the logical GHZ state~\eqref{eq:logical-ghz}; a single monolithic $3(2L{+}1)$-qubit repetition code would encode only one logical qubit and cannot produce the GHZ entanglement needed for phase amplification.
\end{remark}

The logical X-parity observable, which will be measured for binary search decisions, is:
\begin{equation}
\mathcal{P}_X^{(L)} = X_L^{(1)} X_L^{(2)} X_L^{(3)} = \prod_{k=1}^{3(2L+1)} X_k,
\label{eq:logical-parity}
\end{equation}
where $X_L^{(j)} = \prod_{k \in \text{block } j} X_k$ is the logical X operator on block $j$.

\subsection{Combined Binary Search Protocol}

The combined protocol adapts the binary search framework from Section~\ref{sec:binary-search} to operate on the logical GHZ state with syndrome-based X-error detection. At each iteration, we apply logical $R_Z$ phase corrections, evolve under the physical Hamiltonian, measure X-error syndromes within each block, and perform logical X-parity measurement to determine the binary search direction.

\begin{algorithm}[t]
\caption{Combined Binary Search with Repetition Code Blocks}\label{alg:combined}
\KwData{Initial interval $[\Omega_{\text{low}}, \Omega_{\text{high}}]$, target precision $\epsilon$, confidence $\delta$, code parameter $L$, noise parameters $\gamma_k, \sigma_\epsilon$}
\KwResult{Estimate $\hat{\omega}$ with $|\hat{\omega} - \omega| \leq \epsilon$ with probability $\geq 1 - \delta$}
$N \gets 2L+1$; \quad $N_{\text{total}} \gets 3N$\\
$T \gets \lceil \log_2((\Omega_{\text{high}} - \Omega_{\text{low}})/\epsilon) \rceil$\\
\For{$t\gets 1$ \KwTo $T$}{
    $\theta_{\text{mid}} \gets (\Omega_{\text{low}} + \Omega_{\text{high}})/2$\\
    $M \gets \lceil \log(T/\delta) / D_{\text{KL}} \rceil$ \tcp*{From Lemma~\ref{lem:majority-vote-full} (Appendix~\ref{app:binary-search-proofs})}
    $\text{count}_{+1} \gets 0$\\
    \For{$i\gets 1$ \KwTo $M$}{
        Prepare logical GHZ state $|\text{GHZ}_L\rangle$ on $N_{\text{total}}$ qubits (Eq.~\eqref{eq:logical-ghz})\\
        Apply $R_Z(-2\theta_{\text{mid}}) = e^{+i\theta_{\text{mid}} Z_k}$ to each physical qubit $k$\\
        Evolve under $U = \bigotimes_k e^{-i(\omega Z_k + \gamma_k X_k)}$ with $\omega_k = \omega + \epsilon_k$\\
        Measure stabilizers $g_i^{(j)}$ to obtain X-error counts $d_X^{(j)}$ for each block $j$\\
        Measure logical X-parity $\mathcal{P}_X^{(L)} = \prod_k X_k$\\
        \If{parity outcome is $+1$}{
            $\text{count}_{+1} \gets \text{count}_{+1} + 1$\\
        }
    }
    \eIf{$\text{count}_{+1} > M/2$}{
        $\Omega_{\text{high}} \gets \theta_{\text{mid}}$\\
    }{
        $\Omega_{\text{low}} \gets \theta_{\text{mid}}$\\
    }
}
$\hat{\omega} \gets (\Omega_{\text{low}} + \Omega_{\text{high}})/2$\\
\Return $\hat{\omega}$
\end{algorithm}

The key innovation is applying the $R_Z$ phase correction to \emph{all} physical qubits in each block, not just the logical qubits. This ensures that the syndrome-dependent effective phase $(N - d_X^{(j)})(\omega - \theta_{\text{mid}})$ per block maintains the decision boundary at $\omega = \theta_{\text{mid}}$ regardless of the X-error pattern, as we now establish.

Let $d_X^{(j)}$ denote the number of X-errors detected in block $j \in \{1,2,3\}$, and let $\mathcal{E}_j$ denote the set of error locations in block $j$.

\begin{theorem}[Combined Protocol with Logical GHZ States]
\label{thm:main}
Consider the combined protocol (Algorithm~\ref{alg:combined}) with $N_{\text{total}} = 3(2L+1)$ qubits handling both transverse noise $\gamma_k$ and field inhomogeneities $\epsilon_k$ where $\omega_k = \omega + \epsilon_k$. The following hold:

\begin{enumerate}
\item \textbf{Phase Formula}: After applying $R_Z(-2\theta_{\text{mid}})$ to all physical qubits and detecting X-errors $\{d_X^{(j)}\}_{j=1}^3$, the effective phase argument for the logical X-parity measurement is:
\begin{equation}
\Phi_{\text{eff}} = \sum_{j=1}^3 (N - d_X^{(j)})(\omega - \theta_{\text{mid}}) + S_{\text{eff}},
\end{equation}
where $N = 2L+1$, and $S_{\text{eff}} = \sum_{j=1}^3 \sum_{k \in \text{block } j \setminus \mathcal{E}_j} \epsilon_k$ is the total noise offset from non-error qubits.

\item \textbf{X-Error Suppression}: By Theorem~\ref{thm:bitflip}, each block with $d_X^{(j)}$ detected errors contributes phase $(N - d_X^{(j)})\omega$ rather than $N\omega$, with undetected X-errors satisfying $P_{X,\text{undetected}} = O(p_X^{L+1})$ per block.

\item \textbf{Binary Search Convergence}: Following Theorem~\ref{thm:convergence}, the adaptive protocol achieves precision $|\hat{\omega} - \omega| \leq \epsilon$ with probability $\geq 1 - \delta$ using $T = O(\log(1/\epsilon))$ iterations, with $M = O(\log(T/\delta) \cdot \exp(4N_{\text{eff}}\sigma_\epsilon^2))$ measurements per iteration. Here $N_{\text{eff}} = \sum_j (N - d_X^{(j)})$ is the effective qubit count, conditioned on the detected error counts $\{d_X^{(j)}\}$; in the ideal case $d_X^{(j)} = 0$ for all $j$, this reduces to $N_{\text{eff}} = 3N$.

\item \textbf{Quantum Fisher Information}: Conditioned on error counts $\{d_X^{(j)}\}$, the concatenated structure yields:
\begin{equation}
F_Q^{\text{concat}} = 4N_{\text{eff}}^2 \left(1 - O(\max_k |\gamma_k/\omega|^2)\right) \cdot e^{-4N_{\text{eff}}\sigma_\epsilon^2},
\end{equation}
where $N_{\text{eff}} = \sum_j(N - d_X^{(j)})$. In the ideal case $N_{\text{eff}} = 3(2L+1)$, this gives $F_Q = 36(2L+1)^2(1 - O(\max_k|\gamma_k/\omega|^2)) \cdot e^{-12(2L+1)\sigma_\epsilon^2}$, achieving Heisenberg scaling $F_Q \propto N_{\text{total}}^2$ when $\max_k|\gamma_k/\omega| = o(1)$ and $N_{\text{eff}}\sigma_\epsilon^2 = o(1)$.

\item \textbf{Error Suppression}: By Proposition~\ref{prop:error-suppression}, undetected X-errors across all three blocks satisfy:
\begin{equation}
P_{\text{undetected}} = O(p_X^{L+1}),
\end{equation}
achieving exponential suppression in the code parameter $L$.
\end{enumerate}
\end{theorem}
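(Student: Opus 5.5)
The plan is to reduce each item of Theorem~\ref{thm:main} to an earlier result, treating the logical GHZ state~\eqref{eq:logical-ghz} as a physical GHZ state on $3N$ qubits whose X-errors are detected blockwise.

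\textbf{Item 1 (phase formula).} I would expand $\bigotimes_k U_k$ with the decomposition~\eqref{eq:bitflip-decomp}, combined with the pre-rotation $e^{+i\theta_{\mathrm{mid}}Z_k}$. Since the pre-rotation commutes with $Z_k$, each qubit contributes one of two factors:
\begin{itemize}
\item without an X-error, the factor $\beta_k e^{\mp i(\phi_k-\theta_{\mathrm{mid}})}$ on the $|0\rangle$/$|1\rangle$ branch;
\item with an X-error, the phase-free factor $-i\sqrt{1-\beta_k^2}$, as in Equations~\eqref{eq:single-qubit-0}--\eqref{eq:single-qubit-1}.
\end{itemize}
There is one subtlety: the pre-rotation acts before the flip. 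On an error qubit the two branches then pick up $e^{\pm i\theta_{\mathrm{mid}}}$ and end up on swapped computational states. After the hypothetical correction $C_s$, this phase cancels in the relative phase only up to a $\theta_{\mathrm{mid}}$-dependent term. I would handle this by correcting with the same convention as in Theorem~\ref{thm:bitflip}, i.e.\ attributing phase only to non-error qubits, and checking that the residual is syndrome-known and can be absorbed. Projecting with $\Pi_{s^{(1)}}\Pi_{s^{(2)}}\Pi_{s^{(3)}}$ then leaves the two GHZ branches with relative phase $2\sum_j\sum_{k\notin\mathcal{E}_j}(\phi_k-\theta_{\mathrm{mid}})$. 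Using $\phi_k=\omega_k+O(\gamma_k^2/\omega)$ from~\eqref{eq:phi-taylor}, this becomes $2\Phi_{\mathrm{eff}}$, with $S_{\mathrm{eff}}$ the sum of the $\epsilon_k$ over non-error qubits. The X-parity statistics $\cos^2(\Phi_{\mathrm{eff}})$ then follow exactly as in the X-parity lemma for GHZ states.

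\textbf{Items 2 and 5 (error suppression).} These come from applying Theorem~\ref{thm:bitflip} and Proposition~\ref{prop:error-suppression} block by block, because the stabilizers~\eqref{eq:block-stabilizers} and the noise act independently on each block. For the total, a union bound over three blocks gives $3\cdot O(p_X^{L+1})=O(p_X^{L+1})$.

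\textbf{Item 4 (QFI).} Condition on $\{d_X^{(j)}\}$. The post-syndrome state is a GHZ-type state with generator $N_{\mathrm{eff}}Z_L$ plus amplitude corrections, so the per-sector QFI follows Theorem~\ref{thm:bitflip} item 3, giving $4N_{\mathrm{eff}}^2(1-O(\max|\gamma_k/\omega|^2))$. Marginalizing $S_{\mathrm{eff}}\sim\mathcal{N}(0,N_{\mathrm{eff}}\sigma_\epsilon^2)$ via Lemma~\ref{lem:ghz-noise-marginalization} gives contrast $\lambda=e^{-2N_{\mathrm{eff}}\sigma_\epsilon^2}$. The Fisher computation of Proposition~\ref{prop:ghz-qfi} then produces the factor $e^{-4N_{\mathrm{eff}}\sigma_\epsilon^2}$. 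The $\gamma$-dependent and $\sigma_\epsilon$-dependent corrections multiply to leading order because they enter through independent mechanisms (amplitude/phase versus contrast). Setting $d_X^{(j)}=0$ gives $36(2L+1)^2$.

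\textbf{Item 3 (convergence).} Run the proofs of Theorem~\ref{thm:convergence} and Lemma~\ref{lem:kl-divergence} with $N$ replaced by $N_{\mathrm{eff}}$. The key point is that $\Phi_{\mathrm{eff}}$ vanishes at $\omega=\theta_{\mathrm{mid}}$ up to noise for \emph{every} syndrome outcome, because the pre-rotation was applied to all physical qubits. Each vote is therefore biased toward the correct half, with $D_{\mathrm{KL}}=\Theta(\lambda^2)$ and $N_{\mathrm{eff}}\Delta\omega=\Theta(1)$, since $N_{\mathrm{eff}}\ge 3(L+1)$ whenever no block exceeds its detection threshold.

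\textbf{Main obstacle.} I expect the hardest step to be this last point: the varying $N_{\mathrm{eff}}$ across shots means the bias $\epsilon_p$ is not uniform. My first attempt would be to lower-bound it using $N_{\mathrm{eff}}\ge 3(N-d_{\max})$ on a high-probability event, in the style of the Markov argument of Lemma~\ref{lem:mle-sampling}. I would then absorb the failure event into $\delta$ by a union bound over the $T$ iterations.
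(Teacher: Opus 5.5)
Your decomposition matches the paper's own proof. You apply Theorem~\ref{thm:bitflip} and Proposition~\ref{prop:error-suppression} block by block, then the GHZ parity and noise-marginalization lemmas, Proposition~\ref{prop:ghz-qfi}, and Theorem~\ref{thm:convergence} with $N\to N_{\mathrm{eff}}$. You are also more careful than the paper about the shot-to-shot variation of $N_{\mathrm{eff}}$. However, the subtlety you flagged in item~1 is a genuine gap, and your proposed fix does not close it. An error qubit keeps the factor $e^{+i\theta_{\mathrm{mid}}}$ (on the $|0\rangle$ branch) or $e^{-i\theta_{\mathrm{mid}}}$ (on the $|1\rangle$ branch) that it acquired \emph{before} the flip. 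The X-parity compares the two complementary strings that dominate the syndrome sector. So, with or without $C_s$, and with $\mathcal{E}=\bigcup_j\mathcal{E}_j$, the parity argument up to the same $O(\gamma_k^2/\omega)$ terms is
\[
\sum_{k\notin\mathcal{E}}\phi_k-N_{\mathrm{total}}\theta_{\mathrm{mid}}=\Phi_{\mathrm{eff}}-d_{\mathrm{tot}}\,\theta_{\mathrm{mid}},\qquad d_{\mathrm{tot}}:=\textstyle\sum_j d_X^{(j)}.
\]
For example, with three qubits in the sector where only qubit~1 flipped, the argument is $\phi_2+\phi_3-3\theta_{\mathrm{mid}}$, not $\phi_2+\phi_3-2\theta_{\mathrm{mid}}$. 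Theorem~\ref{thm:bitflip} contains no offset rotation, so there is no convention there to borrow. ``Attributing phase only to non-error qubits'' is precisely the false step. Nor is the residual harmless because it is syndrome-known: Algorithm~\ref{alg:combined} majority-votes the raw parity without using $d_X^{(j)}$. The paper's proof makes the same slip. It first writes the state with relative phase $2N_{\mathrm{total}}\theta_{\mathrm{mid}}$, then charges $\theta_{\mathrm{mid}}$ only to the $N-d_X^{(j)}$ non-error qubits. So item~1 is not true for the protocol as written, and mirroring the paper will not repair it.

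This also undercuts the key point of your item~3, which is backwards. Applying the offset to \emph{all} physical qubits is exactly what produces $-N_{\mathrm{total}}\theta_{\mathrm{mid}}$ against only $N_{\mathrm{eff}}\omega$ of signal. For $d_{\mathrm{tot}}\ge 1$ the decision boundary therefore moves to $\omega=N_{\mathrm{total}}\theta_{\mathrm{mid}}/N_{\mathrm{eff}}$. That is a shift of $d_{\mathrm{tot}}\theta_{\mathrm{mid}}/N_{\mathrm{eff}}$, which varies from shot to shot and is not small against the $\Theta(1/N_{\mathrm{eff}})$ resolution being sought unless $d_{\mathrm{tot}}\theta_{\mathrm{mid}}=o(1)$. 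Items~2, 4 and~5 survive, since the residual is $\omega$-independent and only moves the operating point. To obtain items~1 and~3 you must change the protocol. Options include:
\begin{enumerate}
\item splitting the offset into $R_Z(-\theta_{\mathrm{mid}})$ immediately before and after the signal, so a flipped qubit acquires $e^{\pm i\theta_{\mathrm{mid}}/2}e^{\mp i\theta_{\mathrm{mid}}/2}=1$ on each branch while an unflipped one still acquires $e^{\pm i\theta_{\mathrm{mid}}}$;
\item applying a syndrome-conditioned logical $Z$ rotation compensating $d_{\mathrm{tot}}\theta_{\mathrm{mid}}$ before readout;
\item post-selecting on $d_{\mathrm{tot}}=0$, as the simulations do.
\end{enumerate}
Independently, even with the corrected argument, ``each vote is biased toward the correct half'' fails. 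After averaging over the symmetric $S_{\mathrm{eff}}$, $\Pr(+1)=\tfrac12\bigl(1+\lambda\cos[2N_{\mathrm{eff}}(\omega-\theta_{\mathrm{mid}})]\bigr)$ is even in $\omega-\theta_{\mathrm{mid}}$, so a vote carries no sign information. You need a quadrature offset so that the outcome probability becomes $\tfrac12(1\pm\lambda\sin 2\Phi_{\mathrm{eff}})$. That defect already sits in Proposition~\ref{prop:binary-decision}, so citing Theorem~\ref{thm:convergence} does not discharge it.
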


\begin{proof}
We combine results from Sections~\ref{sec:fixed-omega}--\ref{sec:implementations}. Consider the logical GHZ state~\eqref{eq:logical-ghz} after applying $R_Z(-2\theta_{\text{mid}}) = e^{+i\theta_{\text{mid}} Z}$ to each physical qubit. The state becomes:
\begin{equation}
\frac{1}{\sqrt{2}}\left(e^{+iN_{\text{total}}\theta_{\text{mid}}}|0\rangle^{\otimes N_{\text{total}}} + e^{-iN_{\text{total}}\theta_{\text{mid}}}|1\rangle^{\otimes N_{\text{total}}}\right).
\end{equation}
Under evolution $U_k = \beta_k e^{-i\phi_k Z_k} - i\sqrt{1-\beta_k^2} X_k$ from Theorem~\ref{thm:bitflip}, qubits without X-errors contribute phase $\pm\phi_k$ (depending on computational basis state), while qubits with X-errors contribute amplitude but no additional phase. When $d_X^{(j)}$ errors are detected in block $j$, the $(N - d_X^{(j)})$ non-error qubits each contribute $(\omega_k - \theta_{\text{mid}})$ to the relative phase. Summing over all blocks and using $\omega_k = \omega + \epsilon_k$:
\begin{align}
\Phi_{\text{eff}} &= \sum_{j=1}^3 \sum_{k \in \text{block } j \setminus \mathcal{E}_j} (\omega_k - \theta_{\text{mid}}) = \sum_{j=1}^3 \sum_{k \in \text{block } j \setminus \mathcal{E}_j} (\omega + \epsilon_k - \theta_{\text{mid}}) \nonumber\\
&= \sum_{j=1}^3 (N - d_X^{(j)})(\omega - \theta_{\text{mid}}) + \sum_{j=1}^3 \sum_{k \in \text{block } j \setminus \mathcal{E}_j} \epsilon_k,
\end{align}
which establishes item~1.

Within each block $j$, the repetition code stabilizers~\eqref{eq:block-stabilizers} detect X-errors exactly as in Section~\ref{sec:fixed-omega}. By Theorem~\ref{thm:bitflip} item~1, the effective phase contribution from block $j$ is $(N - d_X^{(j)})\omega$ when $d_X^{(j)}$ errors are detected. By Theorem~\ref{thm:bitflip} item~2, the probability of observing $d$ errors in a single block satisfies the binomial bound $P(d) \leq \binom{N}{d}(\max_k\beta_k^2)^N((1-\min_k\beta_k^2)/\max_k\beta_k^2)^d$, which establishes item~2.

The logical X-parity measurement~\eqref{eq:logical-parity} on the evolved state yields outcome $+1$ with probability:
\begin{equation}
\Pr(+1) = \cos^2(\Phi_{\text{eff}}) = \cos^2\left(\sum_{j=1}^3 (N - d_X^{(j)})(\omega - \theta_{\text{mid}}) + S_{\text{eff}}\right),
\label{eq:combined-parity-prob}
\end{equation}
following the same derivation as Lemma~\ref{lem:parity-measurement-full} (Appendix~\ref{app:binary-search-proofs}). The decision boundary is at $\omega = \theta_{\text{mid}}$ regardless of the error counts $d_X^{(j)}$, since $\Phi_{\text{eff}} = 0$ when $\omega = \theta_{\text{mid}}$ (assuming $S_{\text{eff}}$ averages to zero). The majority vote analysis from Lemma~\ref{lem:majority-vote-full} applies with effective qubit count $N_{\text{eff}} = \sum_j(N - d_X^{(j)})$ replacing $N$. By Theorem~\ref{thm:convergence}, the binary search achieves precision $\epsilon$ in $T = O(\log(1/\epsilon))$ iterations with $M = O(\log(T/\delta)/D_{\text{KL}})$ measurements per iteration, where the KL divergence scales as $D_{\text{KL}} = \Theta(\lambda^2)$ with $\lambda = e^{-2N_{\text{eff}}\sigma_\epsilon^2}$ in the Heisenberg regime, which establishes item~3.

From the parity probability~\eqref{eq:combined-parity-prob}, the chain rule gives $\partial\Pr/\partial\omega = -\sin(2\Phi_{\text{eff}}) \cdot \partial\Phi_{\text{eff}}/\partial\omega$. Since $S_{\text{eff}}$ is independent of $\omega$, we have $\partial\Phi_{\text{eff}}/\partial\omega = \sum_j(N - d_X^{(j)}) = N_{\text{eff}}$, giving $\partial\Pr/\partial\omega = -N_{\text{eff}}\sin(2\Phi_{\text{eff}})$ at the operating point. Following the derivation in Proposition~\ref{prop:ghz-qfi}, the Fisher information at the optimal point $\Phi_{\text{eff}} = \pi/4$ (where $\Pr = 1/2$) is:
\begin{equation}
F(\omega) = \frac{(\partial_\omega \Pr)^2}{\Pr(1-\Pr)} = 4N_{\text{eff}}^2.
\end{equation}
In the absence of X-errors, $N_{\text{eff}} = 3N = 3(2L+1)$, giving $F_Q = 4 \cdot 9(2L+1)^2 = 36(2L+1)^2$. The X-error correction from Theorem~\ref{thm:bitflip} item~3 contributes factor $(1 - O(\max_k|\gamma_k/\omega|^2))$, while noise marginalization from Lemma~\ref{lem:ghz-noise-marginalization} contributes factor $e^{-4N_{\text{eff}}\sigma_\epsilon^2}$, which establishes item~4.

Finally, by Proposition~\ref{prop:error-suppression}, undetected X-errors in a single $(2L+1)$-qubit repetition code require at least $L+1$ errors, occurring with probability $O(p_X^{L+1})$. Since the three blocks operate independently for X-error detection, the total undetected error probability remains $O(p_X^{L+1})$ (dominated by the single-block failure probability), which establishes item~5.
\end{proof}

Now we are ready to put these components together and show how the combined protocol can achieve Heisenberg-limited scaling in the estimate of the frequency using both QSP and QEC.  We state this result in the following corollary.

\begin{corollary}[Resource Scaling for Combined Protocol]
\label{cor:combined-resources}
The combined protocol (Algorithm~\ref{alg:combined}) achieves precision $|\hat{\omega} - \omega| \leq \epsilon$ with probability $\geq 1 - \delta$ using total resources:
\begin{enumerate}
\item Physical qubits per measurement: $N_{\text{total}} = 3(2L+1)\in O(1/\epsilon))$
\item Total measurements: $M_{\text{total}} = O(\log(1/\epsilon) \cdot \log\log(1/\epsilon) \cdot \log(1/\delta))$ in the noiseless regime
\item Total qubit-experiments: $Q_{\text{total}} = O(\log^2(1/\epsilon) \cdot \log\log(1/\epsilon) \cdot \log(1/\delta))$
\end{enumerate}
\end{corollary}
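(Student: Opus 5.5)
The plan is to assemble the corollary from Theorem~\ref{thm:main} and Theorem~\ref{thm:convergence}. The three items are essentially bookkeeping once the per-iteration cost is fixed.

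\textbf{Item 1 (qubits per measurement).} Heisenberg-limited precision requires the effective qubit count to scale inversely with the target precision. By Theorem~\ref{thm:hl-precision}, applied with $N_{\text{eff}}$ in place of $N$, the achievable precision is $\delta\omega = O(N_{\text{eff}}^{-1})$. So I would choose the code parameter $L$ so that $3(2L+1) = \Theta(1/\epsilon)$.

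I would then check that $N_{\text{eff}} = \Theta(N_{\text{total}})$ with probability $1-O(\delta)$. The error counts satisfy $\EX[\sum_j d_X^{(j)}] = N_{\text{total}}\hat p$ with $\hat p = O(\max_k|\gamma_k/\omega|^2) = o(1)$. A Chernoff bound on this Poisson-binomial sum then gives $N_{\text{eff}} \ge N_{\text{total}}/2$ except with probability $e^{-\Omega(N_{\text{total}})}$. By item~5 of Theorem~\ref{thm:main}, undetected logical failures occur with probability $O(p_X^{L+1})$, which is also negligible once $L=\Theta(1/\epsilon)$. This gives $N_{\text{total}}\in O(1/\epsilon)$.

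\textbf{Item 2 (total measurements).} I would invoke item~3 of Theorem~\ref{thm:main}. This gives $T = O(\log(1/\epsilon))$ bisection rounds, each using $M = O(\log(T/\delta)\, e^{4N_{\text{eff}}\sigma_\epsilon^2})$ repetitions. In the noiseless regime, meaning $N_{\text{eff}}\sigma_\epsilon^2 = o(1)$ (in particular $\sigma_\epsilon = 0$), the exponential factor is $O(1)$.

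I would allot failure probability $\delta/T$ to each round via Proposition~\ref{prop:binary-decision} and take a union bound over rounds, together with the $O(\delta)$ failure from item~1. Using $\log(T/\delta) = O(\log\log(1/\epsilon) + \log(1/\delta))$, this gives $M_{\text{total}} = T\cdot M = O(\log(1/\epsilon)\log\log(1/\epsilon)\log(1/\delta))$, bounding the sum by the product as in Theorem~\ref{thm:convergence}.

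\textbf{Item 3 (qubit-experiments).} I would multiply $M_{\text{total}}$ by the number of physical qubits per shot. Read literally with $N_{\text{total}}=\Theta(1/\epsilon)$, this gives an extra $1/\epsilon$ factor rather than $\log(1/\epsilon)$. The claimed bound $O(\log^2(1/\epsilon)\cdots)$ therefore only follows if the per-shot qubit count is $O(\log(1/\epsilon))$.

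I would reconcile this by taking $L = \Theta(\log(1/\epsilon))$. That choice makes the undetected-error probability $O(p_X^{L+1}) = \mathrm{poly}(\epsilon)$, which is enough for the union bound. The $1/\epsilon$ resolution would then come from the bisection over $T$ rounds, in which $\theta_{\text{mid}}$ is refined, rather than from the qubit count alone.

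\textbf{Main obstacle.} The hardest step is making items 1 and 3 mutually consistent, and separately arguing that bisection below $1/N_{\text{eff}}$ still has $D_{\text{KL}} = \Theta(1)$. Lemma~\ref{lem:kl-divergence} gives $\epsilon_p = \Theta((N\Delta\omega)^2)$, which degrades once $\Delta\omega \ll 1/N$. I would first try to handle this by noting the paper's ``noiseless regime'' framing. If that fails, I would state item~3 with the factor $N_{\text{total}}$ kept explicit.
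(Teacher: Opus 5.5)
Your item~2 follows the paper's proof. The paper also takes $T=O(\log(1/\epsilon))$ from Theorem~\ref{thm:main} item~3 and Theorem~\ref{thm:convergence}, sets $M=O(\log(T/\delta))$ on the grounds that $D_{\text{KL}}=\Theta(1)$, and multiplies. It then obtains the qubit-experiment count by multiplying by $N_{\text{total}}=3(2L+1)$. It closes by choosing $\epsilon=c/(3(2L+1))$, so that $T=O(1)$ and $Q_{\text{total}}=O(N_{\text{total}}\log(1/\delta))$. Your Chernoff step guaranteeing $N_{\text{eff}}\ge N_{\text{total}}/2$ is an extra the paper skips by working in the ideal case $d_X^{(j)}=0$; it is harmless.

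The genuine gap is your item~3. The reconciliation you propose, $L=\Theta(\log(1/\epsilon))$ with the bisection supplying resolution below $1/N_{\text{eff}}$, cannot be carried out. The ``main obstacle'' you flag is fatal, not technical, for two reasons.

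\emph{Locally,} with $N_{\text{eff}}$ fixed the parity probability $\cos^2(N_{\text{eff}}\Delta\omega+S_{\text{eff}})$ is $\pi/N_{\text{eff}}$-periodic in $\omega$. So the search cannot start from an interval wider than $O(1/N_{\text{eff}})$. Lemma~\ref{lem:majority-vote-full} then gives $D_{\text{KL}}=\Theta\bigl((N_{\text{eff}}\Delta\omega)^4\bigr)$ once $N_{\text{eff}}\Delta\omega\ll 1$. Round $t$ therefore needs $M_t$ of order $(N_{\text{eff}}\Omega_t)^{-4}\log(T/\delta)$, and the total cost is polynomial in $1/(N_{\text{eff}}\epsilon)$, not polylogarithmic.

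\emph{Globally,} $Q_{\text{total}}$ counts signal queries, since each shot applies $e^{-i(\omega Z_k+\gamma_k X_k)}$ once per physical qubit. Because $\|e^{-i(\omega Z+\gamma X)}-e^{-i(\omega' Z+\gamma X)}\|\le|\omega-\omega'|$, a hybrid argument (purifying any adaptive measurements) shows the final states for $\omega$ and $\omega+3\epsilon$ are within $3\epsilon\,Q_{\text{total}}$ of each other. An estimator with $|\hat\omega-\omega|\le\epsilon$ and success probability $1-\delta\ge 2/3$ would distinguish them, which forces $Q_{\text{total}}\ge 1/(9\epsilon)$. So no choice of $L$, and no appeal to the ``noiseless regime'', yields precision $\epsilon$ with $Q_{\text{total}}=O(\log^2(1/\epsilon)\log\log(1/\epsilon)\log(1/\delta))$.

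The same observation restricts item~2. The bound $M=O(\log(T/\delta))$ in every round holds only while $\Delta\omega=\Theta(1/N_{\text{eff}})$, i.e.\ in the regime $\epsilon=\Theta(1/N_{\text{total}})$, $\Omega_0=\Theta(1/N_{\text{total}})$, $T=O(1)$. There the stated $M_{\text{total}}$ is only a loose upper bound on $O(\log(1/\delta))$, and this limitation is shared by the paper's proof.

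Your fallback of keeping $N_{\text{total}}$ explicit is the correct conclusion, and it is what the paper's argument actually delivers. Take $N_{\text{total}}=\Theta(1/\epsilon)$ as in your item~1; note this is incompatible with the $L=\Theta(\log(1/\epsilon))$ you chose for item~3. Then $Q_{\text{total}}=O(N_{\text{total}}\log(1/\delta))=O(\log(1/\delta)/\epsilon)$, which is Heisenberg-limited and consistent with the lower bound above. The $\log^2(1/\epsilon)$ form of item~3 is not supported by the paper's proof either, and you should not try to prove it.
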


\begin{proof}
By Theorem~\ref{thm:main} item 3 and following the resource analysis in Theorem~\ref{thm:convergence}, the binary search requires $T = O(\log(1/\epsilon))$ iterations. Each iteration uses $M = O(\log(T/\delta))$ measurements in the Heisenberg regime where $D_{\text{KL}} = \Theta(1)$. Total measurements are $M_{\text{total}} = T \times M = O(\log(1/\epsilon) \cdot \log\log(1/\epsilon) \cdot \log(1/\delta))$, following the same calculation as Theorem~\ref{thm:convergence} item 3. Multiplying by $N_{\text{total}} = 3(2L+1)$ qubits per measurement yields the total qubit-experiments. Setting $\epsilon = c/(3(2L+1))$ for constant $c$ makes $T = O(1)$, achieving Heisenberg-limited precision with resources $Q_{\text{total}} = O(N_{\text{total}} \log(1/\delta))$.
\end{proof}

%\begin{remark}[Achievability and Directions for Code Optimization]
Theorem~\ref{thm:main} and Corollary~\ref{cor:combined-resources} establish that Heisenberg-limited scaling is \emph{achievable} under combined transverse and longitudinal noise; the concatenated repetition code construction is not claimed to be the most resource-efficient encoding for this task. Optimal QEC codes designed specifically for quantum metrology~\cite{zhou2018achieving, zhou2018optimal, kessler2014quantum} could tighten the constant prefactors or extend the tolerable noise regime beyond the independent-block construction used here, while ancilla-free error correction codes~\cite{layden2019ancilla} may reduce the qubit overhead by eliminating dedicated ancilla qubits in the syndrome extraction circuit. More broadly, multi-parameter optimization frameworks~\cite{gorecki2020optimal} and QEC strategies that retain Heisenberg scaling under noise~\cite{dur2014improved} could identify codes better matched to the joint transverse-plus-longitudinal noise structure, potentially allowing beneficial inter-block stabilizer checks that exploit correlated error patterns. The results of this section therefore serve as an achievability bound; optimized code design tailored to the sensing Hamiltonian remains a natural direction for improving resource efficiency.
%\end{remark}

\section{Numerical Simulations}
\label{sec:numerical}

The theoretical analysis of the preceding sections establishes Heisenberg scaling in the asymptotic limit and proves that error-detection protocols achieve it under noise satisfying the device quality condition. This section characterizes finite-size behavior through three complementary numerical experiments. First, Bayesian phase estimation with bare GHZ probes under depolarizing noise provides a baseline, confirming the polylogarithmic overhead of Theorem~\ref{thm:resource} and quantifying how noise degrades scaling toward the SQL. Second, simulations of the bit-flip repetition code with Hamiltonian noise constitute the primary numerical contribution, showing that syndrome post-selection achieves near-Heisenberg scaling at noise levels where bare probes have degraded toward the SQL. Third, a combined protocol simulation validates Theorem~\ref{thm:main} under joint transverse noise and longitudinal inhomogeneities. Throughout this section, $\gamma$ denotes the per-qubit depolarizing rate (or the Hamiltonian transverse field strength, depending on the noise model); the correspondence between the two parameterizations is detailed in the discussion following the results tables.

\subsection{Methods}

We simulate Bayesian phase estimation for an unknown signal phase $\omega = 0.3$ (fixed for all simulations). The simulation implements a Bayesian variant of Algorithm~\ref{alg:info}, replacing the maximum likelihood step with full posterior computation over a discretized phase grid; the two approaches are asymptotically equivalent but the Bayesian formulation provides richer finite-sample information and natural uncertainty quantification. For each target precision $\epsilon$, the protocol sets a qubit budget $N = \lfloor 1/\epsilon \rfloor$ (bounded by $10 \leq N \leq 16{,}384$), discretizes the phase space into $2^m$ uniformly-spaced grid points $\omega_j \in [0, 2\pi)$ where $m$ is chosen per protocol to satisfy the Nyquist condition ($m = 16$ for bare GHZ, up to $m = 20$ for the combined protocol), initializes a uniform prior, and performs sequential Bayesian updates with log-space posterior computation for numerical stability (cf.\ Algorithm~\ref{alg:info}). The experiment budget is $K = 10{,}000$ for the bare-GHZ simulation and $K = 50{,}000$ for the error-detected and combined protocols, which require a larger budget to accommodate post-selection rejection overhead. The protocol terminates early when the circular phase error $|\hat{\omega} - \omega|_{\mathrm{circ}}$ falls below $1.2\epsilon$, consistent with the $|\hat{\phi} - \phi| \leq \epsilon$ criterion of Theorem~\ref{thm:resource}.

For the bare-GHZ simulation, each experiment selects a random probe size $n \in \{1, \ldots, N\}$ uniformly. The resulting parity outcome has likelihood
\begin{equation}
\label{eq:ghz-likelihood-sim}
P(+1 \mid \omega, n, \{\gamma_k\}) = \frac{1 + V_n \cos(2n\omega)}{2}, \qquad V_n = \prod_{k=1}^{n}(1 - \gamma_k),
\end{equation}
where $V_n$ is the depolarizing visibility and the factor $2n$ reflects GHZ phase amplification (Proposition~\ref{prop:cramer-rao}). Sampling $n$ uniformly from $\{1,\ldots,N\}$ probes the signal at multiple frequencies, resolving the periodicity ambiguity inherent to $\cos(2n\omega)$ without requiring the adaptive binary search of Algorithm~\ref{alg:bitflip}. No separate random measurement basis is needed: the full Bayesian posterior update extracts all available phase information from each probe size, unlike frequentist estimators that require randomized bases for worst-case guarantees.

Noise is modeled as a depolarizing channel with mean error rate $\gamma$ per qubit. For heterogeneous configurations, individual rates $\gamma_k \sim \mathcal{N}(\gamma, (\gamma h)^2)$ are clipped to $[0, 0.99]$, following the model of Proposition~\ref{prop:heterogeneous}.

Total resource cost is $T = \sum_{i=1}^{K_{\text{conv}}} n_i$, the cumulative number of qubits consumed across all experiments until convergence (corresponding to $N_{\text{total}}$ in Theorem~\ref{thm:resource}), where $K_{\text{conv}} \leq K$. This qubit-experiment count is the natural resource metric for comparing against the $T = O(\log(1/\epsilon)/\epsilon)$ scaling of Theorem~\ref{thm:resource}. Each configuration is evaluated over 60 logarithmically-spaced precision targets $\epsilon \in [10^{-4}, 10^{-1}]$ across 40 independent random seeds (seeds 2--41). For each seed and precision target, a deterministic per-$\epsilon$ random number generator is derived from the seed to ensure reproducibility. Convergence is checked every 100 experiments (every 100 accepted rounds for post-selection, every 10 updates for full-likelihood).

To characterize the error-detection protocols of Section~\ref{sec:fixed-omega}, we simulate Bayesian phase estimation with the $[[2L{+}1, 1]]$ bit-flip repetition code following Algorithm~\ref{alg:bitflip}. Each experiment applies the Hamiltonian noise model of Theorem~\ref{thm:bitflip} with evolution time multiplier $M$ sampled uniformly from $\{1, \ldots, M_{\max}\}$, where $M_{\max} = \lfloor 1/\epsilon \rfloor / N$ (bounded by $M_{\max} \leq 2^{m-2}/N$ to avoid phase aliasing). In \emph{post-selection mode}, runs with any detected error ($d > 0$) are discarded; the post-selected likelihood and effective phase $\phi_{\mathrm{eff}}$ are computed as specified in Theorem~\ref{thm:bitflip}. In \emph{full-likelihood mode}, all rounds contribute to the Bayesian update with syndrome-adjusted likelihoods that weight each outcome by its probability conditioned on the observed syndrome, following Remark~\ref{rem:rejection-vs-likelihood}. Full-likelihood mode extracts more information per experiment at the cost of computing per-syndrome likelihood functions. Total resource cost per experiment is $N \times M$ qubit-time units, with $M$ playing the same role as probe size $n$: varying $M$ probes the signal at multiple frequencies, resolving phase ambiguity. We test $L \in \{1, 2, 3\}$ ($N = 3, 5, 7$ qubits) at $\gamma \in \{1\%, 5\%, 10\%\}$, with 40 seeds per configuration in both inference modes.

To validate the combined protocol of Theorem~\ref{thm:main}, we extend the error-detected simulation to the concatenated code: three independent blocks of $N_{\mathrm{code}} = 2L{+}1$ repetition code qubits in a logical GHZ configuration ($N_{\mathrm{total}} = 3 N_{\mathrm{code}}$ physical qubits). Per-qubit longitudinal inhomogeneities $\epsilon_k \sim \mathcal{N}(0, \sigma_\epsilon^2)$ are drawn once per device realization and held fixed across all precision targets. Post-selection requires trivial syndrome in all three blocks; the accepted-shot likelihood and effective phase follow Theorem~\ref{thm:main}. Both inference modes are tested as above. We test $L \in \{1, 2\}$ at $\gamma \in \{1\%, 5\%, 10\%\}$ and $\sigma_\epsilon \in \{0.01, 0.05\}$, with 40 seeds and up to $K = 50{,}000$ experiments per precision target.

\subsection{Results}

\emph{Noise model note.} The bare-GHZ simulations below use a depolarizing channel (Equation~\eqref{eq:ghz-likelihood-sim}), while the error-detected and combined simulations implement the Hamiltonian noise model of Theorem~\ref{thm:bitflip}. Cross-model comparison of scaling exponents is quantitatively meaningful only at low noise ($\gamma = 1\%$); at higher noise levels, the two models are qualitatively but not quantitatively equivalent, and comparisons should be understood as demonstrating that error detection recovers near-Heisenberg scaling under both models independently. A detailed correspondence between the noise parameters appears after the results tables.

All GHZ configurations exhibit power-law scaling $T \propto \epsilon^{-\alpha}$ across three orders of magnitude in target precision. The scaling exponent $\alpha$ is obtained by fitting $\log T = -\alpha \log \epsilon + c$ via ordinary least squares (OLS) on converged data points for each of 40 independent seeds, then computing the mean and standard error of the mean (SEM) across seeds. The reported SEM captures seed-to-seed variability but not systematic uncertainty from the fitting procedure; weighted least squares (WLS), which gives more weight to smaller $\epsilon$ where data density is lower, can shift fitted $\alpha$ values by up to $0.1\text{--}0.3$ depending on the configuration. We report OLS throughout for simplicity and uniformity, noting that this systematic uncertainty should be considered alongside the reported SEM when interpreting differences between configurations. The bare-GHZ results are summarized in Table~\ref{tab:scaling-results}.

\begin{table}[t]
\centering
\caption{Scaling exponents for bare GHZ Bayesian phase estimation with depolarizing noise. The exponent $\alpha$ is obtained by fitting $\log T = -\alpha \log \epsilon + c$ via OLS on converged data, then averaging across 40 independent seeds (SEM shown). Converged (\%) is the fraction of the 60 precision targets ($\epsilon \in [10^{-4}, 10^{-1}]$) that reached the convergence criterion $|\hat{\omega} - \omega|_{\mathrm{circ}} < 1.2\epsilon$. One heterogeneous configuration ($h = 0.3$, i.e., qubit rates drawn as $\gamma_k \sim \mathcal{N}(\gamma, (0.3\gamma)^2)$) is included for comparison; remaining configurations are homogeneous.}
\label{tab:scaling-results}
\small
\begin{tabular*}{\columnwidth}{@{\extracolsep{\fill}}lccc@{}}
\toprule
\textbf{Configuration} & $\boldsymbol{\alpha}$ \textbf{(mean $\pm$ SEM)} & \textbf{Converged (\%)} & \textbf{Seeds} \\
\midrule
Noiseless & $1.15 \pm 0.02$ & 99.4 & 40 \\
$\gamma = 1\%$ & $1.66 \pm 0.02$ & 92.4 & 40 \\
$\gamma = 5\%$ & $1.87 \pm 0.02$ & 76.8 & 40 \\
$\gamma = 5\% \pm 30\%$ (het) & $1.89 \pm 0.02$ & 76.4 & 40 \\
$\gamma = 10\%$ & $1.95 \pm 0.03$ & 67.5 & 40 \\
$\gamma = 15\%$ & $1.97 \pm 0.03$ & 62.6 & 40 \\
$\gamma = 20\%$ & $1.90 \pm 0.03$ & 59.5 & 40 \\
\bottomrule
\end{tabular*}
\end{table}

\begin{table}[t]
\centering
\caption{Scaling exponents for error-detected Bayesian phase estimation using the $[[2L{+}1, 1]]$ bit-flip repetition code (Theorem~\ref{thm:bitflip}). Two inference modes are compared: \emph{post-selection} discards rounds with detected errors ($d > 0$), while \emph{full-likelihood} retains all rounds with syndrome-adjusted likelihoods (Remark~\ref{rem:rejection-vs-likelihood}). Converged (\%) is the convergence rate (see Table~\ref{tab:scaling-results}). Accept.\ (\%) is the fraction of experiments passing post-selection (not applicable to full-likelihood). Each configuration uses 40 independent seeds with 60 precision targets $\epsilon \in [10^{-4}, 10^{-1}]$ and up to 50,000 experiments per target.}
\label{tab:error-detected-results}
\small
\begin{tabular*}{\columnwidth}{@{\extracolsep{\fill}}lccccc@{}}
\toprule
& \multicolumn{2}{c}{\textbf{Post-selection}} & \multicolumn{2}{c}{\textbf{Full-likelihood}} \\
\cmidrule(lr){2-3} \cmidrule(lr){4-5}
\textbf{Configuration} & $\boldsymbol{\alpha}$ & \textbf{Converged (\%)} & $\boldsymbol{\alpha}$ & \textbf{Converged (\%)} & \textbf{Accept.\ (\%)} \\
\midrule
$L{=}1$, $\gamma = 1\%$ & $1.12 \pm 0.02$ & 59 & $1.13 \pm 0.02$ & 63 & $>$99 \\
$L{=}1$, $\gamma = 5\%$ & $0.98 \pm 0.02$ & 77 & $0.94 \pm 0.01$ & 82 & 96 \\
$L{=}1$, $\gamma = 10\%$ & $1.06 \pm 0.01$ & 90 & $1.04 \pm 0.01$ & 100 & 86 \\
\midrule
$L{=}2$, $\gamma = 1\%$ & $1.23 \pm 0.02$ & 61 & $1.19 \pm 0.02$ & 64 & $>$99 \\
$L{=}2$, $\gamma = 5\%$ & $1.08 \pm 0.01$ & 76 & $0.99 \pm 0.01$ & 79 & 94 \\
$L{=}2$, $\gamma = 10\%$ & $1.02 \pm 0.01$ & 88 & $1.00 \pm 0.01$ & 91 & 79 \\
\midrule
$L{=}3$, $\gamma = 1\%$ & $1.11 \pm 0.02$ & 69 & $1.09 \pm 0.01$ & 95 & $>$99 \\
$L{=}3$, $\gamma = 5\%$ & $1.07 \pm 0.01$ & 100 & $1.03 \pm 0.01$ & 100 & 91 \\
$L{=}3$, $\gamma = 10\%$ & $1.00 \pm 0.004$ & 100 & $1.01 \pm 0.003$ & 100 & 72 \\
\bottomrule
\end{tabular*}
\end{table}

The noiseless case achieves $\alpha = 1.15 \pm 0.02$, consistent with Theorem~\ref{thm:resource}. The gap from the theoretical $\alpha = 1$ reflects the polylogarithmic overhead $O(\log(1/\epsilon))$: restricting the fit to $\epsilon < 10^{-2}$ yields $\alpha = 0.98 \pm 0.02$, consistent with $\alpha = 1$ and confirming that the overhead vanishes as the fit range excludes finite-size effects. Depolarizing noise degrades the exponent toward the SQL (Table~\ref{tab:scaling-results}), with $\alpha$ approaching 2 for $\gamma \geq 5\%$ due to exponential visibility decay $V_n = (1-\gamma)^n$. At $\gamma = 20\%$, the fitted exponent ($\alpha = 1.90$) is slightly lower than at $\gamma = 10\%$ ($\alpha = 1.95$), despite the expectation that higher noise should push $\alpha$ closer to the SQL. This non-monotonicity is an artifact of survivorship bias: at $\gamma = 20\%$, only 59.5\% of precision targets converge within the experiment budget, and the converging runs are disproportionately those at larger $\epsilon$ (lower precision), where scaling exponents are naturally closer to Heisenberg. The fitted $\alpha$ at $\gamma \geq 15\%$ should therefore be interpreted with caution, as the selective convergence biases the sample toward easier targets. Noise heterogeneity has a small effect ($<5\%$ change in $\alpha$ at matched $\gamma$), consistent with Proposition~\ref{prop:heterogeneous}. These bare-GHZ results establish the baseline against which error detection is measured.

\begin{figure}[t]
    \centering
    \includegraphics[width=0.95\textwidth]{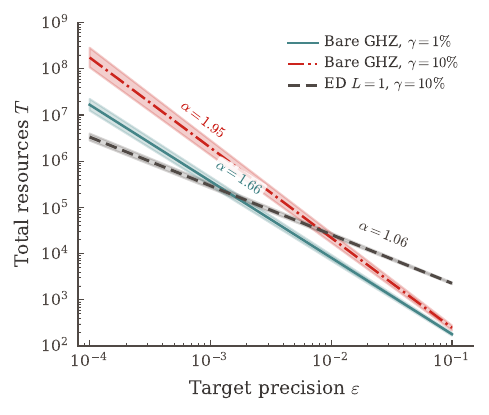}
    \caption{Error-detected protocol achieves near-Heisenberg scaling where bare GHZ degrades toward the SQL. Resource scaling $T$ (total qubits consumed) vs.\ target precision $\epsilon$ for bare GHZ (solid) at depolarizing noise rates $\gamma = 1\%$ and $10\%$, alongside error-detected post-selection (dashed, $L{=}1$, $N{=}3$ qubits) at $\gamma = 10\%$. The scaling exponent $\alpha$ is defined by $T \propto \epsilon^{-\alpha}$, where $\alpha = 1$ is the Heisenberg limit and $\alpha = 2$ is the SQL. At $\gamma = 10\%$, bare GHZ scaling approaches the SQL ($\alpha = 1.95$), while post-selection on the trivial syndrome reduces the exponent to $\alpha = 1.06$, recovering near-Heisenberg scaling. At $\gamma = 1\%$, bare GHZ achieves $\alpha = 1.66$; all noise levels tested show similar improvement under error detection (Table~\ref{tab:error-detected-results}). Lines show mean power-law fits; shaded bands indicate 95\% confidence intervals on the scaling exponent $\alpha$ (slope uncertainty only) over 40 independent seeds.}
    \label{fig:error-detection}
\end{figure}

The error-detected simulations demonstrate the central result of this paper (Table~\ref{tab:error-detected-results}, Figures~\ref{fig:error-detection} and~\ref{fig:inference-comparison}): syndrome-based inference achieves near-Heisenberg scaling at noise levels where bare GHZ probes have degraded toward the SQL. In post-selection mode, the 3-qubit code ($L = 1$) reduces the exponent from $\alpha = 1.66$ (bare GHZ, $\gamma = 1\%$) to $\alpha = 1.12 \pm 0.02$, and from $\alpha = 1.87\text{--}1.95$ (bare GHZ, $\gamma = 5\text{--}10\%$) to $\alpha = 0.98\text{--}1.06$. The 7-qubit code ($L = 3$) at $\gamma = 10\%$ reaches $\alpha = 1.00 \pm 0.004$ with 100\% convergence, the tightest scaling observed in any post-selection configuration.

Full-likelihood inference improves both scaling and convergence over post-selection by extracting phase information from every round, including those with detected errors. At $L{=}1$, $\gamma{=}10\%$, full-likelihood achieves $\alpha = 1.04$ (vs.\ $1.06$ post-selection) with 100\% convergence (vs.\ 90\%; Figure~\ref{fig:inference-comparison}). The improvement is most pronounced at moderate noise: at $L{=}2$, $\gamma{=}5\%$, full-likelihood reaches $\alpha = 0.99$ compared to $1.08$ under post-selection. At low noise ($\gamma = 1\%$), the two modes yield similar exponents, as most rounds pass post-selection and full-likelihood gains little additional information.

Several full-likelihood configurations yield fitted $\alpha$ slightly below 1 (e.g., $\alpha = 0.94 \pm 0.01$ at $L{=}1$, $\gamma = 5\%$). Since the Heisenberg limit $\alpha = 1$ is a fundamental lower bound, these values reflect systematic bias in the OLS fitting procedure rather than super-Heisenberg scaling: the reported standard errors capture seed-to-seed variance but not the systematic uncertainty inherent in fitting a power law over the finite precision range $\epsilon \in [10^{-4}, 10^{-1}]$. The bias is most pronounced in full-likelihood mode at moderate noise, where per-round information content varies across the $\epsilon$ range, introducing curvature in the $\log T$ vs.\ $\log \epsilon$ relationship. All fitted exponents remain within $0.10$ of the Heisenberg limit and should be interpreted as consistent with $\alpha = 1$.

Convergence rates for the error-detected protocol increase with noise in both modes (e.g., 59\% at $\gamma = 1\%$ to 90\% at $\gamma = 10\%$ for $L = 1$ post-selection), the opposite of the bare-GHZ trend (Table~\ref{tab:scaling-results}). This counter-intuitive pattern reflects the interplay between phase amplification and the finite experiment budget. At low noise, the effective phase $\phi_{\mathrm{eff}} \propto N_{\mathrm{code}} \cdot M \cdot \omega$ can reach thousands of radians at the smallest $\epsilon$ targets (where $M_{\max} \gg 1$), creating severe multi-modality in the Bayesian posterior that requires many experiments to resolve. At higher noise, the reduced acceptance rate consumes more of the experiment budget per posterior update, but the resulting post-selected likelihood retains high signal quality; the net effect is that the algorithm converges on more $\epsilon$ targets within the $K = 50{,}000$ budget. A detailed analysis of this convergence behavior is left to future work. Even the minimal 3-qubit code ($L = 1$) achieves $\alpha \leq 1.12$ at $\gamma = 1\text{--}10\%$ with 86--100\% acceptance rate in these simulations.

\begin{figure}[t]
    \centering
    \includegraphics[width=0.95\textwidth]{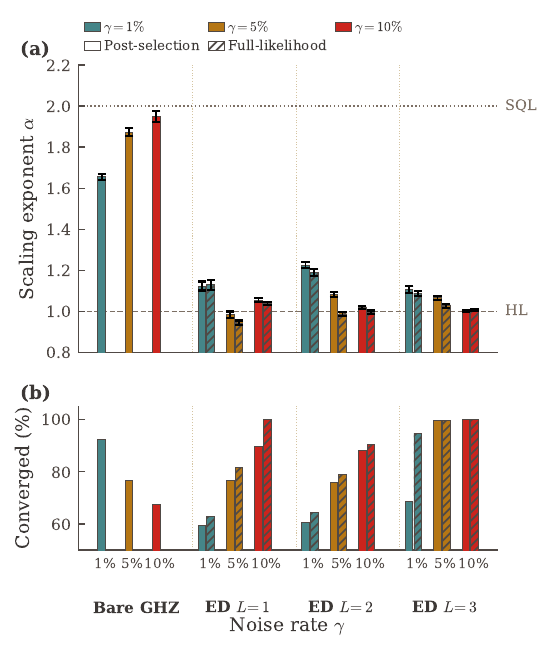}
    \caption{(a)~Scaling exponent $\alpha$ (where $T \propto \epsilon^{-\alpha}$) and (b)~convergence rate (fraction of precision targets reaching convergence within the experiment budget) across protocols and inference modes. Bar color encodes noise rate ($\gamma = 1\%, 5\%, 10\%$). Solid bars denote post-selection (retaining only error-free rounds); hatched bars denote full-likelihood (using all rounds with syndrome-adjusted likelihoods). Each group shows bare GHZ (no error detection, single bar per noise level) or the error-detected protocol at code distances $L = 1, 2, 3$. Both inference modes achieve near-Heisenberg scaling at all tested noise levels. Full-likelihood consistently improves both scaling exponents and convergence rates (e.g., $\alpha = 1.04$ at 100\% convergence vs.\ $\alpha = 1.06$ at 90\% for post-selection at $L{=}1$, $\gamma{=}10\%$) by utilizing all measurement data including error-flagged rounds. Fitted exponents slightly below the Heisenberg limit ($\alpha < 1$) for some full-likelihood configurations are finite-range OLS fitting artifacts, not violations of the theoretical bound. Error bars represent standard error of the mean over 40 seeds.}
    \label{fig:inference-comparison}
\end{figure}

The combined protocol simulations validate Theorem~\ref{thm:main} under joint transverse noise and longitudinal inhomogeneities (Table~\ref{tab:combined-results}, Figure~\ref{fig:combined-protocol}). In post-selection mode with $L = 1$ ($N_{\mathrm{total}} = 9$) and $\sigma_\epsilon = 0.01$, the combined protocol achieves $\alpha = 1.12\text{--}1.13$ across $\gamma = 1\text{--}10\%$. Acceptance rates match the theoretical prediction: at $\gamma = 5\%$, the combined acceptance of $88.6\%$ agrees with the single-block rate cubed ($96.0\%^3 \approx 88.5\%$), confirming that the three blocks operate independently under post-selection. Full-likelihood mode improves both scaling and convergence: at $L = 1$, $\gamma = 5\%$, full-likelihood reaches $\alpha = 0.92 \pm 0.01$ with 100\% convergence (vs.\ $\alpha = 1.12$ at 93\% for post-selection). The $L = 2$ code ($N_{\mathrm{total}} = 15$) achieves $\alpha = 1.13\text{--}1.19$ in post-selection and $\alpha = 0.96\text{--}1.09$ in full-likelihood across all tested noise levels, with full-likelihood at $\gamma = 5\%$ reaching $\alpha = 0.96 \pm 0.004$ with 100\% convergence. Increasing $\sigma_\epsilon$ from $0.01$ to $0.05$ has a minor effect on most configurations ($<0.05$ change in $\alpha$ at $L = 1$), though the effect is larger at $L = 2$, $\gamma = 1\%$ where low convergence rates ($\sim$62\%) increase the variance of the fitted exponent.

\begin{table}[t]
\centering
\caption{Scaling exponents for the combined protocol (Theorem~\ref{thm:main}): logical GHZ state across three $[[2L{+}1, 1]]$ repetition code blocks, under joint transverse noise ($\gamma$) and longitudinal inhomogeneities ($\sigma_\epsilon$). Both inference modes and column definitions follow Table~\ref{tab:error-detected-results}. Each configuration uses 40 independent seeds with 60 precision targets $\epsilon \in [10^{-4}, 10^{-1}]$ and up to 50,000 experiments per target. Results for $\sigma_\epsilon = 0.05$ are qualitatively similar ($<0.05$ change in $\alpha$ at $L = 1$) and discussed in the text.}
\label{tab:combined-results}
\small
\begin{tabular*}{\columnwidth}{@{\extracolsep{\fill}}lccccc@{}}
\toprule
& \multicolumn{2}{c}{\textbf{Post-selection}} & \multicolumn{2}{c}{\textbf{Full-likelihood}} \\
\cmidrule(lr){2-3} \cmidrule(lr){4-5}
\textbf{Configuration} & $\boldsymbol{\alpha}$ & \textbf{Converged (\%)} & $\boldsymbol{\alpha}$ & \textbf{Converged (\%)} & \textbf{Accept.\ (\%)} \\
\midrule
$L{=}1$, $\gamma = 1\%$, $\sigma_\epsilon = 0.01$ & $1.12 \pm 0.02$ & 70 & $1.05 \pm 0.01$ & 96 & 99.5 \\
$L{=}1$, $\gamma = 5\%$, $\sigma_\epsilon = 0.01$ & $1.12 \pm 0.01$ & 93 & $0.92 \pm 0.01$ & 100 & 88.6 \\
$L{=}1$, $\gamma = 10\%$, $\sigma_\epsilon = 0.01$ & $1.13 \pm 0.01$ & 100 & $0.98 \pm 0.002$ & 100 & 64.1 \\
\midrule
$L{=}2$, $\gamma = 1\%$, $\sigma_\epsilon = 0.01$ & $1.19 \pm 0.02$ & 62 & $1.09 \pm 0.02$ & 87 & 99.2 \\
$L{=}2$, $\gamma = 5\%$, $\sigma_\epsilon = 0.01$ & $1.15 \pm 0.01$ & 88 & $0.96 \pm 0.004$ & 100 & 81.8 \\
$L{=}2$, $\gamma = 10\%$, $\sigma_\epsilon = 0.01$ & $1.13 \pm 0.01$ & 99 & $0.99 \pm 0.001$ & 100 & 48.8 \\
\bottomrule
\end{tabular*}
\end{table}

\begin{figure}[t]
    \centering
    \includegraphics[width=0.95\textwidth]{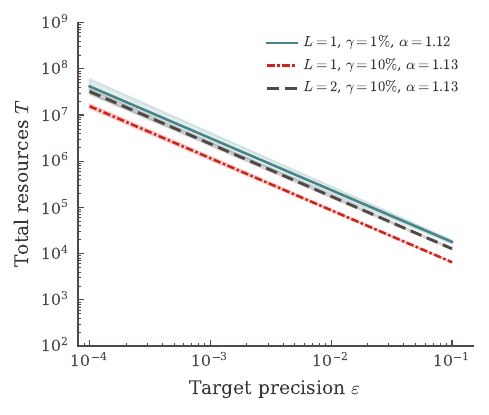}
    \caption{Combined protocol under joint transverse noise ($\gamma$) and longitudinal inhomogeneities ($\sigma_\epsilon = 0.01$). Resource scaling $T$ (total qubits consumed) vs.\ target precision $\epsilon$ for the concatenated code, which distributes a logical GHZ state across three $[[2L{+}1, 1]]$ repetition code blocks to simultaneously suppress both noise types. Results shown for $L = 1$ ($N_{\mathrm{total}} = 9$ qubits, solid) at $\gamma = 1\%$ and $10\%$, and $L = 2$ ($N_{\mathrm{total}} = 15$, dashed) at $\gamma = 10\%$. Post-selection retains only rounds where no error is detected in any of the three blocks. All configurations maintain near-Heisenberg scaling ($\alpha \leq 1.19$, Table~\ref{tab:combined-results}) even under joint noise, with the $L = 2$ code adding modest resource overhead at the same scaling exponent. Lines show mean power-law fits; shaded bands indicate 95\% confidence intervals on the scaling exponent $\alpha$ (slope uncertainty only) over 40 independent seeds.}
    \label{fig:combined-protocol}
\end{figure}

The bare-GHZ and error-detected simulations use different noise models: the bare-GHZ simulation uses a depolarizing channel (Equation~\eqref{eq:ghz-likelihood-sim}), while the error-detected and combined simulations implement the Hamiltonian noise model of Theorem~\ref{thm:bitflip}. In the weak-noise regime $\gamma_k/\omega = o(1)$, the Hamiltonian model yields per-qubit bit-flip probability $p_k = \sin^2(\omega)\gamma_k^2/\omega^2 + O(\gamma_k^4/\omega^4)$, so the depolarizing rate $\gamma$ in Table~\ref{tab:scaling-results} corresponds to Hamiltonian transverse field $\gamma_{\mathrm{Ham}} \approx \gamma^{1/2}\omega/\sin(\omega)$. At $\omega = 0.3$, this gives $\gamma_{\mathrm{Ham}} \approx 0.10$ for $\gamma = 1\%$ (where $\gamma_{\mathrm{Ham}}/\omega \approx 0.33$ and the weak-noise approximation is reasonable), $\gamma_{\mathrm{Ham}} \approx 0.23$ for $\gamma = 5\%$ ($\gamma_{\mathrm{Ham}}/\omega \approx 0.77$, marginal), and $\gamma_{\mathrm{Ham}} \approx 0.32$ for $\gamma = 10\%$ ($\gamma_{\mathrm{Ham}}/\omega > 1$, outside the weak-noise regime). Consequently, the cross-model comparison of scaling exponents between Tables~\ref{tab:scaling-results} and~\ref{tab:error-detected-results} is quantitatively meaningful only at $\gamma = 1\%$; at higher noise levels, the two noise models are qualitatively but not quantitatively equivalent, and the comparison should be understood as demonstrating that error detection recovers near-Heisenberg scaling under both models independently.

\section{Limitations and Extensions}
\label{sec:discussion}

While our protocols successfully handle transverse X-fields and longitudinal Z-field inhomogeneities, the case of Y-field components ($\chi_k \neq 0$) presents fundamental challenges that merit discussion. The difficulty arises from the algebraic structure of Pauli matrices: since $Y = iXZ$, a Y-error creates the same syndrome pattern as simultaneous X and Z errors. This syndrome ambiguity makes it impossible to distinguish between a single Y-error on qubit $k$, simultaneous X and Z errors on the same qubit, or a combination of X-error with Z-field inhomogeneity.

The mathematical origin of this challenge lies in the non-commutativity of Pauli operators. Unlike X and Z rotations which can be separated through appropriate syndrome measurements, Y-rotations fundamentally couple amplitude and phase quadratures: $e^{-i\chi_k Y_k} = e^{-i\chi_k(iXZ)} \neq e^{-i\chi_k X_k} e^{-i\chi_k Z_k}$. This coupling cannot be cleanly resolved through stabilizer measurements alone, as any syndrome pattern consistent with Y-errors admits multiple interpretations that lead to different phase corrections. This limitation connects to the Hamiltonian-not-in-Lindblad-span (HNLS) condition identified by Zhou et al.~\cite{zhou2018achieving}: when all three Pauli error channels ($X$, $Y$, $Z$) are present, the signal Hamiltonian $H_s \propto Z$ lies within the span of the effective Lindblad operators, the regime where Heisenberg scaling is fundamentally unachievable with standard QEC under Markovian noise. Our protocols circumvent this no-go by restricting to the X-noise subspace where HNLS is satisfied, i.e., where the signal generator $Z$ is linearly independent of the noise generators.

Several promising approaches could partially address this limitation. Decoherence-free subspaces offer one avenue by encoding in subspaces naturally immune to Y-rotations \cite{lidar2003decoherence,zhang2023quantum}. Dynamical decoupling provides another strategy, using carefully designed pulse sequences to average out Y-components while preserving the signal \cite{souza2011robust}. Recent theoretical advances, such as platonic sequences based on discrete symmetry groups \cite{zhen2025platonic}, show particular promise for complex multi-axis noise. Additionally, higher-distance quantum error correction codes \cite{acharya2024quantum} and continuous monitoring protocols \cite{livingston2022experimental} may offer paths toward more robust sensing in the presence of Y-fields.

A separate limitation is the device quality condition $N\sigma_\epsilon^2 = o(1)$, which requires noise to decrease with system size for Heisenberg scaling to be maintained. This condition distinguishes our results from the no-go theorem of Demkowicz-Dobrza\'{n}ski et al.~\cite{demkowicz2012elusive}, which establishes SQL as the fundamental limit under fixed-rate Markovian decoherence. Our protocols achieve Heisenberg scaling precisely because the noise structure permits it: in calibrated quantum devices, the relevant noise parameters (field inhomogeneity, transverse field strength) can be reduced through improved hardware, making $N\sigma_\epsilon^2 = o(1)$ a realistic engineering constraint rather than a fundamental obstruction. Notably, tensor product (SQL) protocols are naturally robust to inhomogeneities, with only a subdominant $\sigma_\epsilon^2/N$ correction to the estimation variance (Appendix~\ref{app:direct-parity}), whereas Heisenberg-scaling protocols require the more stringent $\sigma_\epsilon = o(1/\sqrt{N})$ condition.

On the practical side, the device requirements for our protocols are modest: GHZ state preparation ($O(N)$ CNOT gates), single-qubit $R_Z$ rotations (native on most platforms), and parity measurements (via stabilizer circuits or destructive measurement with classical post-processing). No quantum Fourier transforms or multi-qubit controlled operations beyond GHZ preparation are needed. The noise condition $N\sigma_\epsilon^2 = o(1)$ translates to concrete thresholds: for $N = 50$ qubits, one needs $\sigma_\epsilon \lesssim 0.02$, i.e., field inhomogeneities of roughly 2\% relative to the signal. Note that $\sigma_\epsilon$ parameterizes qubit-to-qubit variation in the sensed field, not gate error rates; relating this condition to specific hardware platforms requires a device-level noise model beyond the scope of the present analysis.

Several additional directions merit investigation. Adaptive shot allocation (varying $M$ per iteration rather than using a fixed budget) could reduce total resource consumption. The tangent-cotangent amplitude structure (Proposition~\ref{prop:tangent-cotangent-full}, Appendix~\ref{app:binary-search-proofs}) suggests that different syndrome subsets carry complementary information about individual qubit phases, potentially enabling multi-parameter estimation with Heisenberg scaling. Extending the analysis to time-varying signals $\omega(t)$ with bounded derivatives, and refining the asymptotic bounds for small system sizes ($N < 10$), are natural next steps toward near-term experimental demonstrations.

Despite these open questions, our results demonstrate that the Encoded QSP framework achieves metrological advantage even with restricted error models. Beyond metrology, the core principle of Encoded QSP (using syndrome measurement as a signal-processing primitive to implement nonlinear transformations on a logical qubit) may find applications in other domains where quantum signal processing intersects with error correction, including quantum simulation and quantum machine learning. Achieving full Heisenberg scaling with arbitrary Y-field components remains an important open problem at the intersection of quantum error correction and quantum signal processing.

\begin{table}[t]
\caption{Summary of protocols and baselines. In the QFI column, $N$ refers to the qubit count listed in the second column for each protocol. Exact prefactors and noise damping factors appear in the referenced theorems. ``Marg.'' indicates Z-inhomogeneity handling via Bayesian marginalization. All Heisenberg-scaling protocols require the device quality condition $N\sigma_\epsilon^2 = o(1)$ when Z-inhomogeneities are present. $^*$$M$ denotes the number of sequential signal applications per qubit; see Theorem~\ref{thm:sequential-binary-search}. The sequential approach achieves phase amplification through repeated signal application rather than entanglement.}
\label{tab:protocol-summary}
\small
\begin{tabular*}{\columnwidth}{@{\extracolsep{\fill}}lcccc@{}}
\toprule
\textbf{Protocol} & \textbf{Qubits} & \textbf{X-Suppr.} & \textbf{Z-Inhomog.} & \textbf{QFI} \\
\midrule
Bit-flip (Sec.~\ref{sec:fixed-omega}) & $2L{+}1$ & $O(p^{L+1})$ & --- & $\Theta(N^2)$ \\
Binary search (Sec.~\ref{sec:binary-search}) & $N$ & --- & Marg. & $\Theta(N^2)$ \\
Combined (Sec.~\ref{sec:combined}) & $3(2L{+}1)$ & $O(p^{L+1})$ & Marg. & $\Theta(N^2)$ \\
Sequential (Sec.~\ref{subsec:sequential-binary-search}) & $2L{+}1$ & $O(p^{L+1})$ & --- & $\Theta(NM^2)^*$ \\
\midrule
Z-syndrome (Sec.~\ref{sec:sql-barrier}) & $N$ & --- & --- & $\Theta(N)$ \\
Tensor product (App.~\ref{app:direct-parity}) & $N$ & --- & Marg. & $\Theta(N)$ \\
\bottomrule
\end{tabular*}
\end{table}

\section{Conclusion}

We have introduced encoded quantum signal processing (Encoded QSP), a framework that extends the QSP paradigm to error-detecting codes, and demonstrated its power through Heisenberg-limited quantum metrology under realistic noise. The central idea is that encoding sensor qubits into a single logical qubit via a repetition code reduces multi-qubit metrology to an effective single-qubit signal processing problem, where syndrome measurement implements nonlinear signal transformations (Definition~\ref{def:encoded-qsp}, Theorem~\ref{thm:syndrome-rotation}).

A key structural result is the SQL barrier (Theorem~\ref{thm:sql-barrier}): the simplest Encoded QSP instantiation (product-state input, single signal application) yields only SQL scaling ($F_{\mathrm{total}} = 4N$), despite per-syndrome phase amplification to $F_j = 4k^2$. The barrier arises from binomial fluctuations in the amplification factor $k = N - 2j$ at the optimal operating point. Four protocols overcome this barrier through different resources: three use entangled (GHZ) logical states with collective X-parity measurements to preserve the coherent $N$-fold phase enhancement, and a fourth (sequential binary search, Theorem~\ref{thm:sequential-binary-search}) achieves Heisenberg scaling via product states with repeated signal applications; Table~\ref{tab:protocol-summary} summarizes their key properties.

The common thread across all four protocols, and the essence of Encoded QSP, is syndrome-dependent phase extraction: error syndrome measurements provide classical side information that enables precise phase recovery despite quantum errors. Rather than applying recovery operations that would erase the signal, we extract the phase directly from syndrome-conditioned parity measurements. Combined with information-theoretic phase estimation, these protocols achieve precision $\epsilon$ with $O(1/\epsilon \cdot \operatorname{polylog}(1/\epsilon))$ total resources, establishing the Heisenberg limit $\alpha = 1$ via optimal quantum Fisher information scaling (Proposition~\ref{prop:cramer-rao}). Numerical simulations characterize the finite-size behavior: noiseless GHZ Bayesian estimation exhibits near-Heisenberg scaling consistent with $\alpha = 1$, while depolarizing noise degrades bare probes toward the SQL. Simulations of the bit-flip repetition code with Hamiltonian noise show that syndrome post-selection achieves near-Heisenberg scaling at noise levels ($\gamma = 1\text{--}10\%$) where bare probes approach the SQL, and the full concatenated protocol maintains this scaling under joint transverse noise and longitudinal inhomogeneities (Section~\ref{sec:numerical}).

As discussed in Section~\ref{sec:discussion}, these results are consistent with the no-go theorem of Demkowicz-Dobrza\'{n}ski et al.~\cite{demkowicz2012elusive}: our protocols achieve Heisenberg scaling under the device quality condition $N\sigma_\epsilon^2 = o(1)$, which falls outside the fixed-rate Markovian decoherence regime where SQL is fundamental.

The Encoded QSP framework opens several directions for future work. The broadened definition (Definition~\ref{def:encoded-qsp}) admits codes beyond the repetition code studied here; characterizing the class of achievable signal transformations for different code families, and its relationship to the polynomial universality of GQSP~\cite{motlagh2023generalized}, is a natural theoretical question. On the applications side, extending these methods to codes that handle Y-errors ($\chi_k \neq 0$), where syndrome ambiguity between signal and noise remains an open challenge, and to continuous variable systems are important next steps toward broader experimental deployment.
\begin{acknowledgments}

COM acknowledges funding from the Laboratory Directed Research and Development Program and Mathematics for Artificial Reasoning for Scientific Discovery investment at the Pacific Northwest National Laboratory, a multiprogram national laboratory operated by Battelle for the U.S. Department of Energy under Contract DE-AC05-76RLO1830. NW and COM were funded by grants from the US Department of Energy, Office of Science, National Quantum Information Science Research Centers, Co-Design Center for Quantum Advantage under contract number DE-SC0012704.  NW acknowledges funding from the NSERC Quantum Science Consortium.  NW and COM acknowledge the support of the ``Co-design for Fundamental Physics in the Fault-Tolerant Era'' workshop hosted by IQuS at the University of Washington, where much of this research was conducted.

\end{acknowledgments}

\section*{Data and Code Availability}

The simulation code and data supporting the numerical results in this paper are available at \url{https://github.com/cmortiz/heisenberg-qed-metrology}. The repository includes all simulation scripts, analysis code, and raw CSV data files for the 40-seed runs reported in Tables~\ref{tab:scaling-results}--\ref{tab:combined-results}.

\bibliographystyle{unsrt}
\bibliography{ref}

\begin{thebibliography}{10}

\bibitem{gottesman1997stabilizer}
Daniel Gottesman.
\newblock {\em Stabilizer codes and quantum error correction}.
\newblock California Institute of Technology, 1997.

\bibitem{low2017optimal}
Guang~Hao Low and Isaac~L Chuang.
\newblock Optimal hamiltonian simulation by quantum signal processing.
\newblock {\em Physical review letters}, 118(1):010501, 2017.

\bibitem{gilyen2019quantum}
Andr{\'a}s Gily{\'e}n, Yuan Su, Guang~Hao Low, and Nathan Wiebe.
\newblock Quantum singular value transformation and beyond: exponential improvements for quantum matrix arithmetics.
\newblock In {\em Proceedings of the 51st Annual ACM SIGACT Symposium on Theory of Computing}, pages 193--204, 2019.

\bibitem{giovannetti2004quantum}
Vittorio Giovannetti, Seth Lloyd, and Lorenzo Maccone.
\newblock Quantum-enhanced measurements: beating the standard quantum limit.
\newblock {\em Science}, 306(5700):1330--1336, 2004.

\bibitem{giovannetti2006quantum}
Vittorio Giovannetti, Seth Lloyd, and Lorenzo Maccone.
\newblock Quantum metrology.
\newblock {\em Physical Review Letters}, 96(1):010401, 2006.

\bibitem{kitagawa1993squeezed}
Masahiro Kitagawa and Masahito Ueda.
\newblock Squeezed spin states.
\newblock {\em Physical Review A}, 47(6):5138--5143, 1993.

\bibitem{gerry2000heisenberg}
Christopher~C Gerry.
\newblock Heisenberg-limit interferometry with four-wave mixers operating in a nonlinear regime.
\newblock {\em Physical Review A}, 61(4):043811, 2000.

\bibitem{zhou2021error}
Sisi Zhou.
\newblock {\em Error-corrected quantum metrology}.
\newblock PhD thesis, Yale University, 2021.

\bibitem{kapourniotis2019fault}
Theodoros Kapourniotis and Animesh Datta.
\newblock Fault-tolerant quantum metrology.
\newblock {\em Physical Review A}, 100(2):022335, 2019.

\bibitem{zhou2018achieving}
Sisi Zhou, Mengzhen Zhang, John Preskill, and Liang Jiang.
\newblock Achieving the heisenberg limit in quantum metrology using quantum error correction.
\newblock {\em Nature communications}, 9(1):1--11, 2018.

\bibitem{zhou2018optimal}
Sisi Zhou and Liang Jiang.
\newblock Optimal approximate quantum error correction for quantum metrology.
\newblock {\em Physical Review Research}, 2(1):013235, 2020.

\bibitem{sekatski2017quantum}
Pavel Sekatski, Michalis Skotiniotis, Jan Kolody{\'n}ski, and Wolfgang D{\"u}r.
\newblock Quantum metrology with full and fast quantum control.
\newblock {\em Quantum}, 1:27, 2017.

\bibitem{layden2019ancilla}
David Layden, Sisi Zhou, Paola Cappellaro, and Liang Jiang.
\newblock Ancilla-free quantum error correction codes for quantum metrology.
\newblock {\em Physical Review Letters}, 122(4):040502, 2019.

\bibitem{sahu2026heisenberg}
Suhas Sahu, Yifan Xu, and Sisi Zhou.
\newblock Achieving the heisenberg limit using fault-tolerant quantum error correction.
\newblock {\em arXiv preprint arXiv:2601.05457}, 2026.

\bibitem{allen2025quantum}
Richard~R. Allen, Francisco Machado, Isaac~L. Chuang, Hsin-Yuan Huang, and Soonwon Choi.
\newblock Quantum computing enhanced sensing.
\newblock {\em arXiv preprint arXiv:2501.07625}, 2025.

\bibitem{khan2025quantum}
Saeed~A. Khan, Sridhar Prabhu, Logan~G. Wright, and Peter~L. McMahon.
\newblock Quantum computational sensing using quantum signal processing, quantum neural networks, and hamiltonian engineering.
\newblock {\em arXiv preprint arXiv:2507.15845}, 2025.

\bibitem{demkowicz2012elusive}
Rafa{\l} Demkowicz-Dobrza{\'n}ski, Jan Ko{\l}ody{\'n}ski, and M{\u{a}}d{\u{a}}lin Gu{\c{t}}{\u{a}}.
\newblock Elusive heisenberg limit in quantum-enhanced metrology.
\newblock {\em Nature communications}, 3(1):1063, 2012.

\bibitem{motlagh2023generalized}
Danial Motlagh and Nathan Wiebe.
\newblock Generalized quantum signal processing.
\newblock {\em PRX Quantum}, 5:020368, 2024.

\bibitem{martyn2021grand}
John~M. Martyn, Zane~M. Rossi, Andrew~K. Tan, and Isaac~L. Chuang.
\newblock Grand unification of quantum algorithms.
\newblock {\em PRX Quantum}, 2:040203, 2021.

\bibitem{berry2014exponential}
Dominic~W Berry, Andrew~M Childs, Richard Cleve, Robin Kothari, and Rolando~D Somma.
\newblock Exponential improvement in precision for simulating sparse hamiltonians.
\newblock In {\em Proceedings of the forty-sixth annual ACM symposium on Theory of computing}, pages 283--292, 2014.

\bibitem{yoder2014fixed}
Theodore~J Yoder, Guang~Hao Low, and Isaac~L Chuang.
\newblock Fixed-point quantum search with an optimal number of queries.
\newblock {\em Physical review letters}, 113(21):210501, 2014.

\bibitem{nielsen2002quantum}
Michael~A. Nielsen and Isaac~L. Chuang.
\newblock {\em Quantum Computation and Quantum Information}.
\newblock Cambridge University Press, 10th anniversary edition, 2010.

\bibitem{dong2024optimal}
Yulong Dong, Jonathan~A. Gross, and Murphy~Yuezhen Niu.
\newblock Optimal low-depth quantum signal-processing phase estimation.
\newblock {\em Nature Communications}, 16:1504, 2025.

\bibitem{lin2026quantum}
Lin Lin and Nathan Wiebe.
\newblock {\em Quantum Algorithms for Scientific Computation}.
\newblock 2026.
\newblock Lecture notes, continuously updated.

\bibitem{svore2013faster}
Krysta~M Svore, Matthew Hastings, and Michael Freedman.
\newblock Faster phase estimation.
\newblock {\em Quantum Information and Computation}, 14:306--328, 2013.

\bibitem{kitaev1995quantum}
A~Yu Kitaev.
\newblock Quantum measurements and the abelian stabilizer problem.
\newblock {\em arXiv preprint quant-ph/9511026}, 1995.

\bibitem{granade2012robust}
Christopher~E Granade, Christopher Ferrie, Nathan Wiebe, and David~G Cory.
\newblock Robust online hamiltonian learning.
\newblock {\em New Journal of Physics}, 14(10):103013, 2012.

\bibitem{higgins2007entanglement}
Brendon~L Higgins, Dominic~W Berry, Stephen~D Bartlett, Howard~M Wiseman, and Geoff~J Pryde.
\newblock Entanglement-free heisenberg-limited phase estimation.
\newblock {\em Nature}, 450(7168):393--396, 2007.

\bibitem{wiebe2016efficient}
Nathan Wiebe and Chris Granade.
\newblock Efficient bayesian phase estimation.
\newblock {\em Physical review letters}, 117(1):010503, 2016.

\bibitem{caves1981quantum}
Carlton~M Caves.
\newblock Quantum-mechanical noise in an interferometer.
\newblock {\em Physical Review D}, 23(8):1693, 1981.

\bibitem{braunstein1994statistical}
Samuel~L Braunstein and Carlton~M Caves.
\newblock Statistical distance and the geometry of quantum states.
\newblock {\em Physical Review Letters}, 72(22):3439, 1994.

\bibitem{helstrom1976quantum}
Carl~W Helstrom.
\newblock {\em Quantum detection and estimation theory}.
\newblock Academic press, 1976.

\bibitem{zamir1998proof}
Ram Zamir.
\newblock A proof of the {Fisher} information inequality via a data processing argument.
\newblock {\em IEEE Transactions on Information Theory}, 44(3):1246--1250, 1998.

\bibitem{ferrie2014data}
Christopher Ferrie.
\newblock Data-processing inequalities for quantum metrology.
\newblock {\em Physical Review A}, 90:014101, 2014.

\bibitem{cover2006elements}
Thomas~M. Cover and Joy~A. Thomas.
\newblock {\em Elements of Information Theory}.
\newblock John Wiley \& Sons, 2nd edition, 2006.

\bibitem{kessler2014quantum}
Eric~M Kessler, Igor Lovchinsky, Alexander~O Sushkov, and Mikhail~D Lukin.
\newblock Quantum error correction for metrology.
\newblock {\em Physical review letters}, 112(15):150802, 2014.

\bibitem{gorecki2020optimal}
Wojciech G{\'o}recki, Sisi Zhou, Liang Jiang, and Rafa{\l} Demkowicz-Dobrza{\'n}ski.
\newblock Optimal probes and error-correction schemes in multi-parameter quantum metrology.
\newblock {\em Quantum}, 4:288, 2020.

\bibitem{dur2014improved}
Wolfgang D{\"u}r, Michalis Skotiniotis, Florian Froewis, and Barbara Kraus.
\newblock Improved quantum metrology using quantum error correction.
\newblock {\em Physical Review Letters}, 112(8):080801, 2014.

\bibitem{lidar2003decoherence}
Daniel~A Lidar, Isaac~L Chuang, and K~Birgitta Whaley.
\newblock Decoherence-free subspaces for quantum computation.
\newblock {\em Physical Review Letters}, 81(12):2594, 1998.

\bibitem{zhang2023quantum}
Chi Zhang, Phelan Yu, Arian Jadbabaie, and Nicholas~R. Hutzler.
\newblock Quantum-enhanced metrology for molecular symmetry violation using decoherence-free subspaces.
\newblock {\em Physical Review Letters}, 131(19):193602, 2023.

\bibitem{souza2011robust}
Alexandre~M Souza, Gonzalo~A {\'A}lvarez, and Dieter Suter.
\newblock Robust dynamical decoupling.
\newblock {\em Philosophical Transactions of the Royal Society A}, 370(1976):4748--4769, 2012.

\bibitem{zhen2025platonic}
Yi-Xuan Zhen et~al.
\newblock Platonic dynamical decoupling sequences for interacting spin systems.
\newblock {\em Quantum}, 9:1661, 2025.

\bibitem{acharya2024quantum}
Rajeev Acharya et~al.
\newblock Quantum error correction below the surface code threshold.
\newblock {\em Nature}, 627:778--782, 2024.

\bibitem{livingston2022experimental}
William~P Livingston, Machiel~S Blok, Emmanuel Flurin, Justin Dressel, Andrew~N Jordan, and Irfan Siddiqi.
\newblock Experimental demonstration of continuous quantum error correction.
\newblock {\em Nature Communications}, 13(1):2307, 2022.

\bibitem{pezze2018quantum}
Luca Pezz{\`e}, Augusto Smerzi, Markus~K. Oberthaler, Roman Schmied, and Philipp Treutlein.
\newblock Quantum metrology with nonclassical states of atomic ensembles.
\newblock {\em Reviews of Modern Physics}, 90(3):035005, 2018.

\end{thebibliography}

\appendix

\section{Standard Quantum Limit with Tensor Product States}
\label{app:direct-parity}

While Sections~\ref{sec:binary-search} and~\ref{sec:implementations} present an adaptive binary search protocol using GHZ states that achieves the Heisenberg limit, here we analyze the simpler approach using unentangled tensor product states $|+\rangle^{\otimes N}$. This analysis establishes the Standard Quantum Limit (SQL) baseline against which entanglement-based protocols are compared.

%\subsection{The Fundamental Limitation of Tensor Product States}

A foundational result in quantum metrology is that \emph{entanglement is necessary for Heisenberg-limited scaling}~\cite{giovannetti2006quantum,pezze2018quantum}. For tensor product states $|\psi\rangle = \bigotimes_k |\psi_k\rangle$, any product observable $O = \bigotimes_k O_k$ satisfies:
\begin{equation}
\langle O \rangle = \prod_k \langle O_k \rangle.
\end{equation}
This factorization implies that collective measurements cannot extract more information than the sum of individual qubit measurements, fundamentally limiting the quantum Fisher information to $F_Q = O(N)$ rather than $O(N^2)$.

%\subsection{Protocol Description and Correct Analysis}

We consider $N$ qubits prepared in the tensor product state $|\psi_0\rangle = |+\rangle^{\otimes N}$. After evolution under individual Z-rotations $U_k = e^{-i\omega_k Z_k}$ where $\omega_k = \omega + \epsilon_k$, each qubit evolves to:
\begin{equation}
|\psi_k\rangle = \frac{e^{-i\omega_k}}{\sqrt{2}}\left(|0\rangle + e^{2i\omega_k}|1\rangle\right),
\label{eq:evolved-qubit-tensor}
\end{equation}
where the global phase $e^{-i\omega_k}$ is physically irrelevant. The relative phase $e^{2i\omega_k}$ between the $|0\rangle$ and $|1\rangle$ components encodes the signal.

The optimal measurement strategy for this tensor product state is \emph{individual X-basis measurements} on each qubit. This approach saturates the quantum Fisher information bound and provides a clean comparison with entanglement-based protocols.

%\subsection{Individual X-Measurements: The Optimal Strategy}

\begin{theorem}[Standard Quantum Limit for Tensor Product States]
Consider $N$ qubits prepared in $|\psi_0\rangle = |+\rangle^{\otimes N}$ and evolved under individual Z-rotations $U_k = e^{-i\omega_k Z_k}$ where $\omega_k = \omega + \epsilon_k$ with $\EX[\epsilon_k] = 0$ and $\operatorname{Var}[\epsilon_k] = \sigma_\epsilon^2$. Individual X-basis measurements achieve:
\begin{enumerate}
\item For each qubit $k$, the measurement probabilities are:
\begin{equation}
\Pr(X_k = +1) = \cos^2(\omega_k), \qquad \Pr(X_k = -1) = \sin^2(\omega_k).
\end{equation}

\item The quantum Fisher information for estimating $\omega$ from all $N$ qubits is:
\begin{equation}
F_Q(\omega) = 4N,
\end{equation}
which is the Standard Quantum Limit (SQL).

\item The achievable precision scales as:
\begin{equation}
\delta\omega = \frac{1}{2\sqrt{N}},
\end{equation}
demonstrating $\sqrt{N}$-improvement over single-qubit sensing but not Heisenberg scaling.
\end{enumerate}
\end{theorem}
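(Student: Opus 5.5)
Item~1 comes from a direct single-qubit computation. Starting from Equation~\eqref{eq:evolved-qubit-tensor}, I would project onto $|\pm\rangle = (|0\rangle \pm |1\rangle)/\sqrt{2}$. This gives $\langle +|\psi_k\rangle = \tfrac{e^{-i\omega_k}}{2}(1+e^{2i\omega_k}) = \cos(\omega_k)$ up to a global phase, and $\langle -|\psi_k\rangle = -i\sin(\omega_k)$ up to a phase. Squaring the moduli yields $\cos^2(\omega_k)$ and $\sin^2(\omega_k)$.

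For item~2, I would first work conditionally on a fixed realization of the $\epsilon_k$ and differentiate with respect to $\omega$, so that $\partial_\omega \omega_k = 1$.
\begin{itemize}
\item \textbf{Quantum side.} The state is a product state with generator $H = \sum_k Z_k$. By the pure-state formula $F_Q = 4\operatorname{Var}_\psi(H)$ used in the proof of Theorem~\ref{thm:bitflip}, the covariances between different qubits vanish for a product state. Each qubit is on the equator, so $\langle Z_k\rangle = 0$ and $\operatorname{Var}(Z_k) = 1$. Hence $F_Q = 4N$, independent of the $\epsilon_k$.
\item \textbf{Saturation by X measurements.} The classical Fisher information of one qubit is $(\partial_\omega p)^2/(p(1-p))$ with $p = \cos^2\omega_k$. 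Here $\partial_\omega p = -\sin(2\omega_k)$ and $p(1-p) = \tfrac14\sin^2(2\omega_k)$, so each qubit contributes exactly $4$. Independent outcomes add, giving $4N$.
\end{itemize}
This shows the individual X-measurements are optimal, as the statement asserts. I would also invoke the additivity of QFI for product states to justify calling $4N$ the SQL.

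For item~3, I would apply the quantum Cram\'er--Rao bound \eqref{eq:cramer-rao} with a single repetition, $n=1$, to get $\delta\omega \ge 1/\sqrt{4N} = 1/(2\sqrt N)$. Achievability, asymptotically over repetitions, follows because the classical Fisher information saturates $F_Q$, e.g.\ with the maximum-likelihood estimator.

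The main obstacle is the role of the unknown $\epsilon_k$. Conditionally on the $\epsilon_k$, everything above is exact. If instead the $\epsilon_k$ are marginalized, each outcome's probability becomes $\tfrac12(1+\EX[\cos 2\omega_k])$, which contracts the contrast and slightly reduces the Fisher information. I would handle this by stating the result conditionally, and then noting, in the spirit of Lemma~\ref{lem:ghz-noise-marginalization}, that marginalization multiplies each qubit's contribution by roughly $e^{-4\sigma_\epsilon^2}$. This only perturbs the constant and preserves the $\Theta(N)$ scaling, consistent with the subdominant $\sigma_\epsilon^2/N$ variance correction mentioned in Section~\ref{sec:discussion}.
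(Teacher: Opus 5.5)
Your proposal is correct and follows essentially the paper's route: the single-qubit projection, a per-qubit Fisher information of $4$ summed over independent qubits, and then Cram\'er--Rao. The one real difference is that you compute the QFI directly as $4\operatorname{Var}(\sum_k Z_k)=4N$ and then check that X-measurements saturate it, whereas the paper only computes the classical Fisher information of the X-measurement and identifies it with the QFI, so your version actually justifies the ``quantum'' label.

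One small caution on your optional marginalization aside, which lies outside what the theorem needs. For symmetric $\epsilon_k$ the per-qubit information becomes $4\lambda^2\sin^2(2\omega)/(1-\lambda^2\cos^2(2\omega))$ with $\lambda=\EX[\cos 2\epsilon_k]$. This equals $4\lambda^2\approx 4e^{-4\sigma_\epsilon^2}$ only at the working point $\cos 2\omega=0$ and vanishes at $\sin 2\omega=0$, so it is not a uniform constant-factor perturbation.
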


\begin{proof}
Starting from $|+\rangle = (|0\rangle + |1\rangle)/\sqrt{2}$, after Z-rotation by angle $\omega_k$, qubit $k$ evolves to the state in Equation~\eqref{eq:evolved-qubit-tensor}. Measuring in the X-basis corresponds to projecting onto the eigenstates $|\pm\rangle = (|0\rangle \pm |1\rangle)/\sqrt{2}$. The probability of obtaining $X_k = +1$ is:
\begin{align}
\Pr(X_k = +1) &= |\langle +|\psi_k\rangle|^2 = \left|\frac{1}{2}(1 + e^{-2i\omega_k})\right|^2 \notag\\
&= \frac{1}{4}|1 + e^{-2i\omega_k}|^2 = \frac{1}{4}(2 + 2\cos(2\omega_k)) = \cos^2(\omega_k),
\end{align}
where we used the identity $|1 + e^{i\theta}|^2 = 2(1 + \cos\theta)$ and $\cos^2(\omega_k) = (1 + \cos(2\omega_k))/2$. This establishes the first item of the theorem.

For a single qubit, the Fisher information for estimating $\omega_k$ from the X-measurement is:
\begin{equation}
F_k = \frac{1}{\Pr(+1)}\left(\frac{\partial \Pr(+1)}{\partial \omega_k}\right)^2 + \frac{1}{\Pr(-1)}\left(\frac{\partial \Pr(-1)}{\partial \omega_k}\right)^2.
\end{equation}
With $\partial\Pr(+1)/\partial\omega_k = -\sin(2\omega_k)$ and $\partial\Pr(-1)/\partial\omega_k = \sin(2\omega_k)$, we obtain:
\begin{equation}
F_k = \frac{\sin^2(2\omega_k)}{\cos^2(\omega_k)} + \frac{\sin^2(2\omega_k)}{\sin^2(\omega_k)} = \sin^2(2\omega_k)\left(\frac{1}{\cos^2(\omega_k)} + \frac{1}{\sin^2(\omega_k)}\right).
\end{equation}
Using $\sin(2\omega_k) = 2\sin(\omega_k)\cos(\omega_k)$:
\begin{equation}
F_k = 4\sin^2(\omega_k)\cos^2(\omega_k) \cdot \frac{\sin^2(\omega_k) + \cos^2(\omega_k)}{\sin^2(\omega_k)\cos^2(\omega_k)} = 4.
\end{equation}

For a tensor product state, the total quantum Fisher information is the sum of individual contributions~\cite{giovannetti2006quantum}:
\begin{equation}
F_Q = \sum_{k=1}^{N} F_k = 4N.
\end{equation}
This is the quantum Fisher information bound for product states, confirming the SQL scaling. This establishes the second item of the theorem.

The quantum Cram\'er-Rao bound gives the minimum achievable variance:
\begin{equation}
\operatorname{Var}(\hat{\omega}) \geq \frac{1}{F_Q} = \frac{1}{4N},
\end{equation}
yielding precision $\delta\omega = 1/(2\sqrt{N})$. This establishes the third item of the theorem.
\end{proof}

\subsection{Effect of Field Inhomogeneities}

For the SQL protocol under field inhomogeneities $\omega_k = \omega + \epsilon_k$ with $\EX[\epsilon_k] = 0$ and $\operatorname{Var}[\epsilon_k] = \sigma_\epsilon^2$, the estimator based on averaging individual X-measurement outcomes yields:
\begin{equation}
\hat{\omega} = \frac{1}{N}\sum_{k=1}^{N} \hat{\omega}_k,
\end{equation}
where $\hat{\omega}_k$ is the single-qubit estimator from qubit $k$. This averaging reduces inhomogeneity effects:
\begin{equation}
\operatorname{Var}(\hat{\omega}) = \frac{1}{N^2}\left(\sum_{k=1}^{N} \operatorname{Var}(\hat{\omega}_k) + \operatorname{Var}\left(\sum_k \epsilon_k\right)\right) = \frac{1}{4N} + \frac{\sigma_\epsilon^2}{N},
\end{equation}
where the first term is the quantum measurement uncertainty and the second is the inhomogeneity contribution. The SQL protocol is thus relatively robust to inhomogeneities, as they add only a subdominant correction when $\sigma_\epsilon^2 = O(1)$.

In contrast, the HL protocols require $\sigma_\epsilon = o(1/\sqrt{N})$ to maintain Heisenberg scaling, a much more stringent condition. This robustness is a practical advantage of the SQL approach when device calibration is imperfect.

\section{Phase Amplification and Binary Search: Proofs and Technical Details}
\label{app:binary-search-proofs}

This appendix provides complete proofs and technical details for the binary search phase estimation protocol presented in Section~\ref{sec:binary-search}. We adopt the notation $N = 2L+1$ for the total number of qubits (always odd for phase-flip codes) throughout. All results were derived in the framework where each qubit experiences rotation angle $\omega_k = \omega + \epsilon_k$ with $\epsilon_k$ being independent noise terms, and we assume $\omega_k \in \mathbb{R} \setminus \{\frac{m\pi}{2} : m \in \mathbb{Z}\}$ to ensure tangent and cotangent functions remain well-defined.

\subsection{Subset Decomposition Framework}

We begin with the fundamental structure of subset decomposition for evolved GHZ states.

\begin{theorem}[Subset-Dependent Post-Measurement State]
\label{thm:post-measurement-state-full}
Consider the $N$-qubit GHZ state:
\begin{equation}
|\text{GHZ}\rangle = \frac{1}{\sqrt{2}}\left(|0\rangle^{\otimes N} + |1\rangle^{\otimes N}\right),
\end{equation}
evolved under $U = \exp(-i\sum_{k=1}^{N}\omega_k Z_k)$ with $\omega_k \in \mathbb{R} \setminus \{m\pi/2 : m \in \mathbb{Z}\}$. The evolved state admits a decomposition into components labeled by subsets $S \subseteq \{1,\ldots,N\}$. Projecting onto component $S$ yields normalized state:
\begin{equation}
|\psi_S\rangle = \frac{(-i)^{|S|} \alpha_S}{|\alpha_S|\sqrt{2}} \left(|0\rangle^{\otimes N} + (-1)^{|S|}|1\rangle^{\otimes N}\right),
\label{eq:post-measurement-full}
\end{equation}
where $\alpha_S = \prod_{k \in S} \sin(\omega_k) \prod_{j \notin S} \cos(\omega_j)$, and the parity operator satisfies $\left(\prod_{k=1}^{N} Z_k\right)^2 = I$.

The probability of observing subset $S$ is:
\begin{equation}
P(S) = \frac{|\alpha_S|^2}{\sum_{T \subseteq \{1,\ldots,N\}} |\alpha_T|^2}.
\label{eq:syndrome-prob-full}
\end{equation}
\end{theorem}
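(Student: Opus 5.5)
The plan is to expand each single-qubit factor of the evolution in the Pauli basis, distribute the tensor product over subsets, and read off both the component states and their weights. Since $Z_k^2=I$, we have $e^{-i\omega_k Z_k}=\cos(\omega_k)I - i\sin(\omega_k)Z_k$. Taking the product over $k=1,\ldots,N$ and choosing, for each qubit, either the $I$ term or the $Z$ term gives
\begin{equation*}
U=\sum_{S\subseteq\{1,\ldots,N\}} (-i)^{|S|}\,\alpha_S\, Z_S,\qquad Z_S:=\prod_{k\in S} Z_k ,
\end{equation*}
with $\alpha_S=\prod_{k\in S}\sin(\omega_k)\prod_{j\notin S}\cos(\omega_j)$ exactly as in the statement. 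This mirrors the expansion used for the logical basis states in Equation~\eqref{eq:logical-0-expansion} of Theorem~\ref{thm:bitflip}, with $X$ replaced by $Z$.

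Next, I will act with $Z_S$ on the GHZ state. Using $Z_S|0\rangle^{\otimes N}=|0\rangle^{\otimes N}$ and $Z_S|1\rangle^{\otimes N}=(-1)^{|S|}|1\rangle^{\otimes N}$, the $S$-component of $U|\mathrm{GHZ}\rangle$ is
\begin{equation*}
(-i)^{|S|}\alpha_S\,\frac{1}{\sqrt2}\left(|0\rangle^{\otimes N}+(-1)^{|S|}|1\rangle^{\otimes N}\right).
\end{equation*}
The vector in parentheses has norm $\sqrt2$, so dividing by $|\alpha_S|$ (nonzero, because we excluded $\omega_k\in\frac{\pi}{2}\mathbb{Z}$) gives the normalized state~\eqref{eq:post-measurement-full}. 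The prefactor $\alpha_S/|\alpha_S|=\pm1$ is just a sign, and $(-i)^{|S|}$ is a phase. The remark about the parity operator is then immediate from $Z_k^2=I$; it records that $Z_S$ is a $\pm1$-valued parity on the GHZ branches.

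For the weights, I will compute $\sum_T|\alpha_T|^2$. It factorizes as $\prod_k(\cos^2\omega_k+\sin^2\omega_k)=1$, so~\eqref{eq:syndrome-prob-full} simply says $P(S)=|\alpha_S|^2$. The weights are nonnegative and sum to one, so they form a bona fide distribution.

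The main obstacle is giving operational meaning to ``projecting onto component $S$''. The vectors $Z_S|\mathrm{GHZ}\rangle$ are not mutually orthogonal: they take only two values, $|\mathrm{GHZ}_\pm\rangle$, according to the parity of $|S|$. So $P(S)$ is not a Born probability for a measurement on the system alone. I would resolve this by treating $\{(-i)^{|S|}\alpha_S Z_S\}_S$ as Kraus operators of an unraveling. Concretely, I would realize each factor $\cos\omega_k I - i\sin\omega_k Z_k$ through a controlled-$Z$ with an ancilla prepared in $\cos\omega_k|0\rangle - i\sin\omega_k|1\rangle$, so that measuring the ancilla string yields $S$ with probability $|\alpha_S|^2$ and leaves the system in $|\psi_S\rangle$. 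Checking that $\sum_S |\alpha_S|^2 Z_S^\dagger Z_S=I$ confirms completeness, and that in turn justifies the normalization in~\eqref{eq:syndrome-prob-full}. With this identification, the statement holds as the description of the post-measurement ensemble.
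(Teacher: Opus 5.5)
Your proposal is correct and follows the paper's proof essentially step for step: the Pauli expansion $U=\sum_S(-i)^{|S|}\alpha_S Z_S$, the action of $Z_S$ on the two GHZ branches, normalization by $|\alpha_S|$, and the factorization $\sum_T|\alpha_T|^2=\prod_k(\cos^2\omega_k+\sin^2\omega_k)=1$. The paper likewise simply defines $P(S):=|\alpha_S|^2$ as the weight of the $S$-term.

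The paper treats ``projecting onto component $S$'' purely formally, so your ancilla unraveling goes beyond it. It is a valid realization of the ensemble $\{P(S),|\psi_S\rangle\}$. Note, however, that the ancilla-averaged process is the dephasing channel $\rho\mapsto\sum_S|\alpha_S|^2 Z_S\rho Z_S$, not conjugation by $U$, so it describes a different experiment from evolution under $U$. This is consistent with your correct observation that the $|\psi_S\rangle$ are not orthogonal and that $P(S)$ is not a Born probability for $U|\mathrm{GHZ}\rangle$.
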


\begin{proof}
Since the Pauli Z operators on different qubits commute, we can write the unitary evolution as:
\begin{equation}
U = \bigotimes_{k=1}^{N} e^{-i\omega_k Z_k} = \prod_{k=1}^{N}\left(\cos(\omega_k)I - i\sin(\omega_k)Z_k\right).
\end{equation}

Expanding this product using the multinomial theorem, each term corresponds to a choice of whether each qubit contributes an identity or a Z operator. Let $S \subseteq \{1,\ldots,N\}$ denote the set of qubits contributing Z operators. Then:
\begin{equation}
U = \sum_{S \subseteq \{1,\ldots,N\}} (-i)^{|S|} \left(\prod_{k \in S} \sin(\omega_k)\right) \left(\prod_{j \notin S} \cos(\omega_j)\right) \prod_{k \in S} Z_k,
\end{equation}
where $|S|$ denotes the cardinality of $S$ and the factor $(-i)^{|S|}$ arises from collecting all the $-i$ prefactors.

Applying this to the GHZ state:
\begin{equation}
U|\text{GHZ}\rangle = \frac{1}{\sqrt{2}} \sum_{S \subseteq \{1,\ldots,N\}} (-i)^{|S|} \alpha_S \prod_{k \in S} Z_k \left(|0\rangle^{\otimes N} + |1\rangle^{\otimes N}\right),
\end{equation}
where we defined $\alpha_S = \prod_{k \in S} \sin(\omega_k) \prod_{j \notin S} \cos(\omega_j)$.

Now, the product of Z operators acts on the computational basis states:
\begin{align}
\prod_{k \in S} Z_k |0\rangle^{\otimes N} &= |0\rangle^{\otimes N},\\
\prod_{k \in S} Z_k |1\rangle^{\otimes N} &= (-1)^{|S|}|1\rangle^{\otimes N},
\end{align}
since each Z operator in the product contributes a factor of $-1$ when acting on $|1\rangle$.

Therefore:
\begin{equation}
U|\text{GHZ}\rangle = \frac{1}{\sqrt{2}} \sum_{S \subseteq \{1,\ldots,N\}} (-i)^{|S|} \alpha_S \left(|0\rangle^{\otimes N} + (-1)^{|S|}|1\rangle^{\otimes N}\right).
\end{equation}

To analyze the state component corresponding to subset label $S$, we project onto the corresponding term in this expansion. The unnormalized projected state is:
\begin{equation}
|{\rm unnorm}\rangle = \frac{1}{\sqrt{2}} (-i)^{|S|} \alpha_S \left(|0\rangle^{\otimes N} + (-1)^{|S|}|1\rangle^{\otimes N}\right).
\end{equation}

To normalize, we compute the norm of this state:
\begin{align}
\mathcal{N}_S &= \left\|\frac{1}{\sqrt{2}} (-i)^{|S|} \alpha_S \left(|0\rangle^{\otimes N} + (-1)^{|S|}|1\rangle^{\otimes N}\right)\right\|\\
&= \frac{1}{\sqrt{2}} |(-i)^{|S|} \alpha_S| \sqrt{\langle 0|0\rangle + |(-1)^{|S|}|^2\langle 1|1\rangle}\\
&= \frac{1}{\sqrt{2}} |\alpha_S| \sqrt{1 + 1} = |\alpha_S|,
\end{align}
where we used $|(-i)^{|S|}| = 1$ and $|(-1)^{|S|}| = 1$ since these are pure phase factors, and $\langle 0^{\otimes N}|0^{\otimes N}\rangle = \langle 1^{\otimes N}|1^{\otimes N}\rangle = 1$.

Therefore, the normalized post-measurement state is:
\begin{equation}
|\psi_S\rangle = \frac{|{\rm unnorm}\rangle}{\mathcal{N}_S} = \frac{(-i)^{|S|} \alpha_S}{|\alpha_S|\sqrt{2}} \left(|0\rangle^{\otimes N} + (-1)^{|S|}|1\rangle^{\otimes N}\right),
\end{equation}
which establishes Equation~\eqref{eq:post-measurement-full}. The operator form $\left(\prod_{k=1}^{N} Z_k\right)^{|S|} |\text{GHZ}\rangle$ follows from the fact that applying the full parity operator to $|\text{GHZ}\rangle$ flips the sign of the $|1\rangle^{\otimes N}$ component when $N$ is odd, and applying it $|S|$ times gives the factor $(-1)^{|S|}$.

For the subset probability, we define $P(S) := |\alpha_S|^2$ as the weight of subset $S$ in the expansion. To verify that these weights sum to unity, we use the multinomial identity:
\begin{equation}
\sum_{S \subseteq \{1,\ldots,N\}} |\alpha_S|^2 = \sum_{S} \prod_{k \in S} \sin^2(\omega_k) \prod_{j \notin S} \cos^2(\omega_j) = \prod_{k=1}^{N} \left(\sin^2(\omega_k) + \cos^2(\omega_k)\right) = 1,
\end{equation}
confirming that $\{P(S)\}$ forms a valid probability distribution. This establishes Equation~\eqref{eq:syndrome-prob-full}.
\end{proof}

\begin{proposition}[Tangent-Cotangent Amplitude Ratio]
\label{prop:tangent-cotangent-full}
For subset $S \subseteq \{1,\ldots,N\}$, define the phase parameter $\phi_S \in (-\pi/2, \pi/2)$ via the amplitude ratio:
\begin{equation}
\tan(\phi_S) \equiv (-1)^{L+|S|} \frac{\alpha_S}{\alpha_{\bar{S}}},
\label{eq:phi-def-full}
\end{equation}
where $\bar{S} = \{1,\ldots,N\} \setminus S$ is the complement of $S$. Then $\phi_S$ is given explicitly by:
\begin{equation}
\phi_S = \arctan\left((-1)^{L+|S|} \prod_{k \in S} \tan(\omega_k) \prod_{j \notin S} \cot(\omega_j)\right)
\label{eq:tangent-cotangent-full}
\end{equation}

Furthermore, the subset likelihood ratio satisfies:
\begin{equation}
\frac{P(S)}{P(\bar{S})} = \tan^2(\phi_S).
\label{eq:likelihood-ratio-full}
\end{equation}
\end{proposition}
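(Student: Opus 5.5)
The proof is a direct computation from the definition of $\alpha_S$ in Theorem~\ref{thm:post-measurement-state-full}. The explicit formula \eqref{eq:tangent-cotangent-full} and the likelihood ratio \eqref{eq:likelihood-ratio-full} both follow from one algebraic identity for $\alpha_S/\alpha_{\bar S}$. I will first establish that identity, then take $\arctan$, and finally square it.

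\textbf{Step 1: the amplitude ratio.} Write
\begin{equation}
\alpha_S = \prod_{k\in S}\sin(\omega_k)\prod_{j\notin S}\cos(\omega_j), \qquad \alpha_{\bar S} = \prod_{k\in \bar S}\sin(\omega_k)\prod_{j\in S}\cos(\omega_j).
\end{equation}
In $\alpha_{\bar S}$ the roles of $S$ and its complement are swapped, because $\bar{\bar S}=S$. The standing assumption $\omega_k\notin\{m\pi/2\}$ guarantees that every sine and cosine is nonzero, so the ratio is well defined. Grouping the factors by index gives
\begin{equation}
\frac{\alpha_S}{\alpha_{\bar S}} = \prod_{k\in S}\frac{\sin\omega_k}{\cos\omega_k}\prod_{j\notin S}\frac{\cos\omega_j}{\sin\omega_j} = \prod_{k\in S}\tan(\omega_k)\prod_{j\notin S}\cot(\omega_j).
\end{equation}

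\textbf{Step 2: the explicit formula.} Multiply this identity by the sign $(-1)^{L+|S|}$ appearing in \eqref{eq:phi-def-full}. Since $\arctan:\mathbb{R}\to(-\pi/2,\pi/2)$ is a bijection, the definition $\tan\phi_S = (\text{that real number})$ with $\phi_S\in(-\pi/2,\pi/2)$ has the unique solution \eqref{eq:tangent-cotangent-full}. The sign convention plays no role here; it is simply carried along. It is there to match the $(-1)^L$ of $\Phi_N$ in Definition~\ref{def:phase-amplification}. A sanity check: taking $S=\{1,\ldots,N\}$ with all $\omega_k=\omega$ recovers $\arctan((-1)^{L+N}\tan^N\omega)$, which agrees with $\Phi_N$ up to the parity sign.

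\textbf{Step 3: the likelihood ratio.} By \eqref{eq:syndrome-prob-full} the two probabilities share the same normalizing denominator. In the proof of Theorem~\ref{thm:post-measurement-state-full} that denominator equals $1$, but only the fact that it is common to both is needed. Hence $P(S)/P(\bar S)=|\alpha_S|^2/|\alpha_{\bar S}|^2$. All the $\alpha$'s are real, so this equals $(\alpha_S/\alpha_{\bar S})^2$. The sign $(-1)^{L+|S|}$ squares to $1$, so the ratio is $\tan^2(\phi_S)$.

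The argument is routine, and I do not expect a real obstacle. The only points needing care are bookkeeping: confirming that the complement swaps the sine and cosine products, rather than, say, only the sine products, and noting explicitly that the excluded angles $m\pi/2$ keep every denominator, as well as every $\tan$ and $\cot$, finite and nonzero.
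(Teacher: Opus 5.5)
Your proposal is correct and follows the paper's proof essentially step for step. You swap the sine and cosine products under complementation to get $\alpha_S/\alpha_{\bar S}=\prod_{k\in S}\tan(\omega_k)\prod_{j\notin S}\cot(\omega_j)$, invert the tangent on $(-\pi/2,\pi/2)$, and square, using that the $\alpha$'s are real and share a common normalizer; you also state explicitly that $\omega_k\notin\{m\pi/2\}$ keeps every factor nonzero, which the paper only assumes implicitly.
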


\begin{proof}
We compute the ratio $\alpha_S/\alpha_{\bar{S}}$ explicitly using the definition of $\alpha_S$ from Theorem~\ref{thm:post-measurement-state-full}:
\begin{align}
\frac{\alpha_S}{\alpha_{\bar{S}}} &= \frac{\prod_{k \in S} \sin(\omega_k) \prod_{j \notin S} \cos(\omega_j)}{\prod_{k \in \bar{S}} \sin(\omega_k) \prod_{j \notin \bar{S}} \cos(\omega_j)}.
\end{align}

Since $\bar{S} = \{1,\ldots,N\} \setminus S$ is the complement, we have:
\begin{itemize}
\item $j \notin S \iff j \in \bar{S}$
\item $j \notin \bar{S} \iff j \in S$
\end{itemize}

Therefore:
\begin{align}
\frac{\alpha_S}{\alpha_{\bar{S}}} &= \frac{\prod_{k \in S} \sin(\omega_k) \prod_{j \in \bar{S}} \cos(\omega_j)}{\prod_{k \in \bar{S}} \sin(\omega_k) \prod_{j \in S} \cos(\omega_j)}.
\end{align}

We can separate this into products over $S$ and $\bar{S}$:
\begin{align}
\frac{\alpha_S}{\alpha_{\bar{S}}} &= \left(\prod_{k \in S} \frac{\sin(\omega_k)}{\cos(\omega_k)}\right) \left(\prod_{j \in \bar{S}} \frac{\cos(\omega_j)}{\sin(\omega_j)}\right)\\
&= \prod_{k \in S} \tan(\omega_k) \prod_{j \in \bar{S}} \cot(\omega_j)\\
&= \prod_{k \in S} \tan(\omega_k) \prod_{j \notin S} \cot(\omega_j),
\label{eq:ratio-explicit-full}
\end{align}
where the last equality uses the fact that $\bar{S} = \{j : j \notin S\}$.

Substituting Equation~\eqref{eq:ratio-explicit-full} into the definition~\eqref{eq:phi-def-full}:
\begin{equation}
\tan(\phi_S) = (-1)^{L+|S|} \prod_{k \in S} \tan(\omega_k) \prod_{j \notin S} \cot(\omega_j).
\end{equation}

Taking the arctangent of both sides (noting that $\arctan$ has range $(-\pi/2, \pi/2)$, which matches our assumption on $\phi_S$):
\begin{equation}
\phi_S = \arctan\left((-1)^{L+|S|} \prod_{k \in S} \tan(\omega_k) \prod_{j \notin S} \cot(\omega_j)\right),
\end{equation}
establishing Equation~\eqref{eq:tangent-cotangent-full}.

For the likelihood ratio, from Theorem~\ref{thm:post-measurement-state-full} we have:
\begin{equation}
\frac{P(S)}{P(\bar{S})} = \frac{|\alpha_S|^2}{|\alpha_{\bar{S}}|^2} = \left|\frac{\alpha_S}{\alpha_{\bar{S}}}\right|^2.
\end{equation}

Since $\alpha_S$ and $\alpha_{\bar{S}}$ are products of sines and cosines of real angles, they are real-valued. Therefore:
\begin{equation}
\frac{P(S)}{P(\bar{S})} = \left(\frac{\alpha_S}{\alpha_{\bar{S}}}\right)^2 = \left|(-1)^{L+|S|}\tan(\phi_S)\right|^2 = \tan^2(\phi_S),
\end{equation}
where we used the fact that $|(-1)^{L+|S|}| = 1$ and the definition~\eqref{eq:phi-def-full}. This establishes Equation~\eqref{eq:likelihood-ratio-full}.
\end{proof}

\subsection{Measurement Statistics}

We now develop the complete statistical framework for parity measurements under noise.

\begin{lemma}[X-Parity Measurement Outcomes]
\label{lem:parity-measurement-full}
Consider a GHZ state on $N = 2L+1$ qubits, evolved under the unitary $U = \exp(-i\sum_{k=1}^{N}\omega_k Z_k)$ where $\omega_k = \omega + \epsilon_k$ with noise terms $\epsilon_k$. Measuring the X-parity operator $\mathcal{P}_X = \prod_{k=1}^{N} X_k$ yields outcome $+1$ with probability:
\begin{equation}
\Pr(+1 \mid \{\omega_k\}) = \cos^2\left(\sum_{k=1}^{N}\omega_k\right) = \cos^2(N\omega + S),
\label{eq:parity-prob-full}
\end{equation}
where $S = \sum_{k=1}^N \epsilon_k$ is the total noise offset.
\end{lemma}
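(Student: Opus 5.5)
The plan is a direct computation on the evolved state, repeating the argument of the X-parity lemma in Section~\ref{subsec:ghz-implementation} in the appendix notation. Because the $Z_k$ on different qubits commute, the evolution factorizes as $U=\bigotimes_k e^{-i\omega_k Z_k}$. Each factor acts diagonally on the two GHZ branches. On $|0\rangle^{\otimes N}$ every qubit contributes $e^{-i\omega_k}$, and on $|1\rangle^{\otimes N}$ every qubit contributes $e^{+i\omega_k}$. Writing $\Sigma:=\sum_k\omega_k=N\omega+S$, we therefore get
\begin{equation}
U|\text{GHZ}\rangle=\tfrac{1}{\sqrt{2}}\left(e^{-i\Sigma}|0\rangle^{\otimes N}+e^{i\Sigma}|1\rangle^{\otimes N}\right).
\end{equation}
I will use this closed form rather than the subset expansion of Theorem~\ref{thm:post-measurement-state-full}. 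The two agree: summing the subset expansion over $S$ reproduces the two branch phases, because $\prod_k(\cos\omega_k \mp i\sin\omega_k)=e^{\mp i\Sigma}$.

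Next I identify the measurement. The operator $\mathcal{P}_X=\prod_k X_k$ maps $|0\rangle^{\otimes N}\leftrightarrow|1\rangle^{\otimes N}$, and the evolved state stays in their span. On that span, $\mathcal{P}_X$ has eigenvectors $|\pm\rangle_{\mathcal P}=(|0\rangle^{\otimes N}\pm|1\rangle^{\otimes N})/\sqrt2$ with eigenvalues $\pm1$. So $\Pr(+1)=|\langle+|_{\mathcal P}U|\text{GHZ}\rangle|^2$. Computing this overlap gives $\tfrac12(e^{-i\Sigma}+e^{i\Sigma})=\cos\Sigma$, and squaring yields $\cos^2(N\omega+S)$. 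Substituting $\omega_k=\omega+\epsilon_k$ into $\Sigma$ gives the second equality in Equation~\eqref{eq:parity-prob-full}.

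There is no real obstacle here. The only point to get right is the eigenvalue convention. The overlap picks out the $+1$ eigenspace of $\mathcal{P}_X$, and the off-span components of that eigenspace contribute nothing because the state has no support outside $\mathrm{span}\{|0\rangle^{\otimes N},|1\rangle^{\otimes N}\}$. As a sanity check I will also verify that $\Pr(-1)=\sin^2\Sigma$, so the two outcome probabilities sum to one.
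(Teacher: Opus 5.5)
Your proposal is correct and follows essentially the same route as the paper's proof: both write $U|\text{GHZ}\rangle=\tfrac{1}{\sqrt{2}}\bigl(e^{-i\Sigma}|0\rangle^{\otimes N}+e^{i\Sigma}|1\rangle^{\otimes N}\bigr)$, project onto the GHZ-type $+1$ eigenvector of $\mathcal{P}_X$, and square the overlap $\cos\Sigma$. You also state explicitly that the rest of the $+1$ eigenspace of $\mathcal{P}_X$ contributes nothing, because the evolved state lies in $\mathrm{span}\{|0\rangle^{\otimes N},|1\rangle^{\otimes N}\}$; the paper leaves this point implicit.
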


\begin{proof}
We begin by expressing the evolved GHZ state. Since Z operators commute:
\begin{equation}
U = \prod_{k=1}^{N} e^{-i\omega_k Z_k} = \exp\left(-i\sum_{k=1}^N \omega_k Z_k\right).
\end{equation}

The action on the computational basis states is:
\begin{align}
e^{-i\sum_{k=1}^N \omega_k Z_k} |0\rangle^{\otimes N} &= e^{-i\sum_{k=1}^N \omega_k} |0\rangle^{\otimes N},\\
e^{-i\sum_{k=1}^N \omega_k Z_k} |1\rangle^{\otimes N} &= e^{+i\sum_{k=1}^N \omega_k} |1\rangle^{\otimes N},
\end{align}
since $Z|0\rangle = +|0\rangle$ has eigenvalue $+1$ and $Z|1\rangle = -|1\rangle$ has eigenvalue $-1$, so $e^{-i\omega Z}|0\rangle = e^{-i\omega}|0\rangle$ and $e^{-i\omega Z}|1\rangle = e^{+i\omega}|1\rangle$.

Therefore, the evolved GHZ state is:
\begin{equation}
U|\text{GHZ}\rangle = \frac{1}{\sqrt{2}}\left(e^{-i\sum_{k=1}^N \omega_k}|0\rangle^{\otimes N} + e^{+i\sum_{k=1}^N \omega_k}|1\rangle^{\otimes N}\right).
\end{equation}
Factoring out the global phase $e^{-i\sum_{k=1}^N \omega_k}$:
\begin{equation}
= e^{-i\sum_{k=1}^N \omega_k} \frac{1}{\sqrt{2}}\left(|0\rangle^{\otimes N} + e^{+2i\sum_{k=1}^N \omega_k}|1\rangle^{\otimes N}\right).
\end{equation}

Note that the Z-parity operator $\mathcal{P}_Z = \prod_{k=1}^{N} Z_k$ has $|0\rangle^{\otimes N}$ as a $+1$ eigenstate and $|1\rangle^{\otimes N}$ as a $(-1)^N = -1$ eigenstate (for $N$ odd). This means Z-parity measurement on the evolved GHZ state would yield outcome $+1$ with probability $1/2$ regardless of the phase, providing no information. To extract phase information, we instead measure the X-parity operator $\mathcal{P}_X = \prod_{k=1}^{N} X_k$.

The probability of measuring X-parity $+1$ is the squared magnitude of the projection onto the $+1$ eigenspace. For the evolved state (after dropping the global phase):
\begin{equation}
|\psi_{\text{evolved}}\rangle = \frac{1}{\sqrt{2}}\left(|0\rangle^{\otimes N} + e^{+2i\theta}|1\rangle^{\otimes N}\right),
\end{equation}
where $\theta = \sum_{k=1}^N \omega_k$. The eigenstates of the X-parity operator $\mathcal{P}_X = \prod_{k=1}^N X_k$ are:
\begin{align}
|\psi_+\rangle &= \frac{1}{\sqrt{2}}\left(|0\rangle^{\otimes N} + |1\rangle^{\otimes N}\right) \quad (\text{eigenvalue }+1),\\
|\psi_-\rangle &= \frac{1}{\sqrt{2}}\left(|0\rangle^{\otimes N} - |1\rangle^{\otimes N}\right) \quad (\text{eigenvalue }-1).
\end{align}
Note that since $X^{\otimes N}|0\rangle^{\otimes N} = |1\rangle^{\otimes N}$ and $X^{\otimes N}|1\rangle^{\otimes N} = |0\rangle^{\otimes N}$, these GHZ-type states are indeed X-parity eigenstates for any $N$.

Projecting the evolved state onto $|\psi_+\rangle$:
\begin{align}
\langle\psi_+|\psi_{\text{evolved}}\rangle &= \frac{1}{2}\left(\langle 0|^{\otimes N} + \langle 1|^{\otimes N}\right)\left(|0\rangle^{\otimes N} + e^{+2i\theta}|1\rangle^{\otimes N}\right)\\
&= \frac{1}{2}\left(1 + e^{+2i\theta}\right).
\end{align}

The probability of measuring $+1$ is:
\begin{align}
\Pr(+1) &= |\langle\psi_+|\psi_{\text{evolved}}\rangle|^2 = \left|\frac{1 + e^{+2i\theta}}{2}\right|^2\\
&= \frac{1}{4}|1 + e^{+2i\theta}|^2 = \frac{1}{4}(1 + e^{+2i\theta})(1 + e^{-2i\theta})\\
&= \frac{1}{4}(2 + e^{+2i\theta} + e^{-2i\theta}) = \frac{1}{2}(1 + \cos(2\theta)).
\end{align}

Using the trigonometric identity $1 + \cos(2\theta) = 2\cos^2(\theta)$:
\begin{equation}
\Pr(+1) = \cos^2(\theta) = \cos^2\left(\sum_{k=1}^N \omega_k\right).
\end{equation}

Substituting $\sum_{k=1}^N \omega_k = \sum_{k=1}^N (\omega + \epsilon_k) = N\omega + \sum_{k=1}^N \epsilon_k = N\omega + S$:
\begin{equation}
\Pr(+1) = \cos^2\left(N\omega + S\right),
\end{equation}
confirming Equation~\eqref{eq:parity-prob-full}.
\end{proof}

\begin{lemma}[Noise Marginalization]
\label{lem:noise-marginalization-full}
Suppose the noise terms $\epsilon_k$ are independent and identically distributed with $\epsilon_k \sim \mathcal{N}(0, \sigma_\epsilon^2)$. Then the total noise offset $S = \sum_{k=1}^N \epsilon_k$ follows a Gaussian distribution $S \sim \mathcal{N}(0, N\sigma_\epsilon^2)$, and the noise-averaged measurement probability is:
\begin{equation}
\overline{\Pr}(+1 \mid \omega) = \frac{1}{2}\left[1 + \cos(2N\omega) e^{-2N\sigma_\epsilon^2}\right].
\label{eq:exact-marginalized-full}
\end{equation}

In the small-angle regime where $|N\omega| = o(1)$ and $\sqrt{N}\sigma_\epsilon = o(1)$, the noise-averaged probability admits the Taylor expansion:
\begin{equation}
\overline{\Pr}(+1 \mid \omega) = 1 - N^2\omega^2 - N\sigma_\epsilon^2 + O(\omega^4, \omega^2\sigma_\epsilon^2, \sigma_\epsilon^4).
\label{eq:taylor-prob-full}
\end{equation}
\end{lemma}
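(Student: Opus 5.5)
The exact formula~\eqref{eq:exact-marginalized-full} follows by conditioning on the noise realization and then averaging. Fix $\{\epsilon_k\}$. Lemma~\ref{lem:parity-measurement-full} gives $\Pr(+1\mid\{\omega_k\}) = \cos^2(N\omega+S)$, which we rewrite as $\tfrac12\bigl(1+\cos(2N\omega+2S)\bigr)$. We then take the expectation over $S$. Because the $\epsilon_k$ are i.i.d.\ $\mathcal{N}(0,\sigma_\epsilon^2)$, their sum is Gaussian, $S\sim\mathcal{N}(0,N\sigma_\epsilon^2)$; this can be checked by multiplying the individual characteristic functions. Expanding $\cos(2N\omega+2S)=\cos(2N\omega)\cos(2S)-\sin(2N\omega)\sin(2S)$ gives two averages:
\begin{itemize}
\item $\EX[\sin(2S)]=0$, since the law of $S$ is symmetric about zero;
\item $\EX[\cos(2S)]=\mathrm{Re}\,\EX[e^{2iS}]=e^{-\frac12\cdot 4\cdot N\sigma_\epsilon^2}=e^{-2N\sigma_\epsilon^2}$, by the Gaussian characteristic function evaluated at $t=2$.
\end{itemize}
Substituting both into the rewritten probability yields~\eqref{eq:exact-marginalized-full}. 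This part mirrors the proof of Lemma~\ref{lem:ghz-noise-marginalization}. Exchanging expectation with the bounded integrand needs no justification beyond boundedness.

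For the expansion~\eqref{eq:taylor-prob-full}, we Taylor expand each factor separately. Since $|N\omega|=o(1)$,
\[
\cos(2N\omega)=1-2N^2\omega^2+O(N^4\omega^4).
\]
Since $N\sigma_\epsilon^2=o(1)$,
\[
e^{-2N\sigma_\epsilon^2}=1-2N\sigma_\epsilon^2+O(N^2\sigma_\epsilon^4).
\]
Multiplying the two expansions gives $1-2N^2\omega^2-2N\sigma_\epsilon^2$, plus a cross term $4N^3\omega^2\sigma_\epsilon^2$ and the two quartic remainders. Inserting this into $\tfrac12(1+\cdot)$ produces the leading terms $1-N^2\omega^2-N\sigma_\epsilon^2$. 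The remainder is then $O(N^4\omega^4+N^3\omega^2\sigma_\epsilon^2+N^2\sigma_\epsilon^4)$.

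The main obstacle is bookkeeping for this remainder rather than any analytic difficulty. The statement writes the error as $O(\omega^4,\omega^2\sigma_\epsilon^2,\sigma_\epsilon^4)$, which is correct only if the $N$-dependent prefactors are absorbed into the implied constants. I would make this explicit in the statement, or equivalently note that under the stated regime each remainder is of strictly higher order than the retained terms:
\begin{itemize}
\item $N^4\omega^4=(N^2\omega^2)\cdot o(1)$;
\item $N^3\omega^2\sigma_\epsilon^2=(N^2\omega^2)(N\sigma_\epsilon^2)$;
\item $N^2\sigma_\epsilon^4=(N\sigma_\epsilon^2)\cdot o(1)$.
\end{itemize}
So the expansion is valid with relative error $o(1)$. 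Finally, I would check that the Taylor remainders are controlled uniformly, using the alternating-series bounds for $\cos$ and $e^{-x}$, so that the $O(\cdot)$ claims hold with absolute constants.
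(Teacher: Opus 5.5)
Your proposal is correct and follows the paper's proof essentially step for step: you condition on $S$, rewrite $\cos^2(N\omega+S)$ with the double-angle identity, drop the $\sin(2S)$ term by symmetry, and evaluate $\EX[\cos(2S)]=e^{-2N\sigma_\epsilon^2}$ from the Gaussian characteristic function at $t=2$; you then multiply the Taylor expansions of $\cos(2N\omega)$ and $e^{-2N\sigma_\epsilon^2}$, exactly as the paper does. Your explicit remainder $O(N^4\omega^4+N^3\omega^2\sigma_\epsilon^2+N^2\sigma_\epsilon^4)$ is more careful than the paper's $O(\omega^4,\omega^2\sigma_\epsilon^2,\sigma_\epsilon^4)$, which silently absorbs the $N$-dependent prefactors into the implied constants and is therefore literally accurate only for fixed $N$.
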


\begin{proof}
Since the noise terms $\epsilon_k$ are i.i.d. Gaussian with mean 0 and variance $\sigma_\epsilon^2$, the sum $S = \sum_{k=1}^N \epsilon_k$ is also Gaussian with:
\begin{align}
\EX[S] &= \sum_{k=1}^N \EX[\epsilon_k] = 0,\\
\operatorname{Var}[S] &= \sum_{k=1}^N \operatorname{Var}[\epsilon_k] = N\sigma_\epsilon^2,
\end{align}
where the variance calculation uses independence of the $\epsilon_k$.

Therefore $S \sim \mathcal{N}(0, N\sigma_\epsilon^2)$ with probability density:
\begin{equation}
\phi(S) = \frac{1}{\sqrt{2\pi N\sigma_\epsilon^2}} \exp\left(-\frac{S^2}{2N\sigma_\epsilon^2}\right).
\end{equation}

The marginalized probability is obtained by integrating over the distribution of $S$:
\begin{equation}
\overline{\Pr}(+1 \mid \omega) = \int_{-\infty}^{\infty} \cos^2\left(N\omega + S\right) \phi(S) \, dS.
\end{equation}

Using the trigonometric identity $\cos^2(\theta) = (1 + \cos(2\theta))/2$:
\begin{equation}
\cos^2\left(N\omega + S\right) = \frac{1}{2}\left[1 + \cos(2N\omega + 2S)\right].
\end{equation}

Substituting:
\begin{align}
\overline{\Pr}(+1 \mid \omega) &= \frac{1}{2}\int_{-\infty}^{\infty} \phi(S) \, dS + \frac{1}{2}\int_{-\infty}^{\infty} \cos(2N\omega + 2S) \phi(S) \, dS\\
&= \frac{1}{2} + \frac{1}{2}\int_{-\infty}^{\infty} \cos(2N\omega + 2S) \phi(S) \, dS,
\end{align}
where the first integral equals 1 by normalization.

For the second integral, we expand the cosine:
\begin{equation}
\cos(2N\omega + 2S) = \cos(2N\omega)\cos(2S) - \sin(2N\omega)\sin(2S).
\end{equation}

Since $S$ has a symmetric distribution around zero (Gaussian with mean 0), we have:
\begin{equation}
\int_{-\infty}^{\infty} \sin(2S) \phi(S) \, dS = 0,
\end{equation}
because $\sin(2S)$ is an odd function and $\phi(S)$ is even.

For the cosine term, we use the characteristic function of a Gaussian random variable. For $S \sim \mathcal{N}(0, \sigma^2)$, the characteristic function is:
\begin{equation}
\EX[e^{itS}] = e^{-t^2\sigma^2/2}.
\end{equation}

Taking the real part:
\begin{equation}
\EX[\cos(tS)] = \text{Re}[\EX[e^{itS}]] = e^{-t^2\sigma^2/2}.
\end{equation}

For our case with $S \sim \mathcal{N}(0, N\sigma_\epsilon^2)$ and $t = 2$:
\begin{equation}
\int_{-\infty}^{\infty} \cos(2S) \phi(S) \, dS = e^{-4 \cdot N\sigma_\epsilon^2/2} = e^{-2N\sigma_\epsilon^2}.
\end{equation}

Therefore:
\begin{align}
\int_{-\infty}^{\infty} \cos(2N\omega + 2S) \phi(S) \, dS &= \cos(2N\omega) \int_{-\infty}^{\infty} \cos(2S) \phi(S) \, dS\\
&= \cos(2N\omega) e^{-2N\sigma_\epsilon^2},
\end{align}
and thus:
\begin{equation}
\overline{\Pr}(+1 \mid \omega) = \frac{1}{2}\left[1 + \cos(2N\omega) e^{-2N\sigma_\epsilon^2}\right],
\end{equation}
establishing Equation~\eqref{eq:exact-marginalized-full}.

For the small-angle expansion, we use Taylor series for both the cosine and exponential:
\begin{align}
\cos(2N\omega) &= 1 - \frac{(2N\omega)^2}{2} + O(\omega^4) = 1 - 2N^2\omega^2 + O(\omega^4),\\
e^{-2N\sigma_\epsilon^2} &= 1 - 2N\sigma_\epsilon^2 + O(\sigma_\epsilon^4).
\end{align}

Multiplying:
\begin{align}
\cos(2N\omega) e^{-2N\sigma_\epsilon^2} &= \left(1 - 2N^2\omega^2\right)\left(1 - 2N\sigma_\epsilon^2\right) + O(\omega^4, \sigma_\epsilon^4)\\
&= 1 - 2N^2\omega^2 - 2N\sigma_\epsilon^2 + 4N^3\omega^2\sigma_\epsilon^2 + O(\omega^4, \sigma_\epsilon^4)\\
&= 1 - 2N^2\omega^2 - 2N\sigma_\epsilon^2 + O(\omega^2\sigma_\epsilon^2).
\end{align}

Substituting into Equation~\eqref{eq:exact-marginalized-full}:
\begin{align}
\overline{\Pr}(+1 \mid \omega) &= \frac{1}{2}\left[1 + 1 - 2N^2\omega^2 - 2N\sigma_\epsilon^2 + O(\omega^2\sigma_\epsilon^2)\right]\\
&= \frac{1}{2} + \frac{1}{2} - N^2\omega^2 - N\sigma_\epsilon^2 + O(\omega^2\sigma_\epsilon^2)\\
&= 1 - N^2\omega^2 - N\sigma_\epsilon^2 + O(\omega^2\sigma_\epsilon^2).
\end{align}

This confirms Equation~\eqref{eq:taylor-prob-full}, showing that both the signal term $N^2\omega^2$ and the noise term $N\sigma_\epsilon^2$ enter at the same order in the small-angle expansion.
\end{proof}

\begin{proposition}[Activation Function Monotonicity]
\label{prop:activation-function-full}
Consider the noise-averaged probability function from Lemma~\ref{lem:noise-marginalization-full}. In the regime $|\omega| < \pi/(4N)$ and for moderate noise satisfying $N\sigma_\epsilon^2 < 1$, this probability function is strictly monotonically decreasing in $\omega$ for $\omega > 0$:
\begin{equation}
\frac{d\overline{\Pr}}{d\omega}(+1 \mid \omega) = -N e^{-2N\sigma_\epsilon^2} \sin(2N\omega) < 0 \quad \text{for } \omega \in (0, \pi/(4N)).
\end{equation}

This linear response enables binary search: the measurement outcome provides a reliable indicator of the sign of $\omega$, with sensitivity enhanced by a factor of $N$ (Heisenberg scaling).
\end{proposition}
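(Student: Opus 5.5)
The plan is to differentiate the closed-form marginalized probability of Lemma~\ref{lem:noise-marginalization-full} directly and then read off the sign on the stated interval. Starting from
\begin{equation*}
\overline{\Pr}(+1\mid\omega)=\tfrac12\bigl[1+\cos(2N\omega)\,e^{-2N\sigma_\epsilon^2}\bigr],
\end{equation*}
we note that the damping factor $e^{-2N\sigma_\epsilon^2}$ does not depend on $\omega$. This holds because the Gaussian marginalization over $S=\sum_k\epsilon_k$ has already been carried out and the distribution of $S$ is $\omega$-independent. The chain rule therefore gives
\begin{equation*}
\frac{d\overline{\Pr}}{d\omega}=\tfrac12\,e^{-2N\sigma_\epsilon^2}\cdot(-2N)\sin(2N\omega)=-N e^{-2N\sigma_\epsilon^2}\sin(2N\omega).
\end{equation*}
If one wants to avoid relying on the closed form, differentiation under the integral sign is justified by dominated convergence. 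The integrand $\cos^2(N\omega+S)\phi(S)$ has an $\omega$-derivative bounded by $N\phi(S)$, which is integrable.

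Next comes the sign analysis. For $\omega\in(0,\pi/(4N))$ we have $2N\omega\in(0,\pi/2)$, so $\sin(2N\omega)>0$. Also $N\ge 1$ and $e^{-2N\sigma_\epsilon^2}>0$ for every finite $\sigma_\epsilon$, so the derivative is strictly negative throughout the interval. Strict monotone decrease then follows from the mean value theorem.

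The hypothesis $N\sigma_\epsilon^2<1$ is not needed for the sign. I will remark that it only guarantees the contrast $e^{-2N\sigma_\epsilon^2}>e^{-2}$ is bounded below by a constant. That lower bound is what makes the slope $\Theta(N)$ rather than exponentially suppressed. The slope is $\Theta(N)$ near the operating point, consistent with Lemma~\ref{lem:kl-divergence}, and it yields the Heisenberg-enhanced sensitivity claimed in the second sentence of the statement.

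Finally, for the interpretive claim I will use the antisymmetry $\overline{\Pr}(+1\mid -\omega)=\overline{\Pr}(+1\mid\omega)$, which holds because cosine is even. This shows the response is symmetric about $0$. The binary-search use requires the $R_Z$ shift of Lemma~\ref{eq:phase-shift}'s mechanism, which places the operating point on the monotone branch, for instance at $\omega-\theta_{\mathrm{mid}}$ offset by $\pi/(4N)$. The probability is then monotone across the decision boundary.

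I expect the only real obstacle to be this interpretive step. Literally, the sign of $\omega$ is not detectable at $\theta=0$, since the probability is even in $\omega$. I will therefore state that the indicator is monotone on $(0,\pi/(4N))$, or equivalently on a shifted window after phase correction, rather than claim literal sign detection about zero.
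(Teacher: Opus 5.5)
Your proof is correct and matches the paper's: differentiate the closed form of Lemma~\ref{lem:noise-marginalization-full} to get $-N e^{-2N\sigma_\epsilon^2}\sin(2N\omega)$, then use $\sin(2N\omega)>0$ on $(0,\pi/(4N))$; the paper also checks concavity via the second derivative, which the claim does not need. Your side remarks are sound and go beyond the paper. The hypothesis $N\sigma_\epsilon^2<1$ plays no role in the sign. Because $\overline{\Pr}(+1\mid\omega)$ is even in $\omega$ (this is evenness, not the ``antisymmetry'' you call it), the sign of $\omega$ becomes readable only after a quarter-period phase offset, a point the paper's interpretive sentence glosses over.
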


\begin{proof}
From Equation~\eqref{eq:exact-marginalized-full}, we have:
\begin{equation}
\overline{\Pr}(+1 \mid \omega) = \frac{1}{2}\left[1 + \cos(2N\omega) e^{-2N\sigma_\epsilon^2}\right].
\end{equation}

Taking the derivative with respect to $\omega$:
\begin{align}
\frac{d\overline{\Pr}}{d\omega}(+1 \mid \omega) &= \frac{1}{2} e^{-2N\sigma_\epsilon^2} \cdot \frac{d}{d\omega}[\cos(2N\omega)]\\
&= \frac{1}{2} e^{-2N\sigma_\epsilon^2} \cdot (-2N\sin(2N\omega))\\
&= -N e^{-2N\sigma_\epsilon^2} \sin(2N\omega).
\end{align}

For $\omega \in (0, \pi/(4N))$, we have $2N\omega \in (0, \pi/2)$. In this interval, $\sin(2N\omega) > 0$ strictly. Furthermore:
\begin{itemize}
\item $N > 0$ (number of qubits is positive)
\item $e^{-2N\sigma_\epsilon^2} > 0$ (exponential is always positive)
\end{itemize}

Therefore:
\begin{equation}
\frac{d\overline{\Pr}}{d\omega}(+1 \mid \omega) = -N e^{-2N\sigma_\epsilon^2} \sin(2N\omega) < 0,
\end{equation}
establishing strict monotonic decrease throughout the interval $(0, \pi/(4N))$.

To verify the regime of validity, we compute the second derivative:
\begin{equation}
\frac{d^2\overline{\Pr}}{d\omega^2}(+1 \mid \omega) = -2N^2 e^{-2N\sigma_\epsilon^2} \cos(2N\omega).
\end{equation}

For $2N\omega \in (-\pi/2, \pi/2)$, we have $\cos(2N\omega) > 0$, so:
\begin{equation}
\frac{d^2\overline{\Pr}}{d\omega^2}(+1 \mid \omega) < 0,
\end{equation}
confirming that the function is concave (curves downward) throughout this region.

The sensitivity enhancement factor of $N$ is evident from the derivative: $d\overline{\Pr}/d\omega \propto N\sin(2N\omega) = 2N^2\omega + O(\omega^3)$ for small $\omega$, giving quadratic enhancement in the number of qubits (Heisenberg scaling).
\end{proof}

\subsection{Phase Correction and Quantum Fisher Information}

We now develop the phase correction protocol and establish fundamental precision limits.

\begin{lemma}[$R_Z$ Phase Shift -- Detailed Derivation]
\label{lem:rz-shift-full}
Consider $N$ qubits evolving under external field Hamiltonian $U_k = e^{-i\omega_k Z_k}$. Applying $R_Z(-2\theta) = e^{+i\theta Z}$ (using standard gate convention $R_Z(\alpha) = e^{-i\alpha Z/2}$ with argument $-2\theta$) to all qubits before evolution produces an effective phase shift:
\begin{equation}
U \cdot R_Z(-2\theta)^{\otimes N} = \exp\left(-i\sum_{k=1}^N (\omega_k - \theta) Z_k\right).
\end{equation}
Thus, the effective rotation angles become $\omega'_k = \omega_k - \theta$.
\end{lemma}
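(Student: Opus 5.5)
The identity is a statement about single-qubit exponentials of $Z$, so the plan is to reduce it to a one-line commutation argument. First I will unpack the gate convention: with $R_Z(\alpha) = e^{-i\alpha Z/2}$, substituting $\alpha = -2\theta$ gives $R_Z(-2\theta) = e^{+i\theta Z}$. Hence the left-hand side is
\[
U \cdot R_Z(-2\theta)^{\otimes N} = \Bigl(\prod_{k=1}^N e^{-i\omega_k Z_k}\Bigr)\Bigl(\prod_{k=1}^N e^{i\theta Z_k}\Bigr).
\]

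Next I will use that all the operators involved, namely $Z_1,\ldots,Z_N$ and their exponentials, mutually commute. Operators on different qubits commute trivially, and $Z_k$ commutes with itself. This lets me reorder the product qubit by qubit into $\prod_k \bigl(e^{-i\omega_k Z_k} e^{i\theta Z_k}\bigr)$.

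For each $k$, the two factors are functions of the same operator $Z_k$, so $e^{A}e^{B} = e^{A+B}$ holds without any Baker--Campbell--Hausdorff correction. Each factor therefore combines to $e^{-i(\omega_k-\theta)Z_k}$. Commutativity across $k$ then gives
\[
\prod_k e^{-i(\omega_k-\theta)Z_k} = \exp\Bigl(-i\sum_k (\omega_k-\theta)Z_k\Bigr),
\]
so $\omega_k' = \omega_k - \theta$.

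As a consistency check, I will note that the same argument works if the $R_Z$ correction is applied after the evolution rather than before, because the factors commute. I will also cross-check against Lemma~\ref{lem:parity-measurement-full}: replacing $\omega_k$ by $\omega_k'$ there yields $\Pr(+1) = \cos^2\bigl(N(\omega-\theta)+S\bigr)$, which matches the phase-shift mechanism of Eq.~\eqref{eq:phase-shift}.

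There is essentially no obstacle. The only step needing care is the sign bookkeeping in the gate convention, i.e. confirming that argument $-2\theta$ produces $+\theta$ in the exponent, so that the shift is $-\theta$ rather than $+\theta$.
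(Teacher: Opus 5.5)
Your proposal is correct and follows essentially the same route as the paper's proof: unpack $R_Z(-2\theta)=e^{+i\theta Z}$, use that $e^{-i\omega_k Z_k}$ and $e^{+i\theta Z_k}$ commute so $e^{A}e^{B}=e^{A+B}$ gives $e^{-i(\omega_k-\theta)Z_k}$ on each qubit, and then reassemble the tensor product into $\exp\bigl(-i\sum_k(\omega_k-\theta)Z_k\bigr)$. Your extra remarks, that the correction can equally be applied after the evolution and that substituting into the parity lemma gives $\cos^2\bigl(N(\omega-\theta)+S\bigr)$, are also correct.
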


\begin{proof}
For a single qubit, applying $R_Z(-2\theta) = e^{+i\theta Z}$ before evolution $U = e^{-i\omega Z}$ gives:
\begin{equation}
U \cdot R_Z(-2\theta) = e^{-i\omega Z} \cdot e^{+i\theta Z}.
\end{equation}

Since both operators involve only the Pauli $Z$ matrix, they commute. Using the identity $e^A e^B = e^{A+B}$ for commuting operators:
\begin{equation}
e^{-i\omega Z} \cdot e^{+i\theta Z} = e^{(-i\omega + i\theta) Z} = e^{-i(\omega - \theta) Z}.
\end{equation}

For the $N$-qubit system, each qubit undergoes an independent transformation since $Z_k$ operators on different qubits commute:
\begin{align}
U \cdot R_Z(-2\theta)^{\otimes N} &= \left(\bigotimes_{k=1}^N e^{-i\omega_k Z_k}\right) \cdot \left(\bigotimes_{k=1}^N e^{+i\theta Z_k}\right)\\
&= \bigotimes_{k=1}^N \left(e^{-i\omega_k Z_k} \cdot e^{+i\theta Z_k}\right)\\
&= \bigotimes_{k=1}^N e^{-i(\omega_k - \theta) Z_k}\\
&= \exp\left(-i\sum_{k=1}^N (\omega_k - \theta) Z_k\right).
\end{align}

Thus, applying $R_Z(-2\theta)$ before evolution shifts each effective rotation angle from $\omega_k$ to $\omega_k - \theta$, as claimed.
\end{proof}

\begin{proposition}[Binary Decision Rule]
Consider $N$ qubits prepared in the GHZ state and subjected to:
\begin{enumerate}
\item Phase correction: $R_Z(-2\theta_{\text{test}})^{\otimes N}$
\item Signal evolution: $U_k = e^{-i\omega_k Z_k}$ where $\omega_k = \omega_{\text{true}} + \epsilon_k$
\item Noise: $\epsilon_k \sim \mathcal{N}(0, \sigma_\epsilon^2)$ i.i.d.
\end{enumerate}
followed by X-parity measurement. The probability of outcome $+1$ for fixed noise realization is:
\begin{equation}
\Pr(+1 | \omega_{\text{true}}, \theta_{\text{test}}, S) = \cos^2\left(N(\omega_{\text{true}} - \theta_{\text{test}}) + S\right),
\end{equation}
where $S = \sum_{k=1}^N \epsilon_k$ is the total noise offset.

In the noise-averaged regime:
\begin{equation}
\overline{\Pr}(+1) = \frac{1}{2}\left[1 + \cos(2N\Delta\omega) e^{-2N\sigma_\epsilon^2}\right],
\end{equation}
where $\Delta\omega := \omega_{\text{true}} - \theta_{\text{test}}$. This creates a reliable decision boundary for binary search with discrimination power scaling as $N^2(\Delta\omega)^2$.
\end{proposition}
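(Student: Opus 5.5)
The plan is to reduce the statement to results already established in this appendix, in two stages: first the fixed-noise probability, then the Gaussian average.

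\emph{Fixed-noise formula.} Lemma~\ref{lem:rz-shift-full} gives
\[
U\cdot R_Z(-2\theta_{\text{test}})^{\otimes N}=\exp\Bigl(-i\sum_k(\omega_k-\theta_{\text{test}})Z_k\Bigr).
\]
Acting on the GHZ state, this is exactly the setting of Lemma~\ref{lem:parity-measurement-full}, with each rotation angle $\omega_k$ replaced by $\omega'_k=\omega_k-\theta_{\text{test}}=(\omega_{\text{true}}-\theta_{\text{test}})+\epsilon_k$. That lemma only uses the sum of the angles, and here
\[
\sum_k\omega'_k=N\Delta\omega+S .
\]
Hence $\Pr(+1\mid\omega_{\text{true}},\theta_{\text{test}},S)=\cos^2(N\Delta\omega+S)$. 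Ordering the commuting $Z$-rotations is not an issue, since all of them commute with one another.

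\emph{Noise-averaged formula.} This step is Lemma~\ref{lem:noise-marginalization-full} with $\omega$ replaced by $\Delta\omega$. Write $\cos^2 x=(1+\cos 2x)/2$ and note $S\sim\mathcal N(0,N\sigma_\epsilon^2)$. Then $\EX[\sin 2S]=0$ by symmetry, and the Gaussian characteristic function gives $\EX[\cos 2S]=e^{-2N\sigma_\epsilon^2}$. Together these yield
\[
\overline{\Pr}(+1)=\tfrac12\bigl[1+\cos(2N\Delta\omega)\,e^{-2N\sigma_\epsilon^2}\bigr].
\]
The only thing to check is that the substitution is legitimate: $\Delta\omega$ is deterministic and independent of the $\epsilon_k$.

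\emph{Discrimination power.} This claim is less precise, and it is the step I would treat most carefully. First compute the gap from the no-signal point:
\[
\overline{\Pr}(+1\mid\Delta\omega=0)-\overline{\Pr}(+1\mid\Delta\omega)=\tfrac12e^{-2N\sigma_\epsilon^2}\bigl(1-\cos 2N\Delta\omega\bigr)=e^{-2N\sigma_\epsilon^2}\sin^2(N\Delta\omega).
\]
For $|N\Delta\omega|\le\pi/2$ this equals $\Theta\bigl(N^2(\Delta\omega)^2\bigr)e^{-2N\sigma_\epsilon^2}$, using $\sin^2 x\ge(2x/\pi)^2$. This is consistent with the Taylor form of Lemma~\ref{lem:noise-marginalization-full}. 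Monotonicity on $(0,\pi/(4N))$ follows from Proposition~\ref{prop:activation-function-full}, so the decision boundary is well defined there. By Proposition~\ref{prop:binary-decision} and Lemma~\ref{lem:kl-divergence}, the per-shot distinguishability, measured as the KL divergence between Bernoulli outcomes, then scales as the square of this offset relative to the decision point.

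The main obstacle is pinning down which quantity ``discrimination power'' denotes. I would state it explicitly as the probability gap above, which is $\Theta(N^2\Delta\omega^2)$ in the regime $N\sigma_\epsilon^2=o(1)$ and $|N\Delta\omega|=O(1)$. I would also note that it saturates once $|N\Delta\omega|$ is of order one, which is the Heisenberg-regime condition used in Theorem~\ref{thm:convergence}.
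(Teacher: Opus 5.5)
Your first two steps follow the paper's proof essentially verbatim. You use Lemma~\ref{lem:rz-shift-full} to shift every angle to $\Delta\omega+\epsilon_k$, then Lemma~\ref{lem:parity-measurement-full}, which depends only on the angle sum $N\Delta\omega+S$, and then Lemma~\ref{lem:noise-marginalization-full} with $\omega\to\Delta\omega$.

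For the discrimination clause you use a different quantity from the paper. The paper differentiates, $\partial\overline{\Pr}(+1)/\partial\omega_{\text{true}}=-N\sin(2N\Delta\omega)e^{-2N\sigma_\epsilon^2}=-2N^2\Delta\omega\,e^{-2N\sigma_\epsilon^2}+O(\Delta\omega^3)$, and reads off a small-angle sensitivity $\Theta(N^2|\Delta\omega|)$. You instead take the exact gap $e^{-2N\sigma_\epsilon^2}\sin^2(N\Delta\omega)$. This is the integral of that slope, and it is $\Theta(N^2(\Delta\omega)^2)\,e^{-2N\sigma_\epsilon^2}$ uniformly on $|N\Delta\omega|\le\pi/2$. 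Your version is the one that actually delivers the exponent written in the statement, since the paper's slope is linear in $\Delta\omega$, and it needs no Taylor remainder. The paper's local slope is the quantity that enters the Fisher information.

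Be careful, though, with your bridge to a ``decision boundary.'' Your own formula shows that $\overline{\Pr}(+1)$ depends on $\Delta\omega$ only through $\cos(2N\Delta\omega)$, so it is even in $\Delta\omega$. Hence $\omega_{\text{true}}=\theta_{\text{test}}\pm|\Delta\omega|$ produce identical statistics, and $\Delta\omega=0$ is a maximum. That is exactly why your gap is quadratic and why the paper's slope vanishes there.

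This has two consequences for your last paragraph. First, monotonicity on $(0,\pi/(4N))$ (Proposition~\ref{prop:activation-function-full}) only thresholds $|\Delta\omega|$, not its sign. Second, Lemma~\ref{lem:kl-divergence} controls $D_{\mathrm{KL}}(1/2\,\|\,p)$ through $p-\tfrac12$, whereas your gap is measured from $p(0)=\tfrac12\bigl(1+e^{-2N\sigma_\epsilon^2}\bigr)\approx 1$. These are different reference points, so ``the KL divergence scales as the square of this offset'' does not follow.

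Neither your argument nor the paper's establishes a sign-sensitive boundary at $\Delta\omega=0$. You should drop those two sentences and present the gap as the quantitative content of the clause. If you want a genuine boundary, offset the test phase by $\pi/(4N)$. Writing $\Delta\omega=\Delta\omega'+\pi/(4N)$ gives $\overline{\Pr}(+1)=\tfrac12\bigl[1-\sin(2N\Delta\omega')e^{-2N\sigma_\epsilon^2}\bigr]$, which is odd about $\Delta\omega'=0$. It yields $D_{\mathrm{KL}}(1/2\,\|\,p)=2N^2(\Delta\omega')^2e^{-4N\sigma_\epsilon^2}+O\bigl((N\Delta\omega')^4\bigr)$.
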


\begin{proof}
By Lemma~\ref{lem:rz-shift-full}, applying the phase correction $R_Z(-2\theta_{\text{test}})$ shifts all rotation angles by $-\theta_{\text{test}}$. The effective angles become:
\begin{equation}
\omega'_k = \omega_k - \theta_{\text{test}} = (\omega_{\text{true}} + \epsilon_k) - \theta_{\text{test}} = \Delta\omega + \epsilon_k,
\end{equation}
where $\Delta\omega = \omega_{\text{true}} - \theta_{\text{test}}$ is the error in the test value.

The X-parity measurement now probes these shifted angles. Following the analysis in Lemma~\ref{lem:parity-measurement-full}, the probability of measuring $+1$ for fixed noise realization $\{\epsilon_k\}$ is:
\begin{equation}
\Pr(+1 | \Delta\omega, S) = \cos^2\left(\sum_{k=1}^N \omega'_k\right) = \cos^2\left(\sum_{k=1}^N (\Delta\omega + \epsilon_k)\right),
\end{equation}
which simplifies to:
\begin{equation}
\Pr(+1 | \Delta\omega, S) = \cos^2\left(N\Delta\omega + S\right),
\end{equation}
where $S = \sum_{k=1}^N \epsilon_k$, establishing the first claim.

For the noise-averaged probability, we marginalize over $S \sim \mathcal{N}(0, N\sigma_\epsilon^2)$. By Lemma~\ref{lem:noise-marginalization-full}, substituting $\omega \to \Delta\omega$:
\begin{equation}
\overline{\Pr}(+1) = \frac{1}{2}\left[1 + \cos(2N\Delta\omega) e^{-2N\sigma_\epsilon^2}\right],
\end{equation}
establishing the second claim.

To establish the discrimination power, we compute the derivative with respect to the true parameter:
\begin{equation}
\frac{\partial \overline{\Pr}(+1)}{\partial \omega_{\text{true}}} = \frac{\partial}{\partial \Delta\omega}\left[\frac{1}{2}\left(1 + \cos(2N\Delta\omega) e^{-2N\sigma_\epsilon^2}\right)\right] \cdot \frac{\partial \Delta\omega}{\partial \omega_{\text{true}}}.
\end{equation}

Since $\Delta\omega = \omega_{\text{true}} - \theta_{\text{test}}$ and $\theta_{\text{test}}$ is fixed during a given measurement:
\begin{equation}
\frac{\partial \Delta\omega}{\partial \omega_{\text{true}}} = 1.
\end{equation}

The derivative of the probability is:
\begin{equation}
\frac{\partial \overline{\Pr}(+1)}{\partial \omega_{\text{true}}} = -N\sin(2N\Delta\omega) e^{-2N\sigma_\epsilon^2}.
\end{equation}

For small $\Delta\omega$, using $\sin(2N\Delta\omega) = 2N\Delta\omega + O((N\Delta\omega)^3)$:
\begin{equation}
\frac{\partial \overline{\Pr}(+1)}{\partial \omega_{\text{true}}} = -2N^2\Delta\omega \cdot e^{-2N\sigma_\epsilon^2} + O((\Delta\omega)^3).
\end{equation}

The magnitude of this derivative gives the discrimination power:
\begin{equation}
\left|\frac{\partial \overline{\Pr}(+1)}{\partial \omega_{\text{true}}}\right| = \Theta(N^2|\Delta\omega|),
\end{equation}
demonstrating the $N^2$ scaling characteristic of Heisenberg-limited sensitivity. This quadratic scaling in the number of qubits enables the binary search protocol to efficiently distinguish between parameter values separated by $\Delta\omega = \Theta(1/N)$.
\end{proof}

\begin{theorem}[Quantum Fisher Information]
\label{thm:qfi-binary-search}
Consider $N$ qubits prepared in the GHZ state and subjected to phase evolution where each qubit experiences rotation angle $\omega_k = \omega + \epsilon_k$, with $\omega$ the uniform signal parameter and $\epsilon_k \sim \mathcal{N}(0, \sigma_\epsilon^2)$ i.i.d. noise terms.

\textbf{Noiseless case} ($\sigma_\epsilon = 0$): The quantum Fisher information at $\omega = 0$ is
\begin{equation}
F_Q(0) = 4N^2.
\end{equation}

\textbf{With noise, small-angle regime}: The noise-averaged quantum Fisher information is
\begin{equation}
\overline{F}_Q(0) = 4N^2 e^{-4N\sigma_\epsilon^2} = 4N^2(1 - 4N\sigma_\epsilon^2 + O((N\sigma_\epsilon^2)^2)).
\end{equation}

\textbf{Cramér-Rao bound}: For $M$ independent measurements, any unbiased estimator $\hat{\omega}$ satisfies
\begin{equation}
\operatorname{Var}(\hat{\omega}) \geq \frac{1}{M \cdot \overline{F}_Q(0)} = \Theta\left(\frac{e^{4N\sigma_\epsilon^2}}{MN^2}\right).
\end{equation}

\textbf{Heisenberg scaling}: When $N\sigma_\epsilon^2 = o(1)$, the achievable variance is $\operatorname{Var}(\hat{\omega}) = \Theta(1/(MN^2))$, demonstrating quadratic advantage over uncorrelated measurements.
\end{theorem}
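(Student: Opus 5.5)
The proof splits into the noiseless computation, the noise-averaged computation, and the Cramér--Rao consequences.

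\emph{Noiseless case.} By Lemma~\ref{lem:parity-measurement-full}, the evolved state is a pure state generated by $H=\sum_k Z_k$, namely $|\psi(\omega)\rangle = e^{-i\omega \sum_k Z_k}|\text{GHZ}\rangle$. For a pure state the QFI is $4\operatorname{Var}_\psi(H)$. On the GHZ state, $\langle \sum_k Z_k\rangle = \tfrac12(N - N)=0$ and $\langle (\sum_k Z_k)^2\rangle = \tfrac12(N^2+N^2)=N^2$. This gives $F_Q=4N^2$ for every $\omega$, in particular at $\omega=0$.

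\emph{Noise-averaged case.} After marginalizing over $S\sim\mathcal{N}(0,N\sigma_\epsilon^2)$, the state is the mixture
\[
\bar\rho(\omega)=\tfrac12\bigl(|0\rangle\langle0|^{\otimes N}+|1\rangle\langle1|^{\otimes N}\bigr)+\tfrac12\bigl(\lambda e^{-2iN\omega}|0\rangle\langle 1|^{\otimes N}+\text{h.c.}\bigr),
\qquad \lambda=\EX[e^{-2iS}]=e^{-2N\sigma_\epsilon^2}.
\]
The coherence factor $\lambda$ comes from the Gaussian characteristic function, exactly as in Lemma~\ref{lem:noise-marginalization-full}. This state lives in the two-dimensional span of $|0\rangle^{\otimes N},|1\rangle^{\otimes N}$. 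It is therefore an effective qubit with Bloch vector of length $\lambda$ in the equatorial plane, rotating at angular rate $2N$ in $\omega$.

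For a qubit whose Bloch vector $\vec r$ has fixed length and rotates about the $z$-axis, the SLD/QFI formula gives $F_Q=|\partial_\omega \vec r|^2$, since the radial derivative vanishes. Here this yields $F_Q=(2N\lambda)^2=4N^2e^{-4N\sigma_\epsilon^2}$, independent of $\omega$ and so in particular at $0$. As a consistency check, this matches the classical Fisher information of X-parity from Proposition~\ref{prop:ghz-qfi} at $\cos(2N\omega)=0$, so the bound is attained. Expanding $e^{-4N\sigma_\epsilon^2}$ gives the stated series.

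\emph{Main obstacle.} The delicate point is interpretation: ``noise-averaged QFI'' could mean either $\EX_S[F_Q]$ or $F_Q(\EX_S\rho)$. The first equals $4N^2$, because $S$ only adds an $\omega$-independent phase. I would adopt the second, since the experimenter does not know $S$ and it is the only reading that produces $e^{-4N\sigma_\epsilon^2}$. I would justify it by noting that it lower-bounds any estimator that is blind to $S$.

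To make the qubit-reduction rigorous, I would compute the eigendecomposition of $\bar\rho$ explicitly and apply the mixed-state QFI formula $F_Q=2\sum_{i,j}\frac{|\langle i|\partial\rho|j\rangle|^2}{p_i+p_j}$. The eigenvalues are $(1\pm\lambda)/2$, and the only nonzero matrix element is off-diagonal, equal to $N\lambda$ in magnitude; summing both orderings gives $F_Q=2\cdot 2(N\lambda)^2=4N^2\lambda^2$.

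\emph{Cramér--Rao and Heisenberg scaling.} The bound $\operatorname{Var}(\hat\omega)\ge 1/(M\bar F_Q)$ follows from the quantum Cramér--Rao bound with additivity over $M$ independent shots. When $N\sigma_\epsilon^2=o(1)$ we have $e^{4N\sigma_\epsilon^2}=\Theta(1)$, and together with the attainability by X-parity this gives $\operatorname{Var}(\hat\omega)=\Theta(1/(MN^2))$ asymptotically.
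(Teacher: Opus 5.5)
Your proposal is correct, and it takes a different route from the paper. The paper never computes a quantum Fisher information directly. It takes the noise-marginalized Bernoulli likelihood of X-parity, $\bar p(\omega)=\tfrac12(1+\lambda\cos 2N\omega)$, and computes its classical Fisher information $4\lambda^2N^2\sin^2(2N\omega)/(1-\lambda^2\cos^2(2N\omega))$. It then evaluates this at the working point $\cos(2N\omega)=0$ and asserts that the result equals $F_Q$ because X-parity is optimal for GHZ states. You instead compute the QFI itself: $4\operatorname{Var}(\sum_k Z_k)$ for the pure state, and the SLD (or Bloch-vector) formula for the dephased effective qubit $\bar\rho$. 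The parity calculation then serves only as an attainability check.

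Your route gives two things the paper's does not. First, it proves the upper bound over all measurements that the ``any unbiased estimator'' Cram\'er--Rao statement needs; a single measurement's classical Fisher information is only a lower bound on $F_Q$. Second, it shows $F_Q$ is independent of $\omega$, so the label ``at $\omega=0$'' is literally correct. The paper's value is instead obtained at $2N\omega=\pi/2$: at $\omega=0$ the noisy parity Fisher information vanishes, and the noiseless formula there is $0/0$. The paper's route, in turn, directly exhibits the measurement and operating point used by the binary-search protocol.

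Your disambiguation of ``noise-averaged'' is also apt; the paper's marginalized-likelihood computation implicitly uses $F_Q(\EX_S\rho)$. Two small refinements are worth making:
\begin{itemize}
\item Justify that reading by noting that when the $\epsilon_k$ are redrawn independently on each shot, every shot is exactly a copy of $\bar\rho$. The $M$-shot state is then $\bar\rho^{\otimes M}$ and additivity applies. Saying it ``lower-bounds any estimator blind to $S$'' is looser. For inhomogeneities held fixed across shots, the relevant state would be $\EX_S[\rho_S^{\otimes M}]$ and the bound would need separate treatment.
\item Attainment at $\omega=0$ itself requires a rotated parity measurement, e.g.\ an $R_Z$ offset that moves the working point to $\cos(2N\omega)=0$. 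It also relies on the usual asymptotic efficiency of the maximum-likelihood estimator.
\end{itemize}
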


\begin{proof}
We compute the quantum Fisher information using the classical Fisher information of the X-parity measurement. For the noiseless case with outcome probabilities from Lemma~\ref{lem:parity-measurement-full}:
\begin{align}
\Pr(v = 0 | \omega) &= \cos^2(N\omega) = \frac{1 + \cos(2N\omega)}{2},\\
\Pr(v = 1 | \omega) &= \sin^2(N\omega) = \frac{1 - \cos(2N\omega)}{2},
\end{align}
where $v = 0$ corresponds to X-parity outcome $+1$ and $v = 1$ to outcome $-1$.

The classical Fisher information for a two-outcome measurement is defined as:
\begin{equation}
F(\omega) = \sum_{v \in \{0,1\}} \frac{1}{\Pr(v|\omega)} \left[\frac{\partial \Pr(v|\omega)}{\partial \omega}\right]^2.
\end{equation}

Computing the derivatives:
\begin{align}
\frac{\partial \Pr(v=0|\omega)}{\partial \omega} &= \frac{\partial}{\partial \omega}\left[\frac{1 + \cos(2N\omega)}{2}\right]\\
&= \frac{1}{2} \cdot (-2N\sin(2N\omega)) = -N\sin(2N\omega),\\
\frac{\partial \Pr(v=1|\omega)}{\partial \omega} &= \frac{\partial}{\partial \omega}\left[\frac{1 - \cos(2N\omega)}{2}\right]\\
&= \frac{1}{2} \cdot 2N\sin(2N\omega) = N\sin(2N\omega).
\end{align}

Substituting into the Fisher information formula:
\begin{align}
F(\omega) &= \frac{1}{\Pr(v=0|\omega)}\left[-N\sin(2N\omega)\right]^2 + \frac{1}{\Pr(v=1|\omega)}\left[N\sin(2N\omega)\right]^2\\
&= N^2\sin^2(2N\omega)\left[\frac{1}{\Pr(v=0|\omega)} + \frac{1}{\Pr(v=1|\omega)}\right]\\
&= N^2\sin^2(2N\omega)\left[\frac{2}{1 + \cos(2N\omega)} + \frac{2}{1 - \cos(2N\omega)}\right].
\end{align}

Combining the fractions:
\begin{align}
F(\omega) &= 2N^2\sin^2(2N\omega) \cdot \frac{(1-\cos(2N\omega)) + (1+\cos(2N\omega))}{(1 + \cos(2N\omega))(1 - \cos(2N\omega))}\\
&= 2N^2\sin^2(2N\omega) \cdot \frac{2}{1 - \cos^2(2N\omega)}\\
&= 2N^2\sin^2(2N\omega) \cdot \frac{2}{\sin^2(2N\omega)}\\
&= 4N^2,
\end{align}
where we used the identity $1 - \cos^2(\theta) = \sin^2(\theta)$.

Remarkably, the classical Fisher information is constant $F(\omega) = 4N^2$ for all $\omega$ (as long as $\sin(2N\omega) \neq 0$).

For GHZ states measuring a collective phase via X-parity, the quantum Fisher information equals the classical Fisher information when the measurement is optimal:
\begin{equation}
F_Q = F = 4N^2.
\end{equation}

This achieves the Heisenberg limit where the QFI scales as $N^2$ rather than $N$. The factor of 4 arises from the doubled argument $2N\omega$ in the probability formula (the relative phase between the GHZ branches is $2\theta$ where $\theta = N\omega$). This establishes the first claim.

With noise averaging, the probabilities become (from Lemma~\ref{lem:noise-marginalization-full}):
\begin{equation}
\overline{\Pr}(v=0|\omega) = \frac{1 + \cos(2N\omega) e^{-2N\sigma_\epsilon^2}}{2}.
\end{equation}

Following similar calculations with the exponential damping factor, let $c = e^{-2N\sigma_\epsilon^2}$ for brevity:
\begin{align}
\overline{\Pr}(v=0|\omega) &= \frac{1 + c\cos(2N\omega)}{2},\\
\overline{\Pr}(v=1|\omega) &= \frac{1 - c\cos(2N\omega)}{2}.
\end{align}

The derivatives are:
\begin{align}
\frac{\partial \overline{\Pr}(v=0|\omega)}{\partial \omega} &= -cN\sin(2N\omega),\\
\frac{\partial \overline{\Pr}(v=1|\omega)}{\partial \omega} &= cN\sin(2N\omega).
\end{align}

The Fisher information becomes:
\begin{align}
\overline{F}(\omega) &= c^2N^2\sin^2(2N\omega)\left[\frac{2}{1 + c\cos(2N\omega)} + \frac{2}{1 - c\cos(2N\omega)}\right]\\
&= 2c^2N^2\sin^2(2N\omega) \cdot \frac{2}{1 - c^2\cos^2(2N\omega)}\\
&= \frac{4c^2N^2\sin^2(2N\omega)}{1 - c^2\cos^2(2N\omega)}.
\end{align}

At the optimal operating point where $\sin^2(2N\omega)$ is maximized ($\cos(2N\omega) = 0$, i.e., $2N\omega = \pi/2$), we have $\sin^2(2N\omega) = 1$ and:
\begin{equation}
\overline{F}_{\max} = \frac{4c^2N^2}{1 - 0} = 4c^2N^2 = 4N^2 e^{-4N\sigma_\epsilon^2}.
\end{equation}

For small noise $N\sigma_\epsilon^2 = o(1)$:
\begin{equation}
\overline{F}_{\max} = 4N^2(1 - 4N\sigma_\epsilon^2 + O((N\sigma_\epsilon^2)^2)).
\end{equation}

The quantum Fisher information equals the classical Fisher information at the optimal point:
\begin{equation}
\overline{F}_Q = \overline{F}_{\max} = 4N^2 e^{-4N\sigma_\epsilon^2},
\end{equation}
establishing the second claim.

The Cramér-Rao bound states that for any unbiased estimator $\hat{\omega}$ based on $M$ independent measurements:
\begin{equation}
\operatorname{Var}(\hat{\omega}) \geq \frac{1}{M \cdot F_Q} = \frac{e^{4N\sigma_\epsilon^2}}{4MN^2},
\end{equation}
establishing the third claim.

For the Heisenberg scaling regime where $N\sigma_\epsilon^2 = o(1)$, we have $e^{4N\sigma_\epsilon^2} = 1 + o(1)$, giving:
\begin{equation}
\operatorname{Var}(\hat{\omega}) \geq \frac{1}{4MN^2} = \Omega\left(\frac{1}{MN^2}\right),
\end{equation}
demonstrating the characteristic $1/N^2$ scaling of Heisenberg-limited metrology. This is a quadratic improvement over the standard quantum limit where uncorrelated measurements of $N$ qubits achieve variance $\operatorname{Var}(\hat{\omega})_{\text{SQL}} = O(1/N)$, establishing the fourth claim.
\end{proof}

\subsection{Binary Search Convergence}

We now establish the convergence properties of the adaptive binary search protocol.

\begin{lemma}[Majority Vote Correctness]
\label{lem:majority-vote-full}
Consider the binary search protocol at iteration $t$ with test phase $\omega_{\text{test}}$. Suppose the true phase $\omega_{\text{true}}$ satisfies $|\omega_{\text{true}} - \omega_{\text{test}}| \geq \Delta_{\min}$ where $\Delta_{\min}$ is the minimum reliably distinguishable phase difference.

With $M$ independent parity measurements per iteration and majority voting, the probability of an incorrect decision is bounded by
\begin{equation}
P(\text{majority vote error}) \leq \exp\left(-M \cdot D_{\mathrm{KL}}(1/2 \parallel p)\right),
\label{eq:chernoff-bound-full}
\end{equation}
where $p = \overline{\Pr}(+1 \mid \omega_{\text{true}} - \omega_{\text{test}})$ is the noise-averaged probability from Lemma~\ref{lem:noise-marginalization-full}, and
\begin{equation}
D_{\mathrm{KL}}(1/2 \parallel p) = \frac{1}{2}\ln\frac{1}{2p} + \frac{1}{2}\ln\frac{1}{2(1-p)} = -\frac{1}{2}\ln(4p(1-p))
\end{equation}
is the Kullback-Leibler divergence between the uniform distribution and Bernoulli($p$).

In the small-angle regime where $\Delta\omega = \omega_{\text{true}} - \omega_{\text{test}}$ is small, the KL divergence is
\begin{equation}
D_{\mathrm{KL}}(1/2 \parallel p) = 2 \left[\frac{N^2(\Delta\omega)^2}{4}\right]^2 e^{-4N\sigma_\epsilon^2} + O((\Delta\omega)^6) = \frac{N^4(\Delta\omega)^4}{8} e^{-4N\sigma_\epsilon^2} + O((\Delta\omega)^6).
\end{equation}

To achieve error probability $\leq \delta$ per iteration with phase separation $\Delta\omega$, it suffices to use
\begin{equation}
M \geq \frac{\ln(1/\delta)}{D_{\mathrm{KL}}(1/2 \parallel p)} = O\left(\frac{(1 + 4N\sigma_\epsilon^2)\ln(1/\delta)}{N^4(\Delta\omega)^4}\right)
\end{equation}
measurements.
\end{lemma}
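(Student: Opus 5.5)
The proof splits into three steps: a large-deviation bound for the majority vote, a small-angle expansion of the KL divergence, and an inversion to obtain $M$.

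\textbf{Step 1: error bound \eqref{eq:chernoff-bound-full}.} Treat the $M$ parity outcomes as i.i.d.\ Bernoulli($p$) variables, with $p = \overline{\Pr}(+1\mid \Delta\omega)$ taken from Lemma~\ref{lem:noise-marginalization-full}. Conditioning on $S$ and then marginalizing is legitimate because each round uses an independent noise realization, so the marginal outcomes are i.i.d. A wrong decision means the empirical frequency of $+1$ falls on the wrong side of $1/2$. Assume $p>1/2$; the case $p<1/2$ follows by symmetry. Sanov's theorem, or the Chernoff bound $\Pr(\bar X\le 1/2)\le \exp(-M\,D_{\mathrm{KL}}(1/2\|p))$ as used in Proposition~\ref{prop:binary-decision}, then gives \eqref{eq:chernoff-bound-full}. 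The closed form follows from
\[
\tfrac12\ln\tfrac{1/2}{p}+\tfrac12\ln\tfrac{1/2}{1-p} = -\tfrac12\ln(4p(1-p)).
\]

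\textbf{Step 2: small-angle expansion.} Write $p=1/2+\epsilon_p$. Lemma~\ref{lem:kl-divergence} gives $4p(1-p)=1-4\epsilon_p^2$, hence $D_{\mathrm{KL}}=2\epsilon_p^2+O(\epsilon_p^4)$. What remains is to identify $\epsilon_p$ as a function of $\Delta\omega$ from the marginalized likelihood $\tfrac12[1+\cos(2N\Delta\omega)e^{-2N\sigma_\epsilon^2}]$, measured relative to the decision threshold. Expanding $\cos(2N\Delta\omega)$ to second order, the $\Delta\omega$-dependent deviation is $\Theta(N^2(\Delta\omega)^2)$ times the contrast factor $e^{-2N\sigma_\epsilon^2}$. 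I would match this to $\epsilon_p = \tfrac{N^2(\Delta\omega)^2}{4}e^{-2N\sigma_\epsilon^2}+O((\Delta\omega)^4)$, checking the constant carefully against the threshold convention. Squaring then produces the factor $e^{-4N\sigma_\epsilon^2}$ and the leading term $N^4(\Delta\omega)^4/8$. The remainder $O(\epsilon_p^4)$, together with the quartic corrections in $\epsilon_p$, is $O((\Delta\omega)^6)$ or smaller, which gives the stated error term.

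\textbf{Step 3: required $M$.} Requiring $\exp(-M D_{\mathrm{KL}})\le\delta$ gives $M\ge\ln(1/\delta)/D_{\mathrm{KL}}$. Substituting the expansion yields
\[
M \ge 8e^{4N\sigma_\epsilon^2}\ln(1/\delta)/(N^4(\Delta\omega)^4).
\]
In the moderate-noise regime $N\sigma_\epsilon^2\le 1$ assumed in Proposition~\ref{prop:activation-function-full}, use $e^{x}\le 1+(e^{4}-1)x/4$ for $x=4N\sigma_\epsilon^2\in[0,4]$, which replaces $e^{4N\sigma_\epsilon^2}$ by $O(1+4N\sigma_\epsilon^2)$. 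Finally, because $D_{\mathrm{KL}}$ is increasing in $|\epsilon_p|$, hence in $|\Delta\omega|$ in this regime, the bound at $|\Delta\omega|=\Delta_{\min}$ covers every admissible separation.

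\textbf{Main obstacle.} The delicate step is Step 2, namely fixing $\epsilon_p$ correctly. Near $\Delta\omega=0$ the raw probability $p$ sits near $1$, not near $1/2$, so the quantity being expanded must be the deviation relative to the threshold used in the majority vote. This must be done consistently, so that the leading term is genuinely quadratic in $\Delta\omega$ and carries the contrast factor $e^{-2N\sigma_\epsilon^2}$ exactly once. I would first write $p$ exactly, then Taylor-expand $-\tfrac12\ln(4p(1-p))$ directly in $\Delta\omega$. This avoids mixing orders and lets me read off the constant $1/8$ and the size of the remainder.
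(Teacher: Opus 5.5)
Your Steps 1 and 3 follow the same route as the paper: a method-of-types/Chernoff bound for i.i.d.\ Bernoulli outcomes, then inverting $\exp(-M D_{\mathrm{KL}})\le\delta$. As statements about Bernoulli concentration they are correct. Your chord bound $e^x\le 1+(e^4-1)x/4$ on $[0,4]$ is also a cleaner way to get the $(1+4N\sigma_\epsilon^2)$ factor than the paper's Taylor expansion.

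The genuine gap is Step 2. You never derive $\epsilon_p$; you propose to match it to $\tfrac{N^2(\Delta\omega)^2}{4}e^{-2N\sigma_\epsilon^2}$, which is just the value that produces the target, and that value cannot be derived. The threshold is forced to be $1/2$: it is the $1/2$ in $D_{\mathrm{KL}}(1/2\,\|\,p)$ and the $M/2$ in the majority rule. So with $p$ from Lemma~\ref{lem:noise-marginalization-full} you get exactly $\epsilon_p=\tfrac{1}{2}e^{-2N\sigma_\epsilon^2}\cos(2N\Delta\omega)$, and
\begin{align*}
D_{\mathrm{KL}}(1/2\,\|\,p) &= -\tfrac{1}{2}\ln\bigl(1 - e^{-4N\sigma_\epsilon^2}\cos^2(2N\Delta\omega)\bigr)\\
&= -\tfrac{1}{2}\ln\bigl(1-e^{-4N\sigma_\epsilon^2}\bigr) - \frac{2N^2(\Delta\omega)^2e^{-4N\sigma_\epsilon^2}}{1-e^{-4N\sigma_\epsilon^2}} + O\bigl((\Delta\omega)^4\bigr).
\end{align*}
The expansion needs $\sigma_\epsilon>0$; at $\sigma_\epsilon=0$ the exact value $-\ln|\sin(2N\Delta\omega)|$ is infinite at $\Delta\omega=0$.

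This is exactly what your own fallback (expanding $-\tfrac{1}{2}\ln(4p(1-p))$ directly in $\Delta\omega$) would produce. It has a nonzero constant term, a negative quadratic term, and no quartic leading term. No threshold convention turns it into $N^4(\Delta\omega)^4e^{-4N\sigma_\epsilon^2}/8$. Even if you discard the offset, the $\Delta\omega$-dependent part of $p$ is $-N^2(\Delta\omega)^2e^{-2N\sigma_\epsilon^2}$, with coefficient $N^2$ rather than $N^2/4$. Moreover, $2\epsilon_p^2+O(\epsilon_p^4)$ is not a usable truncation when $|\epsilon_p|$ is near $1/2$.

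The paper's proof does not close this either. It writes $p=\tfrac{1}{2}-\tfrac{N^2(\Delta\omega)^2}{4}e^{-2N\sigma_\epsilon^2}+O((\Delta\omega)^4)$ and attributes it to Lemma~\ref{lem:noise-marginalization-full}, whose own expansion is $1-N^2\omega^2-N\sigma_\epsilon^2+\cdots$. So the displayed quartic formula does not follow from the stated model. A correct write-up has to say so rather than tune constants.

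The same defect breaks your closing monotonicity argument and undercuts the meaning of Step 1. Under the stated $p$, $|\epsilon_p|$ \emph{decreases} in $|\Delta\omega|$ on $[0,\pi/(4N)]$. So $D_{\mathrm{KL}}$ is largest at $\Delta\omega=0$ and tends to $0$ as $2N|\Delta\omega|\to\pi/2$, and a bound at $\Delta_{\min}$ does not cover larger separations. More basically, $p$ is even in $\Delta\omega$ and exceeds $1/2$ for every $|\Delta\omega|<\pi/(4N)$. The event you bound (empirical frequency on the wrong side of $1/2$) is therefore not the event ``wrong half-interval chosen'', which needs $\operatorname{sign}(p-1/2)$ to track $\operatorname{sign}(\Delta\omega)$.

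What makes the lemma meaningful is a quadrature offset: an extra total phase $\pi/4$, i.e.\ $\pi/(4N)$ per qubit. This gives $p=\tfrac{1}{2}\bigl[1-e^{-2N\sigma_\epsilon^2}\sin(2N\Delta\omega)\bigr]$ and $\epsilon_p=-N\Delta\omega\,e^{-2N\sigma_\epsilon^2}+O((\Delta\omega)^3)$. Then $D_{\mathrm{KL}}=2N^2(\Delta\omega)^2e^{-4N\sigma_\epsilon^2}+O((\Delta\omega)^4)$, which is increasing in $|\Delta\omega|$ for $2N|\Delta\omega|\le\pi/2$. With that model your Steps 1 and 3 go through. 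Since $N|\Delta\omega|=O(1)$, the final sufficient-$M$ bound is still implied (indeed improved), but the quartic expansion in the statement is not.
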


\begin{proof}
Each parity measurement is an independent Bernoulli trial with success probability $p = \overline{\Pr}(+1 \mid \Delta\omega)$ where $\Delta\omega := \omega_{\text{true}} - \omega_{\text{test}}$. Let $X_1, \ldots, X_M$ be i.i.d. Bernoulli random variables with $\Pr(X_i = 1) = p$ and $\Pr(X_i = 0) = 1-p$.

The majority vote decision is based on the empirical average:
\begin{equation}
\bar{X} = \frac{1}{M}\sum_{i=1}^M X_i.
\end{equation}

If the true probability satisfies $p > 1/2$ (meaning $\omega_{\text{true}} < \omega_{\text{test}}$ based on Proposition~\ref{prop:activation-function-full}), then the majority vote should yield outcome $+1$ (i.e., $\bar{X} > 1/2$). A majority vote error occurs when $\bar{X} \leq 1/2$.

By the method of types for binary hypothesis testing~\cite[Theorem~11.1.4]{cover2006elements}, the probability that the empirical distribution of i.i.d.\ samples from distribution $p$ is closer to a different distribution $q$ than to $p$ decays exponentially:
\begin{equation}
\Pr\left(\text{empirical distribution closer to } q \text{ than to } p\right) \leq \exp\left(-M \cdot D_{\mathrm{KL}}(q \parallel p)\right).
\end{equation}

For our binary case, when $p \neq 1/2$, the probability of majority error (empirical average on the wrong side of $1/2$) is bounded by:
\begin{equation}
\Pr(\text{majority error}) \leq \exp\left(-M \cdot D_{\mathrm{KL}}(1/2 \parallel p)\right),
\end{equation}
establishing Equation~\eqref{eq:chernoff-bound-full}.

For the small-angle regime calculation, we use the Taylor expansion from Lemma~\ref{lem:noise-marginalization-full}:
\begin{equation}
p = \overline{\Pr}(+1 \mid \Delta\omega) = \frac{1}{2} - \frac{N^2(\Delta\omega)^2}{4} e^{-2N\sigma_\epsilon^2} + O(\Delta\omega^4).
\end{equation}

Define $\epsilon_p := p - 1/2$. For $|\epsilon_p| = o(1)$, we expand the KL divergence using $4p(1-p) = 1 - 4\epsilon_p^2$:
\begin{equation}
D_{\mathrm{KL}}(1/2 \parallel p) = -\frac{1}{2}\ln(1 - 4\epsilon_p^2) = 2\epsilon_p^2 + 8\epsilon_p^4 + O(\epsilon_p^6),
\end{equation}
where we used $-\ln(1-x) = x + x^2/2 + O(x^3)$ with $x = 4\epsilon_p^2$.

From Lemma~\ref{lem:noise-marginalization-full}, in the small-angle regime:
\begin{equation}
\epsilon_p = p - \frac{1}{2} = - \frac{N^2(\Delta\omega)^2}{4} e^{-2N\sigma_\epsilon^2} + O((\Delta\omega)^4).
\end{equation}

Therefore:
\begin{equation}
\epsilon_p^2 = \frac{N^4(\Delta\omega)^4}{16} e^{-4N\sigma_\epsilon^2} + O((\Delta\omega)^6).
\end{equation}

Substituting:
\begin{equation}
D_{\mathrm{KL}}(1/2 \parallel p) = 2 \cdot \frac{N^4(\Delta\omega)^4}{16} e^{-4N\sigma_\epsilon^2} + O((\Delta\omega)^6) = \frac{N^4(\Delta\omega)^4}{8} e^{-4N\sigma_\epsilon^2} + O((\Delta\omega)^6),
\end{equation}
establishing the approximation formula.

The minimum distinguishable phase $\Delta_{\min}$ is determined by requiring $D_{\mathrm{KL}} = \Theta(1)$ so that with $O(1)$ measurements, we can reliably distinguish the two cases. Setting:
\begin{equation}
\frac{N^4 \Delta_{\min}^4}{8} e^{-4N\sigma_\epsilon^2} = \Theta(1),
\end{equation}
we obtain:
\begin{equation}
N^4 \Delta_{\min}^4 = \Theta\left(8 e^{4N\sigma_\epsilon^2}\right).
\end{equation}

For the noiseless case ($\sigma_\epsilon = 0$):
\begin{equation}
\Delta_{\min} = \Theta\left(\frac{8^{1/4}}{N}\right) = \Theta(1/N),
\end{equation}
which is precisely the Heisenberg limit. The constant $1/8$ determines the KL divergence scaling in Algorithm~\ref{alg:binary-search}.

To achieve per-iteration error probability $\leq \delta$, we require:
\begin{equation}
\exp(-M \cdot D_{\mathrm{KL}}) \leq \delta \quad \Rightarrow \quad M \geq \frac{\ln(1/\delta)}{D_{\mathrm{KL}}}.
\end{equation}

Substituting the expression for $D_{\mathrm{KL}}$ and using $e^{4N\sigma_\epsilon^2} = 1 + 4N\sigma_\epsilon^2 + O((N\sigma_\epsilon^2)^2)$ for moderate noise:
\begin{equation}
M \geq 8\ln(1/\delta) \cdot \frac{e^{4N\sigma_\epsilon^2}}{N^4(\Delta\omega)^4} = O\left(\frac{(1 + 4N\sigma_\epsilon^2)\ln(1/\delta)}{N^4(\Delta\omega)^4}\right),
\end{equation}
establishing the final claim.
\end{proof}

\begin{theorem}[Binary Search Convergence and Resources]
\label{thm:convergence-full}
The binary search protocol (Algorithm~\ref{alg:binary-search}) achieves target precision $\epsilon > 0$ with success probability $\geq 1 - \delta$ using the following resources:

\textbf{1. Number of iterations:}
\begin{equation}
T = \left\lceil \log_2\left(\frac{\Omega_0}{\epsilon}\right) \right\rceil = O(\log(1/\epsilon)),
\end{equation}
where $\Omega_0 = \pi/(4N)$ is the initial search interval width.

\textbf{2. Measurements per iteration:}
\begin{equation}
M = O\left(\frac{\log(T/\delta)}{D_{\mathrm{KL}}}\right) = O(\log\log(1/\epsilon) \cdot \log(1/\delta)),
\end{equation}
where $D_{\mathrm{KL}}$ is the KL divergence at the minimum distinguishable separation from Lemma~\ref{lem:majority-vote-full}.

\textbf{3. Total parity measurements:}
\begin{equation}
M_{\text{total}} = M \cdot T = O(\log(1/\epsilon) \log\log(1/\epsilon) \log(1/\delta)).
\end{equation}

\textbf{4. Total qubit-experiments:}
\begin{equation}
N_{\text{total}} = N \cdot M \cdot T = O(N \log(1/\epsilon) \log\log(1/\epsilon) \log(1/\delta)).
\end{equation}

\textbf{5. Heisenberg scaling regime:} For target precision $\epsilon = c/(aN)$ where $a, c$ are positive constants:
\begin{equation}
T = O(1), \quad N_{\text{total}} = O(N \log(1/\delta)),
\end{equation}
matching the Heisenberg limit $\epsilon \propto 1/N$ with logarithmic overhead.

\textbf{6. Variance bound:} The final estimate $\hat{\omega}$ satisfies
\begin{equation}
\operatorname{Var}(\hat{\omega}) = O(\epsilon^2), \quad \epsilon^2 \cdot M \cdot T \cdot \overline{F}_Q = O(\log^2(1/\epsilon)\log(1/\delta)),
\end{equation}
where $\overline{F}_Q$ is the quantum Fisher information from Theorem~\ref{thm:qfi-binary-search}. The protocol achieves the quantum Cramér-Rao bound $\operatorname{Var}_{\text{CR}} = \Theta(1/(M \cdot T \cdot \overline{F}_Q))$ up to poly-logarithmic factors.
\end{theorem}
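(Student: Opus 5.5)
The plan is to prove the six items in order, each resting on a lemma already established in this appendix.

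For item 1, each iteration halves the interval. Starting from $\Omega_0 = \pi/(4N)$, after $t$ correct decisions the width is $\Omega_0/2^t$. Requiring $\Omega_0/2^T \le \epsilon$ gives $T = \lceil \log_2(\Omega_0/\epsilon)\rceil = O(\log(1/\epsilon))$.

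For item 2, we allocate failure probability $\delta/T$ to each iteration. By Lemma~\ref{lem:majority-vote-full}, $M \ge \ln(T/\delta)/D_{\mathrm{KL}}$ suffices. The main obstacle is bounding $D_{\mathrm{KL}}$ from below uniformly. The small-angle formula $N^4(\Delta\omega)^4 e^{-4N\sigma_\epsilon^2}/8$ degenerates when the true phase sits near the test point $\theta_{\text{mid}}$, so a pure bisection cannot guarantee $D_{\mathrm{KL}}=\Theta(1)$.

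I would handle this as follows. Measure the decision boundary in the noise-averaged probability of Lemma~\ref{lem:noise-marginalization-full} (equivalently, choose the working point so that $\cos(2N\Delta\omega)$ changes sign at the midpoint). Then either $|\Delta\omega| \ge \Omega_t/4$, giving $N|\Delta\omega| = \Theta(N\Omega_t)$ and a constant KL divergence whenever $N\Omega_t = \Theta(1)$ and $N\sigma_\epsilon^2 = o(1)$, or the true value lies within $\Omega_t/4$ of the midpoint. In the second case, either branch of the decision retains it inside an interval of width $\Omega_t/2$, possibly after enlarging the interval by a constant factor. This "either outcome is harmless" argument lets us treat $D_{\mathrm{KL}}$ as $\Theta(1)$ at the minimum distinguishable separation. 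Then $\log(T/\delta) = O(\log\log(1/\epsilon) + \log(1/\delta))$, which is bounded by the product form stated.

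Items 3 and 4 follow by multiplying $M$ by $T$, and then by $N$ qubits per measurement. For item 5, substituting $\epsilon = c/(aN)$ and $\Omega_0 = \pi/(4N)$ makes $\Omega_0/\epsilon$ a constant, so $T=O(1)$. Hence $\log\log$ collapses and $N_{\text{total}} = O(N\log(1/\delta))$.

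For item 6, a union bound over iterations shows that with probability $1-\delta$ every decision is correct, so $|\hat\omega-\omega|\le\epsilon$. To get a variance statement, the failure event must also be controlled. Its contribution is at most $\delta\,\Omega_0^2$, and choosing $\delta \le \epsilon^2/\Omega_0^2$ (costing only an extra logarithmic factor) gives $\operatorname{Var}(\hat\omega)=O(\epsilon^2)$.

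Finally, compare with Theorem~\ref{thm:qfi-binary-search}, where $\overline{F}_Q = \Theta(N^2)$ when $N\sigma_\epsilon^2=o(1)$. We obtain $\epsilon^2 M T \overline{F}_Q = O(\epsilon^2 N^2 \log(1/\epsilon)\log\log(1/\epsilon)\log(1/\delta))$. Since the relevant regime has $\epsilon N = O(1)$, and $\log\log$ is at most $\log$, this is $O(\log^2(1/\epsilon)\log(1/\delta))$. Thus the protocol is within polylogarithmic factors of the Cramér–Rao bound.
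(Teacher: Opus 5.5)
There is a genuine gap in your item 2, and it cannot be closed. You state, correctly, that $D_{\mathrm{KL}}=\Theta(1)$ at separation $\Omega_t/4$ only when $N\Omega_t=\Theta(1)$. You then use $D_{\mathrm{KL}}=\Theta(1)$ in every round. With $\Omega_0=\pi/(4N)$ and a constant shrink factor per round, $N\Omega_t$ decays geometrically, so that proviso holds only for the first $O(1)$ rounds. After your quarter-fringe shift the bias is $p-\tfrac12=-\tfrac{\lambda}{2}\sin(2N\Delta\omega)$ with $\lambda=e^{-2N\sigma_\epsilon^2}$. At separation $\Omega_t/4$ this gives $D_{\mathrm{KL}}=\Theta\bigl(\lambda^2(N\Omega_t)^2\bigr)$, which is $\Theta(\lambda^2 4^{-t})$ under halving. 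This is quadratic, not quartic, in $N\Delta\omega$: the formula you quote from Lemma~\ref{lem:majority-vote-full} is inconsistent with the exact $p=\tfrac12(1+\lambda\cos(2N\Delta\omega))$, which equals $\tfrac12(1+\lambda)$, not $\tfrac12$, at $\Delta\omega=0$. Round $t$ therefore needs $M_t=\Theta(4^t\lambda^{-2}\log(T/\delta))$ shots. Summing gives $\sum_t M_t=\Theta\bigl((N\epsilon)^{-2}\lambda^{-2}\log(T/\delta)\bigr)$, not $O(\log(1/\epsilon)\log\log(1/\epsilon)\log(1/\delta))$.

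No sharper analysis can fix this, because the statement itself fails for $\epsilon=o(1/N)$. Whatever $\theta_{\text{mid}}$ is, one $N$-qubit parity shot carries classical Fisher information $4N^2\lambda^2\sin^2(2N\Delta\omega)/(1-\lambda^2\cos^2(2N\Delta\omega))\le 4N^2\lambda^2=\overline{F}_Q$ about $\omega$. A two-point (Le Cam) argument, which also covers your biased adaptive estimator, then shows that precision $\epsilon$ with constant confidence needs $\epsilon^2 MT\overline{F}_Q=\Omega(1)$. Your item-6 computation exposes the contradiction: with polylogarithmic $MT$ it gives $\epsilon^2MT\overline{F}_Q=O(\epsilon^2N^2\,\mathrm{polylog})\to 0$ whenever $\epsilon N\to 0$. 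So items 2--4 and the polylog bound in item 6 hold only for $\epsilon=\Theta(1/N)$, i.e.\ $T=O(1)$, which is item 5; in that regime your argument goes through.

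The paper's own proof makes the same unjustified move: it simply asserts $\Delta\omega=\Theta(1/N)$ at every iteration, so it does not supply the missing idea either. Reaching smaller $\epsilon$ at the Heisenberg rate requires growing the amplification with $t$ (probe size or number of signal applications $\propto 2^t$, as in Theorem~\ref{thm:sequential-binary-search}), which changes the resource count.

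Apart from this, your departures from the paper are improvements:
\begin{itemize}
\item The quarter-fringe working point is actually necessary. The unshifted $p$ is even in $\Delta\omega$, so the majority rule of Algorithm~\ref{alg:binary-search} cannot tell which half contains $\omega$.
\item The overlapping halves handle the near-midpoint case, at the price of a logarithm base other than $2$ in item 1.
\item You account for the failure event in the variance, which the paper's proof ignores.
\end{itemize}
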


\begin{proof}
We establish the six claims through a sequence of resource and scaling arguments.

At each iteration, the binary search protocol bisects the current interval. Starting with interval width $\Omega_0$, after $t$ iterations the interval width is:
\begin{equation}
\Omega_t = \frac{\Omega_0}{2^t}.
\end{equation}

The protocol terminates when the interval width falls below the target precision:
\begin{equation}
\Omega_T \leq \epsilon.
\end{equation}

This requires:
\begin{equation}
\frac{\Omega_0}{2^T} \leq \epsilon \quad \Rightarrow \quad 2^T \geq \frac{\Omega_0}{\epsilon} \quad \Rightarrow \quad T \geq \log_2\left(\frac{\Omega_0}{\epsilon}\right).
\end{equation}

Taking the ceiling to ensure an integer number of iterations:
\begin{equation}
T = \left\lceil \log_2\left(\frac{\Omega_0}{\epsilon}\right) \right\rceil,
\end{equation}
which establishes Claim 1. Since $\Omega_0 = \pi/(4N)$ is a constant for fixed $N$, we have $T = O(\log(1/\epsilon))$.

From Lemma~\ref{lem:majority-vote-full}, each iteration requires sufficient measurements to ensure the majority vote correctly identifies which half-interval contains the true phase. To achieve per-iteration error probability $\leq \delta_{\text{iter}}$, we require:
\begin{equation}
M \cdot D_{\mathrm{KL}} \geq \ln(1/\delta_{\text{iter}}) \quad \Rightarrow \quad M \geq \frac{\ln(1/\delta_{\text{iter}})}{D_{\mathrm{KL}}}.
\end{equation}

To ensure the total error over all $T$ iterations is at most $\delta$, we use the union bound: set $\delta_{\text{iter}} = \delta/T$. Then:
\begin{equation}
M \geq \frac{\ln(T/\delta)}{D_{\mathrm{KL}}}.
\end{equation}

In the Heisenberg regime where the phase separation at each iteration is $\Delta\omega = \Theta(1/N)$, the KL divergence from Lemma~\ref{lem:majority-vote-full} is:
\begin{equation}
D_{\mathrm{KL}} = \Theta\left(\frac{N^4}{8} \cdot \frac{1}{N^4}\right) = \Theta\left(\frac{1}{8}\right) = \Theta(1).
\end{equation}

Therefore:
\begin{equation}
M = O(\ln(T/\delta)) = O(\ln T + \ln(1/\delta)) = O(\log\log(1/\epsilon) + \log(1/\delta)),
\end{equation}
where we used $\ln T = \ln(\log_2(\Omega_0/\epsilon)) = O(\log\log(1/\epsilon))$. This establishes Claim 2.

Each iteration uses $M$ measurements. Over $T$ iterations, the total number of parity measurements is:
\begin{equation}
M_{\text{total}} = T \times M = O(\log(1/\epsilon)) \times O(\log\log(1/\epsilon) \log(1/\delta)).
\end{equation}

Simplifying:
\begin{equation}
M_{\text{total}} = O(\log(1/\epsilon) \log\log(1/\epsilon) \log(1/\delta)),
\end{equation}
establishing Claim 3.

Each measurement uses $N$ qubits. The total number of qubit-experiments is:
\begin{equation}
N_{\text{total}} = N \times M_{\text{total}} = O(N \log(1/\epsilon) \log\log(1/\epsilon) \log(1/\delta)),
\end{equation}
establishing Claim 4.

We now analyze the special case where the target precision is set to match the fundamental Heisenberg limit: $\epsilon = c/(aN)$ for positive constants $a, c$.

With the initial interval $\Omega_0 = \pi/(4N)$, the number of iterations becomes:
\begin{equation}
T = \left\lceil \log_2\left(\frac{\pi/(4N)}{c/(aN)}\right) \right\rceil = \left\lceil \log_2\left(\frac{a\pi}{4c}\right) \right\rceil.
\end{equation}

Since $a, c$ are constants independent of $N$, we have:
\begin{equation}
T = O(1).
\end{equation}

The number of iterations is independent of $N$ when operating at the Heisenberg limit.

At the final iteration, the interval width is $\Omega_T = \Theta(\epsilon) = \Theta(1/N)$, giving phase separation $\Delta\omega = \Theta(1/N)$ between the midpoint and the true value. From Lemma~\ref{lem:majority-vote-full}, the KL divergence at this separation is:
\begin{equation}
D_{\mathrm{KL}} = \Theta\left(\frac{N^4}{8} \cdot \frac{1}{N^4}\right) = \Theta\left(\frac{1}{8}\right) = \Theta(1).
\end{equation}

Therefore:
\begin{equation}
M = O\left(\frac{\log(T/\delta)}{D_{\mathrm{KL}}}\right) = O(\log(1/\delta)),
\end{equation}
since $T = O(1)$ means $\log T = O(1)$.

The total qubit-experiments are:
\begin{equation}
N_{\text{total}} = N \cdot M \cdot T = N \cdot O(\log(1/\delta)) \cdot O(1) = O(N \log(1/\delta)),
\end{equation}
establishing Claim 5.

The final estimate $\hat{\omega}$ is the midpoint of the final interval $[\Omega_{\text{low}}, \Omega_{\text{high}}]$ with width $\Omega_T \leq \epsilon$. With high probability $\geq 1 - \delta$ (ensured by the union bound over all iterations), the true phase $\omega$ lies within this interval.

The maximum estimation error is half the interval width:
\begin{equation}
|\hat{\omega} - \omega| \leq \frac{\Omega_T}{2} \leq \frac{\epsilon}{2}.
\end{equation}

Therefore:
\begin{equation}
\operatorname{Var}(\hat{\omega}) \leq \EX[(\hat{\omega} - \omega)^2] \leq \left(\frac{\epsilon}{2}\right)^2 = \frac{\epsilon^2}{4}.
\end{equation}

Comparing to the Cramér-Rao bound with $M_{\text{total}} = M \cdot T$ total measurements:
\begin{equation}
\operatorname{Var}_{\text{CR}} \geq \frac{1}{M_{\text{total}} \cdot \overline{F}_Q} = \frac{1}{M \cdot T \cdot \overline{F}_Q}.
\end{equation}

From Theorem~\ref{thm:qfi-binary-search}:
\begin{equation}
\overline{F}_Q = 4N^2 e^{-4N\sigma_\epsilon^2},
\end{equation}
so:
\begin{equation}
\operatorname{Var}_{\text{CR}} \geq \frac{e^{4N\sigma_\epsilon^2}}{4M T N^2}.
\end{equation}

Our protocol achieves variance:
\begin{equation}
\operatorname{Var}(\hat{\omega}) \leq \epsilon^2 = \frac{\Omega_0^2}{4^T} = \frac{(\pi/(4N))^2}{4^T}.
\end{equation}

Since $2^T = \Theta(\Omega_0/\epsilon) = \Theta((\pi/(4N))/\epsilon)$, we have $4^T = 2^{2T} = \Theta((\Omega_0/\epsilon)^2)$, giving:
\begin{equation}
\operatorname{Var}(\hat{\omega}) = \Theta\left(\frac{\Omega_0^2}{\Omega_0^2/\epsilon^2}\right) = \Theta(\epsilon^2).
\end{equation}

The ratio to the Cramér-Rao bound is:
\begin{equation}
\frac{\operatorname{Var}(\hat{\omega})}{\operatorname{Var}_{\text{CR}}} = O\left(\epsilon^2 \cdot MTN^2 \cdot e^{4N\sigma_\epsilon^2}\right).
\end{equation}

With $M = O(\log T \log(1/\delta))$ and $T = O(\log(1/\epsilon))$:
\begin{equation}
MTN^2 = O(N^2 \log(1/\epsilon) \log\log(1/\epsilon) \log(1/\delta)).
\end{equation}

For $\epsilon = \Theta(1/N)$ (Heisenberg regime) and $N\sigma_\epsilon^2 = o(1)$ so that $e^{4N\sigma_\epsilon^2} = O(1)$:
\begin{equation}
MTN^2 = O(N^2 \log N \log\log N \log(1/\delta)),
\end{equation}
and the ratio to the Cramér-Rao bound is:
\begin{equation}
\frac{\operatorname{Var}(\hat{\omega})}{\operatorname{Var}_{\text{CR}}} = O\left(\frac{1}{N^2} \cdot N^2 \log N \log\log N \log(1/\delta)\right) = O(\log N \log\log N \log(1/\delta)).
\end{equation}
This demonstrates that the protocol achieves the quantum Cramér-Rao bound up to poly-logarithmic factors $\log^2(1/\epsilon)$, establishing Claim 6.
\end{proof}

\subsection{Three-Category Classification Analysis}

The following results establish the reliability and information content of the three-category classification scheme (Definition~\ref{def:three-category}) used in the repetition code implementation (Section~\ref{subsec:qsp-implementation}). Let $\Phi_G(x) = \int_{-\infty}^x \frac{1}{\sqrt{2\pi}} e^{-t^2/2}\, dt$ denote the standard Gaussian CDF.

\begin{proposition}[Classification Reliability]
\label{prop:qsp-classification}
Under the per-experiment noise model with $\phi = \omega + \epsilon$ where $\epsilon \sim \mathcal{N}(0, \sigma_\epsilon^2)$, the Chernoff bound for Gaussian tails $\Phi_G(-k) \leq \frac{1}{2}e^{-k^2/2}$ yields:
\begin{enumerate}
\item If $\omega > \tau + k\sigma_\epsilon$, then $\Pr(\textsc{Middle} \mid \omega) \leq \frac{1}{2}e^{-k^2/2}$.
\item If $|\omega| < \tau - k\sigma_\epsilon$, then $\Pr(\textsc{High} \mid \omega) + \Pr(\textsc{Low} \mid \omega) \leq e^{-k^2/2}$.
\end{enumerate}
\end{proposition}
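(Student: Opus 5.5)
The plan is to reduce both items to Gaussian tail estimates on the effective phase $\phi = \omega + \epsilon$. By Definition~\ref{def:three-category}, the category is a deterministic function of $\phi$, so once $\omega$ is fixed every probability in the statement is a probability about $\epsilon \sim \mathcal{N}(0,\sigma_\epsilon^2)$. We therefore standardize: write $Z = \epsilon/\sigma_\epsilon \sim \mathcal{N}(0,1)$, so that $\Pr(Z \le x) = \Phi_G(x)$. We also record the two facts used throughout, symmetry $\Phi_G(-x) = 1-\Phi_G(x)$ and the Chernoff tail bound $\Phi_G(-k) \le \tfrac12 e^{-k^2/2}$ for $k \ge 0$.

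For item 1, assume $\omega > \tau + k\sigma_\epsilon$. The \textsc{Middle} event $|\phi| \le \tau$ is contained in $\{\phi \le \tau\} = \{\epsilon \le \tau - \omega\}$, and by hypothesis $\tau - \omega < -k\sigma_\epsilon$. Hence $\Pr(\textsc{Middle}) \le \Pr(Z < -k) = \Phi_G(-k) \le \tfrac12 e^{-k^2/2}$.

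For item 2, assume $|\omega| < \tau - k\sigma_\epsilon$. Then \textsc{High} means $\epsilon > \tau - \omega$, and $\tau - \omega > k\sigma_\epsilon$ because $\omega \le |\omega| < \tau - k\sigma_\epsilon$. Likewise \textsc{Low} means $\epsilon < -\tau - \omega$, and $-\tau - \omega < -k\sigma_\epsilon$ because $-\omega \le |\omega|$. Each event therefore has probability at most $\Phi_G(-k)$, using symmetry for the upper tail. Summing, the two probabilities together are at most $2\cdot \tfrac12 e^{-k^2/2} = e^{-k^2/2}$; the two events are disjoint, so a union bound is not even needed.

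The only step that needs any care is the tail bound $\Phi_G(-k) \le \tfrac12 e^{-k^2/2}$, which is sharper than the plain Chernoff bound $e^{-k^2/2}$. We would prove it by noting that $g(k) = \tfrac12 e^{-k^2/2} - \Phi_G(-k)$ satisfies $g(0) = 0$, $g(k) \to 0$ as $k \to \infty$, and $g'(k) = e^{-k^2/2}\bigl(1/\sqrt{2\pi} - k/2\bigr)$. This derivative is positive and then negative, so $g$ increases and then decreases toward its limit $0$, and hence $g \ge 0$ on $[0,\infty)$. Alternatively, the substitution $t = s + k$ gives $\Phi_G(-k) = \int_0^\infty \phi(s+k)\,ds \le e^{-k^2/2}\int_0^\infty \phi(s)\,ds = \tfrac12 e^{-k^2/2}$, where $\phi$ here denotes the standard normal density; this is cleaner and is the version we would use.

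Finally, we implicitly need $k \ge 0$. In item 2 this also requires $\tau > k\sigma_\epsilon$, since otherwise the hypothesis is vacuous, and we would remark on this.
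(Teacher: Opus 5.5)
Your proposal is correct and matches the paper's proof: both reduce each category event to a one-sided Gaussian tail of $\epsilon$ beyond $\pm k\sigma_\epsilon$, then apply $\Phi_G(-k)\le\frac12 e^{-k^2/2}$ once for item 1 and once per tail for item 2. The paper takes that tail bound as given, so your substitution proof of it, and your remark that $k\ge 0$ is required, are useful additions rather than a different route.
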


\begin{proof}
For item~1: When $\omega > \tau + k\sigma_\epsilon$, the probability of falling in \textsc{Middle} (i.e., $|\phi| \leq \tau$) requires $\epsilon \leq \tau - \omega < -k\sigma_\epsilon$. By the Gaussian tail bound, $\Pr(\epsilon < -k\sigma_\epsilon) \leq \frac{1}{2}e^{-k^2/2}$, establishing item~1.

For item~2: When $|\omega| < \tau - k\sigma_\epsilon$, reaching \textsc{High} requires $\phi = \omega + \epsilon > \tau$, hence $\epsilon > \tau - \omega > k\sigma_\epsilon$. Similarly for \textsc{Low}. Both tails contribute at most $\frac{1}{2}e^{-k^2/2}$ each, establishing item~2.
\end{proof}

\begin{proposition}[QSP KL Divergence]
\label{prop:qsp-kl}
Conditioning on non-\textsc{Middle} outcomes, define:
\begin{equation}
\tilde{p} := \frac{\Pr(\textsc{High} \mid \omega)}{\Pr(\textsc{High} \mid \omega) + \Pr(\textsc{Low} \mid \omega)}.
\end{equation}
The majority rule error rate is governed by $D_{\text{KL}} = D_{\text{KL}}(1/2 \| \tilde{p})$. When $|\omega - \theta| = \Omega(\sigma_\epsilon)$ and $\tau = \Theta(\sigma_\epsilon)$:
\begin{equation}
D_{\text{KL}} = \Theta\left(\frac{(\omega - \theta)^2}{\sigma_\epsilon^2}\right).
\label{eq:qsp-kl-scaling}
\end{equation}
\end{proposition}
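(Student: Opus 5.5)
The plan is to compute $\Pr(\textsc{High}\mid\omega)$ and $\Pr(\textsc{Low}\mid\omega)$ in closed form with Gaussian CDFs, expand $\tilde p-1/2$, and then convert this bias into the KL divergence through the expansion already proved in Lemma~\ref{lem:kl-divergence}. Throughout I work with the shifted variable: after the phase correction of Lemma~\ref{lem:rz-shift-full}, the per-experiment phase is $\phi=\Delta+\epsilon$ with $\Delta:=\omega-\theta$ and $\epsilon\sim\mathcal{N}(0,\sigma_\epsilon^2)$. By Definition~\ref{def:three-category},
\begin{equation}
\Pr(\textsc{High})=\Phi_G\!\left(\frac{\Delta-\tau}{\sigma_\epsilon}\right),\qquad \Pr(\textsc{Low})=\Phi_G\!\left(\frac{-\Delta-\tau}{\sigma_\epsilon}\right).
\end{equation}
I set $a:=\Delta/\sigma_\epsilon$ and $b:=\tau/\sigma_\epsilon=\Theta(1)$, so that
\begin{equation}
\tilde p=\frac{\Phi_G(a-b)}{\Phi_G(a-b)+\Phi_G(-a-b)}.
\end{equation}
This depends only on $a$ and $b$, and $\tilde p-1/2$ is odd in $a$.

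\textbf{Upper bound.} I write $2\tilde p-1=\frac{g(a)-g(-a)}{g(a)+g(-a)}$ with $g(x)=\Phi_G(x-b)$. By the mean value theorem the numerator is at most $2|a|\sup\Phi_G'=2|a|/\sqrt{2\pi}$. The denominator is at least $2\Phi_G(-b)$, since $g(a)+g(-a)$ is minimized at $a=0$ by log-concavity and symmetry; this is a positive constant because $b=\Theta(1)$. Hence $|\tilde p-1/2|=O(|a|)$. Using $D_{\mathrm{KL}}(1/2\|p)=-\tfrac12\log(1-4\epsilon_p^2)$ from Lemma~\ref{lem:kl-divergence}, this gives $D_{\mathrm{KL}}=O(a^2)$, which is the desired upper bound.

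\textbf{Lower bound and the main obstacle.} This is the step I expect to be hardest, because the bound must hold uniformly in $a$ under the hypothesis $|a|=\Omega(1)$. For large $a$, $\tilde p\to1$ and the KL divergence diverges only logarithmically, roughly like $a^2/2$ through the Gaussian tail ratio $\Phi_G(-a-b)/\Phi_G(a-b)\approx e^{-a^2/2-ab}$. So I would split into two regimes.

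\emph{Bounded regime.} For $|a|\le A$ with $A$ a constant, I use $|a|\ge c$ and take a lower bound on the numerator from $\inf\Phi_G'$ over $[-A-b,A+b]$. This gives $|\tilde p-1/2|=\Theta(|a|)$, hence $D_{\mathrm{KL}}=\Theta(1)=\Theta(a^2)$.

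\emph{Large regime.} For $|a|>A$, I use $D_{\mathrm{KL}}(1/2\|\tilde p)=\tfrac12\log\frac{1}{4\tilde p(1-\tilde p)}$ together with $1-\tilde p\approx\Phi_G(-a-b)$. I then apply the Mills-ratio bounds
\begin{equation}
\frac{x}{1+x^2}\,\varphi(x)\le\Phi_G(-x)\le\frac{\varphi(x)}{x},
\end{equation}
where $\varphi$ is the standard normal density. These give $\log(1/(1-\tilde p))=\tfrac12(a+b)^2+O(\log a)=\Theta(a^2)$.

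Combining the two regimes yields~\eqref{eq:qsp-kl-scaling}. Since the statement only asserts $\Theta$ under $|\omega-\theta|=\Omega(\sigma_\epsilon)$, I expect the authors' intended reading to be the bounded-ratio regime. I would state the large-$a$ case as a remark if the constants become awkward.
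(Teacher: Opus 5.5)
\textbf{Verdict: your proof is correct, and it proves more than the paper's proof does.}

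Both you and the paper start from the same Gaussian-CDF expressions for $\Pr(\textsc{High})$ and $\Pr(\textsc{Low})$. The paper then Taylor-expands the CDFs around $\pm\tau/\sigma_\epsilon$ ``when $\omega-\theta$ is small relative to $\sigma_\epsilon$'' to get $\tilde p-\tfrac12=\Theta((\omega-\theta)/\sigma_\epsilon)$. It finishes with $D_{\mathrm{KL}}=2(\tilde p-\tfrac12)^2+O((\tilde p-\tfrac12)^4)$.

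That is a local linear-response argument. With $a=(\omega-\theta)/\sigma_\epsilon$, it covers only small $|a|$, which sits awkwardly with the hypothesis $|a|=\Omega(1)$. It says nothing about large $|a|$, where $\tilde p\to 1$ and the quadratic expansion of the KL divergence fails.

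Your route differs in three ways:
\begin{itemize}
\item You replace the Taylor expansion with global mean-value bounds on the numerator of $2\tilde p-1$ and a lower bound on its denominator.
\item You treat bounded $|a|$ separately. There $\tilde p$ stays away from $0$ and $1$, so comparing $D_{\mathrm{KL}}$ with $(\tilde p-\tfrac12)^2$ is legitimate.
\item For large $|a|$ you use Mills-ratio bounds, which give $D_{\mathrm{KL}}=\tfrac14(|a|+b)^2+O(\log|a|)$.
\end{itemize}
The paper's version is shorter and shows the mechanism. Yours actually establishes the $\Theta$ claim uniformly over the whole range the hypothesis allows. In fact it holds for all $a$: the bounded-regime lower bound $D_{\mathrm{KL}}\ge 2\epsilon_p^2\ge 2a^2\varphi(A+b)^2$ does not even need $|a|\ge c$. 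So your closing hedge about the ``bounded-ratio regime'' is unnecessary.

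There are two local fixes.

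First, your ``upper bound'' paragraph cannot conclude $D_{\mathrm{KL}}=O(a^2)$ globally from $|\tilde p-\tfrac12|=O(|a|)$. The quantity $-\tfrac12\log(1-4\epsilon_p^2)$ diverges as $\epsilon_p\to\tfrac12$, and for large $|a|$ the bound on $\epsilon_p$ is vacuous. Restrict that paragraph to $|a|\le A$, where $1-\tilde p\ge\Phi_G(-A-b)>0$. Let the Mills-ratio computation supply the upper bound for $|a|>A$; it does, since it controls both directions.

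Second, the denominator $h(a)=\Phi_G(a-b)+\Phi_G(-a-b)$ is indeed minimized at $a=0$, but not because of log-concavity. Log-concavity bounds the product $\Phi_G(a-b)\Phi_G(-a-b)$ from above, which is the wrong direction. The correct reason is that $h'(a)=\varphi(a-b)-\varphi(a+b)>0$ for $a>0$, since $|a-b|<a+b$.

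Also, the large-$a$ growth is $D_{\mathrm{KL}}\approx(a+b)^2/4$ rather than $a^2/2$. The factor $\tfrac12$ you wrote belongs to $\log(1/(1-\tilde p))$, and the difference is immaterial for $\Theta$.
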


\begin{proof}
With phase shift $\theta$, the effective phase is $\omega - \theta + \epsilon$ where $\epsilon \sim \mathcal{N}(0, \sigma_\epsilon^2)$. The category probabilities are $\Pr(\textsc{High}) = 1 - \Phi_G((\tau - (\omega - \theta))/\sigma_\epsilon)$ and $\Pr(\textsc{Low}) = \Phi_G((-\tau - (\omega - \theta))/\sigma_\epsilon)$. Thus $\tilde{p} = \Pr(\textsc{High}) / (\Pr(\textsc{High}) + \Pr(\textsc{Low}))$. When $\omega - \theta$ is small relative to $\sigma_\epsilon$, expanding the CDF around $\pm\tau/\sigma_\epsilon$ gives:
\begin{equation}
\tilde{p} - \frac{1}{2} = \Theta\left(\frac{\omega - \theta}{\sigma_\epsilon}\right),
\end{equation}
since the linear term in $(\omega - \theta)/\sigma_\epsilon$ dominates. The KL divergence expands as $D_{\text{KL}}(1/2 \| \tilde{p}) = 2(\tilde{p} - 1/2)^2 + O((\tilde{p} - 1/2)^4)$, yielding~\eqref{eq:qsp-kl-scaling}.
\end{proof}

\section{Heterogeneous Noise Analysis}
\label{app:heterogeneous-noise}

We provide a rigorous derivation of the heterogeneous noise correction stated in Proposition~\ref{prop:heterogeneous}, establishing how qubit-to-qubit variability in transverse field strengths affects the quantum Fisher information bounds from Theorem~\ref{thm:bitflip}.

\subsection{Error Probability Under Heterogeneous Noise}

Consider the bit-flip repetition code setup from Section~\ref{sec:fixed-omega} where each qubit evolves under $U_k = e^{-i(\omega Z_k + \gamma_k X_k)}$ with fixed signal $\omega$ but heterogeneous transverse fields $\gamma_k$. From Equation~\eqref{eq:bitflip-decomp}, the probability of an X-error on qubit $k$ is
\begin{equation}
p_k = 1 - \beta_k^2 = \sin^2(\Omega_k) \cdot \frac{\gamma_k^2}{\omega^2 + \gamma_k^2},
\end{equation}
where $\beta_k = \sqrt{\cos^2(\Omega_k) + \sin^2(\Omega_k)\omega^2/(\omega^2+\gamma_k^2)}$ and $\Omega_k = \sqrt{\omega^2 + \gamma_k^2}$.

\subsection{Small Noise Expansion}

When $|\gamma_k/\omega| = o(1)$, we expand $p_k$ in powers of the dimensionless parameter $x_k = \gamma_k/\omega$. Using the binomial expansion $(1+x^2)^{1/2} = 1 + x^2/2 - x^4/8 + O(x^6)$, we have $\Omega_k = \omega\sqrt{1+x_k^2} = \omega(1 + x_k^2/2 + O(x_k^4))$. For $\sin(\omega + \delta)$ with $\delta = \omega x_k^2/2$, Taylor expansion gives $\sin(\omega + \delta) = \sin(\omega) + \delta\cos(\omega) + O(\delta^2) = \sin(\omega) + (\omega x_k^2/2)\cos(\omega) + O(x_k^4)$, yielding $\sin(\Omega_k) = \sin(\omega) + (\omega x_k^2/2)\cos(\omega) + O(x_k^4)$. Squaring: $\sin^2(\Omega_k) = \sin^2(\omega) + \omega\sin(2\omega)x_k^2/2 + O(x_k^4)$. For the fraction, $\gamma_k^2/(\omega^2+\gamma_k^2) = x_k^2/(1+x_k^2) = x_k^2(1 - x_k^2 + O(x_k^4)) = x_k^2 + O(x_k^4)$. Combining these expansions, the error probability becomes
\begin{equation}
p_k = \frac{\sin^2(\omega)}{\omega^2}\gamma_k^2 + O(\gamma_k^4/\omega^4).
\label{eq:pk-taylor}
\end{equation}
This Taylor expansion is valid when $\max_k |\gamma_k/\omega| = o(1)$ and provides the basis for analyzing heterogeneous distributions.

\subsection{Statistical Moments and Expected Error Probability}

Assume $\gamma_k \sim \mathcal{N}(\gamma, (\gamma h)^2)$ independently for each qubit $k$. For a normal distribution with mean $\gamma$ and variance $\sigma^2 = (\gamma h)^2$, the second moment is $\EX[\gamma_k^2] = \sigma^2 + \gamma^2 = \gamma^2(h^2 + 1)$. Taking expectations over the distribution in Equation~\eqref{eq:pk-taylor} and using linearity:
\begin{equation}
\EX[p_k] = \frac{\sin^2(\omega)}{\omega^2}\EX[\gamma_k^2] + O(\gamma^4) = \frac{\sin^2(\omega)}{\omega^2}\gamma^2(1+h^2) + O(\gamma^4).
\label{eq:expected-pk}
\end{equation}
The $(1+h^2)$ factor arises from $\EX[\gamma_k^2] = \gamma^2 + \operatorname{Var}[\gamma_k] = \gamma^2(1+h^2)$, showing that heterogeneity increases average error probability relative to homogeneous noise with $\gamma_k \equiv \gamma$.

\subsection{Impact on Quantum Fisher Information}

From the proof of Theorem~\ref{thm:bitflip}, the expected number of detected errors is $\EX[d] = (2L+1)\EX[p_k]$. Using Equation~\eqref{eq:expected-pk}:
\begin{equation}
\EX[d] = (2L+1)\frac{\sin^2(\omega)}{\omega^2}\gamma^2(1+h^2) + O(\gamma^4).
\end{equation}
Following the QFI calculation in the proof of Theorem~\ref{thm:bitflip}, we have $\EX[F_Q] = 4\EX[(2L+1-d)^2]$. Expanding the expectation as in the homogeneous case and using $\EX[d] = o(2L+1)$ in the small noise regime:
\begin{equation}
\EX[F_Q] = 4(2L+1)^2(1 - 2\EX[p_k] + O(\gamma^4)) = 4(2L+1)^2\left(1 - O(\gamma^2(1+h^2))\right),
\end{equation}
establishing the result of Proposition~\ref{prop:heterogeneous}. The $(1+h^2)$ correction demonstrates that qubit-to-qubit variability degrades the quantum Fisher information relative to perfectly homogeneous noise, with the degradation proportional to both the mean noise level $\gamma^2$ and the relative heterogeneity $h^2$.

This analysis establishes that heterogeneous transverse fields increase average error probability by a factor $(1+h^2)$ relative to homogeneous noise with the same mean. For realistic device parameters ($h \leq 0.5$), the penalty remains modest ($< 25\%$), consistent with the numerical observations in Section~\ref{sec:numerical} showing $<5\%$ change in scaling exponent for heterogeneous configurations. The analysis assumes $|\gamma/\omega| = o(1)$ to justify the Taylor expansion; for larger noise levels, the small-noise approximation breaks down and the observed robustness becomes algorithmic rather than fundamental, as discussed in the numerical results (Section~\ref{sec:numerical}).

\end{document}